\documentclass[11pt]{article}

\usepackage{amssymb,amsmath,bm}
\usepackage{amsthm}

\usepackage{textcomp}
\usepackage{enumerate}      
\usepackage{graphicx}        
\usepackage{caption}

\usepackage{tabu}

\usepackage[export]{adjustbox}

\usepackage{mathrsfs}

\usepackage{url}

\usepackage{array}
\newcommand{\PreserveBackslash}[1]{\let\temp=\\#1\let\\=\temp}
\newcolumntype{C}[1]{>{\PreserveBackslash\centering}p{#1}}
\newcolumntype{R}[1]{>{\PreserveBackslash\raggedleft}p{#1}}
\newcolumntype{L}[1]{>{\PreserveBackslash\raggedright}p{#1}}

\renewcommand{\setminus}{{\smallsetminus}}

\usepackage{amssymb,amsmath,bm}
\usepackage{amsthm}
\usepackage{hyperref}
\usepackage{mathrsfs}
\usepackage{textcomp}
\usepackage{enumerate}
\usepackage{graphicx}
\usepackage{tikz}
\usepackage{verbatim}

\newtheorem{theorem}{Theorem}[section]
\newtheorem{lemma}[theorem]{Lemma}
\newtheorem{proposition}[theorem]{Proposition}
\newtheorem{condition}[theorem]{Geometric Hypothesis}

\newtheorem{definition}[theorem]{Definition}

\theoremstyle{remark}
\newtheorem{remark}[theorem]{Remark}

\theoremstyle{remark}

\numberwithin{equation}{section}

\begin{document}
\title{\bf Asymptotics of complex $b$-$6j$ symbols}

\author{Yunpeng Meng and Tian Yang}

\date{}

\maketitle

\begin{abstract}
We study the $b$-$6j$ symbols -- an analytic extension of the $6j$-symbols for the principal series of the modular double of $\mathrm U_q\mathfrak{sl}(2;\mathbb R)$ -- with complex index $b,$ refereed to as the \emph{complex $b$-$6j$ symbols}. Then we relate their asymptotics, when the six parameters scale according to the dihedral angles of a hyperideal hyperbolic tetrahedron, to the volume and the determinant of the Gram matrix of the tetrahedron.  In the case $\arg b=\pm \frac{\pi}{4},$ we believe that  this work is closely related to the complex Liouville string\,\cite{CEMR}.
\end{abstract}

\tableofcontents

\section{Introduction}

In \cite{LMSWY}, Liu, Ming, Sun, Wu and the second author considered the $6j$-symbols associated to the principal series (also known as the positive representations) of the modular double of $\mathrm U_{q}\mathfrak{sl}(2;\mathbb R),$ denoted by 
$\mathrm U_{q\tilde{q}}\mathfrak{sl}(2;\mathbb R),$ which is a quantum group of significant interest in mathematical physics \cite{Fad95, Fad16, PT99,TesVar} and representation theory \cite{FI14}. When the six parameters of a $\mathrm U_{q\tilde{q}}\mathfrak{sl}(2;\mathbb R)$  $6j$-symbol scale according to the edge lengths of a truncated hyperideal hyperbolic tetrahedron or a flat tetrahedron, a connection between the semi-classical limit of the $6j$-symbol 
and the hyperbolic volume of the tetrahedron was established. Using these $\mathrm U_{q\tilde{q}}\mathfrak{sl}(2;\mathbb R)$  $6j$-symbols as building blocks, together with their asymptotic properties, the authors of \cite{LMSWY} further defined a family of Turaev-Viro type invariants for hyperbolic $3$-manifolds with totally geodesic boundary, and obtained a thorough understanding of their semi-classical limit and its relationship with the hyperbolic volume and the adjoint twisted Reidemeister torsion of the manifolds. In \cite{LMSWY2}, the same authors completed the study of the asymptotics of the $\mathrm U_{q\tilde{q}}\mathfrak{sl}(2;\mathbb R)$ $6j$-symbols by first showing that if the six parameters of a $\mathrm U_{q\tilde{q}}\mathfrak{sl}(2;\mathbb R)$ $6j$-symbol do not correspond to the edge lengths of a truncated hyperideal hyperbolic tetrahedron nor a flat tetrahedron, then they necessarily correspond to the edge lengths of a truncated hyperideal anti-de Sitter tetrahedron, and then establishing  the connection between the semi-classical limit of the $6j$-symbol 
and the anti-de Sitter volume of the tetrahedron in this case. 
The $\mathrm U_{q\tilde{q}}\mathfrak{sl}(2;\mathbb R)$ $6j$-symbol, as a function of its six parameters, has a meromorphic extension out of the spectrum of the principal series of $\mathrm U_{q\tilde{q}}\mathfrak{sl}(2;\mathbb R),$ called the $b$-$6j$ symbol. 
The authors of \cite{LMSWY2} also studied the asymptotics of these  $b$-$6j$ symbols when the six parameters scale according to the dihedral angles of a truncated hyperideal hyperbolic tetrahedron or a truncated hyperideal anti-de Sitter tetrahedron (both beyond the spectrum of the principal series of $\mathrm U_{q\tilde{q}}\mathfrak{sl}(2;\mathbb R)$), relating them to the volume of the tetrahedron in the corresponding geometry. It is worth emphasizing that all the $b$-$6j$ symbols considered in \cite{LMSWY} and \cite{LMSWY2} are indexed by a real number $b.$  

In this paper, we consider the $b$-$6j$ symbols with a complex index $b,$ refereed to as the \emph{complex $b$-$6j$ symbols} (see Definition \ref{B6j} and Proposition \ref{lem: ab conver}), and relate their asymptotic behavior, in the case that the six parameters scale according to the dihedral angles of a hyperideal hyperbolic tetrahedron, to the volume and the determinant of the Gram matrix of the tetrahedron (see Theorem \ref{vol}).  We hope that these results will contribute to extending the Turaev-Viro type $3$-manifold invariants introduced in \cite{LMSWY} to the setting of complex index $b,$ relating them to the original Turaev-Viro invariants\,\cite{TV}, and ultimately shedding light on the Volume Conjecture for those invariants\,\cite{CY}. We also believe, in the case $\arg b=\pm \frac{\pi}{4},$ that  this work is closely related to the complex Liouville string\,\cite{CEMR}.

\subsection{Complex $b$-$6j$ symbols}

For $b\in\mathbb C$ with $\mathrm{Re}b>0,$ or equivalently with $\arg b\in (-\frac{\pi}{2},\frac{\pi}{2}),$ let $S_b$ be the  \emph{double sine function} given by
\begin{equation}\label{eq:def-S}
S_b(z)=\exp\Bigg(\int_\Omega\frac{\sinh\Big(\big(\frac{Q}{2}-z\big)t\Big)}{4t\sinh(\frac{bt}{2})\sinh(\frac{t}{2b})}dt\Bigg),
\end{equation}
where $Q=b+b^{-1}$ and the contour $\Omega$ goes along the real line and passes above the pole of the integrand at the origin. The integral absolutely converges for $z\in \mathbb C$ with $0<\mathrm{Re}z<\mathrm {Re} Q,$ defining a holomorphic function in such $z.$

For a six-tuple of complex numbers  $(a_1,\dots,a_6),$ we let in the rest of this article \begin{equation*}
\begin{split}
   & t_1=a_1+a_2+a_3,\quad  t_2=a_1+a_5+a_6,\quad t_3=a_2+a_4+a_6,\quad  t_4=a_3+a_4+a_5,\\
   & q_1=a_1+a_2+a_4+a_5,\quad q_2=a_1+a_3+a_4+a_6,\quad  
q_3=a_2+a_3+a_5+a_6\quad\text{and}\quad q_4=2Q.
\end{split}
\end{equation*}
Then the  six-tuple $(a_1,\dots,a_6)\in\mathbb C^6$ is called \emph{$b$-admissible} if 
$$0<\mathrm{Re}(q_j-t_i)<\mathrm{Re}Q$$
for all $i,j\in\{1,2,3,4\}.$ 
In particular, if $(a_1,\dots,a_6)$ is $b$-admissible, then
$$\max\{\mathrm{Re}t_1,\mathrm{Re}t_2,\mathrm{Re}t_3,\mathrm{Re}t_4\}<\min\{\mathrm{Re}q_1,\mathrm{Re}q_2,\mathrm{Re}q_3,\mathrm{Re}q_4\}.$$

\begin{definition}[Complex $b$-$6j$ symbol]\label{B6j}
The \emph{$b$-$6j$ symbol} for a $b$-admissible six-tuple $(a_1,\dots,a_6)\in\mathbb C^6$ is given by 
\begin{equation}\label{b6j}
\bigg\{\begin{matrix} a_1 & a_2 & a_3 \\ a_4 & a_5 & a_6 \end{matrix} \bigg\}_b=\frac{1}{\prod_{i=1}^4\prod_{j=1}^4S_b(q_j-t_i)^{\frac{1}{2}}}\int_\Gamma \prod_{i=1}^4S_b(u-t_i)\prod_{j=1}^4S_b(q_j-u)d u,
\end{equation}
where for $z\in\mathbb C$ with $0<\mathrm{Re}z<\mathrm{Re}Q,$ $S_b(z)^{\frac{1}{2}}$ is defined as $\exp\Big(\frac{1}{2}\int_\Omega\frac{\sinh\big((\frac{Q}{2}-z)t\big)}{4t\sinh(\frac{bt}{2})\sinh(\frac{t}{2b})}dt\Big),$ and the contour $\Gamma$ is any vertical line passing the interval $(\max\{\mathrm{Re}t_i\},\min\{\mathrm{Re}q_j\}).$  See Figure \ref{Gamma}. 
\end{definition}

\begin{figure}[htbp]
\centering
\includegraphics[scale=0.4]{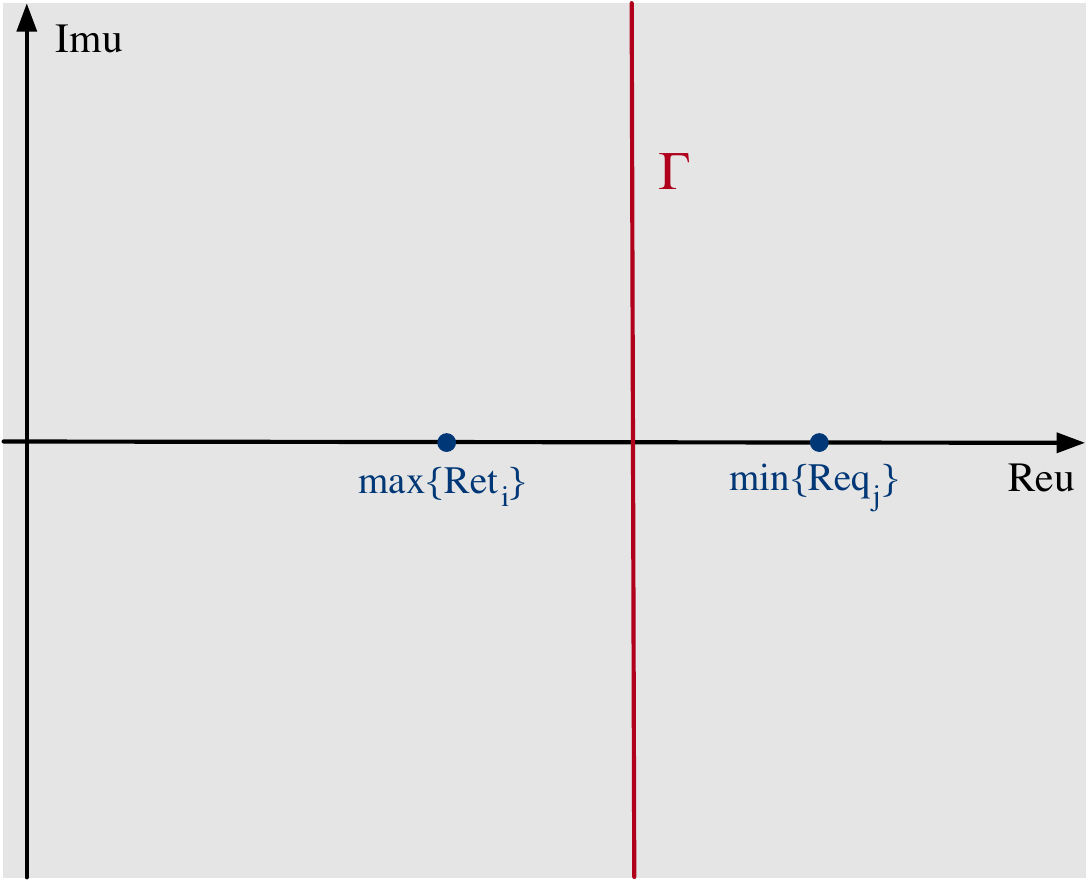}
\caption{The contour $\Gamma.$}
\label{Gamma}
\end{figure}

In Subsection \ref{abconver}, we will prove the following

\begin{proposition}\label{lem: ab conver}
   For any $b$-admissible six-tuple $(a_1,\dots,a_6)\in\mathbb C^6,$ the integral in (\ref{b6j}) converges absolutely, is independent of the choice of the vertical line $\Gamma,$ and depends analytically on $b.$ 
\end{proposition}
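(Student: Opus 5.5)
The plan is to analyze the integrand $F(u)=\prod_{i=1}^4S_b(u-t_i)\prod_{j=1}^4S_b(q_j-u)$ along a vertical line $u=x+iy$, $y\in\mathbb R$, using the asymptotic behaviour of $\log S_b(z)$ as $|\mathrm{Im} z|\to\infty$ with $\mathrm{Re}z$ in a compact subset of $(0,\mathrm{Re}Q)$. The first task is to make sure all eight arguments of $S_b$ lie in the strip where the integral (\ref{eq:def-S}) defines $S_b$. For $x\in(\max\{\mathrm{Re}t_i\},\min\{\mathrm{Re}q_j\})$ we have $\mathrm{Re}(u-t_i)>0$ and $\mathrm{Re}(q_j-u)>0$. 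Moreover $\mathrm{Re}(u-t_i)<\mathrm{Re}(q_j-t_i)<\mathrm{Re}Q$ for any $j$, and similarly $\mathrm{Re}(q_j-u)<\mathrm{Re}(q_j-t_i)<\mathrm{Re}Q$, by $b$-admissibility. Hence $F$ is holomorphic on the open strip $\max\{\mathrm{Re}t_i\}<\mathrm{Re}u<\min\{\mathrm{Re}q_j\}$. The prefactor is also well-defined, since each $q_j-t_i$ lies in the strip.

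The key step, and the main obstacle, is a uniform asymptotic estimate for the double sine function with complex $b$. I would first establish
\[
\log S_b(z)=\mp\Big(\frac{\pi i}{2}\big(z-\tfrac{Q}{2}\big)^2+\frac{\pi i}{24}(b^2+b^{-2})\Big)+o(1)\qquad\text{as }\mathrm{Im}z\to\pm\infty,
\]
uniformly for $\mathrm{Re}z$ in compacta of $(0,\mathrm{Re}Q)$. With real $b$ this is standard. For complex $b$ with $\mathrm{Re}b>0$ I would prove it from the integral (\ref{eq:def-S}). One writes $\sinh((Q/2-z)t)$ as a difference of exponentials and deforms the contour $\Omega$ slightly in the direction of the growth (a rotation by a small angle), so that the poles at $t\in 2\pi i b^{-1}\mathbb Z$ and $2\pi i b\mathbb Z$ (which are no longer on the imaginary axis) are avoided. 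The residue at $t=0$ then produces the quadratic polynomial, and the remaining integral is $o(1)$ by a Riemann--Lebesgue type argument. The delicate point is checking that the denominator $\sinh(bt/2)\sinh(t/(2b))$ still grows exponentially on the real axis, which holds because $\mathrm{Re}b,\mathrm{Re}b^{-1}>0$. Alternatively one can use the functional equations $S_b(z+b)=2\sin(\pi bz)S_b(z)$ to reduce to a bounded region and check the asymptotics there.

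Next I would sum the eight exponents. The leading quadratic terms in $u$ cancel, since four enter with $+$ and four with $-$ relative to $u$. What survives as $y\to\pm\infty$ is a linear term. Using $\sum_j q_j-\sum_i t_i=2Q$, I expect $|F(x+iy)|\le C\exp(-2\pi\,\mathrm{Re}(\bar{\ }\cdot)\,|y|)$. More concretely, the real part of the exponent behaves like $-\pi\,\mathrm{Re}\big(\sum_j q_j-\sum_i t_i-2Q\big)|y|$ plus a term $-2\pi\,\mathrm{Re}(Q)|y|$, up to bounded terms. I would carefully track the factor $i$ against complex $b$, since now $\mathrm{Im}$ of the quadratic constants matters, and conclude exponential decay $|F(x+iy)|\le Ce^{-c|y|}$ with $c>0$, uniform for $x$ in compact subintervals and for $b$ in compact subsets of $\{\mathrm{Re}b>0\}$ (on which admissibility persists). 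This gives absolute convergence.

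Independence of $\Gamma$ then follows from Cauchy's theorem on the rectangle between two admissible vertical lines. The horizontal sides vanish by the uniform decay. For analyticity in $b$, each $S_b(z)$ is holomorphic in $b$, by differentiating under the integral (\ref{eq:def-S}) locally uniformly. So $F$ is holomorphic in $b$, and the uniform exponential bound lets Morera's theorem (or dominated convergence) pass analyticity through the $u$-integral. The prefactor $\prod S_b(q_j-t_i)^{-1/2}$ is likewise analytic in $b$ by its integral definition.
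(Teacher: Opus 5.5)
Your outline is the paper's argument at the level of the proposition: a bound $|S_b(z)|\leqslant Ke^{\pm\frac{\pi}{2}\mathrm{Im}(z(z-Q))}$ on vertical half-strips, uniform for $b$ in compacta; cancellation of the $u^2$ terms together with $\sum_j q_j-\sum_i t_i=2Q$, giving decay $e^{-2\pi\mathrm{Re}Q\,|\mathrm{Im}u|}$; Cauchy's theorem on a rectangle; and a $b$-independent dominating function for analyticity.

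The real difference is how you prove the key estimate (Lemma~\ref{Sb-bound2}). The paper uses $S_b(z)=\Gamma_b(z)/\Gamma_b(Q-z)$ and writes $\log\Gamma_b$ as an elementary term in $z$ and $\log z$ plus the Laplace transform $\int_0^{+\infty}h_b(t)e^{-zt}dt$ of a bounded function. It then gets the Gaussian factor from $\log z-\log(Q-z)=\pm\pi\mathbf i-\log(1-\frac{Q}{z})$. This yields only an upper bound, but needs no contour deformation. You instead work directly from (\ref{eq:def-S}) and obtain the quadratic as a residue at $t=0$. That route is self-contained (no double Gamma function) and gives a genuine asymptotic with exponentially small error.

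As you describe it, however, the deformation fails. Rotating $\Omega$ about the origin by a small angle sends one end of the contour into the half-plane where $e^{\pm(\frac{Q}{2}-z)t}$ grows, and for large $|\mathrm{Im}z|$ the rotated integral diverges. The fix, for $\mathrm{Im}z\to+\infty$, is a translation:
\begin{itemize}
\item move the $e^{(\frac{Q}{2}-z)t}$ piece down to $\mathrm{Im}t=-c$, crossing only the pole at $t=0$;
\item move the $e^{-(\frac{Q}{2}-z)t}$ piece up to $\mathrm{Im}t=c$;
\item take $0<c<2\pi\min\{\mathrm{Re}b,\mathrm{Re}b^{-1}\}$, so that no pole $2\pi\mathbf i nb^{\pm1}$ with $n\neq 0$ is crossed.
\end{itemize}
Both shifted integrals are then $O(e^{-c\,\mathrm{Im}z})$, uniformly for $\mathrm{Re}z$ and $b$ in compacta, so no Riemann--Lebesgue step is needed. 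The residue gives $\log S_b(z)=-\frac{\pi\mathbf i}{2}(z-\frac{Q}{2})^2+\frac{\pi\mathbf i}{24}(b^2+b^{-2})+O(e^{-c\,\mathrm{Im}z})$. Your constant therefore has the wrong sign, which is harmless.

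Your summation step is garbled, but its outcome is right. The surviving term is $\pm2\pi\mathbf iQu$ as $\mathrm{Im}u\to\pm\infty$, whose real part is $-2\pi\mathrm{Re}Q\,|\mathrm{Im}u|\mp2\pi\,\mathrm{Re}u\,\mathrm{Im}Q$.

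Drop the functional-equation alternative: shifts by $b^{\pm1}$ cannot reduce $|\mathrm{Im}z|$ when $b$ is real.
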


\subsection{Asymptotics of complex $b$-$6j$ symbols}

Let $(\theta_1,\dots, \theta_6)$ be the dihedral angles of a truncated hyperideal hyperbolic tetrahedron $\Delta;$ 
and for $k\in\{1,\dots,6\},$ let $a_k =\frac{Q}{2} \pm \frac{\theta_k}{2\pi b}$ with the sign   in $\pm$ chosen arbitrarily.
Then by \cite{BB}, for $i,j\in\{1,2,3,4\},$ 
\begin{equation*}\label{b-adm}
    \mathrm{Re}\bigg(\frac{b}{2}\bigg)< \mathrm{Re}(q_j-t_i) < \mathrm{Re}\bigg(Q-\frac{b}{2}\bigg).
\end{equation*}
In particular, the six-tuple  $(a_1,\dots,a_6)$ is $b$-admissible. 

The main result of this paper is the following 

\begin{theorem}\label{vol} Let $(\theta_1,\dots, \theta_6)$ be the dihedral angles of a truncated hyperideal hyperbolic tetrahedron $\Delta.$ Then as $b\to0$ 
\begin{enumerate}[(1)]
    \item with a fixed $\arg b  \in [-\frac{\pi}{4},\frac{\pi}{4}],$ or
    \item  with a fixed $\arg b  \in (-\frac{\pi}{2},-\frac{\pi}{4})\cup(\frac{\pi}{4},\frac{\pi}{2})$ and with $(\theta_1,\dots,\theta_6)$ satisfying the Geometric Hypothesis \ref{per-min} described in Section \ref{GH},
\end{enumerate}
we have
$$\bigg\{\begin{matrix} \frac{Q}{2} \pm \frac{\theta_1}{2\pi b} & \frac{Q}{2} \pm  \frac{\theta_2}{2\pi b} & \frac{Q}{2} \pm \frac{\theta_3}{2\pi b} \\ \frac{Q}{2} \pm  \frac{\theta_4}{2\pi b} & \frac{Q}{2} \pm  \frac{\theta_5}{2\pi b} & \frac{Q}{2} \pm  \frac{\theta_6}{2\pi b} \end{matrix} \bigg\}_b=\frac{1}{2}\frac{e^{-\frac{\mathrm{Vol}(\Delta)}{\pi b^2}}}{\sqrt[4]{-\det\mathrm{Gram}(\Delta)}} \Big(1 +O\big(b^2\big)\Big),$$
where in each entry of the $b$-$6j$ symbol the sign $+$ or $-$ can be chosen arbitrarily,  $\mathrm{Vol}(\Delta)$ is the hyperbolic volume of $\Delta$ and $\mathrm{Gram}(\Delta)$ is the Gram matrix of $\Delta$ in the dihedral angles. 
\end{theorem}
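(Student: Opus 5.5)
The plan is a saddle point analysis of the contour integral in (\ref{b6j}), organized as in the real-$b$ case of \cite{LMSWY2}.

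\textbf{Step 1: rescaling and semiclassical asymptotics of $S_b$.} Substitute $u=\frac{Q}{2}+\frac{z}{2\pi b}$ and write each factor through the quantum dilogarithm. The entries $a_k=\frac{Q}{2}\pm\frac{\theta_k}{2\pi b}$ then place every argument $q_j-t_i$ and $u-t_i$, $q_j-u$ in the form $\frac{Q}{2}+\frac{w}{2\pi b}$ with $w$ an affine combination of the $\theta_k$ and $z$.

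The needed expansion, uniform on compact subsets of the admissible strip and valid for complex $b$ with fixed $\arg b\in(-\frac\pi2,\frac\pi2)$, is
$$S_b\Big(\tfrac{Q}{2}+\tfrac{w}{2\pi b}\Big)=\exp\Big(\tfrac{1}{2\pi i b^2}\mathrm{Li}_2(-e^{iw})\cdot(\text{const})+O(b^2)\Big).$$
This follows from the integral formula (\ref{eq:def-S}) by the usual Euler–Maclaurin / Fourier argument. One needs control of the $O(b^2)$ remainder that is uniform in $\arg b$ and extends along the whole vertical contour, with exponential decay at infinity; Proposition \ref{lem: ab conver} gives the convergence needed here.

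Because the sign choices in $a_k$ enter only through $S_b(x)S_b(Q-x)=1$ type symmetries, the integrand is independent of the signs up to relabeling. I would therefore reduce at once to one sign choice.

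\textbf{Step 2: the potential function and the prefactor.} The integral becomes
$$\frac{1}{2\pi b}\int e^{\frac{1}{2\pi i b^2}\big(U(z)+O(b^4)\big)}\,dz,$$
with $U$ a sum of eight dilogarithms. This is the Murakami–Yano / Ushijima-type function. Its critical point equation $\partial_z U=0$ is quadratic in $e^{iz}$, and one root $z^*$ satisfies $U(z^*)=2i\,\mathrm{Vol}(\Delta)$ up to the normalization matching $e^{-\mathrm{Vol}/(\pi b^2)}$.

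The Hessian there is computed to be proportional to $\sqrt{-\det \mathrm{Gram}(\Delta)}$. I would take these identities from the known real-$b$ computations in \cite{LMSWY2} rather than redo them.

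The prefactor $\prod S_b(q_j-t_i)^{-1/2}$ is handled separately. It contributes an exponent $-\frac12\sum$ of dilogarithms, which combines with $U(z^*)$ to give exactly the volume. Its $b^0$ terms combine with the Gaussian factor $\sqrt{2\pi\cdot 2\pi i b^2/U''(z^*)}$ and the $1/(2\pi b)$ to produce $\frac{1}{2}(-\det\mathrm{Gram})^{-1/4}$.

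\textbf{Step 3: contour deformation (the main obstacle).} The key step is to deform the vertical line $\Gamma$, within the admissible strip where Cauchy's theorem allows, to a contour through $z^*$. Along this contour we need
$$\mathrm{Re}\Big(\tfrac{U(z)}{2\pi i b^2}\Big)<\mathrm{Re}\Big(\tfrac{U(z^*)}{2\pi i b^2}\Big)\quad\text{for } z\neq z^*.$$
Since $b^2=|b|^2e^{2i\arg b}$, the function to be maximized is $\mathrm{Re}\big(e^{-2i\arg b}U(z)/(2\pi i)\big)$, a rotated version of the real-$b$ potential.

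For $|\arg b|\le\frac{\pi}{4}$, I would show that the vertical line through $\mathrm{Re}z^*$ (where $z^*$ is purely imaginary after shifting, as in the real case) is a steepest-descent-type contour. On this line $\mathrm{Re}(U/2\pi i)$ is strictly concave with its maximum at $z^*$ by convexity of the volume. Meanwhile $\mathrm{Im}(U/2\pi i)$ is monotone with the right sign, so the rotation by angle $2\arg b\le\frac\pi2$ preserves strict maximality. This is where the threshold $\pi/4$ arises, and it is the step that needs genuine estimates of $\mathrm{Im}\,U$ along the line and at $\pm i\infty$.

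For $|\arg b|>\frac\pi4$ the rotation can destroy maximality. Here I would simply invoke Geometric Hypothesis \ref{per-min} as exactly the statement that a suitable deformed contour with unique maximum at $z^*$ exists.

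\textbf{Step 4: conclusion.} Finally, the standard saddle point lemma with complex large parameter, together with the tail bounds from Step 1, yields the asymptotic formula with relative error $O(b^2)$.
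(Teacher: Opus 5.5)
Your overall architecture (rescaling, importing the critical value and Hessian identities for $U_{\boldsymbol\alpha}$ from \cite{LMSWY2}, saddle point approximation) matches the paper's, but Step 3 fails as you state it.

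Work in the paper's variable $\xi=\pi b u$ (your $z$ equals $2\xi-\pi-\pi b^2$), with $\theta=\arg b\in(0,\frac{\pi}{2})$ and the saddle $\xi^*$ real. Since $L$ is purely imaginary on $(0,\pi)$, we have $U_{\boldsymbol\alpha}(\bar\xi)=-\overline{U_{\boldsymbol\alpha}(\xi)}$. Hence along $\xi^*+\mathbf i l$ the function $\mathrm{Im}U_{\boldsymbol\alpha}$ is even and $\mathrm{Re}U_{\boldsymbol\alpha}$ is odd in $l$. The real part of the phase is $\frac{1}{2\pi|b|^2}\big(\cos(2\theta)\mathrm{Im}U_{\boldsymbol\alpha}-\sin(2\theta)\mathrm{Re}U_{\boldsymbol\alpha}\big)$. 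So $\xi^*$ is the strict maximum on your vertical line only if
$$\sin(2\theta)\,|\mathrm{Re}U_{\boldsymbol\alpha}(\xi^*+\mathbf il)|<\cos(2\theta)\big(\mathrm{Im}U_{\boldsymbol\alpha}(\xi^*)-\mathrm{Im}U_{\boldsymbol\alpha}(\xi^*+\mathbf il)\big)\quad\text{for all } l\neq 0.$$
Monotonicity of the odd function $\mathrm{Im}(U_{\boldsymbol\alpha}/2\pi\mathbf i)=-\mathrm{Re}U_{\boldsymbol\alpha}/2\pi$ gives opposite signs on the two sides, so it cannot deliver this. In general the inequality holds only for $\tan(2\theta)$ below a $\Delta$-dependent threshold.

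At $\theta=\frac{\pi}{4}$ the vertical line fails outright. The phase becomes $-\mathrm{Re}U_{\boldsymbol\alpha}/(2\pi|b|^2)$, which is odd along the line, so the Hessian is degenerate in the vertical direction. The phase therefore exceeds its value at $\xi^*$ on one side, unless $\mathrm{Re}U_{\boldsymbol\alpha}$ vanishes identically there.

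The missing idea is the tilt. The paper integrates over $\xi^*+t\mathbf v$ with $\mathbf v=-\sin(2\theta)+\mathbf i\cos(2\theta)$. Along this line Cauchy--Riemann gives $\frac{d}{dt}\mathrm{Im}\big(e^{-2\mathbf i\theta}U_{\boldsymbol\alpha}\big)=\partial\mathrm{Im}U_{\boldsymbol\alpha}/\partial\mathrm{Im}\xi$. Its sign is controlled on all of $D$ by Proposition \ref{limder1}. That proposition needs the case analysis across the branch cuts and Proposition \ref{period}, because the line leaves the strip $\max\{\tau_i\}<\mathrm{Re}\xi<\min\{\eta_j\}$. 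At $\theta=\frac{\pi}{4}$ this line becomes horizontal and degenerate, so a separate contour is needed. The paper uses the real segment pushed off by the flow of $(0,\partial\mathrm{Re}U_{\boldsymbol\alpha}/\partial\mathrm{Im}\xi)$, joined to horizontal rays at $\mathrm{Im}\xi=\pm l$ running to $\mp\infty$. Decay along those rays comes from $U_{\boldsymbol\alpha}(\xi+\pi)=U_{\boldsymbol\alpha}(\xi)\mp4\pi^2$.

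Two analytic inputs are also missing.
\begin{itemize}
\item The contours that work (tilted lines, horizontal rays) leave the admissible strip in the $u$-plane, so Cauchy's theorem inside the strip cannot reach them. You need Proposition \ref{contour}, which rests on the infinite product for $\Phi_b$ and the sector bounds of Lemma \ref{Sb-bound}.
\item The semiclassical estimate must be uniform on the non-compact region $H_{\theta,\delta}$, not just on compact sets. This is Proposition \ref{EST}, proved via Euler--Maclaurin for $\log\Phi_b$. Proposition \ref{lem: ab conver} only concerns $b$ in compact subsets of the right half-plane. It gives no uniformity as $b\to 0$ and cannot control the tails.
\end{itemize}

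In case (2), the Geometric Hypothesis is a statement about $\mathrm{Im}U_{\boldsymbol\alpha}$ on $\mathbb R$, not the existence of a good contour. You still have to build the contour (real segment, short vertical pieces, horizontal rays) and prove strict maximality. That uses the hypothesis, the $4\pi^2$ jumps of $\mathrm{Re}U_{\boldsymbol\alpha}$ per period, and a careful choice of the small constant $c$.

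Two smaller points:
\begin{itemize}
\item The reduction to one sign choice is not a relabeling of the integrand: flipping $\theta_1$ changes $t_1,t_2,q_1,q_2$ non-trivially. It is the reflection symmetry of Proposition \ref{reflection}, a genuine identity.
\item Your leading-order formula for $\log S_b$ omits the quadratic part $x^2-\pi x+\frac{\pi^2}{6}$ of $L(x)$, which contributes at order $b^{-2}$.
\end{itemize}
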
 

\begin{remark}
    Numerical computations suggest that a very large portion of the dihedral angles $(\theta_1,\dots,\theta_6)$ satisfy the Geometric Hypothesis \ref{per-min}. See Remarks \ref{geohyp1} and \ref{geohyp2} for more details. We believe that the Geometric Hypothesis could eventually be removed and Theorem \ref{vol} holds for all the possible dihedral angles $(\theta_1,\dots,\theta_6)$ of the hyperideal hyperbolic tetrahedra. It can also be proved that under the same conditions, the same asymptotic expansion holds as $b\to 0$ with $\arg b$ varying in $(-\frac{\pi}{2}+\delta, \frac{\pi}{2}-\delta)$ for any $\delta>0.$ 
\end{remark}

\bigskip

\noindent\textbf{Acknowledgments.} The authors thank Baojun Wu for drawing their attention to complex Liouville strings, which motivated this work. We also thank Giulio Belletti, Tianyue Liu, Shuang Ming, Xin Sun, J\"org Teschner, Ka Ho Wong and Yuxuan Yang for their interest and discussions. The first author is partially supported, and the second author is supported, by NSF Grant DMS-2505908.

\section{Complex $b$-$6j$ symbols}

\subsection{Proof of Proposition \ref{lem: ab conver}}\label{abconver}

The key ingredient in the proof of Proposition \ref{lem: ab conver} is the following estimate of the double sine function on the suitable regions. 

\begin{lemma}\label{Sb-bound2}
Let $C$ be a compact subset of $\big\{ b\in\mathbb C\ \big|\ \arg b\in (-\frac{\pi}{2},\frac{\pi}{2})\big\},$
 and let $\delta>0$ be sufficiently small. For $b\in C,$ let 
 $$T_{b,\delta}^+=\Big\{ s+\mathbf i t\ \Big|\ s\in[\delta, \mathrm{Re}Q-\delta]\ \ \text{and}\ \  t \geqslant 0\Big\}$$
 and
 $$T_{b,\delta}^-=\Big\{ s - \mathbf i t\ \Big|\ s\in[\delta, \mathrm{Re}Q-\delta]\ \ \text{and}\ \ t \geqslant 0\Big\}.$$
 Then there exists a constant $K=K_{C,\delta}>0$ such that 
$$|S_b(z)| < Ke^{\frac{\pi}{2}\mathrm{Im}\big(z(z-Q)\big)}$$
for all $b\in C$ and $z\in T^+_{b,\delta},$ and 
$$|S_b(z)|< Ke^{-\frac{\pi}{2}\mathrm{Im}\big(z(z-Q)\big)}$$
for all $b\in C$ and $z\in T^-_{b,\delta}.$
 \end{lemma}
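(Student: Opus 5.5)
We will reduce the statement to a uniform bound on a remainder term in an asymptotic formula for $\log S_b(z)$ as $|\mathrm{Im} z|\to\infty$. The model is the classical fact that $S_b(z)\sim e^{\mp\frac{\pi \mathbf i}{2}(z(z-Q)+\frac{Q^2}{6}... )}$-type quadratic exponentials, up to a constant, as $\mathrm{Im} z\to\pm\infty$ in the strip. Concretely, we first observe that the two cases are symmetric. The integrand in (\ref{eq:def-S}) is invariant under $b\mapsto \bar b$ combined with complex conjugation after reflecting the contour. Hence $\overline{S_b(z)}=S_{\bar b}(\bar z)^{\pm1}$ up to the convention on $\Omega$. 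Alternatively we can use $S_b(z)S_b(Q-z)=1$ together with $z\mapsto Q-z$, which maps $T^-_{b,\delta}$ to $T^+_{b,\delta}$ and preserves $z(z-Q)$. With the latter, $|S_b(z)|=|S_b(Q-z)|^{-1}$. So it suffices to prove a two-sided estimate $|\log S_b(z)-\frac{\pi\mathbf i}{2}(z(z-Q)) - c_b|\le K'$ on $T^+_{b,\delta}$, uniformly in $b\in C$, where $c_b$ is an explicit constant continuous in $b$. The lower bound on $T^+$ then transfers to an upper bound on $T^-$.

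To get this estimate we split the integral in (\ref{eq:def-S}) as follows. We write $\sinh((\frac Q2-z)t)=\frac12(e^{(\frac Q2-z)t}-e^{-(\frac Q2-z)t})$ and deform the contour $\Omega$ separately for each exponential piece. For $z=s+\mathbf i\tau$ with $\tau\ge0$ large, we rotate the half-lines of $\Omega$ into the appropriate half-plane so that each piece acquires Gaussian-like decay, i.e. we tilt the contour by a small angle depending on $\arg b$. In doing so we pick up residues at the poles $t=2\pi\mathbf i n/b$ and $t=2\pi\mathbf i n b$, $n\ne0$, which lie on two rays through the origin. The residue sums reproduce the standard formula expressing $S_b$ as a ratio of $q$-Pochhammer-type products in $e^{2\pi\mathbf i b z}$ and $e^{2\pi \mathbf i z/b}$. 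The residue at $t=0$ (a triple pole) produces exactly the quadratic polynomial $\frac{\pi\mathbf i}{2}(z^2-Qz)+\text{const}$. The remaining contour integrals and products are bounded uniformly. This works because for $\mathrm{Re} b>0$, $\mathrm{Im} z\ge0$ and $\mathrm{Re}z\in[\delta,\mathrm{Re}Q-\delta]$, the quantities $|e^{2\pi\mathbf i bz}|$ and $|e^{2\pi\mathbf i z/b}|$ are at most $1$ only after accounting for $\mathrm{Re} z$. The product $\prod(1-e^{2\pi\mathbf i b(z+nb)})$ with $\mathrm{Im} b\ne0$ must be handled more carefully.

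The main obstacle is uniformity near the boundary of the strip and when $\arg b\ne0$. If $\mathrm{Im} b\neq0$, the poles of $S_b$ at $z=-mb-nb^{-1}$ and the zeros at $Q+mb+nb^{-1}$ are no longer on the real axis. One must check that the strip $\mathrm{Re}z\in[\delta,\mathrm{Re}Q-\delta]$, $\mathrm{Im}z\ge0$, stays at distance bounded below from them, which holds because all poles have $\mathrm{Re}\le0$ and all zeros have $\mathrm{Re}\ge\mathrm{Re}Q$. Rather than products, we therefore plan a more robust route. We show directly that $F(z)=\log S_b(z)-\frac{\pi\mathbf i}{2}z(z-Q)$ is bounded on $T^+_{b,\delta}$. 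We do this by differentiating under the integral in $z$ and shifting the contour $\Omega$ upward, to $\mathrm{Im}t=\pi/|b|\cdot\epsilon$ in suitable directions, so that $e^{-zt}$ decays like $e^{-\epsilon\,\mathrm{Im} z}$ while the residue at $0$ gives the quadratic term. Continuity in $(b,z)$ on compact pieces, plus exponential decay of $F'$ as $\mathrm{Im} z\to\infty$ uniformly for $b\in C$, yields $|F|\le K'$. Exponentiating gives the bound on $T^+$, and the reflection above gives the bound on $T^-$.
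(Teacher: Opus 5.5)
Your final route is genuinely different from the paper's and it can be made to work. As written, however, two steps are wrong and must be corrected.

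\textbf{The sign.} On $T^+_{b,\delta}$ the bounded quantity is $\log S_b(z)+\frac{\pi\mathbf i}{2}z(z-Q)$, not $\log S_b(z)-\frac{\pi\mathbf i}{2}z(z-Q)$. Your opening heuristic $e^{\mp\frac{\pi\mathbf i}{2}(\cdots)}$ as $\mathrm{Im}z\to\pm\infty$ had it right. With the sign you then used, you would get $|S_b(z)|\asymp e^{-\frac{\pi}{2}\mathrm{Im}(z(z-Q))}$ on $T^+_{b,\delta}$, which contradicts the statement. Your $F$ is in fact unbounded there, since $\mathrm{Im}\big(z(z-Q)\big)=\mathrm{Im}z\,(2\mathrm{Re}z-\mathrm{Re}Q)-\mathrm{Re}z\,\mathrm{Im}Q$ grows linearly in $\mathrm{Im}z$.

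\textbf{The contour shifts.} Put $w=\frac{Q}{2}-z$ and split the integrand as $\frac{e^{wt}-e^{-wt}}{8t\sinh(\frac{bt}{2})\sinh(\frac{t}{2b})}$. Each half is integrable along $\Omega$ because $0<\mathrm{Re}z<\mathrm{Re}Q$.
\begin{itemize}
\item The $e^{wt}$ half must be pushed \emph{down} to $\mathbb R-\mathbf i\eta$. It is the only half that crosses $t=0$, and its triple-pole residue is exactly what produces the quadratic term.
\item The $e^{-wt}$ half must be pushed \emph{up} to $\mathbb R+\mathbf i\eta$, crossing no pole.
\item Take $0<\eta<2\pi\min_{b\in C}\min\{\mathrm{Re}\,b,\mathrm{Re}\,b^{-1}\}$, so that the poles $\pm2\pi\mathbf i n b^{\pm1}$, $n\geqslant1$, are not crossed.
\end{itemize}
Shifting upward, as you wrote, makes $e^{-zt}$ grow like $e^{\eta\,\mathrm{Im}z}$ and picks up no residue at all.

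\textbf{What you get after the fix.} You obtain $\log S_b(z)=-\frac{\pi\mathbf i}{2}z(z-Q)-\frac{\pi\mathbf i}{12}(Q^2+1)+R_b(z)$ with $|R_b(z)|\leqslant K'e^{-\eta\,\mathrm{Im}z}$. This holds uniformly in $b\in C$ and $\mathrm{Re}z\in[\delta,\mathrm{Re}Q-\delta]$, so it settles all of $T^+_{b,\delta}$ at once. Differentiating under the integral and a separate compactness step are then unnecessary. For $T^-_{b,\delta}$, either do the mirror-image shift or use your reflection $S_b(z)S_b(Q-z)=1$. In the latter case note that $Q-T^-_{b,\delta}$ only satisfies $\mathrm{Im}\geqslant\mathrm{Im}Q$, which can be negative. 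The estimate above still covers it because $\mathrm{Im}Q$ is bounded on $C$.

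\textbf{Comparison with the paper.} The paper proceeds through the external identity $S_b(z)=\Gamma_b(z)/\Gamma_b(Q-z)$ and a Binet-type representation of $\psi_b$. This makes the main term the explicit expression $-\frac12\big(z^2-Qz+\frac{Q^2+1}{6}\big)\big(\log z-\log(Q-z)\big)+\frac{Qz}{2}$. The term $\mp\frac{\pi\mathbf i}{2}z(z-Q)$ then comes from $\log z-\log(Q-z)=\pm\pi\mathbf i-\log(1-\frac{Q}{z})$ for $\pm\mathrm{Im}z>N_C$. The band $|\mathrm{Im}z|\leqslant N_C$ is handled by compactness, and the remainder $\int_0^{+\infty}h_b(t)e^{-zt}dt$ is trivially bounded.

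Both routes avoid the infinite-product formula, which is unavailable when $\mathrm{Im}(b^2)=0$; you were right to drop it. Your route works directly from the defining integral and yields an exponentially small error rather than a merely bounded one. The paper's route avoids tracking pole locations and contour orientations, at the cost of importing the double Gamma function.
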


 The proof of Lemma \ref{Sb-bound2} relies on the relationship between the double sine function and the \emph{double Gamma function}
$$\Gamma_b(z)\doteq\exp\Bigg(\int_0^{+\infty}\bigg(\frac{e^{\big(\frac{Q}{2}-z\big)t}-1}{4t\sinh(\frac{bt}{2})\sinh(\frac{t}{2b})}-\frac{\big(\frac{Q}{2}-z\big)^2}{2te^{t}}-\frac{\frac{Q}{2}-z}{t^2}\bigg)dt\Bigg)$$
for $b\in\mathbb C$ with $\arg b\in (-\frac{\pi}{2},\frac{\pi}{2}).$ The integral absolutely converges for $z\in \mathbb C$ with $\mathrm{Re}z>0,$ defining a holomorphic function in such $z.$ By e.g. \cite[B. 32]{E}, the double sine function and the double Gamma function are related by 
\begin{equation}\label{SbG}
    S_b(z)=\frac{\Gamma_b(z)}{\Gamma_b(Q-z)}
\end{equation}
for $z\in \mathbb C$ with $0<\mathrm{Re}z<\mathrm{Re}Q.$

\begin{proof}[Proof of Lemma \ref{Sb-bound2}]
For $z\in\mathbb C$ with $\mathrm{Re}z>0,$ we define 
$$\psi_b(z)\doteq \int_0^{+\infty}\Bigg(\frac{e^{-zt}}{\big(1-e^{-bt}\big)\big(1-e^{-\frac{t}{b}}\big)}-\frac{1}{2}\bigg(z^2-Qz+\frac{Q^2+1}{6}\bigg)e^{-t}-\frac{1}{t^2}-\frac{Q-2z}{2t}\Bigg)\frac{dt}{t}.$$
Then by (\ref{SbG}) and a direct computation, we have
\begin{equation}\label{Sbpsi}
    S_b(z)=\exp\Big(\psi_b(z)-\psi_b(Q-z)\Big)
\end{equation}
for $z\in\mathbb C$ with $0<\mathrm{Re}z<\mathrm{Re}Q.$ 
We observe that for $z\in\mathbb C$ with $\mathrm{Re}z>0,$ 
\begin{equation}\label{psiR}
    \psi_b(z)=-\frac{1}{2}\bigg(z^2-Qz+\frac{Q^2+1}{6}\bigg)\log z -\frac{Qz}{2}+\frac{3z^2}{4} +\int_0^{+\infty} h_b(t)e^{-zt}dt,
\end{equation}
where 
$$h_b(t)=\frac{1}{t\big(1-e^{-bt}\big)\big(1-e^{-\frac{t}{b}}\big)}-\frac{1}{t^3}-\frac{Q}{2t^2}-\frac{Q^2+1}{12t}.$$
Indeed, by a direct computation, we have 
\begin{equation*}
\begin{split}
 &\psi_b(z)- \int_0^{+\infty} h_b(t)e^{-zt}dt\\
 = & \int_0^{+\infty} \frac{e^{-zt}-1+zt-\frac{1}{2}z^2t^2e^{-t}}{t^3}dt +\frac{Q}{2}\int_0^{+\infty}\frac{e^{-zt}-1+zte^{-t}}{t^2}dt+\frac{Q^2+1}{12}\int_0^{+\infty}\frac{e^{-zt}-e^{-t}}{t}dt\\
 =& -\frac{1}{2}\bigg(z^2-Qz+\frac{Q^2+1}{6}\bigg)\log z -\frac{Qz}{2}+\frac{3z^2}{4},
\end{split}
\end{equation*}
where the last equality comes from the direct evaluations $\int_0^{+\infty} \frac{e^{-zt}-1+zt-\frac{1}{2}z^2t^2e^{-t}}{t^3}dt=\frac{z^2}{2}\big(\frac{3}{2}-\log z\big),$ $\int_0^{+\infty}\frac{e^{-zt}-1+zte^{-t}}{t^2}dt=z(\log z-1)$ and $\int_0^{+\infty}\frac{e^{-zt}-e^{-t}}{t}dt=-\log z.$
\medskip

As a consequence of (\ref{psiR}), we have 
\begin{equation}\label{psi-psi}
\begin{split}
\psi_b(z)-\psi_b(Q-z)= &-\frac{1}{2}\bigg(z^2-Qz+\frac{Q^2+1}{6}\bigg)\Big(\log z -\log (Q-z)\Big) +\frac{Qz}{2}-\frac{Q^2}{4}\\
&+\int_0^{+\infty} h_b(t)e^{-zt}dt -\int_0^{+\infty} h_b(t)e^{-(Q-z)t}dt
\end{split}
\end{equation}
for $z\in\mathbb C$ with $0<\mathrm{Re}z<\mathrm{Re}Q;$ and we will estimate the terms of the right hand side of (\ref{psi-psi}) below.
\smallskip

First of all, as $C$ is compact, there exists a constant $B_1$ such that 
\begin{equation}\label{B1}
    \bigg|-\frac{Q^2}{4}\bigg|<B_1
\end{equation}
for all $b\in C.$
\smallskip

Next, we estimate the terms $\int_0^{+\infty} h_b(t)e^{-zt}dt$ and $\int_0^{+\infty} h_b(t)e^{-(Q-z)t}dt.$ On the one hand, as $\mathrm{Re}b>0$ and $\mathrm{Re}(b^{-1})>0$ for each $b\in C,$ $\lim_{t\to +\infty}e^{-bt}=\lim_{t\to +\infty}e^{-\frac{t}{b}}=0.$ Then by the compactness of $C,$ there is a $t_C>0$ such that 
$|h_b(t)|<1$
for all $b\in C$ and $t>t_C.$ As a consequence, for each $w\in \mathbb C$ with $\mathrm{Re}w>\delta,$ we have 
\begin{equation}\label{tinf}
    \bigg|\int_{t_C}^{+\infty} h_b(t)e^{-wt}dt\bigg|< \int_{t_C}^{+\infty} e^{-\delta t}dt=\frac{e^{-\delta t_C}}{\delta}.
\end{equation}
On the other hand, one can check that $t=0$ is a removable singularity of $h_b(t),$ and $h_b(t)$ is continuous on $[0,t_C].$ Then by the compactness  of $C\times [0,t_C],$ there is a constant $M_C>0$ such that $|h_b(t)|<M_C$
for all $b\in C$ and $t\in [0,t_C].$ As a consequence, for each $w\in \mathbb C$ with $\mathrm{Re}w>\delta,$ we have 
\begin{equation}\label{0t}
    \bigg|\int_0^{t_C} h_b(t)e^{-wt}dt\bigg|< M_C\int_0^{t_C} e^{-\delta t}dt=M_C\bigg(\frac{1}{\delta}-\frac{e^{-\delta t_C}}{\delta}\bigg).
\end{equation}
Putting (\ref{tinf}) and (\ref{0t}) together, and letting $B_2=\frac{e^{-\delta t_C}}{\delta}+M_C\big(\frac{1}{\delta}-\frac{e^{-\delta t_C}}{\delta}\big),$ we have for each $b\in C$ and $z\in T_{b,\delta}^+\cup T_{b,\delta}^-$ that 
\begin{equation}\label{B2}
\bigg|\int_0^{+\infty} h_b(t)e^{-zt}dt\bigg|<B_2\quad\text{and}\quad \bigg|\int_0^{+\infty} h_b(t)e^{-(Q-z)t}dt\bigg|<B_2.   
\end{equation}
\smallskip

Now we are left to estimate  $-\frac{1}{2}\big(z^2-Qz+\frac{Q^2+1}{6}\big)\big(\log z -\log (Q-z)\big) +\frac{Qz}{2}.$ To do this, let 
$N_C=2\max\big\{ |Q|\ \big|\ b\in C\big\}.$ For $z\in \mathbb C,$ let $\epsilon(z)$ be the sign of $\mathrm{Im}z,$ i.e., $\epsilon(z)=1$ if $\mathrm{Im}z>0,$ $\epsilon(z)=0$ if $\mathrm{Im}z=0,$ and $\epsilon(z)=-1$ if $\mathrm{Im}z<0.$ 

On the one hand, by the compactness of $\bigcup_{b\in C}[\delta,\mathrm{Re}Q-\delta]\times [-N_C,N_C],$ there is a $B_3>0$ depending only on $\delta$ and $N_C$ such that 
\begin{equation}\label{B3}
\mathrm{Re}\Bigg(-\frac{1}{2}\bigg(z^2-Qz+\frac{Q^2+1}{6}\bigg)\Big(\log z -\log (Q-z)\Big) +\frac{Qz}{2}\Bigg)< \frac{\epsilon(z)\pi}{2}\mathrm{Im}\Big(z(z-Q)\Big)+B_3
\end{equation}
for all $b\in C$ and $z\in T_{b,\delta}^+\cup T_{b,\delta}^-$ with $|\mathrm{Im}z|\leqslant N_C.$ 

On the other hand, for $z\in \mathbb C$ with $|\mathrm{Im}z|>N_C,$ we have $\big|\frac{Q}{z}\big|\leqslant \frac{N_C}{2|\mathrm{Im}z|}<\frac{1}{2}.$ As a consequence, if $\mathrm{Im}z>N_C,$
then 
$$\log z-\log (Q-z) = \pi\mathbf i -\log \bigg(1-\frac{Q}{z}\bigg);$$ and if $\mathrm{Im}z<-N_C,$
then 
$$\log z-\log (Q-z) = -\pi\mathbf i -\log \bigg(1-\frac{Q}{z}\bigg).$$
Hence
\begin{equation*}
\begin{split}
&-\frac{1}{2}\bigg(z^2-Qz+\frac{Q^2+1}{6}\bigg)\Big(\log z -\log (Q-z)\Big) +\frac{Qz}{2}\\
=& -\frac{\epsilon(z)\pi\mathbf i}{2}\bigg(z^2-Qz+\frac{Q^2+1}{6}\bigg)+\bigg(\frac{z^2}{2}\log\Big(1-\frac{Q}{z}\Big)+\frac{Qz}{2}\bigg)+\frac{1}{2}\bigg(Qz-\frac{Q^2+1}{6}\bigg)\log\Big(1-\frac{Q}{z}\Big);
\end{split}
\end{equation*}
and we estimate each of the terms on the right hand side as follows. For the first term, by the compactness of $C,$ there is a $B_4>0$ depending only on $C$ such that 
\begin{equation}\label{B4}
\mathrm{Re}\Bigg(-\frac{\epsilon(z)\pi\mathbf i}{2}\bigg(z^2-Qz+\frac{Q^2+1}{6}\bigg)\Bigg)<    \frac{\epsilon(z)\pi}{2}\mathrm{Im}\Big(z(z-Q)\Big)+B_4. 
\end{equation}
For the other two terms, by Taylor's Theorem, we have 
$$\log \bigg(1-\frac{Q}{z}\bigg)=-\frac{Q}{z} + R(Q,z)\frac{Q^2}{z^2}$$
for some analytic function $R(Q,z)$ of $\frac{Q}{z}.$
As a consequence,
$$\frac{z^2}{2}\log\Big(1-\frac{Q}{z}\Big)+\frac{Qz}{2}=R(Q,z)\frac{Q^2}{2},$$
and 
$$\frac{1}{2}\bigg(Qz-\frac{Q^2+1}{6}\bigg)\log\Big(1-\frac{Q}{z}\Big)= -\frac{Q^2}{2}+\frac{Q(Q^2+1)}{12z}+R(Q,z)\bigg(\frac{Q^3}{2z}-\frac{Q^2(Q^2+1)}{12z^2}\bigg).$$
As $\big|\frac{Q}{z}\big|<\frac{1}{2}$ is bounded, there is a constant $M>0$ such that $\big|R(Q,z)\big|<M;$ and as $|Q|<\frac{N_C}{2}$ for all $b\in C$ and $|z|>|\mathrm{Im}z|>N_C,$ there is a $B_5>0$ depending on $M,$ $\delta$ and $N_C$ such that 
\begin{equation}\label{B5}
   \bigg|\frac{z^2}{2}\log\Big(1-\frac{Q}{z}\Big)+\frac{Qz}{2}\bigg|<B_5\quad\text{and}\quad \bigg|\frac{1}{2}\bigg(Qz-\frac{Q^2+1}{6}\bigg)\log\Big(1-\frac{Q}{z}\Big)\bigg|<B_5. 
\end{equation}
\medskip

Finally, putting (\ref{Sbpsi}), (\ref{psi-psi}), (\ref{B1}), (\ref{B2}), (\ref{B3}), (\ref{B4}) and (\ref{B5}) together, and letting 
$$K=e^{B_1+2B_2+\max\{B_3, B_4+2B_5\}},$$
we have 
$$|S_b(z)|< K e^{\frac{\epsilon(z)\pi}{2}\mathrm{Im}\big(z(z-Q)\big)}$$
for all $b\in C$ and $z\in T_{b,\delta}^+\cup T_{b,\delta}^-.$ This completes the proof.
\end{proof}

\begin{proof}[Proof of Proposition \ref{lem: ab conver}]
We first show that the integral in the proposition converges absolutely. Let $L>0$ be sufficiently large and  $\delta>0$ be sufficiently small so that for  $u\in \Gamma$ with $\mathrm{Im}u \geqslant L,$ we have $u-t_i\in T^+_{b,\delta}$ for each $i\in\{1,2,3,4\}$ and $q_j-u\in T^-_{b,\delta}$ for each $j\in\{1,2,3,4\},$ and for $u\in \Gamma$ with $\mathrm{Im}u\leqslant -L,$ we have $u-t_i\in T^-_{b,\delta}$ for each $i\in\{1,2,3,4\}$ and $q_j-u\in T^+_{b,\delta}$ for each $j\in\{1,2,3,4\}.$ Let $\Gamma_+=\{ u\in\Gamma\ |\ \mathrm{Im}u\geqslant L\}$ and $\Gamma_-=\{ u\in\Gamma\ |\ \mathrm{Im}u\leqslant -L\},$  and  for the simplicity of the notations let
$$F(u)\doteq \prod_{i=1}^4S_b(u-t_i)\prod_{j=1}^4S_b(q_j-u)$$ 
be the integrand . Let $s\in (\max\{\mathrm{Re}t_i\},\min\{\mathrm{Re}q_j\})$ so that for each $u\in\Gamma_+,$
$u=s+\mathbf i t$
for some  $t\geqslant L;$ and 
for each $u\in\Gamma_-,$
$u=s-\mathbf i t$
for some $t\geqslant L.$ Then by Lemma \ref{Sb-bound} with $C=\{b\},$ the constant $K$ therein, and a direct computation, we have for $u \in \Gamma_+$ that  
\begin{equation*}
\big|F(u) \big|<  K ^ 8 e^{\sum_{i=1}^4\frac{\pi}{2}\mathrm{Im}\big((u-t_i)(u-t_i-Q)\big)-\sum_{j=1}^4\frac{\pi}{2}\mathrm{Im}\big((q_j-u)(q_j-u-Q)\big)} 
 = K^8Me^ {-2\pi\mathrm{Re}Qt},
\end{equation*}
and for $u\in \Gamma_-$ that 
\begin{equation*}
\big|F(u) \big|<  K ^ 8 e^{-\sum_{i=1}^4\frac{\pi}{2}\mathrm{Im}\big((u-t_i)(u-t_i-Q)\big)+\sum_{j=1}^4\frac{\pi}{2}\mathrm{Im}\big((q_j-u)(q_j-u-Q)\big)} 
 = K^8M^{-1}e^ {-2\pi \mathrm{Re}Qt},
\end{equation*}
where $M=e^{-2\pi s\mathrm{Im}Q+\frac{\pi}{2}\mathrm{Im}\big(\sum_{i=1}^4t_i^2-\sum_{j=1}^4q_j^2+(\sum_{i=1}^4t_i+\sum_{j=1}^4q_j)Q\big)}$ 
is a constant independent of $t.$ Since $\mathrm{Re}Q=\mathrm{Re}b+\mathrm{Re}(b^{-1})>0,$ we have
$$
\bigg|\int_{\Gamma_+\cup \Gamma_-}F(u)du \bigg|< K^8\big(M+M^{-1}\big) \int_L^{+\infty} e^{-2\pi \mathrm{Re}Q t}dt <+\infty.$$
As a consequence, the integral in the proposition converges absolutely . 
\medskip

Next, we show that the integral is independent of the choice of the vertical line $\Gamma.$ Suppose $\Gamma_1$ and $\Gamma_2$ are two vertical lines passing the interval $(\max\{\mathrm{Re}t_i\},\min\{\mathrm{Re}q_j\}).$ Let $L>0$ be sufficiently large, and for $k\in\{1,2\},$ let $\Gamma_{k,+}=\{ u\in\Gamma_k\ |\ \mathrm{Im}u\geqslant L\}$ and let $\Gamma_{k,-}=\{ u\in \Gamma_k\ |\ \mathrm{Im}u\leqslant -L\}.$ We also let $C_+$ be the piece of the horizontal line $\{u\in\mathbb C\ |\ \mathrm{Im}u=L\}$ lying in between $\Gamma_{1,+}$ and $\Gamma_{2,+},$ and let $C_-$ be the piece of the horizontal line $\{u\in\mathbb C\ |\ \mathrm{Im}u=-L\}$ lying in between $\Gamma_{1,-}$ and $\Gamma_{2,-}.$ Remember that the contours $\Gamma_{1,+},$ $\Gamma_{1,-},$ $\Gamma_{2,+},$ $\Gamma_{2,-},$ $C_+$ and $C_-$ all depend on $L.$ 
Let $\Gamma_+=\Gamma_{1,+}\cup C_+\cup \Gamma_{2,+}$ and $\Gamma_-=\Gamma_{1,-}\cup C_-\cup \Gamma_{2,-}$ both being suitably oriented.  
Then by the analyticity of $F$ and Cauchy's Theorem, we have 
$$\int_{\Gamma_1}F(u)du-\int_{\Gamma_2}F(u)du=\int_{\Gamma_+\cup \Gamma_-}F(u)du;$$
and to prove the result, it suffices to prove that 
$\big|\int_{\Gamma_+\cup\Gamma_-}F(u)du\big|\to 0$
as $L\to +\infty.$ As the integrals over $\Gamma_1$ and $\Gamma_2$ converge absolutely, the integrals over $\Gamma_{1,+},$ $\Gamma_{1,-},$ $\Gamma_{2,+}$ and $\Gamma_{2,-}$ converge to $0$
as $L\to +\infty,$
and we are left to show that 
$$\bigg|\int_{C_+\cup C_-}F(u)du\bigg|\to 0$$
 as $L\to +\infty.$ To this end, let  $\delta>0$ be sufficiently small so that for $u\in C_+,$ $u-t_i\in T^+_{b,\delta}$ for each $i\in\{1,2,3,4\}$ and $q_j-u\in T^-_{b,\delta}$ for each $j\in\{1,2,3,4\},$ and that for $u\in C_-,$ $u-t_i\in T^-_{b,\delta}$ for each $i\in\{1,2,3,4\}$ and $q_j-u\in T^+_{b,\delta}$ for each $j\in\{1,2,3,4\}.$ Then by Lemma \ref{Sb-bound} with $C=\{b\},$ the constant $K$ therein, and a direct computation, for $u\in C_+$ we have 
\begin{equation}\label{22.17}
\begin{split}
\big|
 F(u)\big|< & K ^ 8 e^{\sum_{i=1}^4\frac{\pi}{2}\mathrm{Im}\big((u-t_i)(u-t_i-Q)\big)-\sum_{j=1}^4\frac{\pi}{2}\mathrm{Im}\big((q_j-u)(q_j-u-Q)\big)}  =K^8M(u) e^ {-2\pi\mathrm{Re}QL};
 \end{split}
\end{equation}
and for $u\in C_-$ we have 
\begin{equation}\label{22.18}
\begin{split}
\big|
 F(u)\big|< & K ^ 8 e^{-\sum_{i=1}^4\frac{\pi}{2}\mathrm{Im}\big((u-t_i)(u-t_i-Q)\big)+\sum_{j=1}^4\frac{\pi}{2}\mathrm{Im}\big((q_j-u)(q_j-u-Q)\big)} 
 = K^8M(u)^{-1} e^ {-2\pi\mathrm{Re}QL},
 \end{split}
\end{equation}
where $M(u)=e^{-2\pi\mathrm{Re}u\mathrm{Im}Q+\frac{\pi}{2}\mathrm{Im}\big(\sum_{i=1}^4t_i^2-\sum_{j=1}^4q_j^2+(\sum_{i=1}^4t_i+\sum_{j=1}^4q_j)Q\big)}.$
 Now, as $C_\pm$ lie in between the two vertical contours $\Gamma_1$ and $\Gamma_2$ both intersecting the interval $(\max\{\mathrm{Re}t_i\},\min\{\mathrm{Re}q_j\}),$ we have $\max\{\mathrm{Re}t_i\}< \mathrm{Re}u< \min\{\mathrm{Re}q_j\}$ for each $u\in C_+\cup C_-.$ As a consequence, there is an $N>0$ such that 
 $K^8M(u)<N$ for each $u\in  C_+,$ and $K8M(u)^{-1}<N$
 for each $u\in C_-.$  Together with (\ref{22.17}), (\ref{22.18}), we have 
$$\big|F(u)\big|< N e^{-2\pi \mathrm{Re}QL}$$
for each $u\in C_+\cup C_-;$  and since $\mathrm{Re}Q>0,$ we have  
$$\bigg|\int_{C_+\cup C_-}F(u)du\bigg|< 2\big(\min\{\mathrm{Re}q_j\}-\max\{\mathrm{Re}t_i\}\big) N e^{-2\pi \mathrm{Re}QL},$$
which converges to $0$ as $L\to +\infty.$ This shows that the integral is independent of the choice of the vertical line $\Gamma.$
\medskip

Finally, we show that the integral depends analytically on $b.$ Let $C$ be any compact subset of $\big\{ b\in\mathbb C\ \big|\ \arg b\in(-\frac{\pi}{2},\frac{\pi}{2})\big\}$ that is small enough so that 
$$\max_{b\in C}\max\{\mathrm{Re}t_1,\mathrm{Re}t_2,\mathrm{Re}t_3,\mathrm{Re}t_4\}<\min_{b\in C}\min \{\mathrm{Re}q_1,\mathrm{Re}q_2,\mathrm{Re}q_3,\mathrm{Re}q_4\}.$$
Let 
$$\delta=\frac{1}{4}\Big(\min_{b\in C}\min \{\mathrm{Re}q_j\}-\max_{b\in C}\max\{\mathrm{Re}t_i\}\Big),$$
$$s\in \Big(\max_{b\in C}\max\{\mathrm{Re}t_i\}+\delta, \min_{b\in C}\min\{\mathrm{Re}q_j\}-\delta\Big),$$
and $\Gamma=\{ u\in\mathbb C\ |\ \mathrm{Re}u=s\}.$
Let $L>0$ be sufficiently large  so that for each $b\in C$ and $u\in \Gamma$ with $\mathrm{Im}u \geqslant L,$ we have $u-t_i\in T^+_{b,\delta}$ for each $i\in\{1,2,3,4\}$ and $q_j-u\in T^-_{b,\delta}$ for each $j\in\{1,2,3,4\},$ and for each $b\in C$ and $u\in \Gamma$ with $\mathrm{Im}u\leqslant -L,$ we have $u-t_i\in T^-_{b,\delta}$ for each $i\in\{1,2,3,4\}$ and $q_j-u\in T^+_{b,\delta}$ for each $j\in\{1,2,3,4\}.$  Let $\Gamma_+=\{ u\in\Gamma\ |\ \mathrm{Im}u\geqslant L\}$ and $\Gamma_-=\{ u\in\Gamma\ |\ \mathrm{Im}u\leqslant -L\}.$ Then by Lemma \ref{Sb-bound} with the constant $K$ therein and a direct computation, we have for each $b\in C$ and $u=s+\mathbf i t \in \Gamma_+$ that  
\begin{equation*}
\big|F(u) \big|<  K ^ 8 e^{\sum_{i=1}^4\frac{\pi}{2}\mathrm{Im}\big((u-t_i)(u-t_i-Q)\big)-\sum_{j=1}^4\frac{\pi}{2}\mathrm{Im}\big((q_j-u)(q_j-u-Q)\big)} 
 = K^8M(b)e^ {-2\pi \mathrm{Re}Qt},
\end{equation*}
and for each $b\in C$ and $u=s-\mathbf i t\in \Gamma_-$ that 
\begin{equation*}
\big|F(u) \big|<  K ^ 8 e^{-\sum_{i=1}^4\frac{\pi}{2}\mathrm{Im}\big((u-t_i)(u-t_i-Q)\big)+\sum_{j=1}^4\frac{\pi}{2}\mathrm{Im}\big((q_j-u)(q_j-u-Q)\big)} 
 = K^8M(b)^{-1}e^ {-2\pi  \mathrm{Re}Qt},
\end{equation*}
where $M(b)=e^{-2\pi s\mathrm{Im}Q+\frac{\pi}{2}\mathrm{Im}\big(\sum_{i=1}^4t_i^2-\sum_{j=1}^4q_j^2+(\sum_{i=1}^4t_i+\sum_{j=1}^4q_j)Q\big)}.$ Let $$M_{C,1}=K^8\max_{b\in C}\max\big\{M(b),M(b)^{-1}\big\}$$
and $N_C=\min_{b\in C}\mathrm{Re}Q.$ Then 
\begin{equation}\label{22.3}
    |F(u)|< M_{C,1}e^{-2\pi N_Ct}
\end{equation}
for all $b\in C$ and $u\in \Gamma_+\cup\Gamma_-.$ On the other hand, by the compactness of $C\times [-L,L],$
 there is an $M_{C,2}>0$ such that 
\begin{equation}\label{22.4}
    |F(u)|< M_{C,2}e^{-2\pi N_Ct}
\end{equation}
for all $b\in C$ and $u\in \Gamma$ with $|\mathrm{Im}u|\leqslant L.$ Putting (\ref{22.3}) and (\ref{22.4}) together and letting $M_C=\max\{M_{C,1},M_{C,2}\},$ we have
\begin{equation*}
    |F(u)|< M_Ce^{-2\pi N_Ct}
\end{equation*}
for all $b\in C$ and $u\in \Gamma.$ Let $g_C(u)=M_Ce^{-2\pi N_C|\mathrm{Im}u |}.$ As $N_C>0,$ the integral $|\int_\Gamma g_C(u)du|<+\infty;$ and by e.g. \cite[Theorem 2.43]{H}, $\int_\Gamma F(u)du$ is analytic in $b$ on $\big\{ b\in \mathbb C\ \big|\ \arg b\in (-\frac{\pi}{2},\frac{\pi}{2})\big\}.$ Finally, as $\prod_{i=1}^4\prod_{j=1}^4S_b(q_j-t_i)^{-\frac{1}{2}}$ is also analytic in $b,$ the integral in the proposition depends analytically on $b.$
\end{proof}

\subsection{Basic properties of complex $b$-$6j$ symbols}

From \eqref{b6j}, one sees immediately the following tetrahedral symmetry of the b-6j symbols.

\begin{proposition}[Tetrahedral symmetry]\label{tetra}
\begin{equation*}
    \begin{Bmatrix} 
      a_1 & a_2 & a_3 \\
      a_4 & a_5 & a_6
   \end{Bmatrix}_b=\begin{Bmatrix} 
      a_2 & a_1 & a_3 \\
      a_5 & a_4 & a_6
   \end{Bmatrix}_b=\begin{Bmatrix} 
      a_1 & a_3 & a_2 \\
      a_4 & a_6 & a_5
   \end{Bmatrix}_b=\begin{Bmatrix} 
      a_1 & a_5 & a_6 \\
      a_4 & a_2 & a_3
   \end{Bmatrix}_b.
\end{equation*}
\end{proposition}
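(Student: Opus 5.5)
The plan is to show that each of the three permutations of $(a_1,\dots,a_6)$ in Proposition \ref{tetra} only permutes the sets $\{t_1,\dots,t_4\}$ and $\{q_1,q_2,q_3\}$ and fixes $q_4=2Q$. Both the integrand and the normalizing prefactor in (\ref{b6j}) are symmetric under such permutations. Hence the two sides are literally the same expression, and admissibility and the contour $\Gamma$ carry over automatically. Concretely, write $t_i'$ and $q_j'$ for the quantities built from the permuted six-tuple. We then read off the induced permutations case by case.

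For $(a_1,a_2,a_3,a_4,a_5,a_6)\mapsto(a_2,a_1,a_3,a_5,a_4,a_6)$, we get
\begin{align*}
t_1'&=a_2+a_1+a_3=t_1, & t_2'&=a_2+a_4+a_6=t_3,\\
t_3'&=a_1+a_5+a_6=t_2, & t_4'&=a_3+a_5+a_4=t_4,
\end{align*}
and
\begin{align*}
q_1'&=a_2+a_1+a_5+a_4=q_1, & q_2'&=a_2+a_3+a_5+a_6=q_3, & q_3'&=a_1+a_3+a_4+a_6=q_2.
\end{align*}
For $(a_1,a_3,a_2,a_4,a_6,a_5)$, we get $t'=(t_1,t_2,t_4,t_3)$ and $q'=(q_2,q_1,q_3)$ by the same bookkeeping. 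For $(a_1,a_5,a_6,a_4,a_2,a_3)$, we get $t_1'=t_2$, $t_2'=t_1$, $t_3'=a_5+a_4+a_3=t_4$ and $t_4'=a_6+a_4+a_2=t_3$. We also get $q_1'=a_1+a_5+a_4+a_2=q_1$, $q_2'=a_1+a_6+a_4+a_3=q_2$ and $q_3'=a_5+a_6+a_2+a_3=q_3$. In every case $q_4'=2Q=q_4$.

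With these identities in hand, $b$-admissibility of the permuted tuple is equivalent to that of the original, since the set of differences $\{q_j-t_i\}$ is unchanged. Thus both sides are defined simultaneously. The prefactor $\prod_{i,j}S_b(q_j-t_i)^{\frac12}$ is a product over all pairs $(i,j)$, so it is invariant under independent permutations of the $i$ and $j$ indices. The integrand $\prod_i S_b(u-t_i)\prod_j S_b(q_j-u)$ is likewise invariant. The interval $(\max\{\mathrm{Re}\,t_i\},\min\{\mathrm{Re}\,q_j\})$ is unchanged as well, so the same vertical line $\Gamma$ may be used for both sides. By Proposition \ref{lem: ab conver}, the value does not depend on which admissible $\Gamma$ is chosen, so the equality holds.

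There is no real obstacle here. The only step needing care is the index bookkeeping above: one must check that $q_4=2Q$ is never moved and that each permutation preserves the partition into the four $t$'s and three of the $q$'s.
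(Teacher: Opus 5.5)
Your proof is correct and is the same argument as the paper's, which simply notes that the symmetry is immediate from \eqref{b6j}. Your explicit check that each permutation only permutes $t_1,\dots,t_4$ and $q_1,q_2,q_3$ while fixing $q_4=2Q$, so that admissibility, the prefactor, the integrand and the contour are all unchanged, is exactly the bookkeeping behind that remark.
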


We also have the following  reflection symmetry of the $b$-$6j$ symbols. 

\begin{proposition}[Reflection symmetry]\label{reflection}
    If $(a_1,\dots,a_6)$ is $b$-admissible, then $(Q-a_1,a_2,\dots,a_6)$ is also $b$-admissible, and   
$$\bigg\{\begin{matrix} a_1 & a_2 & a_3 \\ a_4 & a_5 & a_6 \end{matrix} \bigg\}_b=\bigg\{\begin{matrix} Q-a_1 & a_2 & a_3 \\ a_4 & a_5 & a_6 \end{matrix} \bigg\}_b.$$
Moreover, the same result holds if we change $a_k\mapsto Q-a_k$ for any $k\in\{1,\dots,6\}.$
\end{proposition}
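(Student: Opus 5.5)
The plan is to reduce the statement to a known transformation formula for hyperbolic hypergeometric integrals. The analytic input needed for complex $b$ comes from Proposition \ref{lem: ab conver}.

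First, admissibility. Under $a_1\mapsto Q-a_1$, put $d=Q-2a_1$. Then $t_1,t_2,q_1,q_2$ are shifted by $d$, while $t_3,t_4,q_3$ and $q_4=2Q$ are unchanged. For the sixteen differences $q_j'-t_i'$ this gives:
\begin{itemize}
\item pairs with both members shifted, or both unshifted, give exactly the old $q_j-t_i$;
\item each of the eight mixed pairs becomes either another old difference or $Q$ minus one.
\end{itemize}
For example, $q_1'-t_3=Q-a_1+a_5-a_6=Q-(q_2-t_4)$ and $q_3-t_1'=a_5+a_6+a_1-Q$. I will tabulate all eight mixed cases and check the last kind in the same way. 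Since the condition $0<\mathrm{Re}\,x<\mathrm{Re}Q$ is invariant under $x\mapsto Q-x$, admissibility of $(Q-a_1,a_2,\dots,a_6)$ follows. The same table shows that the prefactors $\prod_{i,j}S_b(q_j-t_i)^{\frac12}$ on the two sides differ only by finitely many explicit factors $S_b(x)^{\pm1}$. Here I use $S_b(x)S_b(Q-x)=1$, which follows directly from \eqref{eq:def-S} because the integrand is odd in $\frac{Q}{2}-z$, and its square-root version.

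Second, the integrals. The integrand has the balancing $\sum_j q_j-\sum_i t_i=2Q$, since $\sum_i t_i=2\sum_k a_k$ and $q_1+q_2+q_3=2\sum_k a_k$. It is therefore the top-level hyperbolic hypergeometric integral with eight parameters, i.e. the hyperbolic analogue of Rains' $W(E_7)$-symmetric elliptic integral, studied by van de Bult. That integral has a transformation changing four of the eight parameters by a common shift, together with an explicit product of $S_b$-factors. I will match the shift of $(t_1,t_2,q_1,q_2)$ by $d$ to this transformation, with the remaining parameters $t_3,t_4,q_3,q_4$ playing the role of the unshifted ones. Then I will check that its product of $S_b$-factors cancels exactly against the prefactor discrepancy from the first step.

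Third, extension to complex $b$. For real $b>0$ the identity is known; see also \cite{LMSWY2,TesVar}. Fix an admissible six-tuple. On a connected neighbourhood of a real $b$, both sides are analytic in $b$ by Proposition \ref{lem: ab conver}, and admissibility is an open condition. The identity of the analytic continuations then follows by the identity theorem. To cover all admissible complex data, I will instead treat both sides as analytic functions of $(a_1,\dots,a_6,b)$ on the connected admissible domain. Connectedness holds because the domain is defined by linear inequalities on real parts, for fixed $\arg b$ and varying $b$. Uniform bounds from Lemma \ref{Sb-bound2} justify analyticity in the $a_k$ as well. The cases $a_k\mapsto Q-a_k$ for other $k$ follow from the case $k=1$ together with Proposition \ref{tetra}.

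The main obstacle is the second step. The transformation has to be stated precisely in the present normalization, and each $S_b$-factor must be matched against the prefactor, including the branch of $S_b(\cdot)^{\frac12}$. If citing van de Bult's formula proves awkward, I would try a direct route instead: shift the contour, and use the functional equations $S_b(z+b)=2\sin(\pi b z)S_b(z)$ and its $b^{-1}$ analogue, which are analytic in $b$. The aim would be to exhibit the $u$-integral as a Fourier-type pairing that is visibly symmetric. I expect the cited transformation to be the cleaner path.
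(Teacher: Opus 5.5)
Your proposal is correct, but it takes a different route from the paper. The paper gives no direct argument. It imports both the admissibility statement and the identity for real $b$ from \cite[Proposition 3.4]{LMSWY2}. It then passes to complex $b$ by analytic continuation in $b$ alone, via Proposition~\ref{lem: ab conver}. You instead check admissibility by hand, derive the identity from a transformation formula for the underlying integral, and continue jointly in $(a_1,\dots,a_6,b)$.

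Your table is right. In fact all eight mixed differences become $Q$ minus a mixed difference, e.g.\ $q_2'-t_4=Q-(q_1-t_3)$ and $q_4-t_1'=Q-(q_3-t_2)$. So the two prefactors differ by exactly $\prod_{\mathrm{mixed}}S_b(q_j-t_i)$, with no square-root branch issue. Your reduction of general $k$ to $k=1$ via Proposition~\ref{tetra} is also fine, since those symmetries act transitively on the six positions.

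One correction to your second step. The integral $\int\prod_iS_b(u-t_i)\prod_jS_b(q_j-u)\,du$ is not the top-level $W(E_7)$ integral, which is $BC_1$-symmetric with factors $S_b(\mu_j\pm u)$ and $1/S_b(\pm 2u)$. It is the non-symmetric degeneration of that integral, with six effective parameters (the $W(E_6)$ level). The formula you need is the degenerate form of the basic $E_7$ transformation for the four-element subset $\{q_1,q_2,-t_3,-t_4\}$. After translating $u$, it shifts $t_1,t_2,q_1,q_2$ by $Q-(q_1+q_2-t_3-t_4)=Q-2a_1$ and fixes $t_3,t_4,q_3,q_4$. Its prefactor is precisely the eight mixed $S_b(q_j-t_i)$, so the cancellation you anticipate does occur.

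Your third step buys something the paper's one-line continuation does not. Admissibility depends on $b$ only through $\mathrm{Re}\,Q=(|b|+|b|^{-1})\cos(\arg b)$, and real $b$ forces $\mathrm{Re}\,Q\geqslant 2$. For example, $a_1=\dots=a_6=\frac12$ is $b$-admissible exactly when $\frac34<\mathrm{Re}\,Q<\frac32$. That holds for $b=e^{\pi\mathbf i/3}$ (where $Q=1$) but for no real $b$, so continuing in $b$ with the six-tuple fixed never reaches this case from the real axis.

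Continuing over the admissible domain in $(a,b)$ does reach it, because that domain is connected. Each fibre over $b$ is convex (strict linear inequalities on $\mathrm{Re}\,a$) and contains $a_k\equiv\frac{Q}{2}$; this is the precise form of your connectedness claim. The paper's argument does suffice for the tuples $a_k=\frac{Q}{2}\pm\frac{\theta_k}{2\pi b}$ used in Theorem~\ref{vol} when $|b|$ is small. Yours proves the proposition as stated, and your second step, with the correct degenerate formula, also gives an independent proof of the real-$b$ case.
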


\begin{proof} The $b$-admissibility part follows verbatim that of \cite[Proposition 3.4]{LMSWY2}. The reflection symmetry is proved in \cite[Proposition 3.4]{LMSWY2}  for all real $b,$ and the case for a complex $b$ follows immediately from the analytic dependence on $b$ stated in Proposition \ref{lem: ab conver}. 
\end{proof}

The following Proposition \ref{conjugate} states the behavior of the complex $b$-$6j$ symbols under the simultaneous complex conjugation of  the six-tuple $(a_1,\dots,a_6)$ and the index $b.$

\begin{proposition}[Complex conjugation]\label{conjugate}
$(\overline{a_1},\dots,\overline{a_6})$ is $\overline b$-admissible if and only if $(a_1,\dots,a_6)$ is $b$-admissible, and 
    \begin{equation*}
\bigg\{\begin{matrix} \overline {a_1} & \overline {a_2} & \overline {a_3} \\ \overline {a_4} & \overline {a_5} & \overline {a_6} \end{matrix} \bigg\}_{\overline b} = 
\overline{\bigg\{\begin{matrix} a_1 & a_2 & a_3 \\ a_4 & a_5 & a_6 \end{matrix} \bigg\}_b}.
\end{equation*}
\end{proposition}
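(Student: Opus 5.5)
The plan is to reduce everything to a single identity for the double sine function under simultaneous conjugation, namely
\begin{equation*}
\overline{S_b(z)}=S_{\overline b}(\overline z)\qquad\text{and}\qquad \overline{S_b(z)^{\frac12}}=S_{\overline b}(\overline z)^{\frac12}
\end{equation*}
for $0<\mathrm{Re}z<\mathrm{Re}Q.$ Then I will push the conjugation through the integral in (\ref{b6j}) term by term.

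\textbf{Admissibility.} Since $\overline{b^{-1}}=\overline b^{-1},$ we have $Q(\overline b)=\overline{Q(b)},$ so $\mathrm{Re}\,Q$ is the same for $b$ and $\overline b.$ The quantities $t_i,q_j$ built from $(\overline{a_1},\dots,\overline{a_6})$ and $\overline b$ are exactly $\overline{t_i},\overline{q_j}$; this holds also for $q_4=2Q.$ Real parts are unchanged under conjugation, so the conditions $0<\mathrm{Re}(q_j-t_i)<\mathrm{Re}Q$ are identical for the two tuples. This gives the equivalence.

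\textbf{Conjugation identity for $S_b$.} I will conjugate the defining integral (\ref{eq:def-S}). The integrand is $f_b(z,t)=\frac{\sinh((\frac Q2-z)t)}{4t\sinh(\frac{bt}{2})\sinh(\frac{t}{2b})}$, and it satisfies $\overline{f_b(z,t)}=f_{\overline b}(\overline z,\overline t).$ Parametrize $\Omega$ as the real line indented by a small semicircle above $0.$ Conjugating maps $\Omega$ to the real line indented \emph{below} $0$, and orientation is preserved. This gives
\begin{equation*}
\overline{\int_\Omega f_b(z,t)\,dt}=\int_{\overline\Omega} f_{\overline b}(\overline z,t)\,dt.
\end{equation*}
The difference between integrating over $\overline\Omega$ and over $\Omega$ is $2\pi\mathbf i$ times the residue at $t=0$. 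The integrand is odd in $t$: both the numerator and $t\sinh\sinh$ change sign appropriately, since $t\sinh(\frac{bt}2)\sinh(\frac t{2b})$ is odd. An odd function's Laurent expansion at $0$ has only odd powers, so the residue (the coefficient of $t^{-1}$) need not vanish automatically. I will compute it directly. Near $0$ the integrand is $\frac{(\frac Q2-z)t+\frac16(\frac Q2-z)^3t^3+\cdots}{t^3(1+\cdots)}$, which has a $t^{-2}$ term and a $t^0$ term but no $t^{-1}$ term, because the expansion is even. So the residue is zero. Hence $\overline{\log S_b(z)}=\log S_{\overline b}(\overline z)$ for the specific integral logarithms, and halving the exponent gives the statement for the square roots as well.

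\textbf{The integral.} The contour $\Gamma$ is a vertical line $\mathrm{Re}u=s$, with $s$ real. Substituting $u=\overline v$ sends $\Gamma$ to the same line with reversed orientation, and $du=\overline{dv}$ along it. Writing $u=s+\mathbf i y$ with $y$ real, we get
\begin{equation*}
\overline{\int_\Gamma F_b(u)\,du}=\overline{\mathbf i\int F_b(s+\mathbf i y)\,dy}=-\mathbf i\int \overline{F_b(s+\mathbf iy)}\,dy=-\mathbf i\int F_{\overline b}(s-\mathbf iy)\,dy=\mathbf i\int F_{\overline b}(s+\mathbf iy)\,dy,
\end{equation*}
using the identity above on each factor. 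The last expression is the integral for the conjugated data over the line $\mathrm{Re}u=s$. This line is admissible for that data because real parts are unchanged, and by Proposition \ref{lem: ab conver} the choice of line does not matter. Absolute convergence from Proposition \ref{lem: ab conver} justifies conjugating under the integral sign. The prefactor $\prod S_b(q_j-t_i)^{-\frac12}$ conjugates to the corresponding prefactor for $\overline b$ by the same identity, which finishes the proof.

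\textbf{Main obstacle.} The step needing care is the contour $\Omega$: conjugation moves the indentation from above the pole to below it. I will handle this with the explicit residue computation above, checking that the $t^{-1}$ coefficient vanishes.
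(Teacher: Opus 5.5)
\textbf{Same route as the paper.} Your proposal follows the paper's proof: first show $\overline{S_b(z)}=S_{\overline b}(\overline z)$ from (\ref{eq:def-S}), then substitute $u=\overline v$ in (\ref{b6j}). Your treatment of $\Omega$ is more careful than the paper's, which silently identifies $\overline\Omega$ with $\Omega$. One correction there: the integrand is even, not odd, since it is an odd numerator over the odd denominator $4t\sinh(\frac{bt}{2})\sinh(\frac{t}{2b})$. That is exactly why the $t^{-1}$ coefficient vanishes, and your explicit expansion is correct.

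\textbf{The gap.} The last equality in your display, $-\mathbf i\int_{\mathbb R}F_{\overline b}(s-\mathbf iy)\,dy=\mathbf i\int_{\mathbb R}F_{\overline b}(s+\mathbf iy)\,dy$, is false. The substitution $y\mapsto -y$ over all of $\mathbb R$ leaves the integral unchanged. So the correct value is $-\mathbf i\int_{\mathbb R}F_{\overline b}(s+\mathbf iy)\,dy=-\int_\Gamma F_{\overline b}(u)\,du$. This minus sign is precisely the orientation reversal you noticed, and nothing cancels it, because the prefactor conjugates without a sign.

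Write $\{\boldsymbol a\}_b$ for the symbol. Read literally, Definition \ref{B6j} is a complex line integral in $du$ along an oriented vertical line, and under that reading your argument proves $\{\overline{\boldsymbol a}\}_{\overline b}=-\overline{\{\boldsymbol a\}_b}$. This is not an artifact of your computation. Take real $b$ and real admissible $a_k$, e.g.\ $a_k=\frac Q2\pm\frac{\theta_k}{2\pi b}$. Then $h(y)=F_b(s+\mathbf iy)$ satisfies $\overline{h(y)}=h(-y)$, so $\int_\Gamma F_b\,du=\mathbf i\int_{\mathbb R}h\,dy\in\mathbf i\mathbb R$, while the prefactor is positive. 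The stated identity would force this symbol to vanish, contradicting its nonzero asymptotics.

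The proposition holds exactly under the normalization $\frac1{\mathbf i}\int_\Gamma(\cdots)\,du=\int_{\mathbb R}(\cdots)|_{u=s+\mathbf iy}\,dy$. This is also the normalization consistent with the real leading constant $\frac12$ in Theorem \ref{vol} at $\arg b=0$. Under it your computation closes in one line: $\overline{\int_{\mathbb R}F_b(s+\mathbf iy)\,dy}=\int_{\mathbb R}F_{\overline b}(s-\mathbf iy)\,dy=\int_{\mathbb R}F_{\overline b}(s+\mathbf iy)\,dy$. You need to state that normalization explicitly, or conclude the identity only up to sign, rather than produce the sign through a faulty substitution.

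The paper's own proof has the same slip. Its final equality $\int_{\overline\Gamma}\overline{G_{\boldsymbol a,b}(v)}\,d\overline v=\overline{\int_\Gamma G_{\boldsymbol a,b}(v)\,dv}$ treats $\overline\Gamma$ and $\Gamma$ as the same oriented contour, although conjugation reverses the direction of a vertical line.
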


\begin{proof}
 For $i,j\in\{1,2,3,4\},$ $0<\mathrm{Re}(\overline{q_j}-\overline{t_i})<\mathrm{Re}\overline Q$ 
if and only if $0<\mathrm{Re}(q_j-t_i)<\mathrm{Re}Q,$ hence $(\overline{a_1},\dots,\overline{a_6})$ is $\overline b$-admissible  if and only if $(a_1,\dots,a_6)$ is $b$-admissible.

To show the equality, we first claim that for   $b\in\mathbb C$ with $\arg b\in (-\frac{\pi}{2},\frac{\pi}{2})$ and $z\in \mathbb C$ that is not a pole of $S_b(z),$
\begin{equation}\label{conj}
S_{\overline b}(\overline z)=\overline{S_b(z)}.
\end{equation}
Indeed, let $F_{z,b}(t)=\frac{\sinh\big((\frac{Q}{2}-z)t\big)}{4t\sinh(\frac{bt}{2})\sinh(\frac{t}{2b})}$ be the integrand  in (\ref{eq:def-S}). Then we have $F_{\overline z,\overline b}(\overline t)=\overline{F_{z,b}(t)}.$ As a consequence, $\int_\Omega F_{\overline z,\overline b}(t)dt=\int_{\overline \Omega} F_{\overline z,\overline b}(\overline s)d\overline s=\int_{\overline \Omega} \overline{F_{z,b}(s)}d\overline s =\overline {\int_\Omega F_{z,b}(s)ds},$ where  the first equality comes from the change of variable $t=\overline s;$ and $S_{\overline b}(\overline z)=\exp\big(\int_\Omega F_{\overline z,\overline b}(t)dt\big)=\exp\big(\overline {\int_\Omega F_{z,b}(s)ds}\big)=\overline{\exp\big(\int_\Omega F_{z,b}(s)ds\big)}=\overline{S_b(z)},$ where again the second equality comes from the change of variable $t=\overline s.$

Then for $\boldsymbol a=(a_1,\dots,a_6),$ we let  $G_{\boldsymbol a,b}(u)=\prod_{i=1}^4S_b(u-t_i)\prod_{j=1}^4S_b(q_j-u)$ be the integrand  in (\ref{b6j}). By (\ref{conj}), we have $G_{\overline{\boldsymbol a},\overline b}(\overline u)=\overline{G_{\boldsymbol a,b}(u)}.$ Therefore, $\int_\Gamma G_{\overline{\boldsymbol a},\overline b}(u)du=\int_{\overline \Gamma} G_{\overline{\boldsymbol a},\overline b}(\overline v)d\overline v=\int_{\overline \Gamma} \overline{G_{\boldsymbol a,b}(v)} d\overline v=\overline {\int_\Gamma G_{\boldsymbol a,b}(v)dv},$ where the first equality comes from the change of variable $u=\overline v;$ and the desired equality follows immediately.
\end{proof}

\subsection{Complex $b$-$6j$ symbols parametrized by the dihedral angles}

In this subsection, we focus on the case that $a_k=\frac{Q}{2}\pm\frac{\theta_k}{2\pi b}$ for each $k\in\{1,\dots,6\}$  with the sign $+$ or $-$ chosen arbitrarily, and with  $(\theta_1,\dots,\theta_6)$ the dihedral angles of a hyperideal hyperbolic tetrahedron. In this case,  the following Proposition \ref{contour}  gives us a flexibility to choose the contour of integral in the computation of the $b$-$6j$ symbols, which is a crucial step in the proof of Theorem \ref{vol}. 

To state the result, we first recall that the double sine function $S_b$ satisfies the functional equation (see e.g \cite[A.15]{TesVar})
\begin{equation}\label{FE1}
S_b(z+b^{\pm 1})=2\sin (\pi b ^{\pm 1} z)S_b(z),
\end{equation}
and extends to a meromorphic function on $\mathbb C$ with the set of poles $\{ -nb-mb^{-1} \ |\ m, n\in \mathbb Z_{\geqslant 0}\}$ and the set of  zeros  $\{ Q + nb + mb^{-1} \ |\ m, n\in \mathbb Z_{\geqslant 0}\}.$ See Figure \ref{pole}. 
\begin{figure}[htbp]
\centering
\includegraphics[scale=0.5]{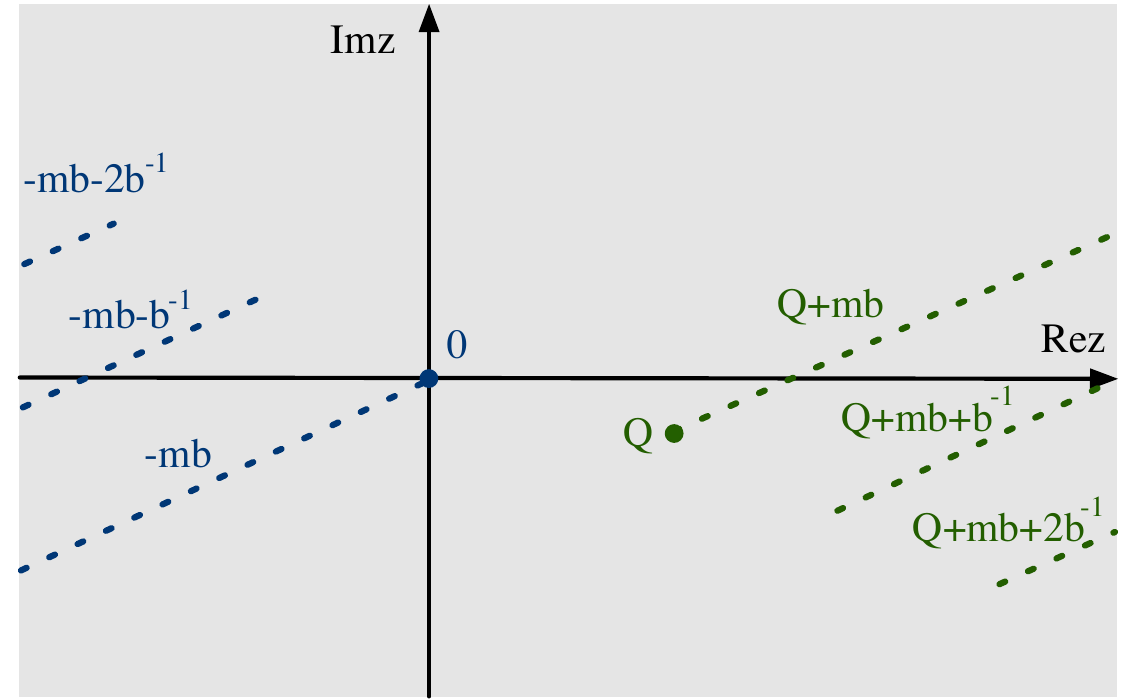}
\caption{The poles and zeros of $S_b(z).$}
\label{pole}
\end{figure}

\begin{proposition}\label{contour} Let $(\theta_1,\dots,\theta_6)$ be the dihedral angles of a hyperideal hyperbolic tetrahedron. For each $k\in\{1,\dots,6\},$ let $a_k=\frac{Q}{2}\pm\frac{\theta_k}{2\pi b}$ with the sign $+$ or $-$ chosen arbitrarily.
Let $t_{\text{max}}\in \{t_1,t_2,t_3,t_4\}$ be the one with  $\mathrm{Re}t_{\text{max}}=\max\{\mathrm{Re}t_i,\mathrm{Re}t_2,\mathrm{Re}t_3,\mathrm{Re}t_4\}$ and let $q_{\text{min}}\in \{q_1,q_2,q_3,q_4\}$ be the one with $\mathrm{Re}q_{\text{min}}=\min\{\mathrm{Re}q_1,\mathrm{Re}q_2,\mathrm{Re}q_3,\mathrm{Re}q_4\}.$ Let 
$\Gamma^*$ be a contour as depicted in (the shaded region of) Figure \ref{Gamma*}, i.e.,
\begin{enumerate}[(1)]
\item for each $u\in\Gamma^*,$ 
$$\arg(u-t_{\text{max}})\notin[\pi-\arg b,\pi+\arg b]\quad\text{and}\quad\arg(u-q_{\text{min}})\notin [-\arg b,\arg b],$$
and 
\item one end of $\Gamma^*$ is a ray of argument in $[\arg b, \pi - \arg b],$ and the other end of $\Gamma^*$ is a ray of argument  in $[\arg b-\pi,-\arg b].$ 
\end{enumerate}
Then for $b\in \mathbb C$ with $\arg b\in \big(0,\frac{\pi}{2}\big),$ the $b$-$6j$ symbol can be computed by
\begin{equation*}
\bigg\{\begin{matrix} a_1 & a_2 & a_3 \\ a_4 & a_5 & a_6 \end{matrix} \bigg\}_b=\Bigg(\frac{1}{\prod_{i=1}^4\prod_{j=1}^4S_b(q_j-t_i)}\Bigg)^{\frac{1}{2}}\int_{\Gamma^*} \prod_{i=1}^4S_b(u-t_i)\prod_{j=1}^4S_b(q_j-u)du.
\end{equation*} 
\end{proposition}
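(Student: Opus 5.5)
The plan is to deform the vertical line $\Gamma$ of Definition \ref{B6j} into $\Gamma^*$ using Cauchy's theorem. Two things must be checked: no poles of the integrand are crossed, and the contributions from arcs at infinity vanish.

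\textbf{Locating the poles.} By the description of the poles and zeros of $S_b$, the factor $S_b(u-t_i)$ has poles at $u=t_i-nb-mb^{-1}$ with $m,n\in\mathbb Z_{\geqslant 0}$. Since $\arg b\in(0,\frac{\pi}{2})$, the vectors $-b$ and $-b^{-1}$ have arguments $\pi+\arg b$ and $\pi-\arg b$. Hence these poles lie in the closed sector with vertex $t_i$ and arguments in $[\pi-\arg b,\pi+\arg b]$. Similarly, $S_b(q_j-u)$ has poles at $u=q_j+nb+mb^{-1}$, which lie in the sector with vertex $q_j$ and arguments in $[-\arg b,\arg b]$.

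Next I will use the dihedral-angle parametrization. Here $\mathrm{Re}(q_j-t_i)>\mathrm{Re}\frac b2$, and more precisely the differences $q_j-t_i$ are of the form $Q\cdot(\text{integer})/2\pm\frac{\text{angle combination}}{2\pi b}$. From this I want to show that every sector at $t_i$ is contained in the sector at $t_{\max}$, and every sector at $q_j$ is contained in the one at $q_{\min}$. It suffices to check that $t_i-t_{\max}$ lies in the backward cone spanned by $-b$ and $-b^{-1}$, i.e. that $t_{\max}$ lies in the forward cone of each $t_i$. This is the step I expect to be the main obstacle. The differences $t_i-t_j$ have the form $\frac{\pm\theta\pm\theta'\pm\cdots}{2\pi b}$, which are real multiples of $b^{-1}$. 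So all the $t_i$ lie on one line through $t_{\max}$ in direction $b^{-1}$, and similarly all the $q_j$ lie on a line in direction $b^{-1}$ (the term $q_4=2Q$ has to be handled as well, perhaps using the $Q$-offsets). Ordering by real part then places each $t_i$ on the ray from $t_{\max}$ in direction $-b^{-1}$, whose argument $\pi-\arg b$ lies in the sector, and each $q_j$ on the ray from $q_{\min}$ in direction $+b^{-1}$. If $q_4$ breaks this pattern, I will fall back on the estimate $\mathrm{Re}(q_j-t_i)>\mathrm{Re}\frac b2$ together with the relevant sector inclusions. Given these inclusions, hypothesis (1) on $\Gamma^*$ ensures that $\Gamma^*$ avoids all poles. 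Moreover, the region between $\Gamma$ and $\Gamma^*$ (the shaded region) contains no poles, because the pole sectors lie entirely to the left and to the right of both contours.

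\textbf{Vanishing of the arcs at infinity.} Connect the ends of $\Gamma$ and $\Gamma^*$ by arcs of large radius $R$ inside the pole-free region. On these arcs I need decay of $F(u)=\prod_iS_b(u-t_i)\prod_jS_b(q_j-u)$. Lemma \ref{Sb-bound2} only covers vertical strips, so I will instead use the asymptotic formula (\ref{Sbpsi})--(\ref{psi-psi}) directly. Away from the pole and zero sectors, as $|z|\to\infty$ one has $\log S_b(z)=\mp\frac{\pi\mathbf i}{2}\big(z^2-Qz\big)+O(1)$, with the sign given by the side of the line $\mathbb R b^{-1}$-type dividing directions.

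Summing these over the eight factors, the quadratic terms cancel, as in the proof of Proposition \ref{lem: ab conver}. What remains is $|F(u)|\lesssim e^{\mathrm{Re}(\mp\pi\mathbf i\, cu)}$ with $c=2Q$, up to constants. For $u=Re^{\mathbf i\phi}$ this gives decay exactly when $\phi$ lies strictly between the angle bounds appearing in condition (2). This is why the ends of $\Gamma^*$ are required to have arguments in $[\arg b,\pi-\arg b]$ and $[\arg b-\pi,-\arg b]$; at the endpoint angles, I expect the $O(1)$ and linear terms still to give exponential decay, using $\mathrm{Re}Q>0$. The bounds are uniform on the arcs, and the arc length is $O(R)$ while the integrand decays exponentially in $R$, so the arc contributions tend to $0$. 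Cauchy's theorem then gives equality of the two integrals, and the prefactor is unchanged.
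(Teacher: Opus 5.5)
Your architecture matches the paper's. The vertical line of Definition \ref{B6j} is itself an admissible $\Gamma^*$. The pole sectors nest because $t_i-t_{\max}$ and $q_j-q_{\min}$ are real multiples of $b^{-1}$; this includes $q_4$, since $q_4-2Q=0$. After the quadratic terms cancel, the integrand is $O(e^{-2\pi\mathrm{Im}(Qu)})$ on the upper end.

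The gap is in the analytic input everything rests on. You need the bound $|S_b(z)|\lesssim e^{\pm\frac{\pi}{2}\mathrm{Im}(z(z-Q))}$ for $z=u-t_i$ and $z=q_j-u$ as $u\to\infty$ along your arcs and along the tilted ends of $\Gamma^*$. You propose to read this off from (\ref{Sbpsi})--(\ref{psi-psi}), but those identities hold only for $0<\mathrm{Re}z<\mathrm{Re}Q$. Indeed, $\psi_b(w)$ is defined by an integral requiring $\mathrm{Re}w>0$, and the bound on $\int_0^{+\infty}h_b(t)e^{-wt}dt$ needs $\mathrm{Re}w>\delta$. Along the arcs (away from the vertical direction) and along any non-vertical end, $|\mathrm{Re}(u-t_i)|$ grows linearly in $|u|$. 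So one of $\psi_b(z)$, $\psi_b(Q-z)$ leaves its domain; this is exactly why Lemma \ref{Sb-bound2} is confined to strips.

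Rotating the Binet integral does not fully rescue this. Since $h_b$ has poles on $2\pi\mathbf i b^{\pm1}(\mathbb Z\setminus\{0\})$, rotation is possible only within $|\arg t|<\frac{\pi}{2}-\arg b$. That gives the estimate for $\arg z$ in compact subsets of the open interval $(\arg b,\pi-\arg b)$. This, and not the decay of $e^{-2\pi\mathrm{Im}(Qu)}$, is where the angle range in (2) comes from: that factor decays on the strictly larger range $\arg u\in(-\arg Q,\pi-\arg Q)$. The rotated estimate degenerates precisely at the endpoint directions $\arg b$ and $\pi-\arg b$, which (2) explicitly allows. Your remark about $\mathrm{Re}Q>0$ at the endpoints does not address this: the issue there is whether the $S_b$ estimate holds at all, not the sign of the exponent.

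The paper obtains the estimate by a different tool. It uses the infinite product formula for $\Phi_b$ (Proposition \ref{Prod}), which is available because $\mathrm{Im}(b^2)>0$, to get (\ref{SProd}). From this, Lemma \ref{Sb-bound} gives two-sided bounds on the closed regions $R^\pm_\delta$, endpoint angles included.

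A second step is also missing. Write $C^+(z)=z+\mathbb R_{\geqslant0}b+\mathbb R_{\geqslant0}b^{-1}$ and $C^-(z)=z-\mathbb R_{\geqslant0}b-\mathbb R_{\geqslant0}b^{-1}$. You rightly claim your asymptotic only away from both the pole sector $C^-(0)$ and the zero sector $C^+(Q)$ of $S_b$. However, you only check that $u$ avoids the pole sectors $C^-(t_{\max})$ and $C^+(q_{\min})$ of the integrand. When an end of $\Gamma^*$ has argument exactly $\arg b$ or $\pi-\arg b$, the point $u-t_i$ (resp. $q_j-u$) stays at bounded distance from the boundary of $C^+(Q)$. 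Condition (1) alone does not decide which side it lies on.

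Lemma \ref{in} settles this through the inclusions $C^+(t_i+Q)\subset C^+(q_j)$ and $C^-(q_j-Q)\subset C^-(t_i)$. These rest on $q_j'-t_i'<\frac12$, equivalently the upper admissibility bound $\mathrm{Re}(q_j-t_i)<\mathrm{Re}(Q-\frac b2)$, which comes from $\theta_r+\theta_s+\theta_t<\pi$. Your argument never invokes this input.
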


\begin{figure}[htbp]
\centering
\includegraphics[scale=0.4]{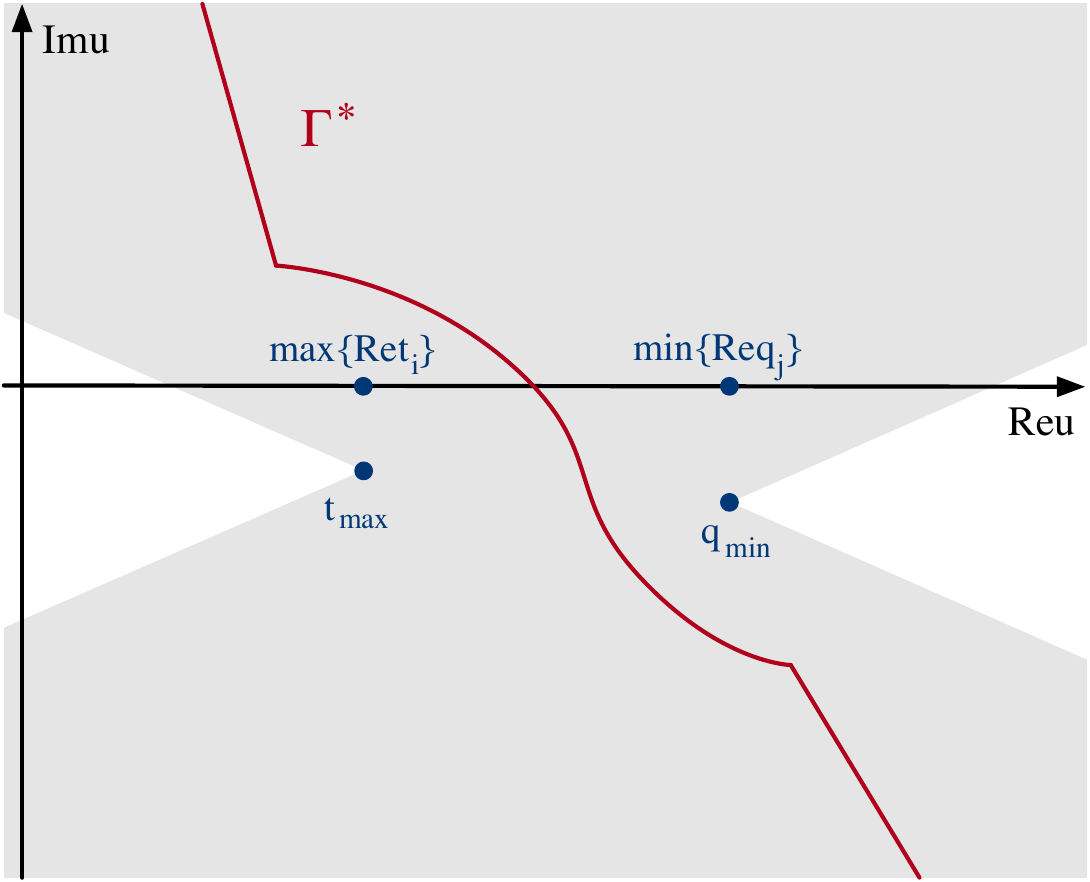}
\caption{The contour $\Gamma^*.$}
\label{Gamma*}
\end{figure}

The proof of Proposition \ref{contour} relies on the relationship between the double sine function and \emph{Faddeev's quantum dilogarithm function}
$$\Phi_b(z)\doteq\exp\bigg(\int_\Omega \frac{e^{-2\mathbf izt}}{4t\sinh(b t)\sinh\big(\frac{t}{b}\big)}dt\bigg)$$
for $b\in\mathbb C$ with $\arg b\in (0,\frac{\pi}{2}).$
The integral absolutely converges for $z\in \mathbb C$ with $-\mathrm{R e}\big(\frac{Q}{2}\big)<\mathrm{Im}z<\mathrm {Re} \big(\frac{Q}{2}\big),$ defining a holomorphic function in such $z;$ and by the functional equation \cite[Formula (48)]{AK}
\begin{equation}\label{phieq}
    \Phi_b\bigg(z-\frac{b^{\pm 1}\mathbf i }{2}\bigg)=\Big(1+e^{2\pi b^{\pm 1}z}\Big)\Phi_b\bigg(z+\frac{b^{\pm 1}\mathbf i }{2}\bigg),
\end{equation}
$\Phi_b(z)$ extends to a meromorphic function on $\mathbb C$ with the set of poles $\big\{\frac{Q}{2}\mathbf i+nb\mathbf i+mb^{-1}\mathbf i \ \big|\ m, n\in \mathbb Z_{\geqslant 0}\big\}$ and the set of  zeros $\big\{-\frac{Q}{2}\mathbf i-nb\mathbf i-mb^{-1}\mathbf i \ \big|\ m, n\in \mathbb Z_{\geqslant 0}\big\}.$ By \cite[A4, A5, A13]{TesVar}, Faddeev's quantum dilogarithm function $\Phi_b$ and the double sine function $S_b$ are related by \begin{equation}\label{PhiS}
    \Phi_b(z)=S_b\bigg(\mathbf iz+\frac{Q}{2}\bigg)e^{\frac{\pi \mathbf i z^2}{2}+\frac{\pi \mathbf i}{24}(b^2+b^{-2})}.
\end{equation}
We need the following infinite product formula for Faddeev's quantum dilogarithm function.

\begin{proposition}\label{Prod}
For $b\in \mathbb C$ with $\arg b\in (0,\frac{\pi}{2}),$ 
$$\Phi_b(z)=\frac{\prod_{n=0}^{+\infty}\Big(1+e^{2\pi b z+(2n+1)\pi \mathbf i b^2}\Big)}{\prod_{n=0}^{+\infty}\Big(1+e^{2\pi b^{-1} z-(2n+1)\pi \mathbf i b^{-2}}\Big)}$$
for each $z\in \mathbb C$ that is not a pole of $\Phi_b(z).$ 
\end{proposition}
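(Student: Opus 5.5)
We will prove the product formula on the strip $|\mathrm{Im}z|<\mathrm{Re}\frac{Q}{2}$ and then extend it to all of $\mathbb C$ by analytic continuation. For $\arg b\in(0,\frac{\pi}{2})$ we have $\mathrm{Im}(b^2)>0$ and $\mathrm{Im}(b^{-2})<0$, so $|e^{\pi\mathbf i b^2}|<1$ and $|e^{-\pi\mathbf i b^{-2}}|<1$. Hence both infinite products converge absolutely and locally uniformly in $z$, and the right-hand side, which we call $P(z)$, is meromorphic on $\mathbb C$. Its zeros come from the numerator factors: $1+e^{2\pi bz+(2n+1)\pi\mathbf i b^2}=0$ exactly when $2\pi bz+(2n+1)\pi\mathbf i b^2=(2k+1)\pi\mathbf i$ for some $k\in\mathbb Z$. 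Its poles come from the denominator factors in the same way. A separate bookkeeping step will have to reconcile these zero and pole sets with the stated lattices; this is easiest to do after the identity is established.

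The plan is the standard one. First, $P$ satisfies the functional equations (\ref{phieq}): replacing $z$ by $z-\frac{b\mathbf i}{2}$ versus $z+\frac{b\mathbf i}{2}$ shifts $e^{2\pi bz}$ by the factor $e^{\mp\pi\mathbf i b^2}$. This telescopes the numerator and leaves the single factor $1+e^{2\pi bz}$. The same shifts multiply $e^{2\pi b^{-1}z}$ by $e^{\mp\pi\mathbf i}=-1$; because the $\pm$ signs in the shift are symmetric, the denominator products coincide, using $e^{\pm\pi\mathbf i}=-1$ both ways. The analogous computation with $b^{-1}$ gives the second equation. The ratio $R=\Phi_b/P$ is then periodic with periods $b\mathbf i$ and $b^{-1}\mathbf i$. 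Since $b^2\notin\mathbb R$, these two periods are $\mathbb R$-linearly independent, so $R$ is elliptic. Moreover, $\Phi_b$ and $P$ have the same zeros and poles on a fundamental domain: compare their divisors in the strip, where $\Phi_b$ is zero-free and pole-free and $P$ must be shown to be so as well. Hence $R$ is entire and elliptic, and therefore constant.

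To fix the constant, we compare at $z=0$, or better, compare asymptotics as $z\to\infty$ along a direction where both sides tend to $1$. From (\ref{PhiS}) and Lemma \ref{Sb-bound2}-type asymptotics, $\Phi_b(z)\to1$ as $\mathrm{Re}(bz)\to-\infty$ in a suitable sector. In that same direction, $e^{2\pi bz}\to0$ and $e^{2\pi b^{-1}z}\to0$, since both $b$ and $b^{-1}$ have positive real part and we can choose $z$ to be large negative real. Thus $P(z)\to1$, and therefore $R\equiv1$.

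Alternatively, and perhaps more cleanly, we would compute directly. For $\mathrm{Im}(b^2)>0$, evaluate the integral in $\log\Phi_b$ by closing the contour $\Omega$ in the upper half plane; this requires $\mathrm{Re}z$ of suitable sign for decay, followed by continuation in $z$. The residues at $t=\pi\mathbf i k/b$ and $t=\pi\mathbf i k b$, for $k\geqslant1$, give two series $\sum_k \frac{(-1)^{k}}{k}\frac{e^{\dots}}{\sin(\dots)}$. Expanding each $1/\sin$ as a geometric series in $e^{\pi\mathbf i k b^2}$ (respectively $e^{-\pi\mathbf i k b^{-2}}$), which converges precisely because of the sign of $\mathrm{Im} b^2$, and summing over $k$, produces $\sum_n\log(1+e^{2\pi bz+(2n+1)\pi\mathbf i b^2})$ minus the analogous series. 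This route gives the formula directly, with the constant built in.

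The main obstacle is justifying the contour closure. One needs decay of the integrand on large arcs that avoid the poles, which requires a careful choice of radii between consecutive poles and a restriction of $z$ to an open set. After that, the identity principle extends the formula to all non-poles. I would present the residue computation as the main argument and use the elliptic-function argument only as a check.
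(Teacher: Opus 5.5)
Your proposal is correct, and your main argument takes a genuinely different route from the paper.

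The paper does not prove the identity on the strip. It quotes it from Formula (44) of \cite{AK} and supplies only the continuation step: both products converge locally uniformly to entire functions, so the right-hand side is meromorphic on $\mathbb C$ and the identity theorem applies. Its check that $Z_{M_b}\setminus Z_{N_b}$ is the pole set of $\Phi_b$ is a consistency check, not a needed step.

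Your residue computation instead derives the strip identity, normalization included, from the integral definition, and your bookkeeping is right. $\Omega$ passes above the triple pole at $0$, so closing upward encloses only the simple poles $t=\pi\mathbf i kb$ and $t=\pi\mathbf i k/b$ ($k\geqslant1$). The expansions $1/\sin(\pi kb^2)=-2\mathbf i\sum_{n\geqslant0}e^{(2n+1)\pi\mathbf i kb^2}$ and $1/\sin(\pi kb^{-2})=2\mathbf i\sum_{n\geqslant0}e^{-(2n+1)\pi\mathbf i kb^{-2}}$ then turn the two residue families into $\log N_b(z)$ and $-\log M_b(z)$ respectively.

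The arc estimate is easier than you fear and needs no sign condition on $\mathrm{Re}z$. The pole rays have distinct arguments $\frac{\pi}{2}\mp\arg b$, so in every direction one of $|\sinh(bt)|$, $|\sinh(t/b)|$ grows exponentially on radii chosen between consecutive poles. This beats $e^{2|z||t|}$ for $z$ in a small disc about $0$. On that disc you also have $|e^{2\pi bz+\pi\mathbf ib^2}|<1$ and $|e^{2\pi b^{-1}z-\pi\mathbf ib^{-2}}|<1$, so the double series converge absolutely and may be rearranged. From there your continuation step is the paper's, reading the right-hand side at points of $Z_{N_b}\cap Z_{M_b}$ as its removable-singularity value. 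Your route buys self-containedness; the paper's buys brevity.

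Your elliptic-function alternative is heavier than it looks. The divisor bookkeeping you postpone must there be done \emph{before} the identity. You need to see that $N_b/M_b$ has no zeros or poles in the open strip, which every orbit of $\mathbb Z b\mathbf i+\mathbb Z b^{-1}\mathbf i$ meets because $\mathrm{Re}\,b<\mathrm{Re}\,Q$. Fixing the constant via $\Phi_b(z)\to1$ as $z\to-\infty$ also needs a true asymptotic for $S_b$. The upper bound of Lemma \ref{Sb-bound2} does not give this, and Lemma \ref{Sb-bound} cannot supply it without circularity, since it is derived from this very proposition.
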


\begin{proof} Observe that for $b\in\mathbb C$ with $\arg b\in(0,\frac{\pi}{2}),$ we have $\mathrm{Im}(b^2)>0.$
Then by \cite[Formula (44)]{AK}, the formula holds for all $z\in \mathbb C$ with $-\mathrm{Im}\big(\frac{Q}{2}\big)<\mathrm{Im}z<\mathrm {Im} \big(\frac{Q}{2}\big);$ and to prove the formula for other $z\in\mathbb C,$ it suffices to show that the right hand side of the equality defines a meromorphic function on $\mathbb C$ that shares the same set of poles with $\Phi_b.$

To this end, we let $N_b(z)=\prod_{n=0}^{+\infty}\big(1+e^{2\pi b z+(2n+1)\pi \mathbf i b^2}\big)$ be the numerator and $M_b(z)=\prod_{n=0}^{+\infty}\big(1+e^{2\pi b^{-1} z-(2n+1)\pi \mathbf i b^{-2}}\big)$ be the denominator of the right hand side of the equality. 

We first claim that both $N_b(z)$ and $M_b(z)$ define holomorphic functions on $\mathbb C,$ as a consequence of which $\frac{N_b(z)}{M_b(z)}$ is a meromorphic function on $\mathbb C.$ Indeed, for each  $z\in \mathbb C,$ we have 
$\sum_{n=0}^{+\infty}\big|e^{2\pi b z+(2n+1)\pi \mathbf i b^{2}}\big|\leqslant\big(\sum_{n=0}^{+\infty}e^{-(2n+1)\pi\mathrm{Im}(b^2)}\big) e^{2\pi |b||z|}$
and 
$\sum_{n=0}^{+\infty}\big|e^{2\pi b^{-1} z-(2n+1)\pi \mathbf i b^{-2}}\big|\leqslant \big(\sum_{n=0}^{+\infty}e^{(2n+1)\pi\mathrm{Im}(b^{-2})}\big)e^{2\pi |b^{-1}||z|}.$ 
Since $\mathrm{Im}(b^2)>0$ and $\mathrm{Im}(b^{-2})<0,$ both of the series above converge uniformly on the disk $|z|\leqslant r$ for each $r>0.$ Then by e.g. \cite[Theorem on p. 437]{K}, both of the infinite products $N_b(z)$ and $M_b(z)$ converge, and define  holomorphic functions on $\mathbb C.$ 

Next, we compute the set of poles of $\frac{N_b(z)}{M_b(z)}.$
To see this, we first observe that the set of zeros of
$N_b(z)$ is 
$$Z_{N_b}=\bigg\{-\Big(n+\frac{1}{2}\Big)b\mathbf i + \Big(m+\frac{1}{2}\Big)b^{-1}\mathbf i \ \bigg|\ n\in \mathbb Z_{\geqslant 0}, m\in \mathbb Z\bigg\}$$
and the set of zeros of $M_b(z)$ is 
$$Z_{M_b}=\bigg\{\Big(n+\frac{1}{2}\Big)b\mathbf i + \Big(m+\frac{1}{2}\Big)b^{-1}\mathbf i \ \bigg|\ n\in \mathbb Z, m\in \mathbb Z_{\geqslant 0}\bigg\};$$
and all these zeros are simple zeros. 
As a consequence, all the points in $Z_N\cap Z_M$ are removable singularities of $\frac{N_b(z)}{M_b(z)},$
and hence the set of poles of $\frac{N_b(z)}{M_b(z)}$
 is 
 $$ Z_{M_b}\setminus Z_{N_b}=\bigg\{\Big(n+\frac{1}{2}\Big)b\mathbf i + \Big(m+\frac{1}{2}\Big)b^{-1}\mathbf i \ \bigg|\ m,n\in \mathbb Z_{\geqslant 0}\bigg\}.$$ 

Finally, the set of poles of $\Phi_b$ is $\big\{\frac{Q}{2}\mathbf i+nb\mathbf i+mb^{-1}\mathbf i \ \big|\ m, n\in \mathbb Z_{\geqslant 0}\big\},$ 
which coincides with $Z_{M_b}\setminus Z_{N_b}.$ This completes the proof. 
\end{proof}

Using Proposition \ref{Prod}, we obtain the following estimate of the double sine function near infinity, which is the key ingredient of the proof of Proposition \ref{contour}. A similar  result can also be found in \cite[Formula (A.11)]{TesVar}.

\begin{lemma}\label{Sb-bound} For $b\in \mathbb C$ with $\arg b\in (0,\frac{\pi}{2})$ and $\delta>0$ sufficiently small, as depicted in Figure \ref{R}, let 
$$R^+_\delta = \Big\{ s Q + t e^{\mathbf i\phi}\ \Big|\  s\in (\delta, 1-\delta),\ \ t\geqslant 0\ \ \text{and}\ \ \phi\in [\arg b, \pi-\arg b]\Big\},$$
and 
$$R^-_\delta = \Big\{ s Q - t e^{\mathbf i\phi}\ \Big|\  s\in (\delta, 1-\delta),\ \ t\geqslant 0\ \ \text{and}\ \ \phi\in [\arg b, \pi-\arg b]\Big\}.$$
Then there exists a constant $K=K_{b,\delta}>0$  such that 
$$|S_b(z)| < Ke^{\frac{\pi}{2}\mathrm{Im}\big(z(z-Q)\big)}$$
for all $z\in R^+_\delta,$ and 
$$|S_b(z)|< Ke^{-\frac{\pi}{2}\mathrm{Im}\big(z(z-Q)\big)}$$
for all $z\in R^-_\delta.$
\end{lemma}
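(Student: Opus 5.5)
We will derive the estimate from Proposition \ref{Prod} combined with the relation (\ref{PhiS}). Write $z=\mathbf i w+\frac{Q}{2}$, so that $w=-\mathbf i\big(z-\frac{Q}{2}\big)$. Then (\ref{PhiS}) reads
$$S_b(z)=\Phi_b(w)\,e^{-\frac{\pi\mathbf i w^2}{2}-\frac{\pi\mathbf i}{24}(b^2+b^{-2})}.$$
We also have $-\frac{\pi\mathbf i w^2}{2}=\frac{\pi\mathbf i}{2}\big(z-\frac Q2\big)^2=\frac{\pi\mathbf i}{2}z(z-Q)+\frac{\pi\mathbf i Q^2}{8}$. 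Hence
$$|S_b(z)|=|\Phi_b(w)|\,e^{-\frac{\pi}{2}\mathrm{Re}(\mathbf i\cdot)\ldots}=C_b\,|\Phi_b(w)|\,e^{-\frac{\pi}{2}\mathrm{Im}(z(z-Q))},$$
with $C_b$ a constant depending only on $b$.

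On $R^-_\delta$ it therefore suffices to show that $|\Phi_b(w)|$ is bounded. On $R^+_\delta$ it suffices to show that $|\Phi_b(w)|\,e^{-\pi\mathrm{Im}(z(z-Q))}$ is bounded. For the second case we will instead use the inversion relation $\Phi_b(w)\Phi_b(-w)=e^{\pi\mathbf i w^2+\pi\mathbf i(b^2+b^{-2})/12}$, or equivalently $S_b(z)S_b(Q-z)=1$. This follows from the product formula, and more directly from the definition (\ref{eq:def-S}), since the integrand is odd in $\frac Q2-z$. The map $z\mapsto Q-z$ sends $R^+_\delta$ onto $R^-_\delta$ and preserves $\mathrm{Im}(z(z-Q))$. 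So the bound on $R^+_\delta$ reduces to a lower bound $|S_b(Q-z)|\ge K^{-1}e^{\frac{\pi}{2}\mathrm{Im}(z(z-Q))}$ on $R^-_\delta$, that is, to $|\Phi_b(w)|$ being bounded below on the corresponding region. Both cases therefore reduce to showing that $\Phi_b(w)$ is bounded above and below on the region $W^-=\{w=-\mathbf i(z-\frac Q2):z\in R^-_\delta\}$.

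Now we describe $W^-$. A point $z=sQ-te^{\mathbf i\phi}$ gives $w=-\mathbf i(s-\frac12)Q+\mathbf i te^{\mathbf i\phi}$. We apply the product formula of Proposition \ref{Prod} and estimate the factors.

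For the numerator, the $n$-th factor is $1+e^{2\pi bw+(2n+1)\pi\mathbf i b^2}$. We compute $2\pi b\cdot\mathbf i te^{\mathbf i\phi}=2\pi|b|t\,e^{\mathbf i(\arg b+\phi+\pi/2)}$, whose real part is $-2\pi|b|t\sin(\arg b+\phi)\le 0$ for $\phi\in[\arg b,\pi-\arg b]$. The shift part satisfies $\mathrm{Re}\big(-2\pi\mathbf i b(s-\frac12)Q+\pi\mathbf i b^2\big)=\pi\mathrm{Re}(-2\mathbf i sbQ+\mathbf i)=2\pi s\,\mathrm{Im}(b^2)$, since $bQ=b^2+1$. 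Hence $\mathrm{Re}\big(2\pi bw+(2n+1)\pi\mathbf i b^2\big)\le -\pi\mathrm{Im}(b^2)\,(2n+1-2s)$. Because $s<1-\delta$, this is at most $-\pi\mathrm{Im}(b^2)(2n-1+2\delta)$, which is negative uniformly and geometrically decaying in $n$.

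The same computation applies to the denominator. There the relevant real part is governed by $\sin(\phi-\arg b)\ge0$ together with $\mathrm{Im}(b^{-2})<0$, and we use $s>\delta$.

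Thus every factor has the form $1+\epsilon_n$ with $|\epsilon_n|\le r^{\,n}\cdot c$, where $r<1$ and $|\epsilon_0|\le e^{-2\pi\delta\mathrm{Im}(b^2)}<1$ uniformly on $W^-$. Such a product is bounded above by $\prod(1+|\epsilon_n|)$ and below by $\prod(1-|\epsilon_n|)>0$. This gives the two-sided bound on $\Phi_b$.

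The main obstacle is the bookkeeping of signs, namely checking that the sector $\phi\in[\arg b,\pi-\arg b]$ is exactly what makes both the $b$- and $b^{-1}$-exponents have nonpositive real part, and that the strict margin $\delta$ keeps the $n=0$ factors away from $-1$. We also need to make sure that $W^-$ avoids the poles and zeros of $\Phi_b$, which is automatic once $|\epsilon_n|<1$ holds.
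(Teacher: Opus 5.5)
Your strategy is the paper's. You rewrite $S_b$ via (\ref{PhiS}) and Proposition \ref{Prod} as a Gaussian factor times a ratio of infinite products, which is the paper's (\ref{SProd}). You then bound both products above and below uniformly on $R^-_\delta$, and transfer to $R^+_\delta$ using $S_b(z)S_b(Q-z)=1$ and the invariance of $\mathrm{Im}(z(z-Q))$ under $z\mapsto Q-z$. However, the one computation that carries the weight, the numerator exponent, is wrong as written, and it fails exactly at the $n=0$ factor you flagged as the main obstacle.

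Since $2\pi bw=-2\pi\mathbf i bz+\pi\mathbf i b^2+\pi\mathbf i$, one has $\mathrm{Re}\big(-2\pi\mathbf i b(s-\tfrac12)Q+\pi\mathbf i b^2\big)=-2\pi(1-s)\,\mathrm{Im}(b^2)$, not $2\pi s\,\mathrm{Im}(b^2)$; your equality drops a term $2\pi\mathbf i b^2$. Your resulting bound $-\pi\mathrm{Im}(b^2)(2n-1+2\delta)$ is \emph{positive} at $n=0$. So it does not give the asserted $|\epsilon_0|\leqslant e^{-2\pi\delta\mathrm{Im}(b^2)}<1$. Without $\sup_{W^-}|\epsilon_0|<1$ you lose the uniform lower bound $\prod(1-|\epsilon_n|)>0$ on the numerator. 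That in turn loses the lower bound on $|\Phi_b|$ over $W^-$, and with it the entire $R^+_\delta$ half of the lemma. (The $R^-_\delta$ upper bound survives, since it only needs the numerator bounded above.)

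The correct identity is
\[
\mathrm{Re}\big(2\pi bw+(2n+1)\pi\mathbf i b^2\big)=-2\pi\Big((n+1-s)\,\mathrm{Im}(b^2)+|b|t\sin(\phi+\arg b)\Big)<-2\pi(n+\delta)\,\mathrm{Im}(b^2),
\]
using $s<1-\delta$. This gives $|\epsilon_n|\leqslant e^{-2\pi(n+\delta)\mathrm{Im}(b^2)}<1$ for every $n\geqslant 0$, which is exactly what the paper uses. Your denominator sketch is fine: the real part is $2\pi(n+s)\mathrm{Im}(b^{-2})$ plus a nonpositive $t$-term, with $s>\delta$.

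There is also a sign slip in your reduction for $R^+_\delta$. The required bound is $|S_b(Q-z)|\geqslant K^{-1}e^{-\frac{\pi}{2}\mathrm{Im}(z(z-Q))}$, with a minus sign. Your next sentence, which reduces to a lower bound on $|\Phi_b|$, is the correct version.
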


\begin{figure}[htbp]
\centering
\includegraphics[scale=0.4]{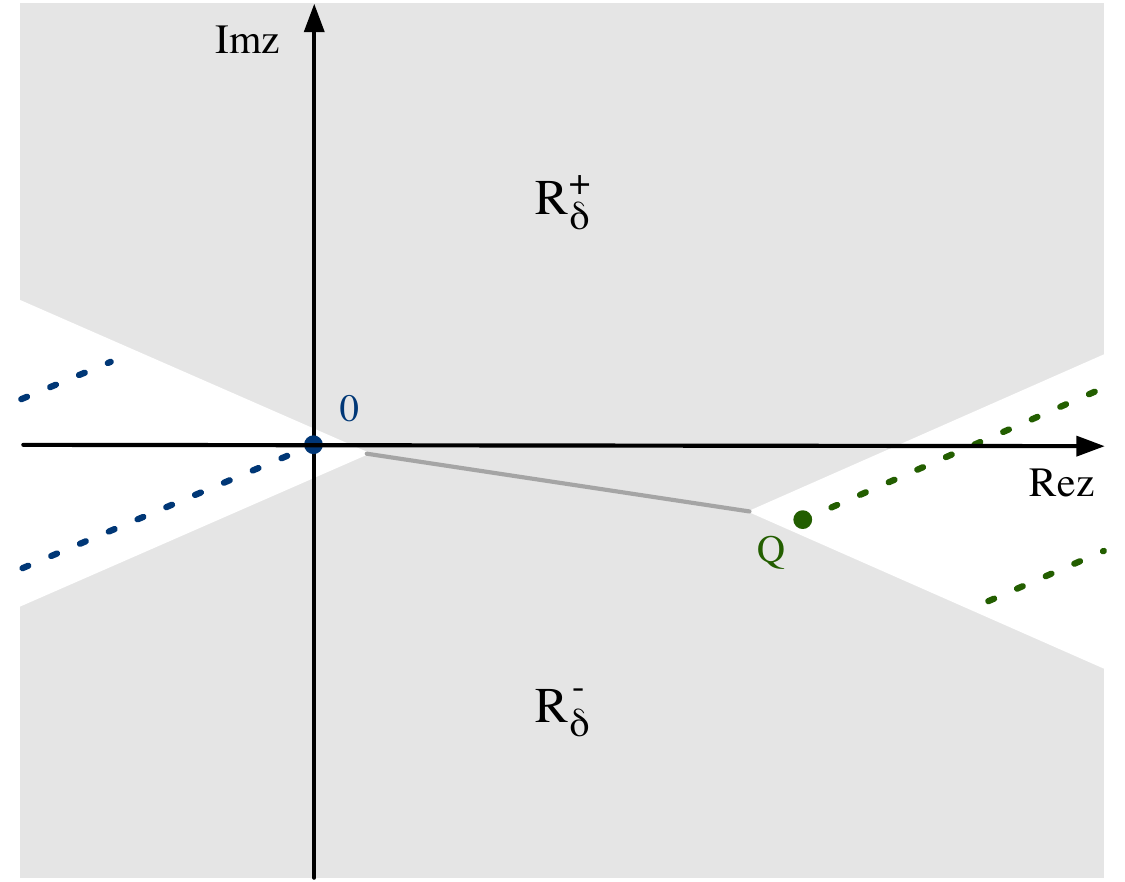}
\caption{The regions $R^+_\delta$ and $R^-_\delta.$}
\label{R}
\end{figure}

\begin{proof}
By (\ref{PhiS}) and Proposition \ref{Prod}, we have 
\begin{equation}\label{SProd}
S_b(z) = e^{\frac{\pi\mathbf i}{2}z(z-Q)+\frac{\pi\mathbf i}{12}(Q^2+1)}\frac{\prod_{n=0}^{+\infty}\Big(1-e^{-2\pi\mathbf i bz+2(n+1)\pi\mathbf i b^2}\Big)}{\prod_{n=0}^{+\infty}\Big(1-e^{-2\pi\mathbf i b^{-1}z-2n\pi\mathbf i b^{-2}}\Big)}
\end{equation}
for each $z\in \mathbb C$ that is not a pole of $S_b(z).$ We also notice that none of the poles nor zeros of $S_b$ lies in the regions $R^+_\delta$ and $R^-_\delta,$ as shown in Figure \ref{R}.

First of all, we  have 
\begin{equation}\label{2.4}
 \Big| e^{\frac{\pi\mathbf i}{2}z(z-Q)+\frac{\pi \mathbf i}{12}(Q^2+1)}\Big|=\Big|e^{\frac{\pi\mathbf i}{12}(Q^2+1)}\Big|e^{-\frac{\pi}{2}\mathrm{Im}\big(z(z-Q)\big)}
\end{equation}
for each $z\in\mathbb C.$ 

Next, for each $z=sQ-te^{\mathbf i\phi}\in R^-_\delta,$ we have
\begin{equation*}
\begin{split}
 \mathrm{Re}\Big(-2\pi\mathbf i bz+2(n+1)\pi\mathbf i b^2\Big)=&-2\pi\Big((n+1-s)\mathrm{Im}\big(b^2\big)+|b|t\sin\big(\phi+\arg b\big)\Big)\\
 <& -2\pi (n+\delta) \mathrm{Im}\big(b^2\big),
\end{split}
\end{equation*}
where the inequality comes from that $s\in (\delta,1-\delta)$ so that $1-s>\delta,$ $\mathrm{Im}(b^2)>0,$ $t\geqslant 0,$ and $\phi \in [\arg b,\pi-\arg b]$ so that $\sin(\phi+\arg b)\geqslant 0.$ As a consequence, we have 
\begin{equation*}
\sum_{n=0}^{+\infty}\Big|e^{-2\pi\mathbf i bz+2(n+1)\pi\mathbf i b^2}\Big|< \sum_{n=0}^{+\infty}e^{-2\pi (n+\delta) \mathrm{Im}(b^2)}<+\infty,
\end{equation*}
where the last inequality comes from that $\mathrm{Im}\big(b^2\big)>0.$ Then by \cite[Theorem 3 and Theorem 4 on pp. 219-220]{K} and the fact that $e^{-2\pi (n+\delta)\mathrm{Im}(b^2)}<1$ for all $n\geqslant 0,$ the infinity product $\prod_{n=0}^{+\infty} \big(1+e^{-2\pi (n+\delta)\mathrm{Im}(b^2)}\big)$ converges, the infinity product $\prod_{n=0}^{+\infty} \big(1-e^{-2\pi (n+\delta)\mathrm{Im}(b^2)}\big)$ converges to a nonzero number, and 
\begin{equation}\label{2.5}
    \prod_{n=0}^{+\infty} \Big(1-e^{-2\pi (n+\delta)\mathrm{Im}(b^2)}\Big)<\bigg|\prod_{n=0}^{+\infty}\Big(1-e^{-2\pi\mathbf i bz+2(n+1)\pi\mathbf i b^2}\Big)\bigg|<\prod_{n=0}^{+\infty} \Big(1+e^{-2\pi (n+\delta)\mathrm{Im}(b^2)}\Big).
\end{equation}
Similarly, for each $z=sQ-te^{\mathbf i\phi}\in R^-_\delta,$ we also have 
\begin{equation*}
\begin{split}
 \mathrm{Re}\Big(-2\pi\mathbf i b^{-1}z-2n\pi\mathbf i b^{-2}\Big)=&2\pi\Big((n+s)\mathrm{Im}\big(b^{-2}\big)-|b|^{-1}t\sin\big(\phi-\arg b\big)\Big)\\
 <& 2\pi (n+\delta) \mathrm{Im}\big(b^{-2}\big),
\end{split}
\end{equation*}
where the inequality comes from that $s>\delta,$ $\mathrm{Im}\big(b^{-2}\big)<0,$ $t\geqslant 0,$ and $\phi \in [\arg b,\pi-\arg b]$ so that  $\sin(\phi-\arg b)\geqslant 0.$ As a consequence, we have 
\begin{equation*}
\sum_{n=0}^{+\infty}\Big|e^{-2\pi\mathbf i b^{-1}z-2n\pi\mathbf i b^{-2}}\Big|< \sum_{n=0}^{+\infty}e^{2\pi (n+\delta) \mathrm{Im}(b^{-2})}<+\infty,
\end{equation*}
where the last inequality comes from that $\mathrm{Im}\big(b^{-2}\big)<0.$  Then by \cite[Theorem 3 and Theorem 4 on pp. 219-220]{K} and the fact that $e^{2\pi (n+\delta)\mathrm{Im}(b^{-2})}<1$ for each $n\geqslant 0,$ the infinity product $\prod_{n=0}^{+\infty} \big(1+e^{2\pi (n+\delta)\mathrm{Im}(b^{-2})}\big)$ converges, the infinity product $\prod_{n=0}^{+\infty} \big(1-e^{2\pi (n+\delta)\mathrm{Im}(b^{-2})}\big)$ converges to a nonzero number, and 
\begin{equation}\label{2.6}
    \prod_{n=0}^{+\infty} \Big(1-e^{2\pi (n+\delta)\mathrm{Im}(b^{-2})}\Big)<\bigg|\prod_{n=0}^{+\infty}\Big(1-e^{-2\pi\mathbf i b^{-1}z-2n\pi\mathbf i b^{-2}}\Big)\bigg|<\prod_{n=0}^{+\infty} \Big(1+e^{2\pi (n+\delta)\mathrm{Im}(b^{-2})}\Big).
\end{equation}
Putting (\ref{SProd}), (\ref{2.4}), (\ref{2.5}) and (\ref{2.6}) together and letting 
$$K_1=\Big|e^{-\frac{\pi\mathbf i}{12}(Q^2+1)}\Big|\frac{\prod_{n=0}^{+\infty} \Big(1+e^{2\pi (n+\delta)\mathrm{Im}(b^{-2})}\Big)}{\prod_{n=0}^{+\infty} \Big(1-e^{-2\pi (n+\delta)\mathrm{Im}(b^2)}\Big)}$$
and
$$K_2=\Big|e^{\frac{\pi\mathbf i}{12}(Q^2+1)}\Big|\frac{\prod_{n=0}^{+\infty} \Big(1+e^{-2\pi (n+\delta)\mathrm{Im}(b^2)}\Big)}{\prod_{n=0}^{+\infty} \Big(1-e^{2\pi (n+\delta)\mathrm{Im}(b^{-2})}\Big)}$$
which are constants depending only on $b$ and $\delta,$
we have 
\begin{equation}\label{2.8}
    \frac{1}{K_1}e^{-\frac{\pi}{2}\mathrm{Im}\big(z(z-Q)\big)}<|S_b(z)| < K_2e^{-\frac{\pi}{2}\mathrm{Im}\big(z(z-Q)\big)}
    \end{equation}
for all $z\in R^-_\delta.$

Now for $z\in R^+_\delta,$ we have $Q-z\in R^-_\delta.$ Then by (\ref{2.8}) and  the fact that $S_b(z)=\frac{1}{S_b(Q-z)}$ for each $z\in\mathbb C$ that is not a pole nor a zero of $S_b(z)$ (see e.g. \cite[Formula (A.16)]{TesVar}), we have 
\begin{equation}\label{2.88}
   \frac{1}{K_2}e^{\frac{\pi}{2}\mathrm{Im}\big(z(z-Q)\big)}<|S_b(z)|< K_1e^{\frac{\pi}{2}\mathrm{Im}\big(z(z-Q)\big)}.
    \end{equation}

Finally, the result follows immediately from (\ref{2.8}) and (\ref{2.88}) with  $K=\max\{K_1,K_2\}.$ 
\end{proof}

\begin{lemma}\label{in}
    Under the assumption of Proposition \ref{contour}, there exists a $\delta>0$ such that for each $u\in\Gamma^*$ and for each $i$ and $j$ in $\{1,2,3,4\},$
$$u-t_i\in R^+_\delta\cup R^-_\delta\quad\text{and}\quad q_j-u\in R^+_\delta\cup R^-_\delta,$$
where $R^+_\delta$ and $R^-_\delta$ are the regions defined in Lemma \ref{Sb-bound}. 
\end{lemma}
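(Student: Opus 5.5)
Since $R^+_\delta\cup R^-_\delta$ is the union over $s\in(\delta,1-\delta)$ of the two closed sectors $sQ+\{te^{\mathbf i\phi}\mid \phi\in[\arg b,\pi-\arg b]\}$ and $sQ-\{\cdots\}$, I will describe the complement directly. Let $D=\mathbb C\setminus(R^+_\delta\cup R^-_\delta)$. Writing $w=sQ+re^{\mathbf i\psi}$, a point lies in $R^+_\delta\cup R^-_\delta$ iff it sits on some ray from the open segment $(\delta Q,(1-\delta)Q)$ with argument in $[\arg b,\pi-\arg b]\cup[\pi+\arg b,2\pi-\arg b]$. Equivalently, the union is the "double wedge'' region bounded by four rays: from $\delta Q$ with arguments $\pi-\arg b$ and $\pi+\arg b$, and from $(1-\delta)Q$ with arguments $\arg b$ and $-\arg b$. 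The left boundary rays are parallel to the directions $-\bar b$ and $-b$, and the right boundary rays to $b$ and $\bar b$. So the first step is to record the characterization
\[
w\in R^+_\delta\cup R^-_\delta \iff \arg(w-\delta Q)\notin(\pi-\arg b,\pi+\arg b)\ \text{ and }\ \arg(w-(1-\delta)Q)\notin(-\arg b,\arg b),
\]
modulo boundary and open/closed issues. This is a plane-geometry statement about convex cones: the complement consists of two open sectors, one with apex $\delta Q$ opening to the left and one with apex $(1-\delta)Q$ opening to the right. One also checks that these sectors are disjoint, because $Q=b+b^{-1}$ itself has argument in $(-\arg b,\arg b)$ up to sign conventions.

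Next I will translate the hypotheses on $\Gamma^*$. Condition (1) says exactly that $u$ avoids the left sector with apex $t_{\max}$ and the right sector with apex $q_{\min}$. Take $w=u-t_i$. Its forbidden sectors have apexes $\delta Q$ and $(1-\delta)Q$, which correspond in the $u$-plane to $t_i+\delta Q$ and $t_i+(1-\delta)Q$. I need to show the forbidden sectors for $u$ with these apexes are contained in the sectors excluded by (1).

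For the left sector this follows from cone containment: the left sector with apex $p$ is contained in the one with apex $p'$ whenever $p-p'$ lies in the closed left cone. That is, I need $t_i+\delta Q-t_{\max}$ to have argument in $[\pi-\arg b,\pi+\arg b]$ or to be $0$, which is a cone condition and false in general. So instead I expect to shrink differently. Since $t_i$ has real part at most $\mathrm{Re}\,t_{\max}$, I will use the explicit form $a_k=\frac Q2\pm\frac{\theta_k}{2\pi b}$. This gives $t_i=\frac{3Q}{2}+\frac{c_i}{2\pi b}$ and $q_j=2Q+\frac{d_j}{2\pi b}$ (with $q_4=2Q$) for real $c_i,d_j$. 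Hence all $t_i$ lie on one line $\frac{3Q}{2}+\mathbb R b^{-1}$, and all $q_j$ lie on the line $2Q+\mathbb R b^{-1}$.

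The key point is that points on the same line differ by real multiples of $b^{-1}=\bar b/|b|^2$, whose argument is $-\arg b$, or $\pi-\arg b$ for negative multiples. These directions lie exactly on the boundary of the relevant cones. So $t_{\max}-t_i$, being in direction $\bar b$ with $\mathrm{Re}\ge0$, ensures that the left sector at $t_i$ sits inside the left sector at $t_{\max}$, shifted by the boundary direction. Similarly, $q_j-q_{\min}$ lies in direction $\bar b$, which gives the analogous containment for the right sectors.

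Then the $\delta$ shifts are absorbed by a margin. By the $b$-admissibility inequality from \cite{BB}, $\mathrm{Re}(q_{\min}-t_{\max})$ stays strictly between $\mathrm{Re}\frac b2$ and $\mathrm{Re}(Q-\frac b2)$, so I can pick $\delta$ small enough. The $q_j-u$ case reduces to the same argument via the symmetry $w\mapsto Q-w$, which swaps $R^\pm_\delta$. The main obstacle is the boundary-direction containment: since $t_{\max}-t_i$ is exactly parallel to a cone edge, the sector containment must be checked carefully, perhaps with a small loss that is absorbed by $\delta$. If necessary, I will also use the far-end ray condition (2) to see that only a compact part of $\Gamma^*$ is delicate, and handle that part by compactness.
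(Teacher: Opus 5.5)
Your description of $R^+_\delta\cup R^-_\delta$ is correct. Writing $w=Ab+Bb^{-1}$ with $A,B\in\mathbb R$, it is exactly $\mathbb C\setminus\big(C^-(\delta Q)\cup C^+((1-\delta)Q)\big)$, where $C^\pm(z)=z\pm(\mathbb R_{\geqslant0}b+\mathbb R_{\geqslant0}b^{-1})$ are the closed left and right sectors. Your collinearity observation also correctly gives $C^-(t_i)\subset C^-(t_{\max})$ and $C^+(q_j)\subset C^+(q_{\min})$, as in the paper.

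\textbf{The gap.} This handles only two of the four exclusions. For $w=u-t_i$ the right forbidden sector has apex $t_i+(1-\delta)Q$. For $w=q_j-u$ (equivalently, after your $w\mapsto Q-w$ reduction) the left forbidden sector has apex $q_j-(1-\delta)Q$. These apexes are a full $Q$-translate away from $t_i$ and $q_j$, not a $\delta$-perturbation, so no small $\delta$ absorbs them.

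\begin{itemize}
\item Your ``similarly, $q_j-q_{\min}$ \dots'' step concerns sectors with apex $q_j$, which are not the ones arising for $u-t_i$.
\item $q_j-Q$ lies on the line $Q+\mathbb R b^{-1}$, not on the line $\frac{3Q}{2}+\mathbb R b^{-1}$ of the $t_i$'s, so the reflection does not return you to the same argument.
\end{itemize}

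\textbf{What is missing.} You need the cross-family containments $t_i+Q\in C^+(q_{\min})$ for all $i$ and $q_j-Q\in C^-(t_{\max})$ for all $j$. With the real numbers $t_i'=(t_i-\frac{3Q}{2})b$ and $q_j'=(q_j-2Q)b$ one has $t_i+Q-q_j=\frac b2+\big(\frac12-q_j'+t_i'\big)b^{-1}$. So the containments amount to $q_j'-t_i'<\frac12$, which the paper derives from the conditions $0<\theta_r+\theta_s+\theta_t<\pi$ via \cite{BB}.

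\textbf{Why your single inequality is not enough.} You invoke the bound only for the pair $(q_{\min},t_{\max})$. Its upper half, $q'_{\min}-t'_{\max}<\frac12$, is the weakest of the sixteen upper bounds. What you actually need is $q'_{\min}-t'_i\leqslant\frac12$ for the smallest $t_i'$ and $q'_j-t'_{\max}\leqslant\frac12$ for the largest $q_j'$. Suppose instead $q'_{\min}-t'_i>\frac12$ for some $i$. Then the apex $t_i+Q$ sits in the channel between $C^-(t_{\max})$ and $C^+(q_{\min})$. A contour whose upper end is a ray in direction $b$, lying strictly between the lines $t_i+Q+\mathbb R b$ and $q_{\min}+\mathbb R b$, satisfies (1) and (2) yet eventually enters $C^+(t_i+(1-\delta)Q)$.

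The \cite{BB} bound does suffice if applied to \emph{all} pairs. All $t_i$, and all $q_j$, lie on lines parallel to $b^{-1}$, and $\mathrm{Re}\,b^{-1}>0$. Hence $\mathrm{Re}(q_j-t_i)<\mathrm{Re}(Q-\frac b2)$ for a given pair is equivalent to $q_j'-t_i'<\frac12$. But it must be used as this cone-membership statement, not as a margin for choosing $\delta$.

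Once these containments are in place, the $\delta$-shifts are absorbed as you intend: choose $\delta|Q|$ smaller than the positive distance from $\Gamma^*$ to $C^-(t_{\max})\cup C^+(q_{\min})$. The ``main obstacle'' you flag is not one. Since $t_i\in C^-(t_{\max})$ exactly, $C^-(t_i)\subset C^-(t_{\max})$ holds with no loss.
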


\begin{proof} For each $z\in\mathbb C,$ we let 
$$C^+(z)=\Big\{ z + sb+tb^{-1}\ \Big|\ s,t\in\mathbb R_{\geqslant 0}\Big\}\quad\text{and}\quad C^-(z)=\Big\{ z - sb - tb^{-1}\ \Big|\ s,t\in\mathbb R_{\geqslant 0}\Big\}.$$
Then as shown in Figure \ref{R}, the region $R^+_\delta\cup R^-_\delta$ is a subset of $\mathbb C\setminus \big(C^-(0)\cup C^+(Q)\big);$ and as shown in Figure \ref{Gamma*},  the contour $\Gamma^*$ lies in the region $\mathbb C\setminus  \big(C^-(t_\text{max})\cup C^+(q_\text{min})\big).$ 

We first claim that for each $i$ and $j$ in $\{1,2,3,4\},$ 
$$C^-(q_j-Q)\subset C^-(t_i)\subset C^-(t_\text{max})\quad\text{and}\quad C^+(t_i+Q)\subset C^+(q_j)\subset C^+(q_\text{min}),$$
as a direct consequence of which we have for each $u\in \mathbb C\setminus \big( C^-(t_\text{max})\cup C^+(q_\text{min}\big))$ that 
$$u-t_i\in \mathbb C\setminus \big(C^-(0)\cup C^+(Q)\big)\quad\text{and}\quad q_j-u\in \mathbb C\setminus \big(C^-(0)\cup C^+(Q)\big).$$
To prove the claim, we let 
\begin{equation}\label{2.9}
    t_i'=\bigg(t_i-\frac{3Q}{2}\bigg)b
    \end{equation}
for each $i\in\{1,2,3,4\},$
and let 
\begin{equation}\label{2.10}
    q_j'=\big(q_j-2Q\big)b
\end{equation}
for each $j\in\{1,2,3,4\}.$
Then 
\begin{equation*}
\begin{split}
&t_1'=\frac{\pm\theta_1\pm\theta_2\pm\theta_3}{2\pi},\quad t_2'=\frac{\pm\theta_1\pm\theta_5\pm\theta_6}{2\pi},\quad
t_3'=\frac{\pm\theta_2\pm\theta_4\pm\theta_6}{2\pi},\quad
t_4'=\frac{\pm\theta_3\pm\theta_4\pm\theta_5}{2\pi},\\
&q_1'=\frac{\pm\theta_1\pm\theta_2\pm\theta_4\pm\theta_5}{2\pi},\quad q_2'=\frac{\pm\theta_1\pm\theta_3\pm\theta_4\pm\theta_6}{2\pi},\quad 
q_3'=\frac{\pm\theta_2\pm\theta_3\pm\theta_5\pm\theta_6}{2\pi}\quad\text{and}\quad q_4'=0,
\end{split}
\end{equation*}
where the signs $\pm$ above are determined by the choice of the signs $\pm$ in $a_1,\dots,a_6$ in Proposition \ref{contour}.
In particular, $t_i'\in\mathbb R$ for each $i\in\{1,2,3,4\},$  and $q'_j\in\mathbb R$ for each $j\in\{1,2,3,4\}.$  Moreover, letting $t_\text{max}'=\big(t_\text{max}-\frac{3Q}{2}\big)b$ and $q_\text{min}'=(q_\text{min}-2Q)b,$ then we have
\begin{equation}\label{2.11}
t_\text{max}'=\max\{t_1',t_2',t_3',t_4'\}\quad\text{and}\quad q_\text{min}'=\min\{q_1',q_2',q_3',q_4'\};
\end{equation}
and by \cite{BB}, as the dihedral angles of a hyperbolic hyperideal tetrahedron satisfy 
$0<\theta_r+\theta_s+\theta_t<\pi$
for the triples $\{r,s,t\}=\{1,2,3\}, \{1,5,6\}, \{2,4,6\}, \{3,4,5\},$
we have
\begin{equation}\label{2.12}
   q_j'-t_i'<\frac{1}{2}
\end{equation}
for all $i,j\in\{1,2,3,4\}.$ Now for each $i\in\{1,2,3,4\},$ by (\ref{2.11}), we have 
$$t_i=t_\text{max}-(t_\text{max}'-t_i')b^{-1}\in C^-(t_\text{max});$$
and for each $j\in\{1,2,3,4\},$ by (\ref{2.12}) we have 
$$q_j-Q=t_i-\frac{1}{2}b-\Big(\frac{1}{2}-q_j'+t_i'\Big)b^{-1}\in C^-(t_i)$$
for each $i\in\{1,2,3,4\}.$ As a consequence, we have $C^-(q_j-Q)\subset C^-(t_i)\subset C^-(t_\text{max}).$ Similarly, for each $j\in\{1,2,3,4\},$ by (\ref{2.11}), we have 
$$q_j=q_\text{min}+(q_j'-q_\text{min}')b^{-1}\in C^+(q_\text{min});$$
and for each $i\in\{1,2,3,4\},$ by (\ref{2.12}) we have 
$$t_i+Q=q_j+\frac{1}{2}b+\Big(\frac{1}{2}-q_j'+t_i'\Big)b^{-1}\in C^+(q_j)$$
for each $j\in\{1,2,3,4\}.$ As a consequence, we have $C^+(t_i+Q)\subset C^+(q_j)\subset C^+(q_\text{min}).$ This completes the proof of the claim.
\medskip

Next, as the ends of $\Gamma^*$ are two rays  with arguments respectively  in $[\arg b, \pi-\arg b]$ and $[\arg b-\pi, -\arg b],$ and the boundary of the regions $C^-(t_\text{max})$ and $C^+(q_\text{min})$ consist of  rays of arguments respectively $\pm \arg b$ and $\pm (\pi-\arg b),$ there exists a $\delta>0$ such that the distance
$$d\big(\Gamma^*, \partial C^-(t_\text{max})\cup \partial C^+(q_\text{min})\big)>\delta.$$
As a consequence, for this $\delta,$ we have 
$$u-t_i\in R^+_\delta\cup R^-_\delta\quad\text{and}\quad q_j-u\in R^+_\delta\cup R^-_\delta$$
for each $u\in\Gamma^*$ and for each $i\in\{1,2,3,4\}$ and $j\in\{1,2,3,4\}.$
\end{proof}

\begin{proof}[Proof of Proposition \ref{contour}]
We first show that the integral in the proposition converges absolutely. Let $R>0$ be sufficiently large. By Lemma \ref{in}, for $u\in \Gamma^*$ with $|u|>R$ and $\mathrm{Im}u>0,$ we have $u-t_i\in R^+_\delta$ for each $i\in\{1,2,3,4\}$ and $q_j-u\in R^-_\delta$ for each $j\in\{1,2,3,4\};$ and for $u\in \Gamma^*$ with $|u|>R$ and $\mathrm{Im}u<0,$ we have $u-t_i\in R^-_\delta$ for each $i\in\{1,2,3,4\}$ and $q_j-u\in R^+_\delta$ for each $j\in\{1,2,3,4\}.$ Let $\Gamma^*_+=\big\{ u\in \Gamma^*\ \big|\ |u|>R\ \ \text{and}\ \ \mathrm{Im}u>0\big\},$  $\Gamma^*_-=\big\{ u\in \Gamma^*\ \big|\ |u|>R\ \ \text{and}\ \ \mathrm{Im}u<0\big\},$ and let $\phi_+$ and $\phi_-$ respectively be the arguments of  $\Gamma^*_+$ and $\Gamma^*_-.$  We for the simplicity of the notations let
$$F(u)\doteq \prod_{i=1}^4S_b(u-t_i)\prod_{j=1}^4S_b(q_j-u)$$ 
be the integrand; and let $s_+, s_-\in\mathbb R$ so that for each $u\in\Gamma_+^*,$
$$u=s_+Q+te^{\mathbf i\phi_+}$$
for some $t>0,$ and 
for each $u\in\Gamma_-^*,$
$$u=s_-Q-te^{\mathbf i\phi_-}$$
for some $t>0.$ Then by Lemma \ref{Sb-bound} with the constant $K$ therein and a direct computation, we have for $u \in \Gamma^*_+$ that  
\begin{equation}\label{2.13}
\begin{split}
\big|F(u) \big|< & K ^ 8 e^{\sum_{i=1}^4\frac{\pi}{2}\mathrm{Im}\big((u-t_i)(u-t_i-Q)\big)-\sum_{j=1}^4\frac{\pi}{2}\mathrm{Im}\big((q_j-u)(q_j-u-Q)\big)} \\
 =& K^8M_+ e^ {-2\pi \big(|b|\sin(\phi_++\arg b) + |b|^{-1} \sin (\phi_+-\arg b) \big)t},
 \end{split}
\end{equation}
where $M_+=e^{\frac{\pi}{2}(7-4s_+)\big(\mathrm{Im}(b^2)+\mathrm{Im}(b^{-2})\big)+\frac{\pi}{2}\big(\sum_{i=1}^4(t_i'^2+t_i')-\sum_{j=1}^4q_j'^2\big)\mathrm{Im}(b^{-2})}$
is a constant independent of $u,$ and $t_i'$ and $q_j'$ are as defined in (\ref{2.9}) and (\ref{2.10});
and  we have for $u\in \Gamma^*_-$ that 
\begin{equation}\label{2.14}
\begin{split}
\big|F(u) \big|< & K ^ 8 e^{-\sum_{i=1}^4\frac{\pi}{2}\mathrm{Im}\big((u-t_i)(u-t_i-Q)\big)+\sum_{j=1}^4\frac{\pi}{2}\mathrm{Im}\big((q_j-u)(q_j-u-Q)\big)} \\
 =&K^8 M_- e^ {-2\pi \big(|b|\sin(\phi_-+\arg b) + |b|^{-1} \sin (\phi_--\arg b) \big)t}
 \end{split}
\end{equation}
where $M_-=e^{-\frac{\pi}{2}(7-4s_-)\big(\mathrm{Im}(b^2)+\mathrm{Im}(b^{-2})\big)-\frac{\pi}{2}\big(\sum_{i=1}^4(t_i'^2+t_i')-\sum_{j=1}^4q_j'^2\big)\mathrm{Im}(b^{-2})}$ 
is a constant independent of $u.$ As both $\phi_+$ and $\phi_-$ are in $[\arg b, \pi-\arg b],$ we have 
\begin{equation}\label{2.15}
   \big(|b|\sin(\phi_++\arg b) + |b|^{-1} \sin (\phi_+-\arg b) \big)>0
\end{equation}
and 
\begin{equation}\label{2.16}
    \big(|b|\sin(\phi_-+\arg b) + |b|^{-1} \sin (\phi_--\arg b) \big)>0.
\end{equation}
Putting (\ref{2.13}), (\ref{2.14}), (\ref{2.15}) and (\ref{2.16}) together, we have
$$\bigg|\int_{\Gamma^*_+}F(u)du \bigg|< K^8M_+\int_R^{+\infty} e^{-2\pi \big(|b|\sin(\phi_++\arg b) + |b|^{-1} \sin (\phi_+-\arg b) \big) t}dt <+\infty$$
and
$$
\bigg|\int_{\Gamma^*_-}F(u)du \bigg|< K^8M_-\int_R^{+\infty} e^{-2\pi \big(|b|\sin(\phi_-+\arg b) + |b|^{-1} \sin (\phi_--\arg b) \big) t}dt <+\infty.$$
As a consequence, the integral in the proposition converges absolutely. 
\medskip

Next, we show that the integral does not depend on the choice of the contour $\Gamma^*.$ Suppose $\Gamma^*_1$ and $\Gamma^*_2$ are two contours satisfying the conditions in the proposition. Let $R>0$ be sufficiently large so that the circle $C_R=\{ u\in\mathbb C\ |\ |u|=R\}$ intersects each of  $\Gamma^*_1$ and $\Gamma^*_2$ at exactly two points, one with a positive imaginary part and one with a negative imaginary part. For $k\in\{1,2\},$ let $\Gamma^*_{k,+}=\big\{ u\in \Gamma^*_k\ \big|\ |u|>R\ \ \text{and}\ \ \mathrm{Im}u>0\big\},$ and let $\Gamma^*_{k,-}=\big\{ u\in \Gamma^*_k\ \big|\ |u|>R\ \ \text{and}\ \ \mathrm{Im}u<0\}.$ We also let $C_+$ be the piece of the circle $C_R$ lying in between $\Gamma^*_{1,+}$ and $\Gamma^*_{2,+},$ let $C_-$ be the piece of the circle $C_R$ lying in between $\Gamma^*_{1,-}$ and $\Gamma^*_{2,-},$ and remember that all the contours $\Gamma^*_{1,+},$ $\Gamma^*_{1,-},$ $\Gamma^*_{2,+},$ $\Gamma^*_{2,-},$ $C_+$ and $C_-$ depend on $R.$ 
Let $\Gamma_+=\Gamma^*_{1,+}\cup C_+\cup \Gamma^*_{2,+}$ and $\Gamma_-=\Gamma^*_{1,-}\cup C_-\cup \Gamma^*_{2,-}$ both being suitably oriented.  See Figure \ref{Gamma3}. 
\begin{figure}[htbp]
\centering
\includegraphics[scale=0.3]{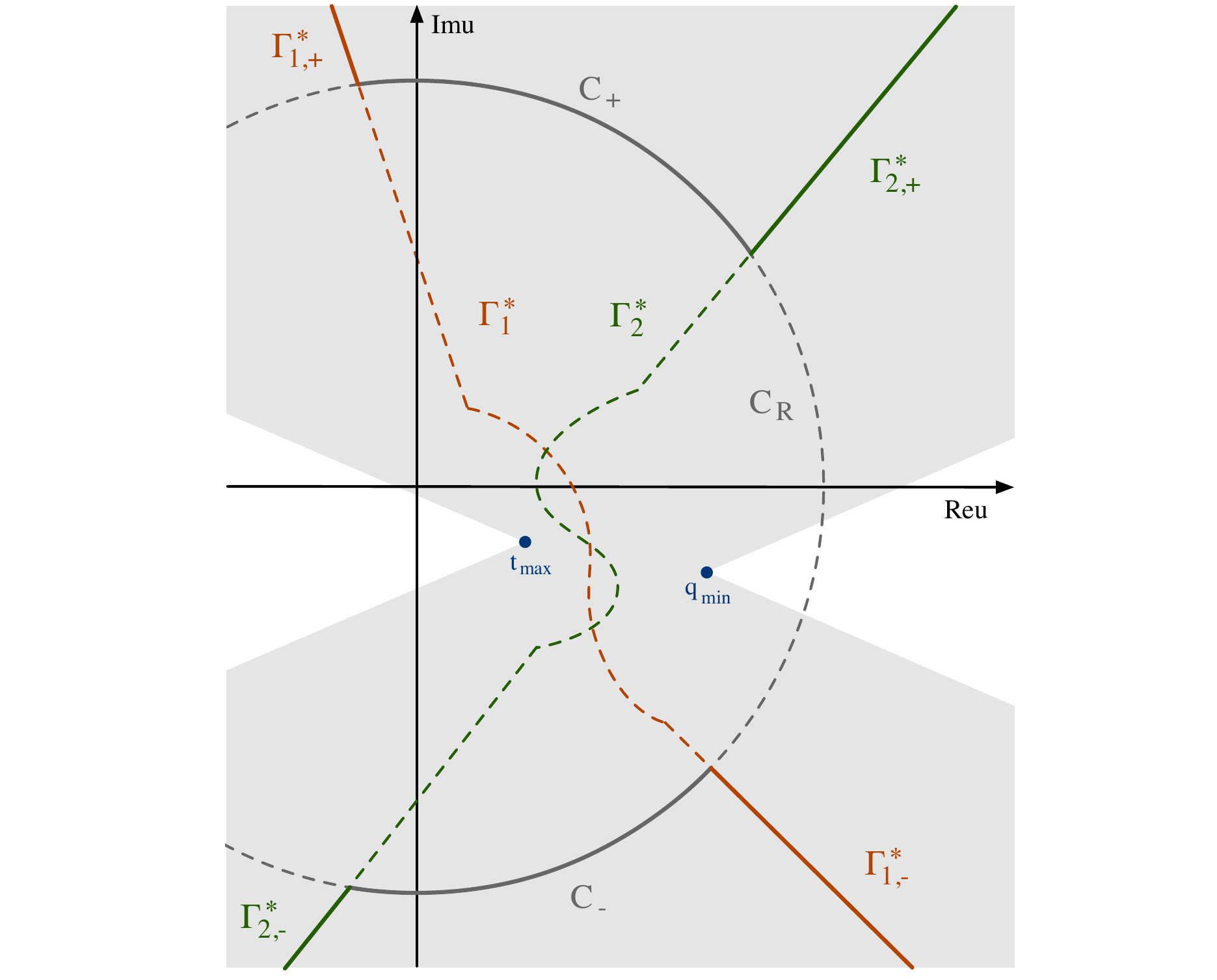}
\caption{The contour $\Gamma_+=\Gamma^*_{1,+}\cup C_+\cup \Gamma^*_{2,+}$ and the contour $\Gamma_-=\Gamma^*_{1,-}\cup C_-\cup \Gamma^*_{2,-}.$}
\label{Gamma3}
\end{figure}
Then by the analyticity of $F$ and Cauchy's Theorem, we have 
$$\int_{\Gamma^*_1}F(u)du-\int_{\Gamma^*_2}F(u)du=\int_{\Gamma_+\cup \Gamma_-}F(u)du;$$
and to prove the result, it suffices to prove that 
$\big|\int_{\Gamma_+\cup\Gamma_-}F(u)du\big|\to 0$
as $R\to +\infty.$ As the integrals over $\Gamma^*_1$ and $\Gamma^*_2$ converge absolutely, the integrals over $\Gamma^*_{1,+},$ $\Gamma^*_{1,-},$ $\Gamma^*_{2,+}$ and $\Gamma^*_{2,-}$ converge to $0$
as $R\to +\infty,$
and we are left to show that 
$$\bigg|\int_{C_+\cup C_-}F(u)du\bigg|\to 0$$
 as $R\to +\infty.$ To this end, for $R$ sufficiently large, we have  for $u\in C_+$ that $u-t_i\in R^+_\delta$ for each $i\in\{1,2,3,4\}$ and $q_j-u\in R^-_\delta$ for each $j\in\{1,2,3,4\};$ and we have for $u\in C_-$ that $u-t_i\in R^-_\delta$ for each $i\in\{1,2,3,4\}$ and $q_j-u\in R^+_\delta$ for each $j\in\{1,2,3,4\}.$ Then by Lemma \ref{Sb-bound} with the constant $K$ therein and a direct computation, for $u=Re^{\mathbf i \phi} \in C_+,$ we have 
\begin{equation}\label{2.17}
\begin{split}
\big|
 F(u)\big|< & K ^ 8 e^{\sum_{i=1}^4\frac{\pi}{2}\mathrm{Im}\big((u-t_i)(u-t_i-Q)\big)-\sum_{j=1}^4\frac{\pi}{2}\mathrm{Im}\big((q_j-u)(q_j-u-Q)\big)} \\
 =& K^8M e^ {-2\pi \big(|b|\sin(\phi+\arg b) + |b|^{-1} \sin (\phi-\arg b) \big)R};
 \end{split}
\end{equation}
and for $u=-Re^{\mathbf i \phi}\in C_-,$ we have 
\begin{equation}\label{2.18}
\begin{split}
\big|
 F(u)\big|< & K ^ 8 e^{-\sum_{i=1}^4\frac{\pi}{2}\mathrm{Im}\big((u-t_i)(u-t_i-Q)\big)+\sum_{j=1}^4\frac{\pi}{2}\mathrm{Im}\big((q_j-u)(q_j-u-Q)\big)} \\
 =& K^8M^{-1} e^ {-2\pi  \big(|b|\sin(\phi+\arg b) + |b|^{-1} \sin (\phi-\arg b) \big)R},
 \end{split}
\end{equation}
where $M=e^{\frac{7\pi}{2}\big(\mathrm{Im}(b^2)+\mathrm{Im}(b^{-2})\big)+\frac{\pi}{2}\big(\sum_{i=1}^4(t_i'^2+t_i')-\sum_{j=1}^4q_j'^2\big)\mathrm{Im}(b^{-2})}$
is a constant independent of $u.$ Now, as $ |b|\sin(\phi+\arg b)+|b|^{-1}\sin(\phi-\arg b)>0$
 for each $\phi \in [\arg b, \pi-\arg b],$ there is  an $\epsilon >0$ and a $c>0$ independent of $\phi$ such that 
 \begin{equation}\label{2.19}
     |b|\sin(\phi+\arg b)+|b|^{-1}\sin(\phi-\arg b)\geqslant c 
 \end{equation}
for each 
$\phi \in [\arg b -\epsilon, \pi-\arg b+\epsilon].$ 
As the arguments of the ends of $\Gamma^*_1$ and $\Gamma^*_2$ lie in $[\arg b, \pi-\arg b]$ and $[\arg b-\pi, -\arg b],$ for  $R$ sufficiently large and for each $u=Re^{\mathbf i\phi}\in C_+$ and $u=-Re^{\mathbf i\phi}\in C_-,$ we have  $\phi \in [\arg b -\epsilon, \pi-\arg b+\epsilon].$ 
Together with (\ref{2.17}), (\ref{2.18}) and (\ref{2.19}) and letting $N= K^8 \max\{M,M^{-1}\},$ we have 
$$\big|F(u)\big|< Ne^{-2\pi cR}$$
for each $u\in C_+\cup C_-.$  As a consequence, 
$$\bigg|\int_{C_+\cup C_-}F(u)du\bigg|< 2\pi RN e^{-2\pi cR},$$
which converges to $0$ as $R\to +\infty.$ 
\end{proof}


\section{Double sine function and complexified Lobachevsky function}\label{SbLi}

The asymptotics of complex $b$-$6j$ symbols rely heavily on the relationship between the double sine function and the complexified Lobachevsky function. The main result of this section is Proposition \ref{EST} that computes the classical limit of the function  $\log S_b\Big(\frac{x}{\pi b}+\frac{b}{2}\Big).$

For $x\in \mathbb C$ with 
$0<\mathrm{Re}\Big(\frac{x}{\pi b}+\frac{b}{2}\Big)<\mathrm{Re}Q,$ we define 
$$\log S_b\Big(\frac{x}{\pi b}+\frac{b}{2}\Big)\doteq\int_\Omega\frac{\sinh\Big(\big(\frac{1}{2b}-\frac{x}{\pi b}\big)t\Big)}{4t\sinh(\frac{bt}{2})\sinh(\frac{t}{2b})}dt;$$
and by the functional equation (\ref{FE1}), $\log S_b\Big(\frac{x}{\pi b}+\frac{b}{2}\Big)$ meromorphically extends to the $\mathbb C$ with the set of poles 
$$\bigg\{-\Big(m+\frac{1}{2}\Big)\pi b^2-n\pi\ \bigg|\ m,n\in\mathbb Z_{\geqslant 0}\bigg\}\cup\bigg\{\Big(m+\frac{1}{2}\Big)\pi b^2+(n+1)\pi\ \bigg|\ m,n\in\mathbb Z_{\geqslant 0}\bigg\}.$$
In particular, it is holomorphic on the region $H_\theta$ depicted in Figure \ref{HHd} (a), where $\theta=\arg b,$ consisting of $x$ such that $\arg \big(\frac{x}{\pi b}+\frac{b}{2}\big) \notin [-\pi,-\pi+\theta]\cup [\pi-\theta,\pi]$ and $\arg \big(\frac{x}{\pi b}+\frac{b}{2}-Q\big) \notin [-\theta,\theta],$ i.e., 
\begin{equation*}
H_\theta=\bigg\{ x\in \mathbb C \ \bigg | \arg \Big(x+\frac{\pi b^2}{2}\Big) \notin [-\pi,-\pi+2\theta]\ \ \text{and} \ \ \arg \Big(x-\pi-\frac{\pi b^2}{2}\Big) \notin [0,2\theta] \bigg\}.
\end{equation*}
\begin{figure}[htbp]
\centering
\includegraphics[scale=0.3]{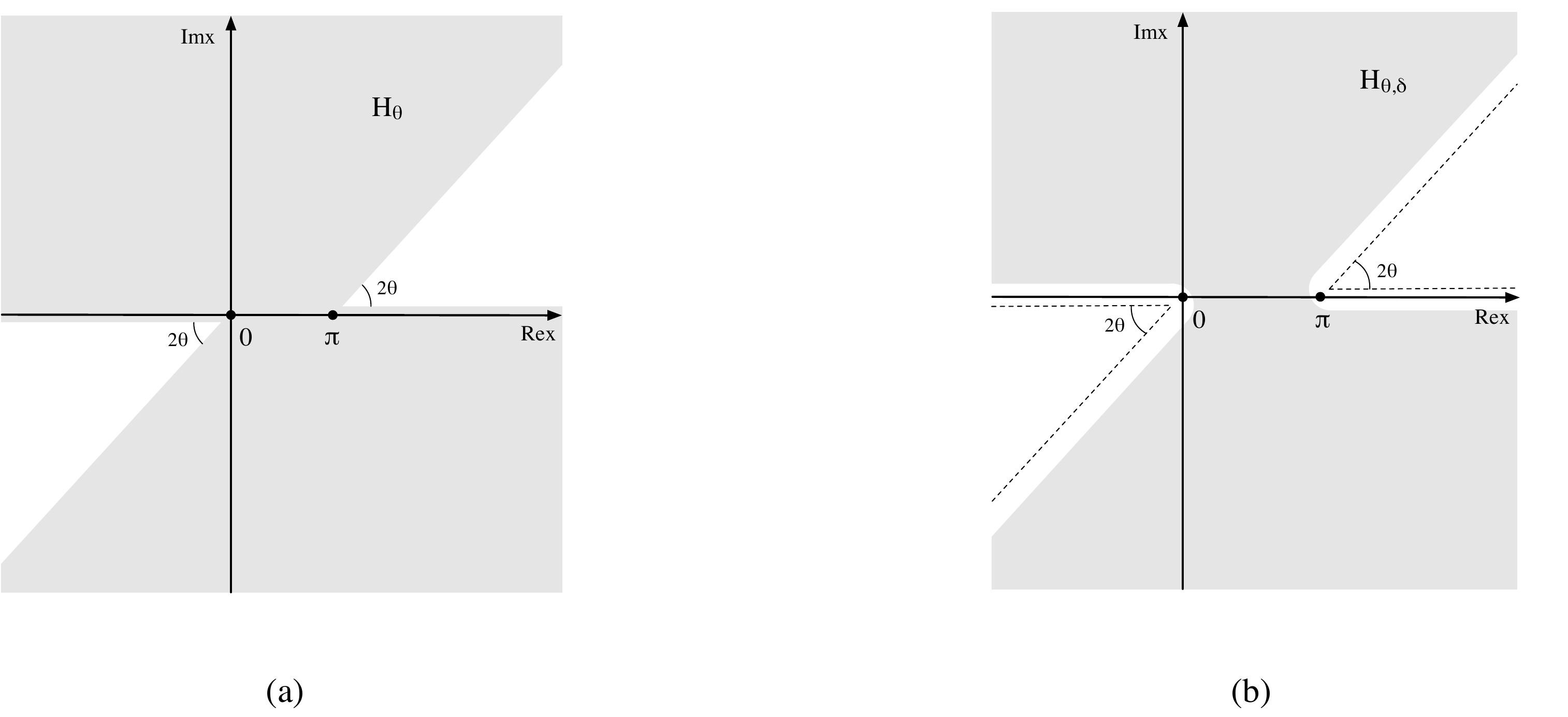}
\caption{The region $H_\theta$ in (a), and the region $H_{\theta,\delta}$ in (b).}
\label{HHd}
\end{figure}

The classical limit of $\log S_b\Big(\frac{x}{\pi b}+\frac{b}{2}\Big)$ as $b\to 0$ is closely related to  the following \emph{complexified Lobachevsky function}
\begin{equation}\label{eq:Lx}
L(x)= x^2-\pi x +\frac{\pi^2}{6}-\mathrm{Li}_2\big(e^{2\mathbf ix}\big)
\end{equation}
for $x\in\mathbb C$ with $0<\mathrm{Re}x<\pi.$ Here $\mathrm{Li}_2$ is the dilogarithm function defined on $\mathbb C\setminus (1,+\infty)$ by
$$\mathrm{Li}_2(z)=-\int_0^z\frac{\log (1-u)}{u}du,$$
where the integral is along any path in $\mathbb C\setminus (1,\infty)$ connecting $0$ and $z.$
We call the function $L$ the complexified Lobachevsky function because for each $x\in (0,\pi),$
$$L(x)=-2\mathbf i \Lambda(x),$$
where $\Lambda:\mathbb R \to \mathbb R $ is the Lobachevsky function defined by 
$$\Lambda(x)=-\int_0^x \log|2\sin t|dt.$$
It is proved in \cite[Section 2.2]{LMSWY} that the function $L(x)$ extends continuously to $\mathbb C\setminus \big((-\infty,0)\cup(\pi,\infty)\big)$ and  holomorphically to the region $$H=\mathbb C\setminus \big((-\infty,0]\cup[\pi,\infty)\big),$$ by the following functional equations:
 If $x\in H$ with $\mathrm{Im}x>0,$ then
\begin{equation}\label{period3}
L(x+\pi)=L(x)+2\pi x;
\end{equation}
and if $x\in H$ with $\mathrm{Im}x< 0,$ then
\begin{equation}\label{period4}
L(x+\pi)=L(x)-2\pi x.
\end{equation}


We  now approximate $S_b\Big(\frac{x}{\pi b} +\frac{b}{2}\Big)$ using $L(x).$ Let $\nu_b(x)$ be such that
$$2\pi \mathbf i b^2 \log S_b\Big(\frac{x}{\pi b} +\frac{b}{2}\Big)-L(x)=\nu_b(x)b^4.$$ 
For  $\delta>0$ sufficiently small, as depicted in Figure~\ref{HHd} (b), let 
$$H_{\theta,\delta}=\Big\{ x\in H_\theta\ \Big|\ d(x,\partial H_\theta) \geqslant \delta \Big\}.$$

\begin{proposition}\label{EST}  For $\theta\in (0,\frac{\pi}{2}),$ there exists a $B=B_{\theta,\delta}>0$  independent of $b$ such that for all $b\in \mathbb C$ with  $|b|$ sufficiently small and $\arg b=\theta,$ and for all  $x\in H_{\theta,\delta},$ 
$$|\nu_b(x)|< B,$$
i.e., 
$$\bigg|2\pi \mathbf i b^2 \log S_b\Big(\frac{x}{\pi b} +\frac{b}{2}\Big)-L(x)\bigg|<B|b|^4.$$
\end{proposition}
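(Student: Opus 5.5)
The plan is to use the infinite product formula (\ref{SProd}) together with the classical asymptotics of $q$-Pochhammer-type products, splitting $H_{\theta,\delta}$ into three pieces: a compact "central" region near the strip $0\leqslant \mathrm{Re}x\leqslant\pi$, and the two ends at infinity. Set $z=\frac{x}{\pi b}+\frac b2$. Then $-2\pi\mathbf i bz=-2\mathbf i x-\pi\mathbf i b^2$ and $-2\pi\mathbf i b^{-1}z=-\frac{2\mathbf i x}{b^2}-\pi\mathbf i$. Hence (\ref{SProd}) gives
$$\log S_b(z)=\frac{\pi\mathbf i}{2}z(z-Q)+\frac{\pi\mathbf i}{12}(Q^2+1)+\sum_{n\geqslant0}\log\big(1-e^{-2\mathbf ix}q^{2n+1}\big)-\sum_{n\geqslant 0}\log\big(1+e^{-2\mathbf ix/b^2}\tilde q^{\,2n}\big),$$
with $q=e^{\pi\mathbf i b^2}$ and $\tilde q=e^{-\pi\mathbf i b^{-2}}$, both of modulus less than $1$ since $\theta\in(0,\frac\pi2)$.

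First I would compute the elementary part exactly. Multiplying by $2\pi\mathbf i b^2$, the quadratic term $\frac{\pi\mathbf i}{2}z(z-Q)$ produces $-(x^2-\pi x)+O(b^4)$ plus constants, and $\frac{\pi\mathbf i}{12}(Q^2+1)$ produces $-\frac{\pi^2}{6}+O(b^4)$. The goal is to match $L(x)$ up to a sign convention. If the signs do not match, I would use $S_b(z)=1/S_b(Q-z)$, equivalently the $x\mapsto\pi-x$ symmetry of $L$, and work in the half-plane $\mathrm{Im}x<0$, where $|e^{-2\mathbf ix}|$ is small.

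The second step handles the $q$-product. Write $\sum_n\log(1-wq^{2n+1})=-\sum_{k\geqslant1}\frac{w^k q^k}{k(1-q^{2k})}$ with $w=e^{\pm2\mathbf ix}$. Expanding $\frac{q^k}{1-q^{2k}}=\frac{1}{-2\pi\mathbf i b^2k}+O(b^2k)$ shows that $2\pi\mathbf ib^2$ times this sum equals $\pm\mathrm{Li}_2(w)+O(b^4)\sum_k k|w|^k$. This is uniform as long as $|w|\leqslant 1-c\delta$, which holds when $x$ stays a distance $\delta$ from the real rays. For $x$ near the real segment $(0,\pi)$, where $|w|\approx1$, I would instead use the Euler–Maclaurin / integral representation. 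One route is to compare the defining integral of $\log S_b$ with its Laurent expansion in $b$ on the region $0<\mathrm{Re}z<\mathrm{Re}Q$, as in the proof of Lemma \ref{Sb-bound2}. Another is the Mellin–Barnes method: shift the contour $\Omega$ and take the residues at $t=0$ (these give $L(x)/(2\pi\mathbf ib^2)$ and the $b^2$ correction), leaving a remainder integral bounded by $O(b^2)$ uniformly on compact subsets of $H_{\theta,\delta}$. After multiplication by $2\pi\mathbf ib^2$ that remainder is $O(b^4)$.

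The third step is the $\tilde q$-product, which I expect to be the main obstacle. It carries the factor $e^{-2\mathbf ix/b^2}$, which is exponentially small only in a sector determined by $\arg(x/b^2)$, namely exactly where $H_\theta$ excludes the wedges $\arg(x+\frac{\pi b^2}{2})\in[-\pi,-\pi+2\theta]$ and $\arg(x-\pi-\frac{\pi b^2}{2})\in[0,2\theta]$. I would show that on $H_{\theta,\delta}$ one has $\mathrm{Re}(-2\mathbf ix/b^2)\leqslant -c\delta/|b|^2$ in the relevant half, or use the reflection to reduce to that half. Then the whole product contributes $O(e^{-c/|b|^2})$, which is certainly $O(b^4)$.

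The delicate part is that the pole lattices $-(m+\frac12)\pi b^2-n\pi$ accumulate along rays of argument $\pi+2\theta$ and $2\theta$ (after shifting), so uniformity near the boundary rays of $H_\theta$ needs care. Since $H_{\theta,\delta}$ keeps distance $\delta$ from $\partial H_\theta$, I would check that the exponent in each product is bounded by $-c\delta/|b|^2$ (respectively $|w|\leqslant e^{-c\delta}$) for $x$ in the appropriate part of $H_{\theta,\delta}$. For unbounded $x$ I would use the functional equations (\ref{period3}), (\ref{period4}) together with (\ref{FE1}) to translate by multiples of $\pi$. Each translation changes both sides by matching terms ($2\pi\mathbf ib^2\log(2\sin(\pi b z))$ versus $\pm2\pi x$) up to exponentially small errors. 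This reduces the estimate to a bounded fundamental region, where compactness gives a uniform $B$.
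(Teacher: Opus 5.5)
\textbf{Verdict: the proposal has a genuine gap.} The uniform estimate near the real segment, where the real work lies, is not supplied.

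\textbf{Where it matches the paper.} Your architecture is the paper's. Under $y=-\mathbf i(2x-\pi)$, the two products in (\ref{SProd}) are exactly $N_b$ and $M_b$ of (\ref{prod}). You reduce to $\mathrm{Im}x\leqslant 0$ via $x\mapsto\pi-x$, and there the $\tilde q$-product is $O(e^{-c\delta/|b|^2})$, as in Lemma \ref{N}(2).

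\textbf{The gap.} It is in your second step. On $H_{\theta,\delta}\cap\{\mathrm{Im}x\leqslant 0\}$ one has $|w|=|e^{-2\mathbf ix}|\leqslant 1$, and $|w|=1$ on the real segment $[2\delta,\pi-2\delta]\subset H_{\theta,\delta}$. Comparing $\frac{q^k}{1-q^{2k}}$ termwise with $\frac{-1}{2\pi\mathbf i kb^2}$ leaves an error of order $|b|^4\sum_k|w|^k$ (you wrote $\sum_k k|w|^k$). This blows up as $|w|\to 1$, so it gives no uniform $B$.

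For that region you only list remedies and execute none of them:
\begin{itemize}
\item Euler--Maclaurin is exactly the paper's tool, but you do not carry it out.
\item The ``Mellin--Barnes'' route fails as described. The integrand defining $\log S_b$ is even in $t$, so its residue at $t=0$ is zero. The $\mathrm{Li}_2$ term comes from the poles of $1/\sinh$, and summing those residues just reproduces your series.
\item Lemma \ref{Sb-bound2} bounds $S_b$ for $b$ in a compact set; it gives no expansion as $b\to 0$.
\end{itemize}

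\textbf{What is needed.} You need $2\pi\mathbf i b^2\sum_n\log(1-wq^{2n+1})=\mathrm{Li}_2(w)+O(|b|^4)$, uniformly for $|w|\leqslant 1$ with $|1-w|$ bounded below. Both conditions hold on $H_{\theta,\delta}\cap\{\mathrm{Im}x\leqslant 0\}$.

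The paper gets this in Lemma \ref{N}(1) by applying Lemma \ref{EM} to $f_y(t)=\log(1+e^{y+(2t+1)\pi\mathbf i b^2})$:
\begin{itemize}
\item The integral term is exactly $\frac{1}{2\pi\mathbf i b^2}\mathrm{Li}_2(-e^{y+\pi\mathbf i b^2})$.
\item The boundary and $P_2$-remainder terms are $O(|b|^2)$. This needs only $|e^u|\leqslant 1$ and $|1+e^u|^{-1}<N$ along the ray $u=y+(2t+1)\pi\mathbf i b^2$. The point is that $f_y''$ carries $b^4$, while $\int_0^\infty|e^u|\,dt\leqslant (2\pi\,\mathrm{Im}\,b^2)^{-1}$.
\item The lower bound on $|1+e^u|$ is why the enlarged region $W^-_{\theta,\delta}$ appears, together with $\epsilon\sim\delta\tan 2\theta$ when $\theta<\frac{\pi}{4}$.
\end{itemize}

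Alternatively, your own series can be rescued by Abel summation. Write $\frac{q^k}{1-q^{2k}}=\frac{\mathbf i}{2\sin(\pi k b^2)}$ and $\frac{1}{\sin\epsilon}-\frac{1}{\epsilon}=\epsilon h(\epsilon)$, where $h$ is bounded and of bounded variation on the ray $\arg\epsilon=2\theta$. The error is then $\pi^2 b^4\sum_k w^k h(\pi k b^2)=O(|b|^4/|1-w|)$.

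\textbf{Two further points to fix.}
\begin{itemize}
\item Branch of the logarithm. The product formula identifies the analytically continued $\log S_b$ with a sum of principal logarithms only up to $2\pi\mathbf i k$. After multiplying by $2\pi\mathbf i b^2$ this costs $4\pi^2|k||b|^2$, which destroys an $O(|b|^4)$ bound. The paper fixes $k=0$ by evaluating at one point (Lemma \ref{log=}, and likewise for (\ref{S-})).
\item The translation step is unnecessary, and wrong as written. The shift $x\mapsto x+\pi$ is $z\mapsto z+b^{-1}$, so (\ref{FE1}) gives the factor $2\sin(\pi z/b)=2\cos(x/b^2)$, not $2\sin(\pi b z)$. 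Real translations also cannot bound $\mathrm{Im}x$. After the reflection you do not need them: $|w|\leqslant e^{-c\delta}$ off a bounded neighbourhood of $[0,\pi]$, and the $\tilde q$-term is uniformly exponentially small. So the estimates are already uniform on the unbounded part of the region.
\end{itemize}
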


The key ingredient in the proof of Proposition \ref{EST} is Proposition \ref{EST2} below that estimates the difference between Faddeev's quantum dilogarithm function $\Phi_b$ and the classical dilogarithm function $\mathrm{Li}_2$ on the suitable region. To state the result, for $\theta=\arg b\in(0,\frac{\pi}{2}),$  let 
$$V_\theta=\bigg\{ y\in \mathbb C\ \bigg|\ \arg\Big(y-\pi\mathbf i-\pi b^2\mathbf i\Big)\notin \Big[\frac{\pi}{2},\frac{\pi}{2}+2\theta\Big]\ \ \text{and}\ \  \arg\Big(y +\pi\mathbf i+\pi b^2 \mathbf i\Big)\notin \Big[-\frac{\pi}{2},-\frac{\pi}{2}+2\theta\Big]\bigg\}$$
as depicted in Figure \ref{V} (a). 
\begin{figure}[htbp]
\centering
\includegraphics[scale=0.3]{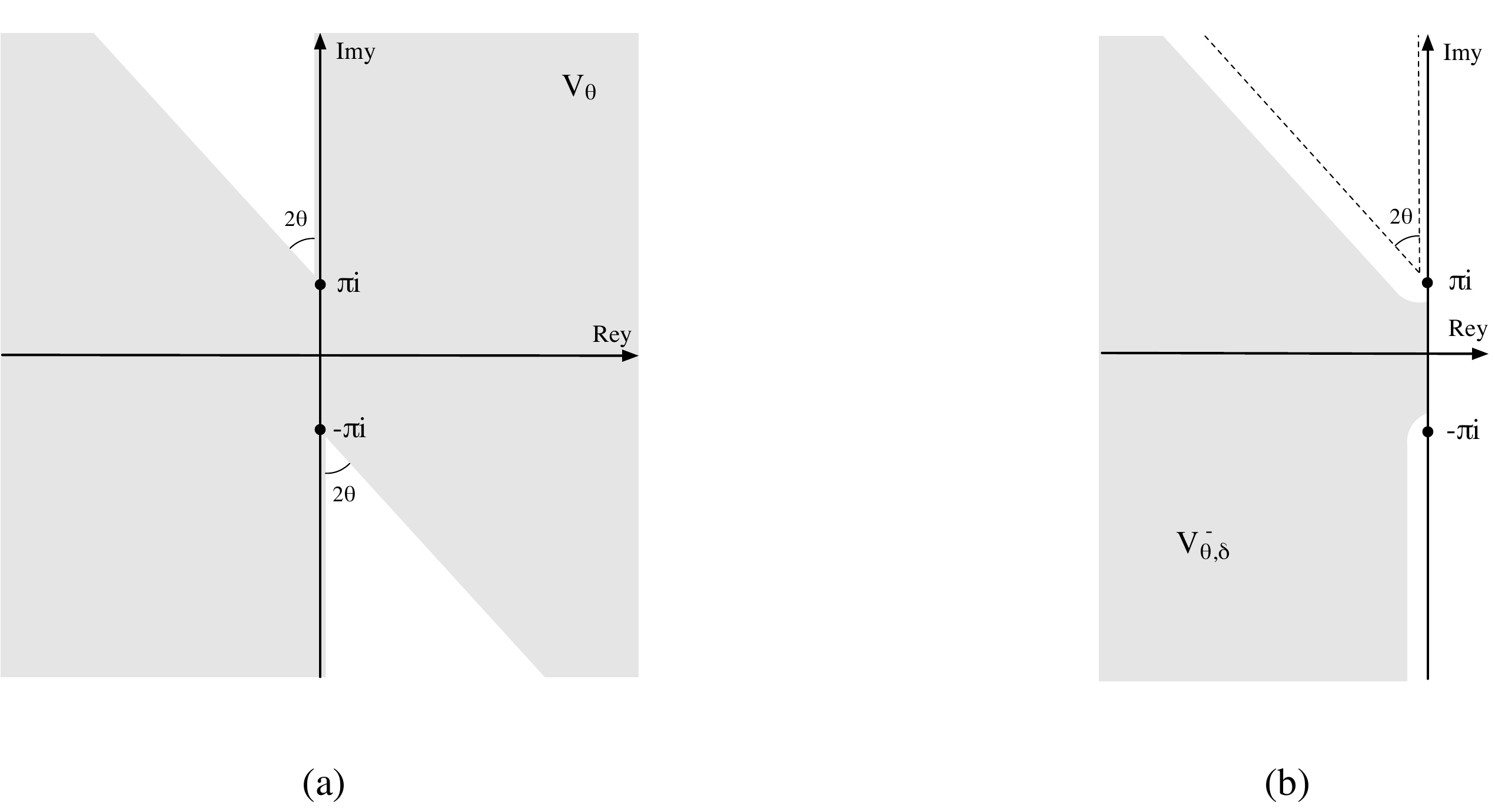}
\caption{The region $V_\theta$ in (a), and the region $V_{\theta,\delta}^-$ in (b).}
\label{V}
\end{figure}
Then $\Phi_b\big(\frac{y}{2\pi b}\big)$ is holomorphic on $V_\theta.$ Now for $y\in V_\theta$ satisfying $-\mathrm{Re}\big(\frac{Q}{2}\big)< \mathrm{Im}\big(\frac{y}{2\pi b}\big)<\mathrm{Re}\big(\frac{Q}{2}\big),$ we define
$$\log\Phi_b\Big(\frac{y}{2\pi b}\Big)\doteq\int_\Omega \frac{e^{-\mathbf i yt}}{4t\sinh(\pi t)\sinh(\pi t b^2)}dt,$$
and extend it analytically to $V_\theta$ using the functional equation (\ref{phieq}). 
For 
$\delta>0$ sufficiently small, we let 
$$V_{\theta,\delta}^-=\Big\{ y\in V_\theta\ \Big|\ \mathrm{Re}y\leqslant 0\ \ \text{and}\ \  d(y,\partial V_{\theta})>\delta\Big\}$$
as depicted in Figure \ref{V} (b). 

\begin{proposition}\label{EST2}
For $\theta\in (0,\frac{\pi}{2}),$ there exists a $B=B_{\theta,\delta}>0$ independent of $b$ such that for all $b\in\mathbb C$ with $|b|$ sufficiently small and $\arg b=\theta,$ and for all $y\in V_{\theta,\delta}^-,$ 
$$\Bigg|\log\Phi_b\bigg(\frac{y}{2\pi b}\bigg)-\frac{1}{2\pi \mathbf i b^2}\mathrm{Li}_2\big(-e^y\big)\Bigg|< B|b|^2.$$
\end{proposition}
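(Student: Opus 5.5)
The plan is to work directly with the infinite product formula of Proposition \ref{Prod}, which is valid on all of $V_\theta$ (as $\arg b=\theta\in(0,\frac{\pi}{2})$), rather than with the integral representation. Substituting $z=\frac{y}{2\pi b}$ gives
\[
\log\Phi_b\Big(\frac{y}{2\pi b}\Big)=\sum_{n=0}^{+\infty}\log\Big(1+e^{y+(2n+1)\pi\mathbf i b^2}\Big)-\sum_{n=0}^{+\infty}\log\Big(1+e^{\frac{y}{b^2}-\frac{(2n+1)\pi\mathbf i}{b^2}}\Big),
\]
where the branches are fixed by continuity from the strip where the integral definition holds. The first sum is the classical part and the second is the dual part. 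I would show separately that the dual part is exponentially small uniformly on $V^-_{\theta,\delta}$, and that the classical part equals $\frac{1}{2\pi\mathbf i b^2}\mathrm{Li}_2(-e^y)+O(|b|^2)$.

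For the dual part, write $b^{-2}=|b|^{-2}e^{-2\mathbf i\theta}$. The real part of the exponent is $|b|^{-2}\big(\mathrm{Re}(ye^{-2\mathbf i\theta})-(2n+1)\pi\sin 2\theta\big)$. For $y$ with $\mathrm{Re}y\le 0$ lying at distance greater than $\delta$ from the two excluded wedges of $V_\theta$, I expect to be able to show $\mathrm{Re}(ye^{-2\mathbf i\theta})-\pi\sin2\theta\le -c_{\theta,\delta}<0$. This is an elementary but fiddly planar computation using the geometry of the wedges at $\pm(\pi\mathbf i+\pi b^2\mathbf i)$. It should also account for the fact that the vertex shift $\pi b^2\mathbf i$ is $O(|b|^2)$, so $\delta$ absorbs it once $|b|$ is small. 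Granting this, each term is bounded by $e^{-(c+2n\pi\sin2\theta)/|b|^2}$, and the dual sum is $O(e^{-c/|b|^2})=O(|b|^2)$.

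For the classical part, I would use Euler--Maclaurin, or equivalently the expansion $\log(1+e^{w})=-\sum_k\frac{(-1)^k}{k}e^{kw}$ combined with $\sum_n e^{(2n+1)k\pi\mathbf i b^2}=\frac{1}{2\mathbf i\sin(k\pi b^2)}$. This gives $\sum_k\frac{(-1)^{k}e^{ky}}{2\mathbf i k\sin(k\pi b^2)}$, and $\frac{1}{\sin(k\pi b^2)}=\frac{1}{k\pi b^2}+O(k|b|^2)$ yields $\frac{1}{2\pi\mathbf i b^2}\mathrm{Li}_2(-e^y)$ plus a remainder of size $|b|^2\sum_k|e^{ky}|$. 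However, this only converges for $\mathrm{Re}y<0$ bounded away from $0$, and it fails when $k\pi b^2$ is not small. The robust version is therefore the midpoint Euler--Maclaurin formula applied to $f(s)=\log(1+e^{y+2\pi\mathbf i b^2 s})$ on $s\in[0,\infty)$ with nodes $n+\frac12$. The integral $\int_0^\infty f(s)\,ds$ is computed by the substitution $w=y+2\pi\mathbf i b^2s$ along a ray of direction $e^{\mathbf i(2\theta+\pi/2)}$, which has $\mathrm{Re}w\to-\infty$. This gives exactly $\frac{1}{2\pi\mathbf i b^2}\mathrm{Li}_2(-e^y)$, provided the ray avoids the singularities $w\in\pi\mathbf i(2\mathbb Z+1)$ of $\log(1+e^w)$. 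That is precisely guaranteed by the exclusion of the wedges in $V_\theta$, which are the swept images of those rays; here the $\delta$-distance gives uniformity. The midpoint boundary term at $s=0$ has the form $c\,b^2 f'(0)$ times a constant, and the remainder has the form $|b|^4\int|f'''|\,ds$. After the change of variable each is $O(|b|^2)$, because $f^{(k)}=(2\pi\mathbf i b^2)^k g^{(k)}(w)$, where $g^{(k)}$ is bounded near the ray and decays like $e^{\mathrm{Re}w}$.

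The main obstacle is uniformity. The derivatives $g^{(k)}(w)=\frac{d^k}{dw^k}\log(1+e^w)$ must be bounded along the entire ray uniformly in $y\in V^-_{\theta,\delta}$, including when $y$ has large imaginary part. This requires a lower bound on $|1+e^w|$ along the ray, which reduces to the distance from the ray to $\pi\mathbf i(2\mathbb Z+1)$. Verifying that the wedge-exclusion with margin $\delta$ gives such a bound for all relevant rays, with only finitely many poles close by since $\mathrm{Re}w$ decreases linearly, is the step I expect to require the most care. Finally, I would check that the branch of $\log$ chosen by continuity agrees with the principal branch used in $\mathrm{Li}_2$. This holds on the defining strip and extends by analytic continuation, since $V^-_{\theta,\delta}$ is connected.
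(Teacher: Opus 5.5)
Your proposal is correct and follows the paper's proof essentially step by step. The paper likewise splits $\log\Phi_b\big(\frac{y}{2\pi b}\big)$ via Proposition~\ref{Prod} into $\log N_b-\log M_b$ (Lemma~\ref{log=}), shows $\log M_b$ is exponentially small using the same half-plane bound $\mathrm{Re}\big((y-\pi\mathbf i)b^{-2}\big)<\frac{\delta}{2}\mathrm{Im}(b^{-2})$, and handles $\log N_b$ by Euler's summation formula together with a uniform bound $|1+e^u|^{-1}<N$ along the rays (Lemma~\ref{N}).

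The differences are cosmetic. Your midpoint nodes $n+\frac12$ give $\frac{1}{2\pi\mathbf i b^2}\mathrm{Li}_2(-e^y)$ directly, whereas the paper sums at integer nodes, obtains $\mathrm{Li}_2\big(-e^{y+\pi\mathbf i b^2}\big)$, and Taylor-expands back. The paper also pins down the branch of the logarithm by an explicit evaluation at $y=0$, $b=e^{\pi\mathbf i/4}$, rather than by the unspecified check on the strip you leave open.
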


To prove Proposition \ref{EST2}, we need Lemma \ref{log=} and Lemma \ref{N} below. Observe that for $y\in V_{\theta,\delta}^-,$ $\frac{y}{2\pi b}$ is neither a pole nor a zero of $\Phi_b.$ Then by Proposition \ref{Prod}, 
\begin{equation}\label{prod}
\Phi_b\Big(\frac{y}{2\pi b}\Big)=\frac{\prod_{n=0}^{+\infty}\Big(1+e^{y+(2n+1)\pi \mathbf i b^2}\Big)}{\prod_{n=0}^{+\infty}\Big(1+e^{\big(y-(2n+1)\pi \mathbf i \big)b^{-2}}\Big)}.
\end{equation}
Denote
$$\log N_b\Big(\frac{y}{2\pi b}\Big)\doteq \sum_{n=0}^{+\infty}\log \Big(1+e^{y+(2n+1)\pi \mathbf i b^2}\Big) $$
and
$$\log M_b\Big(\frac{y}{2\pi b}\Big)\doteq \sum_{n=0}^{+\infty}\log \Big(1+e^{\big(y-(2n+1)\pi \mathbf i \big)b^{-2}}\Big).$$

\begin{lemma}\label{log=} 
For $b\in\mathbb C$ with $|b|$ sufficiently small and $\arg b=\theta,$ both
$\log N_b\big(\frac{y}{2\pi b}\big)$ and $\log M_b\big(\frac{y}{2\pi b}\big)$ define holomorphic functions on $V_{\theta,\delta}^-;$ and
\begin{equation}\label{log}
    \log\Phi_b\Big(\frac{y}{2\pi b}\Big)=\log N_b\Big(\frac{y}{2\pi b}\Big)-\log M_b\Big(\frac{y}{2\pi b}\Big)
\end{equation}
for each $y\in V_{\theta,\delta}^-.$
\end{lemma}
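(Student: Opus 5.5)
The plan has three steps: show that each series defining $\log N_b$ and $\log M_b$ converges absolutely and uniformly on $V_{\theta,\delta}^-$ with principal-branch summands; exponentiate and compare with \eqref{prod}; then fix the branch ambiguity with a single normalization on a connected set.

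\textbf{Convergence.} For the series of $\log N_b$, write $y=Y_1+\mathbf iY_2$ with $Y_1=\mathrm{Re}\,y\leqslant 0$. Then
$$\big|e^{y+(2n+1)\pi\mathbf i b^2}\big|=e^{Y_1-(2n+1)\pi\,\mathrm{Im}(b^2)}\leqslant e^{-(2n+1)\pi\,\mathrm{Im}(b^2)}.$$
Since $\mathrm{Im}(b^2)=|b|^2\sin 2\theta>0$, every term is less than $1$ in modulus. So each summand $\log(1+w_n)$, with the principal branch, is holomorphic in $y$ and satisfies $|\log(1+w_n)|\leqslant C|w_n|$ when $|w_n|\leqslant e^{-\pi\mathrm{Im}(b^2)}<1$. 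The constant $C$ depends on $b$, but that is harmless here. Summing the geometric bound and applying the Weierstrass $M$-test shows $\log N_b$ is holomorphic on $V_{\theta,\delta}^-$.

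For the series of $\log M_b$, the exponent is $(y-(2n+1)\pi\mathbf i)b^{-2}$. Its real part equals $\mathrm{Re}\big((y-(2n+1)\pi\mathbf i)\overline{b}^2\big)/|b|^4$, which involves $Y_1\cos2\theta+(Y_2-(2n+1)\pi)\sin2\theta$.
\begin{itemize}
\item For $n$ large this is very negative, roughly $-(2n+1)\pi\sin 2\theta/|b|^2$, so the tail converges geometrically.
\item For small $n$, the bound must come from the geometry of $V_\theta$. The removed sectors at $\pm(\pi\mathbf i+\pi b^2\mathbf i)$ are exactly where $\mathrm{Re}\big((y-(2n+1)\pi\mathbf i)b^{-2}\big)\geqslant 0$ could occur with $\mathrm{Re}\,y\leqslant0$. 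Staying a distance $\delta$ away from $\partial V_\theta$ should force this real part to be $\leqslant -c\delta/|b|^2$ for every $n$.
\end{itemize}
I would verify this by writing the removed sector as a cone with directions $\mathbf i$ through $\mathbf i e^{2\mathbf i\theta}$, where the $b^{-2}$ rotation turns exactly these directions into a half-plane. Then every term has modulus less than $1$ and the terms decay geometrically in $n$, so the principal logs are again holomorphic and summable. This sector-geometry check is the step I expect to be the main obstacle. The delicate part is the region near $\mathrm{Im}\,y\approx\pm\pi$, where the apex is shifted by $\pi b^2\mathbf i$ and the estimate must be uniform in small $|b|$.

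\textbf{Identity.} Exponentiating, $\exp(\log N_b-\log M_b)$ equals the right-hand side of \eqref{prod}, which is $\Phi_b(y/2\pi b)$. Hence $\log N_b-\log M_b-\log\Phi_b(\frac{y}{2\pi b})$ is a continuous $2\pi\mathbf i\mathbb Z$-valued function, and so it is constant on each connected component of $V_{\theta,\delta}^-$. The set $V_{\theta,\delta}^-$ is connected: it is a left half-plane with two slits removed that do not separate it.

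To fix the constant, I would evaluate along $y\to-\infty$ on the real axis.
\begin{itemize}
\item Both series tend to $0$ by dominated convergence, using the bounds above.
\item The integral defining $\log\Phi_b(\frac{y}{2\pi b})$ also tends to $0$. This region satisfies the strip condition, and I would obtain the limit either by shifting the contour $\Omega$ in $t$ upward, which gives exponential decay in $-y$, or by comparing with \cite[Formula (44)]{AK}, where the identity holds directly in the strip.
\end{itemize}
Hence the constant is $0$ and \eqref{log} holds on all of $V_{\theta,\delta}^-$. The extension of $\log\Phi_b$ to $V_\theta$ via \eqref{phieq} is consistent with this, because both sides are analytic continuations from the strip.
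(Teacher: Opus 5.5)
Your convergence step is the paper's own argument. The paper controls the small-$n$ terms of $\log M_b$ in the same way: each $y\in V^-_{\theta,\delta}$ stays a distance $d\approx\delta\sin 2\theta$ on the correct side of the ray from $\pi\mathbf i$ in direction $\mathbf i e^{2\mathbf i\theta}$. This gives $\mathrm{Re}\big((y-\pi\mathbf i)b^{-2}\big)<\frac{\delta}{2}\mathrm{Im}(b^{-2})$ uniformly, which is your sector picture.

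\textbf{The gap: normalizing the integer $k$.} Here $\log\Phi_b=\log N_b-\log M_b+2k\pi\mathbf i$, and letting $y\to-\infty$ along the real axis does not determine $k$ for all $\theta$.
\begin{itemize}
\item For $\theta\in(\frac{\pi}{4},\frac{\pi}{2})$, the removed upper sector has arguments in $[\frac{\pi}{2},\frac{\pi}{2}+2\theta]$, and $\frac{\pi}{2}+2\theta>\pi$. Its boundary ray crosses the real axis near $\pi\tan 2\theta<0$. So $-R$ lies in the removed sector for $R$ large, and your path leaves $V^-_{\theta,\delta}$.
\item For $\theta=\frac{\pi}{4}$ one has $b^{-2}=-\mathbf i|b|^{-2}$. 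For real $y$ the terms of $\log M_b$ then have constant modulus $e^{-(2n+1)\pi|b|^{-2}}$ and only rotate, so $\log M_b\not\to 0$.
\item For every $\theta>0$, real $y$ violates the strip condition under which $\log\Phi_b(\frac{y}{2\pi b})$ is given by the integral, because $\mathrm{Im}\frac{y}{2\pi b}=-\frac{y\sin\theta}{2\pi|b|}\to+\infty$. Along your path $\log\Phi_b$ is only the analytic continuation, and its limit there is exactly the unknown $2k\pi\mathbf i$.
\item Comparing with the Andersen--Kashaev product formula does not help. It is an identity for $\Phi_b$, not for its logarithm, so it cannot see $k$.
\end{itemize}

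\textbf{The repair.} Go to infinity along $y=2\pi b x$ with $x\to-\infty$, that is, along real $z=\frac{y}{2\pi b}\to-\infty$.
\begin{itemize}
\item This ray has direction $\pi+\theta$, which lies outside the upper sector's directions. For $\delta$ small it stays in $V^-_{\theta,\delta}$ for every $\theta\in(0,\frac{\pi}{2})$.
\item On it, $\mathrm{Re}\,y=2\pi|b|x\cos\theta\to-\infty$ and $\mathrm{Re}(yb^{-2})=2\pi x\cos\theta/|b|\to-\infty$. So both series tend to $0$ by your uniform bounds.
\item It lies in the strip, where $\log\Phi_b(x)$ is the defining integral. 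Shifting $\Omega$ upward, the residues crossed at $t=\pi\mathbf i k b^{\pm1}$ carry factors $e^{2\pi k x b^{\pm1}}$, which decay since $\mathrm{Re}\,b^{\pm1}>0$. Hence $\log\Phi_b(x)\to 0$.
\end{itemize}

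With this change your argument gives $k=0$ and is a valid alternative to the paper's. The paper instead fixes $k$ at the single point $y=0$ with the special value $b_0=e^{\pi\mathbf i/4}$, where $\log\Phi_{b_0}(0)=0$ and $\log N_{b_0}(0)=\log M_{b_0}(0)$. It then relies on continuity in $b$ to transport this. The repaired limit argument has the advantage of fixing $k$ for each $b$ directly, with no deformation in $b$.
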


\begin{proof} First of all, since $\mathrm{Re} y\leqslant 0$ for each $y\in V_{\theta,\delta}^-$ and $\mathrm{Im}(b^2)>0,$ we have
$$\Big|e^{y+(2n+1)\pi\mathbf ib^2}\Big|= e^{\mathrm{Re} y-(2n+1)\pi\mathrm{Im}(b^2)}\leqslant e^{-(2n+1)\pi\mathrm{Im}(b^2)}<1 $$
for each $n\geqslant 0.$ Therefore, for each $n\geqslant 0,$ $\arg \big(1+e^{y+(2n+1)\pi \mathbf i b^2}\big)\in (-\pi,\pi)$ and $\log \big(1+e^{y+(2n+1)\pi \mathbf i b^2}\big)$ is a well defined holomorphic function on $V_{\theta,\delta}^-;$ 
 and there is a $K>0$ such that 
 $$\bigg|\log \Big(1+e^{y+(2n+1)\pi \mathbf i b^2}\Big)\bigg|\leqslant K \Big|e^{y+(2n+1)\pi\mathbf ib^2}\Big|\leqslant K  e^{-(2n+1)\pi\mathrm{Im}(b^2)} $$ for each $y\in V_{\theta,\delta}^-$ and $n\geqslant 0.$ Aa a consequence, 
 $$\sum_{n=0}^{+\infty} \bigg|\log \Big(1+e^{y+(2n+1)\pi \mathbf i b^2}\Big)\bigg| \leqslant K \sum_{n=0}^{+\infty} e^{-(2n+1)\pi\mathrm{Im}(b^2)}$$
converges, and $\sum_{n=0}^{+\infty} \log \big(1+e^{y+(2n+1)\pi \mathbf i b^2}\big)$ converges uniformly on $V_{\theta,\delta}^-.$ Then by Weierstrass' Theorem, 
$\log N_b\big(\frac{y}{2\pi b}\big)$ defines a holomorphic function on $V_{\theta,\delta}^-$ for any $b\in \mathbb C$ with $\arg b=\theta$ and with an arbitrary $|b|.$

Next, let $R$ be the ray starting from $\pi\mathbf i$ with angle $2\theta$ from the positive $\mathrm{Im}y$-axis, and let $b\in\mathbb C$ with $\pi|b|^2<\delta $ so that $\pi\mathbf i$ does not lie in $V_{\theta,\delta}^-.$  Then as shown in  Figure \ref{VV} (a), there is a $d\in (0,\delta)$ such that $d(y, R)> d$ for each $y\in V_{\theta,\delta}^-.$  
\begin{figure}[htbp]
\centering
\includegraphics[scale=0.25]{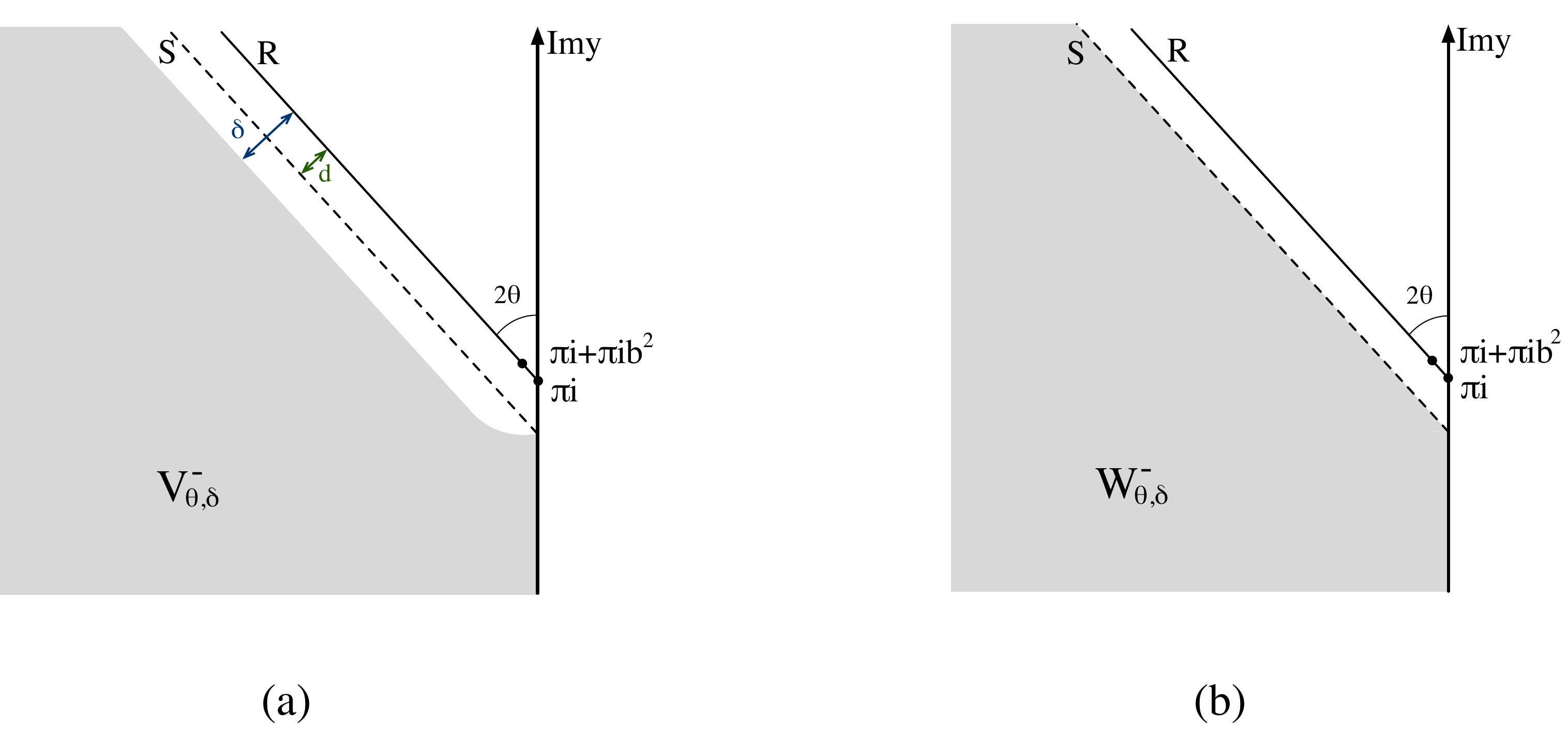}
\caption{The distance $d$ in (a), and the region $W_{\theta,\delta}^-$ in (b).}
\label{VV}
\end{figure}
This is  equivalent to that $\mathrm{Re}\big((y-\pi \mathbf i)e^{-2\mathbf i\theta}\big)< -d$
for each $y\in V_{\theta,\delta}^-.$ We also observe that as $|b|$ approaches $ 0,$ $d$ approaches $\delta\sin(2\theta).$ As a consequence, when $|b|$ is sufficiently small, we have $d>\frac{\delta\sin(2\theta)}{2}$ and 
$$\mathrm{Re}\Big(\big(y-\pi \mathbf i\big)b^{-2}\Big)<-\frac{\delta\sin(2\theta)|b|^{-2}}{2}=\frac{\delta}{2}\mathrm{Im}(\big(b^{-2}\big)$$
for each $y\in V_{\theta,\delta}^-.$
Together with the fact that $\mathrm{Im}(b^{-2})<0,$ we have 
$$\Big|e^{\big(y-(2n+1)\pi \mathbf i \big)b^{-2}}\Big|= e^{\mathrm{Re}\big((y-\pi\mathbf i)b^{-2}\big)+2n\pi\mathrm{Im}(b^{-2})}<e^{\frac{\delta}{2}\mathrm{Im}((b^{-2})+2n\pi\mathrm{Im}(b^{-2})}<1$$
for each $n\geqslant 0.$ Therefore, for each $n\geqslant 0,$ $\arg \big(1+e^{\big(y-(2n+1)\pi \mathbf i\big) b^{-2}}\big)\in (-\pi,\pi)$ and $\log \big(1+e^{\big(y-(2n+1)\pi \mathbf i \big)b^{-2}}\big)$ is a well defined holomorphic function on $V_{\theta,\delta}^-;$ 
 and there is a $K>0$ such that 
 \begin{equation}\label{K}
 \bigg|\log \Big(1+e^{\big(y-(2n+1)\pi \mathbf i \big)b^{-2}}\Big)\bigg|\leqslant K \Big|e^{\big(y-(2n+1)\pi\mathbf i\big)b^{-2}}\Big|\leqslant K e^{\frac{\delta}{2}\mathrm{Im}((b^{-2})+2n\pi\mathrm{Im}(b^{-2})}
 \end{equation}
 for each $y\in V_{\theta,\delta}^-$ and  $n\geqslant 0.$ Aa a consequence, 
 \begin{equation}\label{M}
     \sum_{n=0}^{+\infty} \Bigg|\log \bigg(1+e^{\big(y-(2n+1)\pi \mathbf i \big)b^{-2}}\bigg)\Bigg| \leqslant K e^{\frac{\delta}{2}\mathrm{Im}((b^{-2})}\sum_{n=0}^{+\infty} e^{2n\pi\mathrm{Im}(b^{-2})}
     \end{equation}
converges, and $\sum_{n=0}^{+\infty} \log \Big(1+e^{(y-(2n+1)\pi \mathbf i) b^{-2}}\big)$ converges uniformly on $V_{\theta,\delta}^-.$ Then by Weierstrass' Theorem, $\log M_b\big(\frac{y}{2\pi b}\big)$ defines a holomorphic function on $V_{\theta,\delta}^-$ for $b\in\mathbb C$ with $\arg b=\theta$ and a sufficiently small $|b|.$  By the same argument, we can also show that $\log M_b\big(\frac{y}{2\pi b}\big)$ defines a holomorphic function on a small neighborhood of $0$ in  $V_{\theta,\delta}^-$ for any $b\in \mathbb C$ with $\arg b=\theta$ and with an arbitrary $|b|.$ 

Finally, by (\ref{prod}), we have 
$$\log\Phi_b\Big(\frac{y}{2\pi b}\Big)=\log N_b\Big(\frac{y}{2\pi b}\Big)-\log M_b\Big(\frac{y}{2\pi b}\Big)+2k\pi\mathbf i$$
for some integer $k.$ To prove (\ref{log}), at $y=0$ and $b_0=e^{\frac{\pi\mathbf i}{4}},$ we have by a direct computation that 
$$\log \Phi_{b_0}(0)=\frac{\big(b_0^2+b_0^{-2}\big)\pi\mathbf i}{24}=0,$$
and that  $\log N_{b_0}(0)=\log M_{b_0}(0)=\sum_{n=0}^{+\infty}\log\big(1+e^{-(2n+1)\pi}\big),$
hence 
$$\log N_{b_0}(0)-\log M_{b_0}(0)=0.$$
This implies that $k=0,$ and (\ref{log}) holds. 
\end{proof}

\begin{lemma}\label{N}
\begin{enumerate}[(1)]
    \item There exists a $C=C_{\theta,\delta}>0$ independent of $b$ such that for all $b\in \mathbb C$ with $|b|$ sufficiently small and $\arg b=\theta,$ and for all $y\in V_{\theta,\delta}^-,$ 
$$\bigg|\log N_b\Big(\frac{y}{2\pi b}\Big)-\frac{1}{2\pi \mathbf i b^2}\mathrm{Li}_2\big(-e^y\big)\bigg|< C|b|^2.$$

\item There exists a $D=D_{\theta,\delta}>0$ independent of $b$ such that for all $b\in \mathbb C$ with $|b|$ sufficiently small and $\arg b=\theta,$ and for all $y\in V_{\theta,\delta}^-,$ 
$$\bigg|\log M_b\Big(\frac{y}{2\pi b}\Big)\bigg|< D |b|^2.$$
\end{enumerate}
\end{lemma}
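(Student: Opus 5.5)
For part (1), I will expand each logarithm in $\log N_b$ as a power series, interchange the sums, and recognise the result as a $\mathrm{Li}_2$ Euler–Maclaurin approximation. For $y\in V_{\theta,\delta}^-$ we have $\mathrm{Re}\,y\leqslant 0$ and $\mathrm{Im}(b^2)>0$, so $|e^{y+(2n+1)\pi\mathbf i b^2}|<1$. Hence
$$\log N_b\Big(\frac{y}{2\pi b}\Big)=\sum_{k\geqslant 1}\frac{(-1)^{k+1}e^{ky}}{k}\sum_{n\geqslant 0}e^{(2n+1)k\pi\mathbf i b^2}=\sum_{k\geqslant 1}\frac{(-1)^{k+1}e^{ky}}{2k\,\mathbf i\sin(k\pi b^2)}.$$
Next I use the expansion $\frac{1}{2\mathbf i\sin w}=\frac{1}{2\mathbf i w}+\frac{w}{12\mathbf i}+w^3 g(w)$, with $g$ holomorphic and bounded near $0$. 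The leading term gives
$$\frac{1}{2\pi\mathbf i b^2}\sum_{k\geqslant 1}\frac{(-1)^{k+1}e^{ky}}{k^2}=\frac{1}{2\pi\mathbf i b^2}\Big(-\mathrm{Li}_2(-e^y)\Big).$$
The sign has to be matched against the stated $+\frac{1}{2\pi\mathbf i b^2}\mathrm{Li}_2(-e^y)$, i.e.\ against the orientation convention implicit in (\ref{prod}); I will check this with the normalisation at $y=0$, $b_0=e^{\pi\mathbf i/4}$ used in Lemma \ref{log=}. The next term is $\frac{\pi b^2}{12\mathbf i}\sum(-1)^{k+1}e^{ky}=\frac{\pi b^2}{12\mathbf i}\frac{e^y}{1+e^y}$, which is $O(|b|^2)$ provided $1+e^y$ stays away from $0$. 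That holds on $V_{\theta,\delta}^-$, since the points $y=\pm\pi\mathbf i$ are excluded at distance $\delta$ (for $|b|$ small the excluded sectors contain $\delta$-neighbourhoods of $\pm\pi\mathbf i$, and the other zeros $(2m+1)\pi\mathbf i$ are handled the same way).

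The main obstacle is that this termwise expansion is only valid for $k|b|^2\ll 1$, while the sum runs over all $k$ and $\mathrm{Re}\,y$ may be near $0$. The fix I plan is to not expand termwise but to use a uniform contour representation instead. I write
$$\log N_b=\int_{\Omega}\frac{e^{-\mathbf i y t}\,(\cdots)}{\cdots}\,dt,$$
or equivalently apply Euler–Maclaurin to $n\mapsto \log(1+e^{y+(2n+1)\pi\mathbf i b^2})$ with step $h=2\pi\mathbf i b^2$ along the ray $\arg h=\frac{\pi}{2}+2\theta$:
$$\sum_{n\geqslant0}f\big((n+\tfrac12)h\big)=\frac1h\int_0^{\infty\cdot h}f(s)\,ds+O(h)\ \text{(midpoint rule)},\qquad f(s)=\log(1+e^{y+s}).$$
Then $\frac1h\int_0^{\infty h}\log(1+e^{y+s})\,ds=-\frac{1}{h}\mathrm{Li}_2(-e^y)$ after rotating the contour, which is legitimate because $y+s$ stays in the region where $1+e^{y+s}\neq0$. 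This is exactly why $V_\theta$ excludes sectors of angle $2\theta$ at $\pm(\pi\mathbf i+\pi b^2\mathbf i)$. The midpoint error is $O(h)$ times the total variation of $f'$ along the ray. That quantity is bounded uniformly because $f'=\frac{e^{y+s}}{1+e^{y+s}}$ decays exponentially along the ray and stays a distance $\gtrsim\delta$ from its poles. Altogether this gives $C|b|^2$.

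Part (2) is easier. By (\ref{M}) in the proof of Lemma \ref{log=},
$$|\log M_b|\leqslant K e^{\frac{\delta}{2}\mathrm{Im}(b^{-2})}\big(1-e^{2\pi\mathrm{Im}(b^{-2})}\big)^{-1}.$$
Since $\mathrm{Im}(b^{-2})=-\sin(2\theta)|b|^{-2}$, this is $O\big(e^{-c|b|^{-2}}\big)$ with $c=\frac{\delta\sin 2\theta}{2}$, which is certainly less than $D|b|^2$ for $|b|$ small. The constant $K$ can be taken uniform, e.g.\ $K=2$ once the exponentials are at most $\frac12$, which holds for $|b|$ small.
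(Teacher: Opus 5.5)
Your proof of (1) is correct, and it is a streamlined version of the paper's Euler--Maclaurin argument.

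The paper applies the endpoint form of Euler's summation formula to $f_y(x)=\log\big(1+e^{y+(2x+1)\pi\mathbf i b^2}\big)$. Its integral starts at the first node, so it yields $\frac{1}{2\pi\mathbf i b^2}\mathrm{Li}_2\big(-e^{y+\pi\mathbf i b^2}\big)$. The paper must then Taylor-expand back to $\mathrm{Li}_2(-e^y)$ and estimate four error terms separately.

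You integrate from $s=0$, half a step before the first node $s=h/2$, so you get exactly $\frac1h\mathrm{Li}_2(-e^y)$. The midpoint rule on the intervals $[n,n+1]$ has no boundary terms. The whole error is therefore at most $\frac{|h|}{8}\int_{\mathrm{ray}}|f''(s)|\,|ds|\leqslant\frac{\pi|b|^2}{4}\cdot\frac{N^2}{\sin(2\theta)}$, where $N$ bounds $1/|1+e^u|$ along the rays.

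Both arguments rest on the same input: a $b$-independent lower bound for $|1+e^u|$ on the rays $y+[0,\infty)\pi\mathbf i b^2$ with $y\in V^-_{\theta,\delta}$. The paper obtains this from the region $W^-_{\theta,\delta}$. Your part (2) is exactly the paper's argument.

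\textbf{Signs.} Your sign problems are arithmetic slips, not a matter of convention. In fact $\sum_{n\geqslant 0}e^{(2n+1)k\pi\mathbf i b^2}=-\frac{1}{2\mathbf i\sin(k\pi b^2)}$. Since $\frac{d}{ds}\mathrm{Li}_2(-e^{y+s})=-\log(1+e^{y+s})$, one has $\frac1h\int_0^{\infty h}\log(1+e^{y+s})\,ds=+\frac1h\mathrm{Li}_2(-e^y)$, which is the stated sign. No normalisation at $y=0$ could repair an overall sign of the $b^{-2}$ term anyway; the one in Lemma \ref{log=} only pins down a $2k\pi\mathbf i$.

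\textbf{Distance from the zeros.} The rays do not stay $\gtrsim\delta$ away from $(2\mathbb Z+1)\pi\mathbf i$. Take $\theta<\frac{\pi}{4}$ and $y$ on the imaginary axis about $2\delta$ below $\pi\mathbf i$; this $y$ lies in $V^-_{\theta,\delta}$. The ray from $y$ in direction $\mathbf i b^2$ then passes within about $2\delta\sin(2\theta)$ of $\pi\mathbf i$. This is why the paper's $\epsilon$ involves $\delta\tan(2\theta)$. That distance is still independent of $b$, so your estimate survives with $N=N_{\theta,\delta}$.
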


The proof of Lemma \ref{N} relies on the following Lemma \ref{EM}, which is a consequence of Euler’s Summation Formula. 
\begin{lemma}\label{EM}
Let $f:[0,+\infty)\to \mathbb C$ be a $C^2$-smooth function with the series $\sum_{n=0}^{+\infty} f(n)$ converges, the integral $\int_0^{+\infty}f(x)dx$ converges, and the limits $\lim_{x\to +\infty}f(x)=\lim_{x\to +\infty}f'(x)=\lim_{x\to +\infty}f''(x)=0.$ Then 
\begin{equation*}
\sum_{n=0}^{+\infty}f(n)=\int_0^{+\infty}f(x)dx +\frac{f(0)}{2}-\frac{f'(0)}{12}-\frac{1}{2}\int_0^{+\infty}f''(x)P_2(x)dx,
\end{equation*}
where $P_2(x)=\big(x-[x]\big)^2 - \big(x-[x]\big) +\frac{1}{6}$ is the second Bernoulli polynomial of $x-[x],$ and the Gauss symbol $[x]$ is the largest integer less than or equal to $x.$
\end{lemma}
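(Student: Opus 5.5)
The plan is to derive the formula from the standard Euler–Maclaurin summation formula, truncating at the second derivative, and then let the upper limit go to infinity. First fix an integer $N\geqslant 1$. On each interval $[n,n+1]$, integrate by parts twice against the periodic Bernoulli functions. Write $P_1(x)=x-[x]-\frac{1}{2}$, so that $P_2'(x)=2P_1(x)$ away from integers. Note that $P_2$ is continuous on $\mathbb R$ (its value at integers is $\frac{1}{6}$ from both sides), while $P_1$ jumps by $-1$ at each integer. Integrating by parts gives
\begin{equation*}
\int_n^{n+1}f(x)dx=\frac{f(n)+f(n+1)}{2}-\int_n^{n+1}f'(x)P_1(x)dx.
\end{equation*}
Integrating by parts once more, using $P_1=\frac{1}{2}P_2'$ and $P_2(n)=P_2(n+1)=\frac{1}{6}$, gives
\begin{equation*}
\int_n^{n+1}f'(x)P_1(x)dx=\frac{f'(n+1)-f'(n)}{12}-\frac{1}{2}\int_n^{n+1}f''(x)P_2(x)dx.
\end{equation*}
This uses only that $f$ is $C^2$ on $[n,n+1]$.

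Next, sum these identities over $n=0,\dots,N-1$. The sum of the half-endpoint terms is $\sum_{n=0}^{N}f(n)-\frac{f(0)+f(N)}{2}$, and the $f'$ terms telescope to $\frac{f'(N)-f'(0)}{12}$. Rearranging yields the finite formula
\begin{equation*}
\sum_{n=0}^{N}f(n)=\int_0^{N}f(x)dx+\frac{f(0)+f(N)}{2}+\frac{f'(N)-f'(0)}{12}-\frac{1}{2}\int_0^{N}f''(x)P_2(x)dx.
\end{equation*}

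Finally, let $N\to+\infty$. By hypothesis:
\begin{enumerate}[(1)]
    \item the partial sums on the left converge to $\sum_{n=0}^{+\infty}f(n)$;
    \item $\int_0^N f(x)dx\to\int_0^{+\infty}f(x)dx$;
    \item $f(N)\to 0$ and $f'(N)\to 0$.
\end{enumerate}
Every other term in the identity therefore has a limit. Hence $\int_0^N f''(x)P_2(x)dx$ also converges as $N\to+\infty$, and the displayed formula of the lemma follows, with $\int_0^{+\infty}f''P_2\,dx$ understood as this (possibly improper) limit.

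The one subtle point is convergence of this last integral. Its existence is forced by the identity itself, so the lemma needs no separate integrability assumption on $f''$; this is also where $f(N)\to0$ and $f'(N)\to0$ are used. In the later application to $\log N_b$, one would additionally want absolute bounds such as $|P_2|\leqslant\frac{1}{6}$ together with $f''\in L^1$, in order to estimate the remainder by $\frac{1}{12}\int_0^{+\infty}|f''|$. That bound is not needed for the lemma itself.
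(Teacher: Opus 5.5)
Your two integrations by parts are correct. They yield, for every integer $N$, exactly the finite identity that the paper quotes from Apostol's second-derivative Euler summation formula. Doing it by hand has a small advantage: it works directly for complex-valued $f$, whereas the paper applies the real-valued theorem to the real and imaginary parts separately.

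\textbf{The gap is in the final limit.} Letting $N\to+\infty$ through the integers only shows that the sequence $F(N)=\int_0^N f''(x)P_2(x)\,dx$, $N\in\mathbb Z_{\geqslant 0}$, converges. The integral $\int_0^{+\infty}f''(x)P_2(x)\,dx$ in the statement is an improper integral, namely $\lim_{X\to+\infty}F(X)$ over real $X$. Convergence along the integers does not by itself give this. Saying the integral is ``understood as this limit'' changes the statement rather than proving it. A sign that something is missing is that your argument never uses the hypothesis $\lim_{x\to+\infty}f''(x)=0$.

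The missing step is short, and it is precisely what the paper's proof supplies. Since $|P_2|\leqslant\frac{1}{6}$ and $f''$ is continuous,
\[
\big|F(X)-F([X])\big|\leqslant \frac{1}{6}\int_{[X]}^{X}\big|f''(t)\big|\,dt\leqslant \frac{1}{6}\sup_{[X]\leqslant t\leqslant [X]+1}\big|f''(t)\big|,
\]
and the right-hand side tends to $0$ because $f''(x)\to 0$. Together with the convergence of $F([X])$, which you already have, this gives convergence of $F(X)$ as $X\to+\infty$.

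Alternatively, integrate by parts once more on $[[X],X]$ to get $\int_{[X]}^X f''P_2=f'(X)P_2(X)-\frac{1}{6}f'([X])-2\int_{[X]}^X f'P_1$. This gives the same conclusion using only $f'(x)\to 0$. With either addition your proof is complete and follows the paper's route.
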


\begin{proof}
By the Second-derivative form of Euler’s Summation Formula (see e.g. \cite[Theorem 3]{A}), we have for each $k\in \mathbb Z_{\geqslant 0}$ that 
\begin{equation}\label{em}
\sum_{n=0}^kf(n)=\int_0^kf(x)dx+\frac{f(k)+f(0)}{2}+\frac{f'(k)-f'(0)}{12}-\frac{1}{2}\int_0^kf''(x)P_2(x)dx.
\end{equation}
Indeed, the original statement of \cite[Theorem 3]{A} is for real valued $C^2$-smooth functions, applying which respectively to the real and the imaginary parts of $f$ we get (\ref{em}). To prove the result, since $\sum_{n=0}^{+\infty}f(n)$ and $\int_0^{+\infty}f(x)dx$ converge, and $f(x)\to 0$ and $f'(x)\to 0$ as $x\to +\infty,$  it suffices to show that the integral 
$$F(x)\doteq\int_0^xf''(t)P_2(t)dt$$ converges as $x\to +\infty.$

To this end, as  $\sum_{n=0}^{k}f(n)$ and $\int_0^{k}f(x)dx$ converge as $k\to+\infty,$ and $f(k)\to 0$ and $f'(k)\to 0$ as $k\to +\infty,$  we have directly from (\ref{em}) that the integral $F(k)$ converges as  $k\in \mathbb Z_{\geqslant 0}$ and  $k\to +\infty.$ In particular, $F([x])$ converges as $x\to +\infty.$

Next, for each $x\in [0,+\infty),$ we have
$F(x)-F([x])=\int_{[x]}^xf''(t)P_2(t)dt.$
Since $\big|P_2(x)\big|\leqslant \frac{1}{6}$  for all $x\in [0,+\infty),$ we have  
$$\big|F(x)-F([x])\big|\leqslant \frac{1}{6}\int_{[x]}^x\big|f''(t)|dt\leqslant \frac{1}{6}\sup_{[x]\leqslant t<[x]+1}\big|f''(t)\big|;$$
and since $f''$ is continuous and $f''(x)\to 0$ as $x\to +\infty,$
 we have $\sup_{[x]\leqslant t<[x]+1}\big|f''(t)\big| \to 0$ as $x\to +\infty.$ This implies that $F(x)-F([x])$ converges to $0$ as $x\to +\infty.$ 
 
 Finally, as $F([x])$ converges as $x\to +\infty,$ we have that $F(x)$ converges as $x\to +\infty.$ 
\end{proof}

\begin{proof}[Proof of Lemma \ref{N}]
For (1), we let 
\begin{equation}\label{fy}
f_y(x)\doteq \log\Big(1+e^{y+(2x+1)\pi\mathbf i b^2}\Big)
\end{equation}
for each $x\in [0,+\infty).$ Then for each $y\in V_{\theta,\delta}^-,$
$$\log N_b\Big(\frac{y}{2\pi b}\Big)=\sum_{n=0}^{+\infty}f_y(n).$$

We first verify that for each $y\in V_{\theta,\delta}^-,$ $f_y$ satisfies the conditions of Lemma \ref{EM}. Clearly, $f_y(x)$ is $C^2$-smooth on $[0,+\infty).$  The convergence of the series 
$\sum_{n=0}^{+\infty}f_y(n)$ is given by Lemma \ref{log=}. For the converges of the integral $\int_0^{+\infty}f_y(x)dx,$ as 
$\lim_{x\to +\infty}\frac{f_y(x)}{e^{y+(2x+1)\pi\mathbf i b^2}}=1,$ there exists a $x_0>0$ such that for all $x>x_0,$ $\big|f_y(x)\big|< 2\big|e^{y+(2x+1)\pi\mathbf i b^2}\big|=2e^{\mathrm{Re}y-(2x+1)\pi\mathrm{Im}(b^2)
}.$ As a consequence, $\big|\int_0^{+\infty}f_y(x)dx\big|<\big|\int_0^{x_0}f_y(x)dx\big|+2e^{\mathrm{Re}y}\int_{x_0}^{+\infty}e^{-(2x+1)\pi\mathrm{Im}(b^2)}dx,$
which converges by the fact that $\mathrm{Im}(b^2)>0.$ For the limits of $f_y^{(i)}(x)$ as $x\to +\infty,$ $i=0,1,2,$ by a direct computation, we have 
\begin{equation}\label{f'y}
f_y'(x)=2\pi\mathbf i b^2\frac{e^{y+(2x+1)\pi\mathbf ib^2}}{1+e^{y+(2x+1)\pi\mathbf ib^2}}\quad\text{and}\quad f_y''(x)=-4\pi^2 b^4\frac{e^{y+(2x+1)\pi\mathbf ib^2}}{\big(1+e^{y+(2x+1)\pi\mathbf ib^2}\big)^2};
\end{equation}
and as $\mathrm{Im}(b^2)>0,$ we have  $\lim_{x\to +\infty}e^{y+(2x+1)\pi\mathbf ib^2}=0$ and
$\lim_{x\to +\infty}f_y(x)=\lim_{x\to +\infty}f'_y(x)=\lim_{x\to +\infty}f''_y(x)=0.$
Therefore, all the conditions of Lemma \ref{EM} are satisfied by $f_y(x),$ from which we have 
\begin{equation}\label{3.10}
    \log N_b\Big(\frac{y}{2\pi b}\Big)=\int_0^{+\infty}f_y(x)dx +\frac{f_y(0)}{2}-\frac{f_y'(0)}{12}-\frac{1}{2}\int_0^{+\infty}f_y''(x)P_2(x)dx.
\end{equation}

Next, we relate the first term on a right hand side above to the function $\frac{1}{2\pi\mathbf ib^2}\mathrm{Li}_2(-e^y).$ To do this, 
we consider a larger region $W_{\theta,\delta}^-$ which is obtained from $V_{\theta,\delta}^-$ by joining in the points  below the dotted ray $S.$ See  Figure \ref{VV} (b). More precisely, $$W_{\theta,\delta}^-=V_{\theta,\delta}^-\cup \bigg\{ y\in \mathbb C\ \bigg|\ \mathrm{Re}\Big((y-z)e^{2\mathbf i\theta}\Big)<0\text{ for all }z\in S,\ \ \mathrm{Re} y <0 \ \ \text{and}\ \  \mathrm{Im}y >0 \bigg\}.$$ 
We notice that when $\theta\in[\frac{\pi}{4},\frac{\pi}{2}),$ $W_{\theta,\delta}^-$ coincides with $V_{\theta,\delta}^-.$ 
We also observe that, for each $y\in V_{\theta,\delta}^-,$ the ray $\big\{ y+(2x+1)\pi\mathbf i b^2\ \big|\ x\in [0,+\infty)\big\}$ lies entirely in the region $W_{\theta,\delta}^-.$ As a consequence, 
$\big|-e^{y+(2x+1)\pi \mathbf ib^2}\big|=e^{\mathrm{Re}y-(2x+1)\pi \mathrm{Im}(b^2)}<1,$ and hence $-e^{y+(2x+1)\pi \mathbf ib^2}\in\mathbb C\setminus [1,+\infty)$ for each $y\in V_{\theta,\delta}^-$ and $x\in [0,+\infty).$  Then by the change of variable $z=-e^{y+(2x+1)\pi\mathbf ib^2},$ we have
 \begin{equation}\label{3.11}
 \begin{split}
     \int_0^{+\infty}f_y(x)dx = & \lim_{x_0\to +\infty} \int_0^{x_0}\log\Big(1+e^{y+(2x+1)\pi\mathbf ib^2}\Big)dx\\
=& \lim_{x_0\to +\infty} \frac{1}{2\pi \mathbf ib^2} \int_{-e^{y+\pi\mathbf i b^2}}^{-e^{y+(2x_0+1)\pi\mathbf i b^2}}\frac{\log(1-z)}{z}dz\\
=&\lim_{x_0\to +\infty} \frac{1}{2\pi \mathbf ib^2} \bigg(\mathrm{Li}_2\Big(-e^{y+\pi\mathbf i b^2}\Big)- \mathrm{Li}_2\Big(-e^{y+(2x_0+1)\pi\mathbf i b^2}\Big)\bigg)\\
=&\frac{1}{2\pi \mathbf ib^2}\mathrm{Li}_2\Big(-e^{y+\pi\mathbf i b^2}\Big),
 \end{split}
 \end{equation}
where 
the last equality comes from that $\mathrm{Im}(b^2)>0,$ and hence $\lim_{x_0\to +\infty} \mathrm{Li}_2\Big(-e^{y+(2x_0+1)\pi\mathbf i b^2}\Big)=\mathrm{Li}_2(0)=0.$ Applying the integral form of Taylor's Remainder Theorem that 
$f(y+h)=f(y)+f'(y)h+\int_y^{y+h}f''(t)(y+h-t)dt$
to the function 
\begin{equation}\label{f}
f(y)\doteq \mathrm{Li}_2(-e^y)
\end{equation}
with $h=\pi \mathbf i b^2,$ we have
\begin{equation}\label{3.12}
\frac{1}{2\pi\mathbf i b^2}\mathrm{Li}_2\Big(-e^{y+\pi\mathbf i b^2}\Big)  = \frac{1}{2\pi\mathbf i b^2}\mathrm{Li}_2\big(-e^{y}\big) + \frac{f'(y)}{2} + \frac{1}{2\pi\mathbf i b^2}\int_y^{y+\pi\mathbf ib^2}f''(t) \Big(y+\pi\mathbf ib^2-t\Big)dt.
\end{equation}

Putting (\ref{3.10}),  (\ref{3.11}), (\ref{f})  and (\ref{3.12}) together, 
we have 
\begin{equation}\label{3.14}
\begin{split}
&\log N_b\Big(\frac{y}{2\pi b}\Big)-\frac{1}{2\pi \mathbf i b^2}\mathrm{Li}_2\big(-e^y\big)
\\
=&\,\frac{1}{2}\Big(f_y(0)+f'(y)\Big) -\frac{f_y'(0)}{12}-\frac{1}{2}\int_0^{+\infty}f_y''(x)P_2(x)dx + \frac{1}{2\pi \mathbf i b^2}\int_y^{y+\pi\mathbf ib^2}f''(t) \Big(y+\pi\mathbf ib^2-t\Big)dt;
\end{split}    
\end{equation}
and we will estimate each of the terms on the right hand side. To this end, we let 
$\epsilon= \min\big\{\frac{\delta\tan(2\theta)}{4},\frac{\delta}{4}\big\}$ when $\theta\in(0,\frac{\pi}{4}),$ 
and let 
$\epsilon = \frac{\delta}{4}$ when  $\theta\in[\frac{\pi}{4},\frac{\pi}{2}).$ 
Then as shown in Figure \ref{delta}, 
\begin{figure}[htbp]
\centering
\includegraphics[scale=0.2]{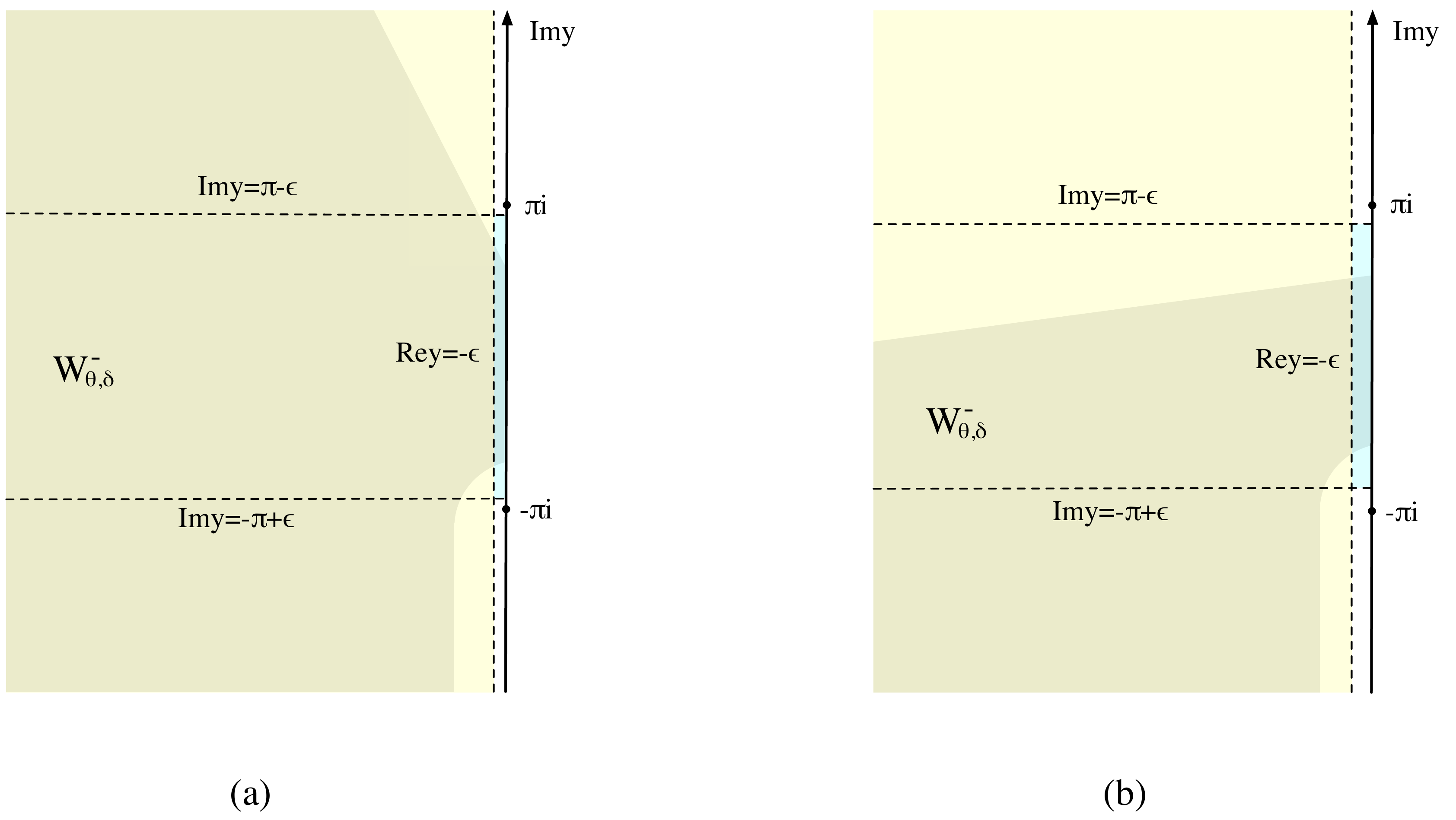}
\caption{The case $\theta\in (0,\frac{\pi}{4})$ in (a), and the case $\theta\in[\frac{\pi}{4},\frac{\pi}{2})$ in (b).}
\label{delta}
\end{figure}
for each $u \in W_{\theta,\delta}^-$ and $b\in\mathbb C$ with $\arg b=\theta$ and with $|b|$ sufficiently small,  either $u$ lies in the light yellow region, i.e.,
$$\mathrm{Re}u<-\epsilon,$$
or $u$ lies in the light blue region, i.e., 
$$-\epsilon\leqslant \mathrm{Re}u\leqslant 0 \quad\text{and}\quad -\pi + \epsilon < \mathrm{Im}u < \pi -\epsilon.$$ 
In the former case, we have $\big|1+e^u\big|\geqslant 1-e^{\mathrm{Re}u}>1-e^{-\epsilon}.$
In the latter case, if $\mathrm{Im}u \in (-\frac{\pi}{2}, \frac{\pi}{2}),$ then
$|1+e^u|\geqslant \mathrm{Re}(1+e^u)=1+e^{\mathrm{Re}u}\cos(\mathrm{Im}u)> 1;$ and if $\mathrm{Im}u \in (-\pi+\epsilon,\pi-\epsilon)\setminus (-\frac{\pi}{2}, \frac{\pi}{2}),$ then
$|1+e^u|\geqslant \big|\mathrm{Im}(1+e^u)\big|=\big|e^{\mathrm{Re}u}\sin(\mathrm{Im}u)\big|>e^{-\epsilon}\sin \epsilon.$ Putting  these together, we have 
\begin{equation}\label{bound}
   \big|e^u\big|\leqslant 1\quad\text{and}\quad \frac{1}{|1+e^u|}< N \doteq  \max \bigg\{\frac{1}{1-e^{-\epsilon}}, 1,  \frac{1}{e^{-\epsilon}\sin\epsilon}\bigg\}
\end{equation}
for each $u\in W_{\theta,\delta}^-.$ Now, for the first term on the right hand side of (\ref{3.14}), by (\ref{fy})  we have $f_y(0)\doteq \log\Big(1+e^{y+\pi\mathbf i b^2}\Big),$ by (\ref{f}) we have $f'(y)=-\log\big(1+e^y\big),$ and by (\ref{bound}) we have 
\begin{equation}\label{e1}
\begin{split}
    \bigg|\frac{1}{2}\Big(f_y(0)+f'(y)\Big)\bigg| = & \frac{1}{2}\bigg|\log\Big(1+e^{y+\pi\mathbf ib^2}\Big)-\log\big(1+e^y\big)\bigg|\\
    =& \frac{1}{2} \bigg|\int_y^{y+\pi \mathbf i b^2} \frac{e^u}{1+e^u}du\bigg|\\
    \leqslant & \frac{1}{2}\int_y^{y+\pi\mathbf ib^2} \frac{|e^u|}{|1+e^u|} |du|< \frac{1}{2}\cdot  N \cdot  \big|\pi \mathbf ib^2\big|= \frac{N\pi}{2} |b|^2.
    \end{split}
\end{equation}
For the second term on the right hand side of (\ref{3.14}), by (\ref{f'y}) we have $f_y'(0)=2\pi\mathbf i b^2\frac{e^{y+\pi\mathbf ib^2}}{1+e^{y+\pi\mathbf ib^2}},$ and by (\ref{bound}) we have 
\begin{equation}\label{e2}
\begin{split}
 \bigg| -\frac{f'_y(0)}{12}\bigg|=\frac{1}{12}\big|2\pi\mathbf i b^2\big| \frac{\big|e^{y+\pi\mathbf ib^2}\big|}{\big|1+e^{y+\pi\mathbf ib^2}\big|}< \frac{N\pi}{6}|b|^2. 
\end{split}
\end{equation}
For the third term on the right hand side of (\ref{3.14}), by (\ref{f'y}) and (\ref{bound}) we have 
\begin{equation}\label{e3}
\begin{split}
 \bigg|-\frac{1}{2}\int_0^{+\infty}f_y''(x)P_2(x)dx \bigg| \leqslant &\frac{1}{2} \int_0^{+\infty} \big| - 4\pi^2 b^4 \big| \frac{\big|e^{y+(2x+1)\pi\mathbf ib^2}\big|}{\big|1+e^{y+(2x+1)\pi\mathbf ib^2}\big|^2}\big|P_2(x)\big|dx\\
 < &\frac{1}{2}\cdot \big|-4\pi^2b^4\big|\cdot N^2 \cdot \frac{1}{6}\cdot  \int_0^{+\infty} e^{\mathrm{Re}y-(2x+1)\pi\mathrm{Im}(b^2)}dx \\
 = & \frac{N^2 \pi^2 |b|^4}{3}  \frac{e^{\mathrm{Re}y-\pi\mathrm{Im}(b^2)}}{2\pi \mathrm{Im}(b^2)} <\frac{N^2\pi}{6\sin(2\theta)} |b|^2,
    \end{split}
\end{equation}
where the second inequality comes from  the fact that $\big|P_2(x)\big|\leqslant \frac{1}{6}$ for each $x\in [0,+\infty),$ and the last inequality comes from that $\mathrm{Re}y\leqslant 0$ for all $y\in W_{\theta,\delta}^-,$ $\mathrm {Im}(b^2)>0,$ and $\mathrm{Im}(b^2)=|b|^2\sin(2\theta).$ For the last term on the right hand side of (\ref{3.14}), by (\ref{f}) we have $f''(y)= -\frac{e^y}{1+e^y},$ and by (\ref{bound}) we have 
\begin{equation}\label{e4}
\begin{split}
 \bigg|\frac{1}{2\pi \mathbf i b^2}\int_y^{y+\pi\mathbf ib^2}f''(t) \Big(y+\pi\mathbf ib^2-t\Big)dt\bigg| \leqslant & \frac{1}{\big|2\pi \mathbf ib^2\big|} \int_y^{y+\pi\mathbf ib^2}\frac{\big|e^t\big|}{\big|1+e^t\big|}\Big|y+\pi\mathbf i b^2-t\Big| |dt| \\
 <&\frac{1}{\big|2\pi \mathbf ib^2\big|}\cdot  N \cdot \big|\pi\mathbf i b^2\big| \cdot \big|\pi\mathbf i b^2\big|  = \frac{N\pi}{2}|b|^2,
  \end{split}
\end{equation}
where the last inequality comes from the fact that $\big|y+\pi\mathbf ib^2-t\big|\leqslant \big|\pi\mathbf i b^2\big| $ for each $t$ on the line segment connecting $y$ and $y+\pi\mathbf ib^2.$

Finally, putting (\ref{3.14}), (\ref{e1}), (\ref{e2}), (\ref{e3}) and (\ref{e4}) together, we have (1) with $C=\frac{7N\pi}{6}+\frac{N^2\pi}{6\sin(2\theta)}.$
\\

For (2), we first observe that when $|b|$ is sufficiently small, we can let the constant $K=2$ in (\ref{K}), which is independent of $b;$ and by (\ref{M}) we have 
\begin{equation*}
\begin{split}
 \bigg|\log M_b\Big(\frac{y}{2\pi b}\Big)\bigg|\leqslant &\, 2 e^{\frac{\delta}{2}\mathrm{Im}(b^{-2})}\sum_{n=0}^{+\infty} e^{2n\pi\mathrm{Im}(b^{-2})}\\
< &\, 2 e^{\frac{\delta}{2}\mathrm{Im}(b^{-2})}\sum_{n=0}^{+\infty} e^{-2n\pi}
 = \frac{2e^{\frac{\delta}{2}\mathrm{Im}(b^{-2})}}{1-e^{-2\pi}},
 \end{split}
\end{equation*}
where the second inequality comes from that when $|b|$ is sufficiently small, $\mathrm{Im}(b^{-2})=-\sin(2\theta)|b|^{-2}<-1.$ As
$\lim_{|b|\to 0} \frac{ e^{\frac{\delta}{2}\mathrm{Im}(b^{-2})}}{|b|^2}=0,$ there is an $M=M_{\theta,\delta}>0$ such that 
$$ e^{\frac{\delta}{2}\mathrm{Im}(b^{-2})}< M|b|^2$$
for $b\in \mathbb C$ with $\arg b=\theta$ and $|b|$ sufficiently small, from which (2) follows with $D=\frac{2M}{1-e^{-2\pi}}.$
\end{proof}

\begin{proof}[Proof of Proposition \ref{EST2}]
Let $C$ and $D$ respectively be the constants in (1) and (2) of Lemma  \ref{N}. Then Proposition \ref{EST2} follows immediately from Lemma \ref{log=} and Lemma \ref{N} with $B=C+D.$
\end{proof}

Now we are ready to prove Proposition \ref{EST}.

\begin{proof}[Proof of Proposition \ref{EST}]

We first observe that, for each $x\in H_\theta,$
\begin{equation}\label{S-}
    \log S_b\bigg(\frac{\pi-x}{\pi b}+\frac{b}{2}\bigg)=-\log S_b\bigg(\frac{x}{\pi b}+\frac{b}{2}\bigg).
\end{equation}
Indeed, by e.g. \cite[Formula (A.16)]{TesVar}, $S_b(\frac{\pi-x}{\pi b}+\frac{b}{2})S_b(\frac{x}{\pi b}+\frac{b}{2})=1$ for each $x\in H_\theta,$ hence $\log S_b\big(\frac{\pi-x}{\pi b}+\frac{b}{2}\big)+\log S_b\big(\frac{x}{\pi b}+\frac{b}{2}\big)=2k\pi \mathbf i$ for some integer $k;$ and a direct computation at $x=\frac{\pi}{2}$ shows that $k=0.$

We also observe that, for each $x\in H=\mathbb C\setminus\big((-\infty,0]\cup [\pi,\infty)\big),$ 
\begin{equation}\label{L-}
    L(\pi-x)=-L(x).
\end{equation}
Indeed, for each $x\in (0,\pi),$ $L(\pi-x)+L(x)=-2\mathbf i\big(\Lambda(\pi-x)+\Lambda(x)\big)=0.$ Then (\ref{L-}) follows from the analyticity of $L$ on $H.$

In the rest of the proof, we will use the change of variable $y=-\mathbf i(2x-\pi).$ Then $x\in H_\theta$ if and only if $y\in V_\theta;$ and by (\ref{PhiS}) and a direct computation at $x=\frac{\pi}{2},$ we have
\begin{equation}\label{323}
2\pi\mathbf ib^2\log\Phi_b\bigg(\frac{y}{2\pi b}\bigg)=2\pi\mathbf ib^2\log S_b\bigg(\frac{x}{\pi b}+\frac{b}{2}\bigg)+x^2-\pi x +\frac{\pi^2}{6}-\frac{\pi^2b^4}{12}.    
\end{equation}

Now for the case that $x \in H_{\theta,\delta}$ with $\mathrm{Im}x \leqslant 0,$ we have $y\in V_{\theta,2\delta}^-.$ Then by (\ref{L-}), (\ref{323}), (\ref{eq:Lx})  and Proposition \ref{EST2}, we have
\begin{equation*}
\begin{split}
    \bigg|2\pi \mathbf i b^2 \log S_b\Big(\frac{x}{\pi b} +\frac{b}{2}\Big)-L(x)\bigg| = &\bigg|2\pi \mathbf i b^2 \log S_b\Big(\frac{x}{\pi b} +\frac{b}{2}\Big)+L(\pi-x)\bigg|\\
=& \Bigg|2\pi \mathbf i b^2\log\Phi_b\Big(\frac{y}{2\pi b}\Big)-\mathrm{Li}_2\big(-e^y\big)+\frac{\pi^2b^4}{12}\Bigg|
< \bigg(2\pi B'+\frac{\pi^2}{12}\bigg)|b|^4,
\end{split}
\end{equation*}
where $B'$ is the constant in Proposition \ref{EST2} determined by $\theta$ and $2\delta.$

For the case that  $x \in H_{\theta,\delta}$ with $\mathrm{Im}x \geqslant 0,$ we have $\pi - x \in H_{\theta,\delta}$ with $\mathrm{Im}(\pi-x)\leqslant 0;$ and the result follows from  (\ref{S-}), (\ref{L-}) and the previous case. 

Therefore, Proposition \ref{EST} holds for all $x\in  H_{\theta,\delta}$ with $B=2\pi B'+\frac{\pi^2}{12}.$
\end{proof}


\section{Asymptotics of complex $b$-$6j$ symbols}

The goal of this section is to prove  Theorem \ref{vol}. The main tool is the following Saddle Point Approximation from \cite[Proposition 6.1]{WY} (see also \cite[Proposition 2.22 and Remark 2.23]{LMSWY}).

\begin{proposition}\label{saddle}
Let $D$ be a region in $\mathbb C$ and let $f(z)$ and $g(z)$ be holomorphic functions on $D.$ Let $f_\hbar(z)$ be a holomorphic function of the form
$$ f_\hbar(z) = f(z) + \upsilon_\hbar(z)\hbar^2.$$
Let $S$ be a curve in $D$ and let $c$ be a point on $S.$ Suppose
\begin{enumerate}[(i)]
\item $c$ is a critical point of $f$ in $D,$
\item $\mathrm{Re}(f)(c) > \mathrm{Re}(f)(z)$ for all $z \in S\setminus \{c\},$
\item $f''(c)\neq 0,$ 
\item $g(c) \neq 0,$ 
\item $|\upsilon_\hbar(z)|$ is bounded from above by a constant independent of $\hbar$ in $D,$ and
\item $S$ is smooth  around $c.$ 
\end{enumerate}
Then
\begin{equation*}
\int_S g(z) e^{\frac{f_\hbar(z)}{\hbar}} dz= \big(2\pi \hbar\big)^{\frac{1}{2}}\frac{g(c)}{\sqrt{-f''(c)}}e^{\frac{f(c)}{\hbar}} \Big(1 + O \big(\hbar\big)\Big).
 \end{equation*}
\end{proposition}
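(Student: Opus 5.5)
The plan is the classical Laplace/steepest-descent argument, adapted to the perturbation $\upsilon_\hbar\hbar^2$. First, I would split $S$ into a small arc $S_\epsilon=S\cap\{|z-c|<\epsilon\}$ and its complement. By (ii) and continuity, $\mathrm{Re} f(z)\leqslant \mathrm{Re} f(c)-\eta$ on $S\setminus S_\epsilon$ for some $\eta>0$. This is immediate if $S$ is compact. In the noncompact situations where the proposition is applied, I would assume in addition that $\int_{S}|g|e^{\mathrm{Re} f/\hbar}|dz|$ is finite and that $\mathrm{Re} f$ stays a definite amount below $\mathrm{Re} f(c)$ near the ends. That is exactly what the estimates of Lemma \ref{Sb-bound} and Proposition \ref{EST} provide for the integrand of the $b$-$6j$ symbol.

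Second, I would dispose of the tail. By (v), $|e^{\upsilon_\hbar\hbar}|\leqslant e^{C|\hbar|}$ is uniformly bounded. So the contribution of $S\setminus S_\epsilon$ is $O\big(e^{(\mathrm{Re} f(c)-\eta)/\mathrm{Re}\hbar}\big)$ in the appropriate normalization. This is exponentially small compared with $|e^{f(c)/\hbar}|$, hence it is absorbed into the $O(\hbar)$ error. One point needs care: $\hbar=2\pi\mathbf i b^2$ in the later application is complex. Accordingly, the ordering in (ii) should be read for the quantity $\mathrm{Re}(f/\hbar)$. I would state the proof with $\hbar>0$ after rotating $f$ by the phase of $\hbar$, which changes nothing below.

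Third, I would treat the local piece. Because $f'(c)=0$ and $f''(c)\neq0$, the holomorphic Morse lemma gives a biholomorphic coordinate $w$ near $c$ with $f(z)=f(c)-w^2$ and $w'(c)=\sqrt{-f''(c)/2}$. By (ii) and (vi), the smooth arc $S_\epsilon$ enters and leaves the disc in the two sectors where $\mathrm{Re}(w^2)>0$. Cauchy's theorem, using holomorphy of the integrand in $D$, then lets me replace $S_\epsilon$ by the real segment $w\in[-r,r]$. The connecting pieces lie where $\mathrm{Re} f<\mathrm{Re} f(c)-\eta'$, so they are again exponentially negligible.

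On the segment I would write
$$g(z)\frac{dz}{dw}\,e^{\upsilon_\hbar\hbar}=G(w)\big(1+O(\hbar)\big),$$
where $G(w)=G_0+G_1w+O(w^2)$ and $G_0=g(c)\big/\sqrt{-f''(c)/2}\neq0$ by (iv). The Gaussian integral $\int e^{-w^2/\hbar}G(w)\,dw$ is then evaluated term by term. The constant term gives $\sqrt{\pi\hbar}\,G_0=(2\pi\hbar)^{1/2}g(c)/\sqrt{-f''(c)}$. The linear term integrates to zero by oddness, and the $O(w^2)$ remainder contributes $O(\hbar^{3/2})$. Together these yield the relative error $1+O(\hbar)$.

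The main obstacle is the uniformity of the error in the third step. The factor $e^{\upsilon_\hbar(z)\hbar}=1+O(\hbar)$ must hold uniformly on the local disc, which is where (v) is used. Beyond that, the only delicate points are global: justifying the contour deformation inside $D$, and making the tail estimate honest when $S$ is unbounded. I would handle both by invoking the decay estimates already established for the specific integrand rather than proving them in the abstract.
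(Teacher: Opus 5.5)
Your argument is correct. It is the standard holomorphic-Morse-lemma / contour-deformation / Gaussian-integral proof, which is what underlies the result the paper imports from Wong--Yang (Proposition 6.1 there) without reproving it, and your use of (v) to get $e^{\upsilon_\hbar\hbar}=1+O(\hbar)$ uniformly is exactly the extra ingredient needed.

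Your caveats about complex $\hbar$ and unbounded $S$ are unnecessary. In the paper the proposition is only applied with $\hbar=|b|^2>0$, since the phase is absorbed into $W_{\boldsymbol\alpha}=e^{-2\mathbf i\theta}U_{\boldsymbol\alpha}$. It is also only applied to the compact arc $S=\Gamma^*_d$, with the tails handled separately in steps (II) and (III) of the proof of the main theorem.
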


\subsection{Properties of the relevant functions}

To prove Theorems \ref{vol}, following \cite{LMSWY2}, we introduce the following a new set of variables $\alpha_k,$ $\xi,$ $\tau_i,$ $\eta_j$:
\begin{align}
&\alpha_k=\pi b a_k-\frac{\pi b^2}{2} \textrm{ for }k\in\{1,\dots, 6\},\quad \xi=\pi b u, \label{alpha-a}\\
 &\tau_i=\pi b t_i-\frac{3\pi b^2}{2}\textrm{ for }i\in\{1,2,3,4\},\quad\text{and}\quad \eta_j=\pi b q_j-2\pi b^2 \textrm{ for }j\in\{1,2,3,4\}.\label{tau-t}
\end{align}
Then 
\begin{equation*}
\begin{split}
&\tau_1=\alpha_1+\alpha_2+\alpha_3,\quad\tau_2=\alpha_1+\alpha_5+\alpha_6,\quad
\tau_3=\alpha_2+\alpha_4+\alpha_6,\quad\tau_4=\alpha_3+\alpha_4+\alpha_5,\\
&\eta_1=\alpha_1+\alpha_2+\alpha_4+\alpha_5,\quad \eta_2=\alpha_1+\alpha_3+\alpha_4+\alpha_6,\quad 
\eta_3=\alpha_2+\alpha_3+\alpha_5+\alpha_6\quad\text{and}\quad \eta_4=2\pi;
\end{split}
\end{equation*}
and for $i,j\in\{1,2,3,4\},$ $\tau_i\in\mathbb R,$ $\eta_j\in\mathbb R,$ and
$$0<\eta_j-\tau_i < \pi.$$ 
In particular,
$$\max\{\tau_1, \tau_2, \tau_3, \tau_4\}<\min\{ \eta_1, \eta_2, \eta_3, \eta_4\}.$$

Let $\boldsymbol\alpha=(\alpha_1,\dots,\alpha_6)$ and let
\begin{equation*}
\begin{split}
U_{\boldsymbol \alpha,b}(\xi)= & -\pi \mathbf i b^2 \sum_{i=1}^4\sum_{j=1}^4 \log S_b(q_j-t_i) + 2\pi \mathbf i b^2 \sum_{i=1}^4  \log S_b(u-t_i) + 2\pi \mathbf i b^2 \sum _{j=1}^4\log S_b(q_j-u).
\end{split}
\end{equation*}
Then by Proposition \ref{contour}, we have 
\begin{equation}\label{fk}
\bigg\{\begin{matrix} a_1 & a_2 & a_3 \\ a_4 & a_5 & a_6 \end{matrix} \bigg\}_b=\frac{1}{\pi b}\int_{\Gamma}\exp\bigg(\frac{U_{\boldsymbol \alpha,b}(\xi)}{2\pi \mathbf ib^2} \bigg)d\xi,
\end{equation}
where $\Gamma$ is a contour as depicted in Figure \ref{Dtheta},  i.e., with $\theta=\arg b,$ 
\begin{enumerate}[(1)]
\item $\Gamma$ lies in the region 
\begin{equation*}
D_\theta=\Big\{\xi\in \mathbb C \ \Big|\ \xi-\max\{\tau_i\}\in H_\theta\ \ \text{and}\ \ \min\{\eta_j\}-\xi\in H_\theta \Big\},
\end{equation*}
which is also the region of $\xi\in \mathbb C$ with $\xi-\tau_i\in H_\theta$ for all $i\in\{1,2,3,4\}$ and $\eta_j-\xi\in H_\theta$ for all $j\in\{1,2,3,4\},$
and 
\item one end of $\Gamma$ is a ray of argument in $[2\theta, \pi],$ and the other end of $\Gamma$ is a ray of argument in $[2\theta-\pi,0].$ 
\end{enumerate}

\begin{figure}[htbp]
\centering\includegraphics[scale=0.30]{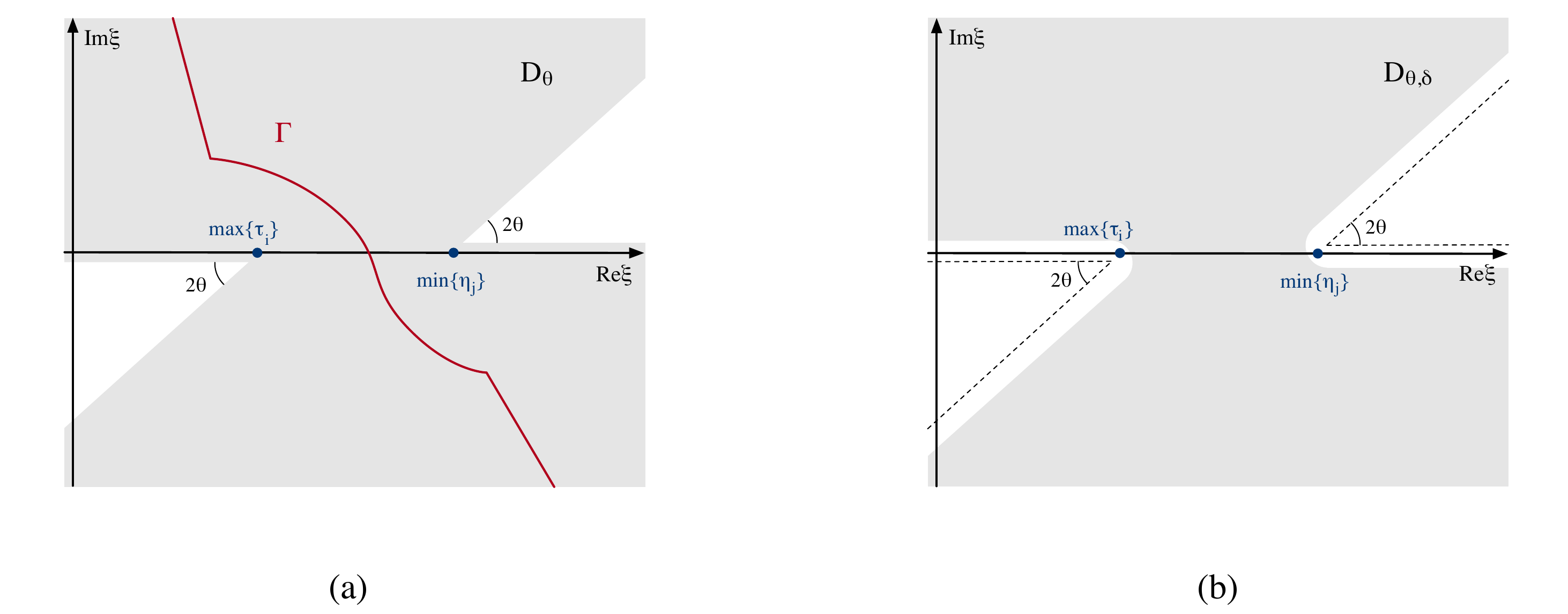}
\caption{The region $D_\theta$ and the contour $\Gamma$ of the integrand in~\eqref{fk} in (a), and the region $D_{\theta,\delta}$ in (b).}
\label{Dtheta}
\end{figure}

Define the function $U_{\boldsymbol \alpha}$ by 
\begin{equation}\label{U}
U_{\boldsymbol \alpha}(\xi)=-\frac{1}{2}\sum_{i=1}^4\sum_{j=1}^4L(\eta_j-\tau_i)+\sum_{i=1}^4L(\xi-\tau_i)+\sum_{j=1}^4L(\eta_j-\xi),
\end{equation}
where $L(x)$ is as  defined in Section \ref{SbLi}. Define also
\begin{equation}\label{kappa}
\kappa_{\boldsymbol \alpha}(\xi)=8\pi^2+14\pi\sum_{k=1}^6\alpha_k-28\pi\xi-4\pi \mathbf i\sum_{i=1}^4\log\Big(1-e^{2\mathbf i(\xi-\tau_i)}\Big)+3\pi \mathbf i\sum_{j=1}^4\log\Big(1-e^{2\mathbf i(\eta_j-\xi)}\Big),
\end{equation}
and 
\begin{equation}\label{nuU}\nu_{\boldsymbol\alpha,b}(\xi)=\frac{U_{\boldsymbol \alpha,b}(\xi)-U_{\boldsymbol \alpha}(\xi)-\kappa_{\boldsymbol \alpha}(\xi)b^2}{b^4}.
\end{equation} 
Then by (\ref{fk}) and (\ref{nuU}), the 
$b$-$6j$ symbol of $(a_1,\dots,a_6)$ is computed by
\begin{equation}\label{fk3} 
\bigg\{\begin{matrix} a_1 & a_2 & a_3 \\ a_4 & a_5 & a_6 \end{matrix} \bigg\}_b=\frac{1}{\pi b}\int_{\Gamma}\exp\bigg(\frac{U_{\boldsymbol \alpha}(\xi)+\kappa_{\boldsymbol \alpha}(\xi)b^2+
\nu_{\boldsymbol \alpha,b}(\xi)b^4}{2\pi \mathbf ib^2} \bigg)d\xi.
\end{equation}
\\

We first recall two important properties of the functions $U_{\boldsymbol\alpha}$ and $\kappa_{\boldsymbol\alpha}$ 
from \cite{LMSWY2}.

\begin{proposition}\cite[Proposition 3.11]{LMSWY2}\label{critical1} Let $(\theta_1,\dots, \theta_6)$ be the dihedral angles of a truncated hyperideal hyperbolic  tetrahedron $\Delta.$ Then the function $U_{\boldsymbol \alpha}(\xi)$ has a unique critical point $\xi^*$ on the interval $(\max\{\tau_i\},\min\{\eta_j\}),$ and 
$$U_{\boldsymbol \alpha}(\xi^*)=-2\mathbf i\mathrm{Vol}(\Delta),$$
where $\mathrm{Vol}(\Delta)$ is the hyperbolic volume of $\Delta.$
\end{proposition}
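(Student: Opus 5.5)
The plan is to restrict $U_{\boldsymbol\alpha}$ to the real interval $I=(\max\{\tau_i\},\min\{\eta_j\})$ and reduce everything to the classical Lobachevsky function. By (\ref{alpha-a}), $\alpha_k=\frac{\pi}{2}\pm\frac{\theta_k}{2}$, so all $\tau_i,\eta_j$ are real. For $\xi\in I$ we have $\xi-\tau_i\in(0,\pi)$ and $\eta_j-\xi\in(0,\pi)$, and the same holds for each $\eta_j-\tau_i$. Hence by $L(x)=-2\mathbf i\Lambda(x)$ on $(0,\pi)$,
$$U_{\boldsymbol\alpha}(\xi)=-2\mathbf i\Big(-\tfrac12\sum_{i,j}\Lambda(\eta_j-\tau_i)+\sum_i\Lambda(\xi-\tau_i)+\sum_j\Lambda(\eta_j-\xi)\Big),$$
which is purely imaginary on $I$. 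Differentiating with $\Lambda'(x)=-\log|2\sin x|$ gives
$$U_{\boldsymbol\alpha}'(\xi)=2\mathbf i\Big(\sum_i\log\big(2\sin(\xi-\tau_i)\big)-\sum_j\log\big(2\sin(\eta_j-\xi)\big)\Big).$$
So critical points in $I$ are exactly the solutions of $\prod_i\sin(\xi-\tau_i)=\prod_j\sin(\eta_j-\xi)$.

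Existence is easy. Let $g(\xi)=\sum_i\log\sin(\xi-\tau_i)-\sum_j\log\sin(\eta_j-\xi)$. Then $g\to-\infty$ at the left end of $I$, since some $\xi-\tau_i\to0^+$ while every $\eta_j-\xi$ stays in $(0,\pi)$ because $0<\eta_j-\tau_i<\pi$. Symmetrically, $g\to+\infty$ at the right end.

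For uniqueness I would not try to show $g$ is monotone directly. Instead I would pass to $z=e^{2\mathbf i\xi}$ and write $\sin w=\frac{e^{\mathbf iw}-e^{-\mathbf iw}}{2\mathbf i}$. Clearing denominators, the equation becomes a polynomial identity in $z$. Using $\sum_i\tau_i=\sum_j\eta_j$ (both equal $2\sum_k\alpha_k$, since $\eta_4=2\pi$ contributes only a factor $e^{4\pi \mathbf i}=1$), the degree-$4$ and degree-$0$ terms cancel. What remains is the Murakami--Yano quadratic
$$A z^2+Bz+C=0,$$
whose discriminant is, up to a nonzero factor, $\det\mathrm{Gram}(\Delta)$. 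This is the step I expect to be the main computational obstacle: checking the coefficients in the present variables (with the $\pm$ choices) and matching the discriminant with $\det\mathrm{Gram}(\Delta)$. Once that is done, $\det\mathrm{Gram}(\Delta)<0$ for a hyperideal tetrahedron, and I expect this to force both roots onto the unit circle. The two roots then give exactly two real solutions $\xi$ modulo $\pi$, and only one of them can lie in $I$, which has length $<\pi$. Together with the existence argument above, this gives uniqueness.

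To handle the arbitrary signs, I would first reduce to the all-$+$ case. Flipping $\alpha_k\mapsto\pi-\alpha_k$ changes the $\tau_i,\eta_j$ by integer multiples of $\pi$ and by reflections. Since $\Lambda$ is $\pi$-periodic and odd, the expression for $U_{\boldsymbol\alpha}$ on $I$ is preserved up to a translation of $\xi$; this mirrors the reflection symmetry in Proposition \ref{reflection}.

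For the value at the critical point, I would compare $\frac{\mathbf i}{2}U_{\boldsymbol\alpha}(\xi^*)$ with the Murakami--Yano/Ushijima volume formula. In that formula, $\mathrm{Vol}(\Delta)$ is expressed through the function $-\tfrac12\sum\Lambda(\eta_j-\tau_i)+\sum\Lambda(\xi-\tau_i)+\sum\Lambda(\eta_j-\xi)$ evaluated at the critical root, applied to the dihedral angles $\theta_k$ of the truncated hyperideal tetrahedron. After the substitution $\alpha_k=\frac{\pi}{2}+\frac{\theta_k}{2}$, together with $\Lambda(\pi-x)=-\Lambda(x)$ (equivalently (\ref{L-})), the two expressions should agree term by term. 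That yields $U_{\boldsymbol\alpha}(\xi^*)=-2\mathbf i\mathrm{Vol}(\Delta)$.

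As an independent sanity check, I would also verify the Schläfli formula. Since $\xi^*$ is critical, $\partial_{\theta_k}U_{\boldsymbol\alpha}(\xi^*)$ equals the partial derivative with $\xi$ held fixed. This should equal $\mathbf i$ times the length of the $k$-th edge, by the Murakami--Yano computation. The constant of integration is then fixed by degenerating to the regular ideal or flat limit.
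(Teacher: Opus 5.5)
Your existence argument is correct, but the uniqueness step does not follow as written. That $I=(\max\{\tau_i\},\min\{\eta_j\})$ has length $<\pi$ only shows that each of the two residue classes mod $\pi$ coming from $z_\pm$ meets $I$ at most once. This gives at most two critical points in $I$, not one.

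One repair is a parity argument. The function $F(\xi)=\prod_i\sin(\xi-\tau_i)-\prod_j\sin(\eta_j-\xi)$ has the form $p\cos 2\xi+q\sin 2\xi+r$; the frequency-$4$ terms cancel because $\sum_j\eta_j=\sum_i\tau_i+2\pi$ (not equality of the sums, as you wrote). $F$ is negative near the left end of $I$ and positive near the right end, and it has at most two zeros there. Hence it has exactly one zero in $I$, and that zero is simple. This makes the Gram-determinant computation superfluous.

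The simplest route, though, is the one you dismissed: $g$ is directly monotone. Indeed $g'(\xi)=\sum_i\cot(\xi-\tau_i)+\sum_j\cot(\eta_j-\xi)$. Pair the $i$'s with the $j$'s bijectively; each pair gives $\cot a+\cot b=\frac{\sin(a+b)}{\sin a\sin b}>0$, since $a=\xi-\tau_i$ and $b=\eta_j-\xi$ lie in $(0,\pi)$ and $a+b=\eta_j-\tau_i\in(0,\pi)$. Since $\frac{d}{d\xi}\mathrm{Im}U_{\boldsymbol\alpha}=2g$ on $I$, this is exactly the strict convexity of $\mathrm{Im}U_{\boldsymbol\alpha}$ along $I$ that the paper uses in Proposition \ref{ri}(2) via Lemma \ref{dd}. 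It holds for every sign choice at once.

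The more serious gap is the reduction to the all-$+$ case, which is false. Replacing $\alpha_1$ by $\pi-\alpha_1$ shifts $\tau_1,\tau_2,\eta_1,\eta_2$ by $\pi-2\alpha_1=\mp\theta_1$, which is not a multiple of $\pi$, and leaves $\tau_3,\tau_4,\eta_3,\eta_4$ fixed. Eight of the sixteen differences $\eta_j-\tau_i$ change as well. So the functions $U_{\boldsymbol\alpha}|_I$ for different sign patterns are genuinely different, not translates, and only their critical values are expected to agree. That agreement is a nontrivial fact, the classical shadow of Proposition \ref{reflection}, and your argument does not prove it.

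The value identification is also only asserted. The Murakami--Yano formula (and Ushijima's hyperideal version) is $\frac12\mathrm{Im}\big(U(z_-)-U(z_+)\big)$, involving both roots. Matching it ``term by term'' with a single Lobachevsky sum at the root in $I$ needs a relation between the two values, which you do not supply.

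The Schl\"afli argument you relegate to a sanity check is the workable route, and it treats all sign patterns uniformly. By criticality, $\frac{d}{d\theta_k}\big(\frac{\mathbf i}{2}U_{\boldsymbol\alpha}(\xi^*)\big)$ equals the partial derivative at fixed $\xi=\xi^*$. That is an explicit sum of $\log|2\sin(\cdot)|$ terms, and the real content is proving it equals $-\frac{l_k}{2}$. The constant can then be fixed at $\boldsymbol\theta\to 0$. There every sign pattern gives $\alpha_k=\frac{\pi}{2}$, $I=(\frac{3\pi}{2},2\pi)$, $\xi^*=\frac{7\pi}{4}$ and critical value $8\Lambda(\frac{\pi}{4})$. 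This is the volume of the regular ideal octahedron, which is what the truncated tetrahedron degenerates to.
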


 \begin{proposition}\cite[Proposition 3.16]{LMSWY2}\label{Hess} At the critical point  $\xi^*$ of $U_{\boldsymbol\alpha},$ we have
 \begin{equation*}
\frac{-U''_{\boldsymbol \alpha}(\xi^*)}{\exp\big(\frac{\kappa_{\boldsymbol \alpha}(\xi^*)}{\pi \mathbf i}\big)}=16\sqrt{\det\mathrm{Gram}(\Delta)},
\end{equation*}
where $\mathrm{Gram}(\Delta)$ is the Gram matrix of $\Delta$ in the dihedral angles. As a consequence, $\xi^*$ is a non-degenerate critical point of $U_{\boldsymbol\alpha}.$
\end{proposition}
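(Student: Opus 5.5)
We will prove the identity by an explicit computation at the critical point, reducing everything to the quadratic equation satisfied by $z=e^{2\mathbf i\xi^*}$. We will then identify the discriminant of that quadratic with $\det\mathrm{Gram}(\Delta)$, in the manner of Murakami--Yano and Ushijima.

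\emph{Step 1: formulas for $L'$, $L''$ and the critical equation.} From (\ref{eq:Lx}) and $\frac{d}{dx}\mathrm{Li}_2(e^{2\mathbf ix})=-2\mathbf i\log(1-e^{2\mathbf ix})$ we get
$$L'(x)=2x-\pi+2\mathbf i\log\big(1-e^{2\mathbf ix}\big),\qquad L''(x)=\frac{2(1+e^{2\mathbf ix})}{1-e^{2\mathbf ix}}=2\mathbf i\cot x .$$
Hence
$$U''_{\boldsymbol\alpha}(\xi)=2\mathbf i\Big(\sum_{i}\cot(\xi-\tau_i)+\sum_j\cot(\eta_j-\xi)\Big).$$
Write $z=e^{2\mathbf i\xi}$, $x_i=e^{2\mathbf i\tau_i}$ and $y_j=e^{2\mathbf i\eta_j}$. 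The equation $U'_{\boldsymbol\alpha}(\xi^*)=0$ reads
$$\sum_i(2\xi-2\tau_i-\pi)-\sum_j(2\eta_j-2\xi-\pi)+2\mathbf i\Big(\sum_i\log(1-z/x_i)-\sum_j\log(1-y_j/z)\Big)=0 .$$
Using $\sum_i\tau_i=\sum_j\eta_j-2\pi$ (where $\eta_4=2\pi$), and after exponentiating, this becomes
$$\prod_i(1-z/x_i)=\prod_j(1-y_j/z)\cdot e^{\text{(explicit linear term)}},$$
which we expect to be exactly the Murakami--Yano quadratic
$$\prod_i(x_i-z)\;=\;\prod_j(y_j-z)\ \text{ up to monomial factors}.$$
The quartic terms cancel because both sides have leading coefficient $z^4$. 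This leaves a quadratic $A z^2+Bz+C=0$, whose two roots correspond to the two critical points $\xi^*$ and $\overline{\xi^*}$-type partner.

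\emph{Step 2: simplify the ratio.} We will substitute into
$$\exp\Big(\frac{\kappa_{\boldsymbol\alpha}(\xi^*)}{\pi\mathbf i}\Big)=e^{-8\pi\mathbf i-14\mathbf i\sum\alpha_k+28\mathbf i\xi^*}\prod_i(1-z/x_i)^{-4}\prod_j(1-y_j/z)^{3}.$$
Using the critical equation of Step 1, we trade $\prod_j(1-y_j/z)$ for $\prod_i(1-z/x_i)$ times an explicit exponential, which leaves a rational expression in $z,x_i,y_j$. Next we write $\cot(\xi-\tau_i)=\mathbf i\frac{z+x_i}{z-x_i}$, and similarly for the $\eta_j$ terms. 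Then $U''_{\boldsymbol\alpha}(\xi^*)$ becomes
$$-2\Big(\sum_i\frac{z+x_i}{z-x_i}-\sum_j\frac{z+y_j}{z-y_j}\Big),$$
which is, up to a factor, $z\frac{d}{dz}\log\frac{\prod_i(z-x_i)}{\prod_j(z-y_j)}$ plus a constant. Since $\prod_i(z-x_i)-\prod_j(z-y_j)=Az^2+Bz+C$ vanishes at $z^*$, this logarithmic derivative equals $\frac{2Az^*+B}{\prod_j(z^*-y_j)}$, up to the explicit prefactor. Moreover, $2Az^*+B=\pm\sqrt{B^2-4AC}$.

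\emph{Step 3: the discriminant.} We invoke the Murakami--Yano identity: with $x_i,y_j$ written in terms of $e^{\mathbf i\theta_k}$, the discriminant $B^2-4AC$ equals $16\det\mathrm{Gram}(\Delta)$ times an explicit monomial in the $e^{\mathbf i\alpha_k}$. Combining this with Step 2, all monomials in $e^{\mathbf i\alpha_k}$ and powers of $z^*$ should cancel against the prefactor coming from $\kappa_{\boldsymbol\alpha}$. This cancellation is exactly what the coefficients $14$, $28$, $-4$, $3$ in (\ref{kappa}) are designed to achieve, and it would leave $16\sqrt{\det\mathrm{Gram}(\Delta)}$. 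Non-degeneracy then follows because $\det\mathrm{Gram}(\Delta)<0\neq0$ for hyperideal tetrahedra.

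\emph{Main obstacle.} The hardest step is the bookkeeping in Steps 2--3. This means verifying that every monomial factor and every $2\pi\mathbf i$-branch of the logarithms cancels exactly, and fixing the sign of the square root. For the sign, we would check a symmetric regular hyperideal tetrahedron numerically or explicitly, and then argue by continuity over the connected space of dihedral angles. This works because both sides are nonvanishing and continuous there.
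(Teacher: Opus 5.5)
The paper gives no argument for this statement; it imports it verbatim from \cite[Proposition 3.16]{LMSWY2}. Your proof exponentiates $U'_{\boldsymbol\alpha}(\xi^*)=0$ to a Murakami--Yano quadratic in $z=e^{2\mathbf i\xi}$ and then uses their discriminant identity. This is a self-contained algebraic route, and it works. It also shows why the coefficients $14,28,-4,3$ in $\kappa_{\boldsymbol\alpha}$ are the right ones, whereas the citation only buys brevity.

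Two pieces of bookkeeping need correcting.
\begin{itemize}
\item Cancelling the $z^4$ terms alone leaves a cubic, not a quadratic. You also need the constant terms to cancel. They do, since $\prod_i x_i=e^{2\mathbf i\sum_i\tau_i}=e^{2\mathbf i\sum_j\eta_j}=\prod_j y_j$. In fact the exponentiated critical equation is exactly $\prod_i(z-x_i)=\prod_j(z-y_j)$, with no monomial factor. Moreover $\prod_i(z-x_i)-\prod_j(z-y_j)=z\,(Az^2+Bz+C)$.
\item That extra factor of $z$ changes your Step 2. Write $p(z)\doteq\prod_i(z-x_i)$. Then $U''_{\boldsymbol\alpha}(\xi^*)=-4(z^*)^2(2Az^*+B)/p(z^*)$ and $\exp\big(\kappa_{\boldsymbol\alpha}(\xi^*)/\pi\mathbf i\big)=e^{2\mathbf i\sum_k\alpha_k}(z^*)^2/p(z^*)$. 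So the ratio is exactly $4e^{-2\mathbf i\sum_k\alpha_k}(2Az^*+B)$, and all powers of $z^*$ do cancel.
\end{itemize}
What you need from Murakami--Yano is then $B^2-4AC=16\,e^{4\mathbf i\sum_k\alpha_k}\det\mathrm{Gram}(\Delta)$. After the substitution $e^{2\mathbf i\alpha_k}=-e^{\pm\mathbf i\theta_k}$ this is a Laurent-polynomial identity in the $e^{\mathbf i\theta_k}$. It therefore holds for every sign choice, because $\mathrm{Gram}(\Delta)$ involves only $\cos\theta_k$.

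Your ``main obstacle'' is smaller than you fear. The logarithms in $\kappa_{\boldsymbol\alpha}$ carry integer coefficients, so $\exp(\kappa_{\boldsymbol\alpha}/\pi\mathbf i)$ has no branch ambiguity. You also only ever use the exponentiated critical equation.

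The sign needs neither a numerical check nor a continuity argument. On $(\max\{\tau_i\},\min\{\eta_j\})$ one has $L''(x)=2\mathbf i\cot x$ and $\exp(\kappa_{\boldsymbol\alpha}(\xi)/\pi\mathbf i)=\frac{1}{16}\prod_j\sin^3(\eta_j-\xi)\big/\prod_i\sin^4(\xi-\tau_i)$. At $\xi^*$ this equals $1/(16P)$, where $P=\prod_i\sin(\xi^*-\tau_i)>0$, by the real form $\prod_i\sin(\xi^*-\tau_i)=\prod_j\sin(\eta_j-\xi^*)$ of the critical equation. Next, $S=\sum_i\cot(\xi^*-\tau_i)+\sum_j\cot(\eta_j-\xi^*)>0$. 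To see this, pair the four $i$'s bijectively with the four $j$'s; each pair contributes $\frac{\sin(\eta_j-\tau_i)}{\sin(\xi^*-\tau_i)\sin(\eta_j-\xi^*)}>0$, since $0<\eta_j-\tau_i<\pi$.

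Hence the left-hand side equals $-32\mathbf i\,SP$. This proves non-degeneracy directly. It also shows the identity holds with $\sqrt{\det\mathrm{Gram}(\Delta)}$ read as $-\mathbf i\sqrt{-\det\mathrm{Gram}(\Delta)}$.
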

\bigskip

To understand the asymptotics of the complex $b$-$6j$ symbols, in below Propositions \ref{period}, \ref{ri}, \ref{limder1} and \ref{limder2}, we study properties of the function $U_{\boldsymbol\alpha}$ on the region 
$$D=\mathbb C \setminus (-\infty, \max\{\tau_i\})\cup(\min\{\eta_j\},\infty).$$

\begin{proposition}\label{period}
\begin{enumerate}[(1)]
    \item For $\xi\in D$ with $\mathrm{Im}\xi>0,$ we have
    $$U_{\boldsymbol\alpha}(\xi+\pi)=U_{\boldsymbol\alpha}(\xi)-4\pi^2.$$

    \item  For $\xi\in D$ with $\mathrm{Im}\xi<0,$ we have
   $$U_{\boldsymbol\alpha}(\xi+\pi)=U_{\boldsymbol\alpha}(\xi)+4\pi^2.$$
\end{enumerate}
As a consequence, the derivative
$U_{\boldsymbol\alpha}'$ is $\pi$-periodic in both cases.
\end{proposition}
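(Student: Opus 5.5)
The plan is to apply the functional equations (\ref{period3}) and (\ref{period4}) of $L$ term by term in the definition (\ref{U}) of $U_{\boldsymbol\alpha}$. The constant term $-\frac12\sum_{i,j}L(\eta_j-\tau_i)$ does not depend on $\xi$, so only the eight terms $L(\xi-\tau_i)$ and $L(\eta_j-\xi)$ need to be handled. Since $\tau_i,\eta_j\in\mathbb R$, we have $\mathrm{Im}(\xi-\tau_i)=\mathrm{Im}\xi$ and $\mathrm{Im}(\eta_j-\xi)=-\mathrm{Im}\xi$. So for $\mathrm{Im}\xi\neq0$ each argument lies off the real axis, hence in $H=\mathbb C\setminus\big((-\infty,0]\cup[\pi,\infty)\big)$. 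The same holds after shifting by $\pm\pi$, so the functional equations apply. This is also why the hypothesis $\xi\in D$ with $\mathrm{Im}\xi\neq 0$ suffices.

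For case (1), with $\mathrm{Im}\xi>0$, apply (\ref{period3}) to get
$$L(\xi+\pi-\tau_i)=L(\xi-\tau_i)+2\pi(\xi-\tau_i).$$
For the $\eta$ terms, set $w=\eta_j-\xi-\pi$, which has $\mathrm{Im}w<0$. Then (\ref{period4}) gives $L(w+\pi)=L(w)-2\pi w$, i.e.
$$L(\eta_j-\xi-\pi)=L(\eta_j-\xi)+2\pi(\eta_j-\xi-\pi).$$
Summing, the difference $U_{\boldsymbol\alpha}(\xi+\pi)-U_{\boldsymbol\alpha}(\xi)$ equals
$$2\pi\Big(\sum_i(\xi-\tau_i)+\sum_j(\eta_j-\xi)-4\pi\Big)=2\pi\Big(\sum_j\eta_j-\sum_i\tau_i-4\pi\Big).$$
The $\xi$-dependence cancels. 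Next compute $\sum_i\tau_i=2\sum_k\alpha_k$ and $\sum_j\eta_j=2\sum_k\alpha_k+2\pi$, using $\eta_4=2\pi$. So the bracket is $2\pi-4\pi=-2\pi$, and the difference is $-4\pi^2$, as claimed.

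Case (2) is identical with the signs reversed. For $\mathrm{Im}\xi<0$, use (\ref{period4}) for $L(\xi-\tau_i)$, and (\ref{period3}) applied at $w=\eta_j-\xi-\pi$, which now has $\mathrm{Im}w>0$. This yields the difference $-2\pi(-2\pi)=+4\pi^2$.

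Finally, differentiating both identities in $\xi$ shows $U'_{\boldsymbol\alpha}(\xi+\pi)=U'_{\boldsymbol\alpha}(\xi)$ on each half-plane. This is the stated $\pi$-periodicity of the derivative.

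The only delicate point is that the functional equations hold on the correct branches. The shifted arguments must stay in $H$, where $L$ is the holomorphic extension of \cite[Section 2.2]{LMSWY}, and this is guaranteed by their nonzero imaginary parts. Beyond that the argument is bookkeeping of the linear terms, where a sign error in rewriting $L(\eta_j-\xi-\pi)$ is the main risk.
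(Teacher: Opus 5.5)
Your proposal is correct and follows the paper's proof essentially verbatim. You apply (\ref{period3}) and (\ref{period4}) termwise to the $\xi$-dependent terms of $U_{\boldsymbol\alpha}$, rewrite $L(\eta_j-\xi-\pi)=L(\eta_j-\xi)\pm2\pi(\eta_j-\xi-\pi)$, and use $\sum_j\eta_j=\sum_i\tau_i+2\pi$ to cancel the $\xi$-dependence. Your explicit check that $\sum_i\tau_i=2\sum_k\alpha_k$, and your remark that the shifted arguments stay in $H$ because their imaginary parts are nonzero, are details the paper leaves implicit.
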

\begin{proof} 
For (1), we have $\mathrm{Im}(\xi-\tau_i)>0$
for each $i\in \{1,2,3, 4\}$ 
and
$\mathrm{Im}(\eta_j-\xi)<0.$ Then by (\ref{period3}) and (\ref{period4}), we have 
$$L(\xi+\pi-\tau_i)=L(\xi-\tau_i)+2\pi(\xi-\tau_i)$$
for each $i,$ and 
$$L(\eta_j-\xi-\pi)=L(\eta_j-\xi)+2\pi(\eta_j-\xi-\pi)$$
for each $j.$
As a consequence, we have
\begin{equation*}
\begin{split}
U_{\boldsymbol\alpha}(\xi+\pi)=&-\frac{1}{2}\sum_{i=1}^4\sum_{j=1}^4L(\eta_j-\tau_i)+\sum_{i=1}^4L(\xi+\pi-\tau_i)+\sum_{j=1}^4L(\eta_j-\xi-
\pi)\\
=&-\frac{1}{2}\sum_{i=1}^4\sum_{j=1}^4L(\eta_j-\tau_i)+\sum_{i=1}^4L(\xi-\tau_i)+\sum_{j=1}^4L(\eta_j-\xi)\\
&+2\pi \sum_{i=1}^4 (\xi-\tau_i) + 2\pi \sum_{j=1}^4 (\eta_j-\xi-\pi)\\
=&\,U_{\boldsymbol\alpha}(\xi)-4\pi^2,
\end{split}   
\end{equation*}
where the last equality comes from that $\sum_{j=1}^4\eta_j=\sum_{i=1}^4\tau_i+2\pi.$
\\

For (2), by (\ref{period3}) and (\ref{period4}), we have 
$$L(\xi+\pi-\tau_i)=L(\xi-\tau_i)-2\pi(\xi-\tau_i)$$
for each $i\in\{1,2,3,4\},$
and 
$$L(\eta_j-\xi-\pi)=L(\eta_j-\xi)-2\pi(\eta_j-\xi-\pi)$$
for each $j\in\{1,2,3,4\};$
and the rest of the proof follows verbatim  that of (1).
\end{proof}

For the proof of  Propositions \ref{ri} and \ref{limder1}, we need the following two lemmas from \cite{LMSWY2}.

\begin{lemma}\cite[Lemma 3.13]{LMSWY2}\label{d} 
\begin{enumerate}[(1)] 
\item For $\alpha,\beta\in(0,\pi),$ we have $$ \frac{\partial}{\partial l}\Big|_{l=0}\Big( \mathrm{Im}L(\alpha+\mathbf il)+ \mathrm{Im}L(\beta-\mathbf il)\Big)=0.$$

\item For $\alpha,\beta\in(0,\pi)$ with $\alpha+\beta<\pi,$  we have $$\lim_{l\to +\infty} \frac{\partial}{\partial l}\Big( \mathrm{Im}L(\alpha+\mathbf il) +\mathrm{Im}L(\beta-\mathbf il)\Big)<0,$$
and
  $$ \lim_{l\to -\infty} \frac{\partial }{\partial l}\Big(\mathrm{Im}L(\alpha+\mathbf il)+\mathrm{Im}L(\beta-\mathbf il)\Big)>0.$$
\end{enumerate}
\end{lemma}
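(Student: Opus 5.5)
The plan is to compute the $l$-derivatives directly from the formula (\ref{eq:Lx}), $L(x)=x^2-\pi x+\frac{\pi^2}{6}-\mathrm{Li}_2(e^{2\mathbf ix})$, and to use the principal branch of $\log(1-e^{2\mathbf ix})$. This branch is the one compatible with the holomorphic extension of $L$ to $H$, since the points $\alpha+\mathbf il$ and $\beta-\mathbf il$ with $\alpha,\beta\in(0,\pi)$ stay in the strip $0<\mathrm{Re}x<\pi$. Since $\frac{d}{dz}\mathrm{Li}_2(z)=-\frac{\log(1-z)}{z}$, the chain rule gives
$$L'(x)=2x-\pi+2\mathbf i\log\big(1-e^{2\mathbf ix}\big).$$
By holomorphy,
$$\frac{\partial}{\partial l}\mathrm{Im}L(\alpha+\mathbf il)=\mathrm{Im}\big(\mathbf iL'(\alpha+\mathbf il)\big)=\mathrm{Re}L'(\alpha+\mathbf il)\quad\text{and}\quad\frac{\partial}{\partial l}\mathrm{Im}L(\beta-\mathbf il)=-\mathrm{Re}L'(\beta-\mathbf il).$$
For $x=\gamma+\mathbf is$ with $\gamma\in(0,\pi)$, we have $\mathrm{Re}L'(x)=2\gamma-\pi-2\arg\big(1-e^{-2s}e^{2\mathbf i\gamma}\big)$. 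So everything reduces to tracking the argument of $1-re^{2\mathbf i\gamma}$ for $r\in(0,\infty)$.

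The key geometric observation is that $\mathrm{Im}\big(1-re^{2\mathbf i\gamma}\big)=-r\sin 2\gamma$ has constant sign in $r$. Hence $1-re^{2\mathbf i\gamma}$ never crosses the negative real axis, and its principal argument varies continuously in $r$. This holds also when $\gamma=\frac{\pi}{2}$: the quantity is then $1+r>0$, with argument $0$ throughout. The argument tends to $0$ as $r\to0$. As $r\to\infty$ it tends to $\arg(-e^{2\mathbf i\gamma})=2\gamma-\pi\in(-\pi,\pi)$. At $r=1$ it equals $\gamma-\frac{\pi}{2}$, because $1-e^{2\mathbf i\gamma}=-2\mathbf i\sin\gamma\, e^{\mathbf i\gamma}$ with $\sin\gamma>0$. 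The argument must be justified by this continuity, since a naive formula could pick up a $2\pi$ jump; I expect this to be the main subtle point.

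For (1), at $l=0$ the identity $\arg(1-e^{2\mathbf i\gamma})=\gamma-\frac{\pi}{2}$ gives $\mathrm{Re}L'(\gamma)=2\gamma-\pi-2\gamma+\pi=0$ for $\gamma=\alpha$ and $\gamma=\beta$. Hence the derivative in (1) equals $0-0=0$.

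For (2), take first $l\to+\infty$. For the point $\alpha+\mathbf il$ we have $r=e^{-2l}\to0$, so $\mathrm{Re}L'(\alpha+\mathbf il)\to2\alpha-\pi$. For the point $\beta-\mathbf il$ we have $r=e^{2l}\to\infty$, so $\mathrm{Re}L'(\beta-\mathbf il)\to2\beta-\pi-2(2\beta-\pi)=\pi-2\beta$. The limit of the derivative is therefore $2\alpha-\pi-(\pi-2\beta)=2(\alpha+\beta-\pi)$, which is negative since $\alpha+\beta<\pi$. Symmetrically, as $l\to-\infty$ we get $\mathrm{Re}L'(\alpha+\mathbf il)\to\pi-2\alpha$ and $\mathrm{Re}L'(\beta-\mathbf il)\to2\beta-\pi$. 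The limit of the derivative is then $2\pi-2\alpha-2\beta$, which is positive.
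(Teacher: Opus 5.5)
Your proof is correct. It follows essentially the same route as the paper: although the paper only cites this lemma from \cite{LMSWY2}, the computations it carries out itself in the proofs of Lemma \ref{limder3} and Proposition \ref{limder2} are exactly your derivative formulas \eqref{limp+}, \eqref{limp-} together with the argument limits \eqref{lim}. Your value $\arg(1-e^{2\mathbf i\gamma})=\gamma-\frac{\pi}{2}$ for (1) is right (equivalently, $\mathrm{Re}L\equiv 0$ on $(0,\pi)$ plus Cauchy--Riemann), and so are your limits $2(\alpha+\beta-\pi)<0$ and $2(\pi-\alpha-\beta)>0$ for (2).
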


\begin{lemma}\cite[Lemma 3.14]{LMSWY2}\label{dd} 
Let $\alpha,\beta\in (0,\pi)$ such that $\alpha+\beta<\pi.$ 
\begin{enumerate}[(1)]
\item If $|\alpha-\beta|\leqslant \frac{\pi}{2},$ then
 $$\frac{\partial ^2}{\partial l^2}\Big(\mathrm{Im}L(\alpha+\mathbf il)+\mathrm{Im}L(\beta-\mathbf il)\Big) <0.$$

\item  If $|\alpha-\beta|>\frac{\pi}{2},$ then there is a unique $l_0>0$ such that:
\begin{enumerate}[(i)]
\item For $-l_0<l<l_0,$ 
$$\frac{\partial ^2}{\partial l^2}\Big(\mathrm{Im}L(\alpha+\mathbf il)+\mathrm{Im}L(\beta-\mathbf il)\Big)<0.$$ 

\item For $l<-l_0\ \text{or}\ l>l_0,$ 
$$\frac{\partial ^2}{\partial l^2}\Big(\mathrm{Im}L(\alpha+\mathbf il)+\mathrm{Im}L(\beta-\mathbf il)\Big)>0.$$
\end{enumerate}

\end{enumerate}
\end{lemma}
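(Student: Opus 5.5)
The plan is to compute the second derivative in closed form and reduce the sign question to an elementary inequality in $\cosh(2l)$.

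\textbf{Step 1: the second derivative of $L$.} Differentiating (\ref{eq:Lx}) and using $\frac{d}{dx}\mathrm{Li}_2\big(e^{2\mathbf ix}\big)=-2\mathbf i\log\big(1-e^{2\mathbf ix}\big)$ gives
$$L'(x)=2x-\pi+2\mathbf i\log\big(1-e^{2\mathbf ix}\big).$$
Differentiating once more,
$$L''(x)=2\,\frac{1+e^{2\mathbf ix}}{1-e^{2\mathbf ix}}=2\mathbf i\cot x.$$
This identity holds on the region $H$ by analytic continuation, and the points $\alpha+\mathbf il$ and $\beta-\mathbf il$ lie in $H$.

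\textbf{Step 2: reduction to the real part of $\cot$.} Put $f(l)=\mathrm{Im}L(\alpha+\mathbf il)+\mathrm{Im}L(\beta-\mathbf il)$. Each $l$-derivative brings out a factor $\pm\mathbf i$, so
$$f''(l)=-\mathrm{Im}\Big(L''(\alpha+\mathbf il)+L''(\beta-\mathbf il)\Big)=-2\,\mathrm{Re}\Big(\cot(\alpha+\mathbf il)+\cot(\beta-\mathbf il)\Big).$$
Next use the standard identity
$$\mathrm{Re}\cot(a+\mathbf il)=\frac{\sin 2a}{\cosh 2l-\cos 2a}.$$
Write $c=\cosh 2l\geqslant 1$. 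Since $\alpha,\beta\in(0,\pi)$, both denominators $c-\cos2\alpha$ and $c-\cos 2\beta$ are positive. Clearing them, the sign of $-f''(l)$ is the sign of
$$\sin2\alpha\,(c-\cos2\beta)+\sin2\beta\,(c-\cos2\alpha)=2\sin(\alpha+\beta)\Big(c\cos(\alpha-\beta)-\cos(\alpha+\beta)\Big).$$
Because $0<\alpha+\beta<\pi$, we have $\sin(\alpha+\beta)>0$. Hence
$$f''(l)<0\iff h(c)\doteq c\cos(\alpha-\beta)-\cos(\alpha+\beta)>0,$$
and likewise $f''(l)>0\iff h(c)<0$.

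\textbf{Step 3: case analysis.} First note that $h(1)=\cos(\alpha-\beta)-\cos(\alpha+\beta)=2\sin\alpha\sin\beta>0$.

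For part (1), suppose $|\alpha-\beta|\leqslant\frac{\pi}{2}$, so $\cos(\alpha-\beta)\geqslant0$. Then $h(c)\geqslant h(1)>0$ for every $c\geqslant 1$, so $f''<0$ for all $l$.

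For part (2), suppose $|\alpha-\beta|>\frac{\pi}{2}$, so $\cos(\alpha-\beta)<0$. Then $h$ is strictly decreasing in $c$, positive at $c=1$, and tends to $-\infty$ as $c\to+\infty$. So $h$ has a unique zero $c_0>1$. Set $l_0=\frac12\,\mathrm{arccosh}\,c_0>0$. Since $c=\cosh 2l$ is even in $l$ and increasing in $|l|$:
\begin{enumerate}[(i)]
\item $|l|<l_0$ gives $c<c_0$, so $h(c)>0$ and $f''<0$;
\item $|l|>l_0$ gives $c>c_0$, so $h(c)<0$ and $f''>0$.
\end{enumerate}
Uniqueness of $l_0$ follows from the uniqueness of $c_0$.

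The only real obstacle is getting Step 1 right: the branch of the logarithm and the sign conventions in $L''=2\mathbf i\cot$ must be checked carefully. Everything after that is elementary trigonometry.
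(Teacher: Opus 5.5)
Your proof is correct and complete. The paper gives no proof of this lemma; it imports it from \cite{LMSWY2}. Your argument is therefore a self-contained verification rather than a variant of an in-paper proof, and every step checks.

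\begin{itemize}
\item Your formula $L''(x)=2\mathbf i\cot x$ agrees with the formula $L''(x)=2-\frac{4}{1-e^{-2\mathbf ix}}$ used in the proof of Proposition \ref{bound2}.
\item The branch worry you raise is moot. $L''$ contains no logarithm, and the points $\alpha+\mathbf il$ and $\beta-\mathbf il$ lie in the strip $0<\mathrm{Re}\,x<\pi$, where (\ref{eq:Lx}) applies directly.
\item The single-term value $\frac{\partial^2}{\partial l^2}\mathrm{Im}L(\alpha+\mathbf il)=-\frac{2\sin 2\alpha}{\cosh 2l-\cos 2\alpha}$ is also what you get by differentiating the paper's first-derivative formula $2\alpha-\pi-2\arg\big(1-e^{-2l+2\mathbf i\alpha}\big)$ from the proof of Lemma \ref{limder3}.
\end{itemize}

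The factorization
$$\sin 2\alpha\,(c-\cos 2\beta)+\sin 2\beta\,(c-\cos 2\alpha)=2\sin(\alpha+\beta)\big(c\cos(\alpha-\beta)-\cos(\alpha+\beta)\big)$$
is the nice touch, because it shows exactly where each hypothesis enters:
\begin{itemize}
\item $\alpha,\beta\in(0,\pi)$ gives positive denominators and $h(1)=2\sin\alpha\sin\beta>0$;
\item $\alpha+\beta<\pi$ gives $\sin(\alpha+\beta)>0$;
\item the threshold $|\alpha-\beta|=\frac{\pi}{2}$ is exactly where $\cos(\alpha-\beta)$ changes sign.
\end{itemize}
It also gives the threshold explicitly as $\cosh 2l_0=\frac{\cos(\alpha+\beta)}{\cos(\alpha-\beta)}$.
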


\begin{proposition}\label{ri}
\begin{enumerate}[(1)]
\item For  $\xi\in (\max\{\tau_i\},\min\{\eta_j\}),$ we have 
$$\frac{\partial\mathrm{Re}U_{\boldsymbol\alpha}(\xi)}{\partial\mathrm{Re}\xi}=0.$$
As a consequence, $\mathrm{Re}U_{\boldsymbol\alpha}(\xi)\equiv 0$ on $(\max\{\tau_i\},\min\{\eta_j\}).$

    \item Let $\xi^*\in (\max\{\tau_i\},\min\{\eta_j\})$  be the  critical point of $U_{\boldsymbol\alpha}.$ Then for $\xi\in(\max\{\tau_i\},\xi^*),$ we have 
$$-\frac{\partial\mathrm{Re}U_{\boldsymbol\alpha}(\xi)}{\partial\mathrm{Im}\xi}=\frac{\partial\mathrm{Im}U_{\boldsymbol\alpha}(\xi)}{\partial\mathrm{Re}\xi}<0;$$
and for  $\xi\in(\xi^*,\min\{\eta_j\}),$ we have 
$$-\frac{\partial\mathrm{Re}U_{\boldsymbol\alpha}(\xi)}{\partial\mathrm{Im}\xi}=\frac{\partial\mathrm{Im}U_{\boldsymbol\alpha}(\xi)}{\partial\mathrm{Re}\xi}>0.$$
\end{enumerate}
\end{proposition}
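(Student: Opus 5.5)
The plan is to compute everything directly on the real segment $(\max\{\tau_i\},\min\{\eta_j\})$. There all the arguments $\xi-\tau_i$, $\eta_j-\xi$ and $\eta_j-\tau_i$ lie in $(0,\pi)$. The tool will be the identity $L(x)=-2\mathbf i\Lambda(x)$ for $x\in(0,\pi)$, together with the Cauchy--Riemann equations for the holomorphic function $U_{\boldsymbol\alpha}$ on $D$.

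For (1), take $\xi$ real in the interval. By (\ref{U}), $U_{\boldsymbol\alpha}(\xi)$ is a sum of values of $L$ at real points of $(0,\pi)$, and each such value is purely imaginary because $\Lambda$ is real. Hence $\mathrm{Re}U_{\boldsymbol\alpha}\equiv 0$ on the interval. Differentiating along the real direction gives $\frac{\partial\mathrm{Re}U_{\boldsymbol\alpha}}{\partial\mathrm{Re}\xi}=0$. The stated "consequence" is just this vanishing, so I will present the argument in that order: first the derivative identity, then $\mathrm{Re}U_{\boldsymbol\alpha}\equiv0$, or equivalently both at once.

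For (2), the first equality is the Cauchy--Riemann equation $\partial_y \mathrm{Re}U=-\partial_x\mathrm{Im}U$ for the holomorphic $U_{\boldsymbol\alpha}$. It remains to determine the sign of $\frac{d}{d\xi}\mathrm{Im}U_{\boldsymbol\alpha}(\xi)$ on the real interval. On that interval,
$$\mathrm{Im}U_{\boldsymbol\alpha}(\xi)=-2\Big(\sum_i\Lambda(\xi-\tau_i)+\sum_j\Lambda(\eta_j-\xi)\Big)+\text{const}.$$
Its derivative is
$$2\sum_i\log|2\sin(\xi-\tau_i)|-2\sum_j\log|2\sin(\eta_j-\xi)|,$$
which equals $-\mathbf i\,U'_{\boldsymbol\alpha}(\xi)$ and is real-analytic on the interval. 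Its zeros are exactly the critical points of $U_{\boldsymbol\alpha}$ there, and by Proposition \ref{critical1} the only zero is $\xi^*$. So the derivative has constant sign on $(\max\{\tau_i\},\xi^*)$ and on $(\xi^*,\min\{\eta_j\})$.

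To pin down the signs, look at the endpoints. As $\xi\to\max\{\tau_i\}^+$, one term $\log|2\sin(\xi-\tau_{\max})|\to-\infty$ while the other terms stay bounded, because the remaining $\xi-\tau_i$ and $\eta_j-\xi$ stay in a compact subset of $(0,\pi)$ by $0<\eta_j-\tau_i<\pi$. So the derivative tends to $-\infty$ and is negative on the left subinterval. Symmetrically, as $\xi\to\min\{\eta_j\}^-$, the term $-2\log|2\sin(\eta_{\min}-\xi)|\to+\infty$, so the derivative is positive on the right subinterval.

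The main subtlety is in the endpoint analysis. If several $\tau_i$ coincide with the maximum, or several $\eta_j$ with the minimum, the singular terms still all carry the same sign, so the conclusion is unaffected. I also need to check that the remaining sine factors do not vanish, and this follows from $0<\eta_j-\tau_i<\pi$. Beyond that, the argument only needs the uniqueness of the critical point from Proposition \ref{critical1}.
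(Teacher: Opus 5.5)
Your proof is correct, but it takes a different route from the paper in both parts.

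For (1), the paper applies the Cauchy--Riemann equation together with Lemma~\ref{d}~(1) to get $\partial\mathrm{Re}U_{\boldsymbol\alpha}/\partial\mathrm{Re}\xi=\partial\mathrm{Im}U_{\boldsymbol\alpha}/\partial\mathrm{Im}\xi=0$. It then needs the normalization $\mathrm{Re}U_{\boldsymbol\alpha}(\xi^*)=0$, which comes from the volume formula in Proposition~\ref{critical1}, to conclude $\mathrm{Re}U_{\boldsymbol\alpha}\equiv0$. You observe instead that every argument of $L$ in (\ref{U}) lies in $(0,\pi)$, where $L=-2\mathbf i\Lambda$ is purely imaginary. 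This gives the vanishing directly and reverses the logical order: the derivative identity becomes the consequence.

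For (2), the paper uses Lemma~\ref{dd} to show $\partial^2\mathrm{Im}U_{\boldsymbol\alpha}/\partial\mathrm{Re}\xi^2>0$ on the whole interval, i.e.\ strict convexity of $\mathrm{Im}U_{\boldsymbol\alpha}$ there. Hence $\partial\mathrm{Im}U_{\boldsymbol\alpha}/\partial\mathrm{Re}\xi$ is strictly increasing, and its zero at $\xi^*$ fixes the signs. You instead take the uniqueness of the critical point from Proposition~\ref{critical1} as input. You then read off the signs from the logarithmic blow-ups, $2\log|2\sin(\xi-\tau_{i})|\to-\infty$ at the left endpoint and $-2\log|2\sin(\eta_{j}-\xi)|\to+\infty$ at the right. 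Your check that the other terms stay bounded, using $0<\eta_j-\tau_i<\pi$, is exactly what this step requires.

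The trade-off is this. Your argument is more elementary, needing only $\Lambda'(x)=-\log|2\sin x|$ and the intermediate value theorem, but it depends on the uniqueness assertion imported from \cite{LMSWY2}. The paper's convexity argument uses only the existence of $\xi^*$ and in fact reproves its uniqueness on the interval.
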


\begin{proof} For (1), since $\xi-\tau_i\in(0,\pi)$ and $\eta_j-\xi\in(0,\pi)$ for all  $i,j\in \{1,2,3,4\}$  and for all $\xi\in(\max\{\tau_i\},\min\{\eta_j\}),$ by the Cauchy-Riemann equation and Lemma \ref{d} (1), we have
\begin{equation*}
\begin{split}
    \frac{\partial\mathrm{Re}U_{\boldsymbol\alpha}(\xi)}{\partial\mathrm{Re}\xi} = \frac{\partial\mathrm{Im}U_{\boldsymbol\alpha}(\xi)}{\partial\mathrm{Im}\xi}
    = \sum_{i=1}^4 \frac{\partial}{\partial \mathrm{Im}\xi}\Big|_{\mathrm{Im}\xi=0}\Big(\mathrm{Im}L(\xi-\tau_i)+\mathrm{Im}L(\eta_i-\xi)\Big)=0.
    \end{split}
\end{equation*}
Since $\mathrm{Re}U_{\boldsymbol\alpha}(\xi^*)=0$ by Proposition \ref{critical1}, we have $\mathrm{Re}U_{\boldsymbol\alpha}(\xi)\equiv 0$ on $(\max\{\tau_i\},\min\{\eta_j\}).$
\medskip

For (2), since $\xi-\tau_i\in(0,\pi),$ $\eta_i-\xi\in(0,\pi)$  and $(\xi-\tau_i)+(\eta_j-\xi)=\eta_j-\tau_i<\pi$ for all  $i,j\in \{1,2,3,4\}$  and for all $\xi\in(\max\{\tau_i\},\min\{\eta_j\}),$ by Lemma \ref{dd}, we have 
$$\frac{\partial^2\mathrm{Im}U_{\boldsymbol\alpha}(\xi)}{\partial\mathrm{Re}\xi^2}=-\frac{\partial^2\mathrm{Im}U_{\boldsymbol\alpha}(\xi)}{\partial\mathrm{Im}\xi^2}=-\sum_{i=1}^4\frac{\partial^2}{\partial \mathrm{Im}\xi^2}\Big|_{\mathrm{Im}\xi=0}\Big(\mathrm{Im}L(\xi-\tau_i)+\mathrm{Im}L(\eta_i-\xi)\Big)>0.$$
As $\frac{\partial\mathrm{Im}U_{\boldsymbol\alpha}(\xi)}{\partial\mathrm{Re}\xi}=0$ at $\xi=\xi^*,$ we have 
$$\frac{\partial\mathrm{Im}U_{\boldsymbol\alpha}(\xi)}{\partial\mathrm{Re}\xi}<0$$ for $\xi\in (\max\{\tau_i\},\xi^*),$ and $$\frac{\partial\mathrm{Im}U_{\boldsymbol\alpha}(\xi)}{\partial\mathrm{Re}\xi}>0$$ for $\xi\in (\xi^*,\min\{\eta_j\}).$
Then the result follows from the Cauchy-Riemann equation.
\end{proof}

\begin{proposition} \label{limder1}
\begin{enumerate}[(1)]
\item For $\xi\in D$ with $\mathrm{Im}\xi>0,$ we have
$$\frac{\partial \mathrm{Re}U_{\boldsymbol\alpha}(\xi)}{\partial\mathrm{Re}\xi}=\frac{\partial\mathrm{Im}U_{\boldsymbol\alpha}(\xi)}{\partial\mathrm{Im}\xi}<0.$$
As a consequence,  on the region $\{ \xi \in D\ |\ \mathrm{Im}\xi >0\},$ $\mathrm{Im}U_{\boldsymbol\alpha}(\xi)$ is decreasing in $\mathrm{Im}\xi$ and $\mathrm{Re}U_{\boldsymbol\alpha}(\xi)$ is decreasing in $\mathrm{Re}\xi.$

\item For $\xi\in D$ with $\mathrm{Im}\xi<0,$ we have
$$\frac{\partial \mathrm{Re}U_{\boldsymbol\alpha}(\xi)}{\partial\mathrm{Re}\xi}=\frac{\partial\mathrm{Im}U_{\boldsymbol\alpha}(\xi)}{\partial\mathrm{Im}\xi}>0.$$
As a consequence, on the region $\{ \xi \in D\ |\ \mathrm{Im}\xi <0\}, $ $\mathrm{Im}U_{\boldsymbol\alpha}(\xi)$ is increasing in $\mathrm{Im}\xi$ and $\mathrm{Re}U_{\boldsymbol\alpha}(\xi)$ is increasing in $\mathrm{Re}\xi.$ 
\end{enumerate}
\end{proposition}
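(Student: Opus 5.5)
The plan is to write $\frac{\partial\mathrm{Re}U_{\boldsymbol\alpha}}{\partial\mathrm{Re}\xi}=\mathrm{Re}\,U'_{\boldsymbol\alpha}(\xi)$ in closed form on the upper half plane, as a bounded harmonic function of $z=e^{2\mathbf i\xi}$ in the unit disk. I then apply the maximum principle, using boundary values computed from Proposition \ref{ri}(1) and the interlacing $0<\eta_j-\tau_i<\pi$.

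\emph{Step 1 (formula for $U'_{\boldsymbol\alpha}$).} Differentiating (\ref{eq:Lx}) gives $L'(x)=2x-\pi+2\mathbf i\log\big(1-e^{2\mathbf ix}\big)$ for $0<\mathrm{Re}x<\pi$. When $\mathrm{Im}x>0$ we have $|e^{2\mathbf ix}|<1$, so the principal-branch formula is holomorphic on all of $\{\mathrm{Im}x>0\}$. It is also compatible with (\ref{period3}), since it gives $L'(x+\pi)=L'(x)+2\pi$. Hence it is the analytic continuation of $L'$ there.

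For the terms $L'(\eta_j-\xi)$, where $\mathrm{Im}(\eta_j-\xi)<0$, I will use (\ref{L-}), which gives $L'(x)=L'(\pi-x)$. This rewrites each such term as $L'(\xi+\pi-\eta_j)$, whose argument has positive imaginary part. Then I use $\sum_j\eta_j=\sum_i\tau_i+2\pi$. The linear parts should collapse to a constant, leaving
$$\mathrm{Re}\,U'_{\boldsymbol\alpha}(\xi)=-4\pi-2\sum_{i=1}^4\arg\big(1-ze^{-2\mathbf i\tau_i}\big)+2\sum_{j=1}^4\arg\big(1-ze^{-2\mathbf i\eta_j}\big),\qquad z=e^{2\mathbf i\xi}.$$
The right-hand side is the imaginary part of an analytic function of $z$ on $|z|<1$. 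It is therefore a bounded harmonic function $h(z)$ on the disk, since each $\arg$ lies in $(-\frac{\pi}{2},\frac{\pi}{2})$. As a function of $\xi$ it is $\pi$-periodic, consistent with Proposition \ref{period}.

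\emph{Step 2 (boundary values).} As $\mathrm{Im}\xi\to0^+$ with $\xi$ real, $z\to e^{2\mathbf i\xi}$ on the unit circle. I use $\arg(1-e^{\mathbf i\varphi})=\frac{\varphi-\pi}{2}$ for $\varphi\in(0,2\pi)$. With this, the boundary function is piecewise linear, but the slopes cancel ($4$ terms of each sign), so it is piecewise constant. It jumps by $\pm2\pi$ exactly at $\xi\equiv\tau_i$ and at $\xi\equiv\eta_j$ $(\mathrm{mod}\ \pi)$.

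Because $0<\eta_j-\tau_i<\pi$ for all $i,j$, on the circle $\mathbb R/\pi\mathbb Z$ the four $\tau_i$ lie in one arc and the four $\eta_j$ in the complementary arc. On the gap $(\max\{\tau_i\},\min\{\eta_j\})$ the boundary value is $0$, by Proposition \ref{ri}(1) together with continuity up to the boundary. Moving forward, each $\eta_j$ should lower the value by $2\pi$ and each $\tau_i$ raise it by $2\pi$. So the boundary values take values in $[-8\pi,0]$ and are not identically $0$. The direction of these jumps is the computation I must check carefully; it is the main obstacle, and everything else is bookkeeping. If a sign turns out reversed, the total change around the circle is zero and the value on the gap is pinned at $0$, so the conclusion is unchanged as long as the $\eta$-jumps all have one sign and the $\tau$-jumps the opposite sign.

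\emph{Step 3 (conclusion).} The function $h$ is bounded harmonic on the disk, so it is the Poisson integral of its a.e.\ boundary values. These are $\leqslant0$ and not a.e.\ zero, so $h<0$ throughout the open disk, which is exactly part (1). The monotonicity statements then follow from the Cauchy--Riemann equations. For part (2) I would either repeat the computation for $\mathrm{Im}\xi<0$ with $z=e^{-2\mathbf i\xi}$ and the branch of $L'$ coming from (\ref{period4}), or argue by symmetry. The symmetry argument uses that the $\tau_i,\eta_j$ are real and $L(\overline x)=-\overline{L(x)}$, which follows from $L=-2\mathbf i\Lambda$ on $(0,\pi)$. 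This gives $U_{\boldsymbol\alpha}(\overline\xi)=-\overline{U_{\boldsymbol\alpha}(\xi)}$, so $\mathrm{Re}\,U'_{\boldsymbol\alpha}$ changes sign under conjugation.
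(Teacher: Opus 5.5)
Your proposal is correct, and it takes a genuinely different route from the paper's.

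\textbf{Checking the step you flagged.} The jump directions come out as you predicted. When $x$ crosses a point $\equiv\tau_i \pmod{\pi}$, the angle $\arg\big(1-e^{2\mathbf i(x-\tau_i)}\big)$ drops from $\frac{\pi}{2}$ to $-\frac{\pi}{2}$, so $h$ rises by $2\pi$. When $x$ crosses a point $\equiv\eta_j$, $h$ falls by $2\pi$. Evaluating directly, $h=0$ on $(\max\{\tau_i\},\min\{\eta_j\})$ and $h=-8\pi$ on $(\max\{\eta_j\},\min\{\tau_i\}+\pi)$, which is an arc of positive length.

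Your fallback sentence is wrong as stated: reversed jump signs would give boundary values in $[0,8\pi]$ and hence $h>0$. It is not needed, though. In any case the mean value $h(0)=-4\pi$ would fix the sign.

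\textbf{How the paper argues.} The paper never writes $U'_{\boldsymbol\alpha}$ in closed form. It sorts the $\tau_i$ and $\eta_j$ and pairs $\tau_{i_k}$ with $\eta_{j_k}$. It then shows that each one-variable function
$$\psi_k(\mathrm{Im}\xi)=\mathrm{Im}L(\xi-\tau_{i_k})+\mathrm{Im}L(\eta_{j_k}-\xi)$$
has $\psi_k'<0$ for $\mathrm{Im}\xi>0$ and $\psi_k'>0$ for $\mathrm{Im}\xi<0$. The tools are:
\begin{itemize}
\item the concavity analysis of Lemmas \ref{d} and \ref{dd}, imported from the real-$b$ paper (including the threshold $l_0$ when $|\alpha-\beta|>\frac{\pi}{2}$);
\item the derivative bounds of Lemma \ref{limder3};
\item a case analysis on where $\mathrm{Re}\xi$ sits in one period, after reducing by Proposition \ref{period}.
\end{itemize}
This recycles existing machinery and yields a finer, pairwise monotonicity statement. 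Its ingredients (Lemma \ref{limder3} and (\ref{eq1})--(\ref{eq4})) are reused for Proposition \ref{limder2}.

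\textbf{What your route buys.} Your route is global and self-contained. Once $\mathrm{Re}\,U'_{\boldsymbol\alpha}$ is written as a bounded harmonic function of $z=e^{2\mathbf i\xi}$ with boundary values in $\{0,-2\pi,\dots,-8\pi\}$, positivity of the Poisson kernel does all the work. There is no pairing and no case analysis, and periodicity is built in. It uses only $0<\eta_j-\tau_i<\pi$ and $\sum_j\eta_j=\sum_i\tau_i+2\pi$. It also gives the quantitative bound $-8\pi<\mathrm{Re}\,U'_{\boldsymbol\alpha}<0$ on the upper half plane. As by-products it recovers Proposition \ref{limder2} (from $h(0)=-4\pi$) and Proposition \ref{ri}(1) (from the boundary value $0$ on the gap).

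Your reflection identity $U_{\boldsymbol\alpha}(\overline\xi)=-\overline{U_{\boldsymbol\alpha}(\xi)}$ for part (2) is also valid. It holds because $H$ is connected and invariant under conjugation, and $L$ is purely imaginary on $(0,\pi)$. This replaces the paper's simultaneous treatment of both half planes.
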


For the proof of Proposition \ref{limder1},  we also need the following lemma.

\begin{lemma} \label{limder3}
\begin{enumerate}[(1)]
    \item For $\alpha\in (0,\pi),$ we have
    $$-\pi <\frac{\partial}{\partial l} \mathrm{Im}L(\alpha \pm \mathbf il)< \pi.$$

    \item  For $\alpha\in [-\pi,0]$ and $l>0,$ we have
 $$\frac{\partial}{\partial l} \mathrm{Im}L(\alpha \pm \mathbf il)\leqslant-\pi;$$
and for $\alpha\in [-\pi,0]$ and $l<0,$ we have
 $$\frac{\partial}{\partial l} \mathrm{Im}L(\alpha \pm \mathbf il)\geqslant \pi.$$
\end{enumerate}
\end{lemma}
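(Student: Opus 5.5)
The plan is to compute $\frac{\partial}{\partial l}\mathrm{Im}L(\alpha\pm\mathbf il)$ explicitly from a formula for $L'$. For (2), I would reduce to the strip $0\leqslant\mathrm{Re}x\leqslant\pi$ using the functional equations (\ref{period3}) and (\ref{period4}).

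\emph{Derivative of $L$.} Since $\frac{d}{dz}\mathrm{Li}_2(z)=-\frac{\log(1-z)}{z}$, differentiating (\ref{eq:Lx}) gives
$$L'(x)=2x-\pi+2\mathbf i\log\big(1-e^{2\mathbf ix}\big)$$
for $0<\mathrm{Re}x<\pi$. Here $\log$ is the principal branch. This is legitimate because $1-re^{\mathbf i\varphi}$ with $r>0$ and $\varphi=2\mathrm{Re}x\in(0,2\pi)$ never lies on $(-\infty,0]$. By continuity the formula also holds at $\mathrm{Re}x\in\{0,\pi\}$ with $\mathrm{Im}x\neq0$. By the Cauchy--Riemann equations, for $x=\alpha+\mathbf il$ we have
$$\frac{\partial}{\partial l}\mathrm{Im}L(\alpha+\mathbf il)=\mathrm{Re}L'(x)=2\alpha-\pi-2\arg\big(1-e^{-2l}e^{2\mathbf i\alpha}\big).$$
Similarly, $\frac{\partial}{\partial l}\mathrm{Im}L(\alpha-\mathbf il)=-\mathrm{Re}L'(\alpha-\mathbf il)$.

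\emph{Part (1).} I would use the elementary geometric fact about $1-re^{\mathbf i\varphi}$ with $r>0$:
\begin{itemize}
\item if $\varphi\in(0,\pi]$, then $\arg(1-re^{\mathbf i\varphi})$ lies strictly between $\varphi-\pi$ and $0$ (or equals $0$ when $\varphi=\pi$);
\item if $\varphi\in[\pi,2\pi)$, then it lies between $0$ and $\varphi-\pi$.
\end{itemize}
With $\varphi=2\alpha$, this gives that $2\arg(1-e^{-2l}e^{2\mathbf i\alpha})$ lies in the closed interval between $0$ and $2\alpha-\pi$, and for $\alpha\neq\frac{\pi}{2}$ it avoids the endpoint $0$. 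Hence $\mathrm{Re}L'(x)$ lies in the interval between $2\alpha-\pi$ and $-(2\alpha-\pi)$. So $|\mathrm{Re}L'(x)|\leqslant|2\alpha-\pi|<\pi$ for $\alpha\in(0,\pi)$, which gives both inequalities in (1). At $\alpha\in\{0,\pi\}$ with $l\neq0$, the argument of $1-e^{-2l}$ is $0$ or $\pi$. I will record that $\mathrm{Re}L'(\alpha+\mathbf il)\in[-\pi,\pi]$ on the closed strip $0\leqslant\alpha\leqslant\pi$, $l\neq0$, which is what (2) needs.

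\emph{Part (2).} Take $\alpha\in[-\pi,0]$ and $x=\alpha\pm\mathbf il$, so that $\alpha+\pi\in[0,\pi]$. Differentiating (\ref{period3}) and (\ref{period4}) gives
$$L'(x)=L'(x+\pi)-2\pi \quad\text{if } \mathrm{Im}x>0, \qquad L'(x)=L'(x+\pi)+2\pi \quad\text{if } \mathrm{Im}x<0.$$
The four cases then follow from the closed-strip bound:
\begin{itemize}
\item For $x=\alpha+\mathbf il$ with $l>0$: the derivative is $\mathrm{Re}L'(x+\pi)-2\pi\leqslant\pi-2\pi=-\pi$.
\item For $x=\alpha-\mathbf il$ with $l>0$: the derivative is $-\mathrm{Re}L'(x+\pi)-2\pi\leqslant-\pi$.
\item For $l<0$, the signs of $\mathrm{Im}x$ flip, and the same computation gives $\pm\mathrm{Re}L'(x+\pi)+2\pi\geqslant\pi$.
\end{itemize}

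\emph{Main obstacle.} The delicate point is branch bookkeeping. One must make sure that the formula for $L'$ and the functional equations are applied to the analytic continuation of $L$ on $H$. One must also handle the endpoints $\alpha=0$ and $\alpha=-\pi$, where $x+\pi$ sits on the boundary lines $\mathrm{Re}=\pi$ or $\mathrm{Re}=0$. At those endpoints equality $\mp\pi$ can occur, which is why (2) is stated with non-strict inequalities. I would handle this by the direct evaluation $\arg(1-e^{-2l})\in\{0,\pi\}$ noted above.
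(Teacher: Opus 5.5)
Your proposal is correct and is essentially the paper's proof. For (1) you use the same derivative formula $2\alpha-\pi-2\arg(1-e^{-2l+2\mathbf i\alpha})$ and the same bound $|\frac{\partial}{\partial l}\mathrm{Im}L(\alpha\pm\mathbf il)|\leqslant|2\alpha-\pi|$, and for (2) the same shift by $\pi$ via (\ref{period3}) and (\ref{period4}); the only difference is that you treat $\alpha\in\{-\pi,0\}$ through a closed-strip bound, whereas the paper proves the open case and passes to the endpoints by continuity.

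There are two minor slips. First, it is $\arg(1-e^{-2l}e^{2\mathbf i\alpha})$, not twice it, that lies between $0$ and $2\alpha-\pi$. Second, at $\alpha=0$, $l<0$ the continuous value of this argument is $-\pi$, not $\pi$; taking $\pi$ would give $\mathrm{Re}L'(\mathbf il)=-3\pi$, so the closed-strip bound is cleanest obtained from continuity of $L'$ on $H$.
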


\begin{proof}
For (1) that $\alpha\in(0,\pi),$ by a direct computation, we have
\begin{equation}\label{limp+}
\frac{\partial}{\partial l} \mathrm{Im}L(\alpha+\mathbf il)=2\alpha-\pi-2\arg\Big(1-e^{-2l+2\mathbf i\alpha}\Big),
\end{equation}
and
\begin{equation}\label{limp-}
\frac{\partial}{\partial l} \mathrm{Im}L(\alpha-\mathbf il)=-2\alpha+\pi+2\arg\Big(1-e^{2l+2\mathbf i\alpha}\Big).
\end{equation}
Now, as  $\arg\Big(1-e^{\pm2l+2\mathbf i\alpha}\Big)$ are monotonic in $l,$ and 
\begin{equation}\label{lim}
\lim_{l\to \pm\infty}\arg\Big(1-e^{\mp2l+2\mathbf i\alpha}\Big)=0\quad\text{and}\quad \lim_{l\to \pm\infty}\arg\Big(1-e^{\pm2l+2\mathbf i\alpha}\Big)=2\alpha-\pi
\end{equation}
for $\alpha\in(0,\pi),$ we have
$$ 2\alpha-\pi-2\max\{0,2\alpha-\pi\}\leqslant \frac{\partial}{\partial l} \mathrm{Im}L(\alpha+\mathbf il)\leqslant 2\alpha-\pi-2\min\{0,2\alpha-\pi\},$$
and 
$$ -2\alpha+\pi+2\min\{0,2\alpha-\pi\}\leqslant \frac{\partial}{\partial l} \mathrm{Im}L(\alpha-\mathbf il)\leqslant -2\alpha+\pi+2\max\{0,2\alpha-\pi\}.$$
Finally, since $2\alpha-\pi-2\max\{0,2\alpha-\pi\}=\min\{2\alpha-\pi,\pi-2\alpha\}>-\pi,$ and $2\alpha-\pi-2\min\{0,2\alpha-\pi\}=\max\{2\alpha-\pi,\pi-2\alpha\}<\pi,$ the result follows.
\medskip

For (2), it suffices to consider the case that $\alpha\in (-\pi,0),$ then by continuity, the case that $\alpha\in\{-\pi,0\}$ follows. Let $\beta=\alpha+\pi \in(0,\pi).$ Then by (\ref{period3}) and (\ref{period4}),
\begin{equation}\label{per1}
    \mathrm{Im}L(\alpha\pm \mathbf il)=\mathrm{Im}L(\beta\pm\mathbf il)-2\pi l
    \end{equation}
if $l>0,$ and 
\begin{equation}\label{per2}
\mathrm{Im}L(\alpha\pm \mathbf il)=\mathrm{Im}L(\beta\pm\mathbf il)+2\pi l
\end{equation}
if $l<0.$
Together with (1), we have in the former case that 
$$\frac{\partial}{\partial l} \mathrm{Im}L(\alpha\pm\mathbf il) = \frac{\partial}{\partial l} \mathrm{Im}L(\beta\pm\mathbf il)-2\pi <\pi-2\pi=-\pi;$$
and in the latter case that 
$$\frac{\partial}{\partial l} \mathrm{Im}L(\alpha\pm\mathbf il) = \frac{\partial}{\partial l} \mathrm{Im}L(\beta\pm\mathbf il)+2\pi >-\pi+2\pi=\pi.$$
This completes the proof of (2).
\end{proof}

\begin{proof}[Proof of Proposition \ref{limder1}] 
Let $\{i_1,i_2,i_3,i_4\}$ and $\{j_1,j_2,j_3,j_4\}$ be  permutations of $\{1,2,3,4\}.$ For each $k\in\{1,2,3,4\},$ on each straight line in $D$ consisting of $\xi$  with a  fixed $\mathrm{Re}\xi,$  let 
$$\psi_k(\mathrm{Im}\xi)=\mathrm{Im}L(\xi-\tau_{i_k})+\mathrm{Im}L(\eta_{j_k}-\xi).$$
Then
$$\frac{\partial\mathrm{Im}U_{\boldsymbol \alpha}(\xi)}{\partial \mathrm{Im}\xi}=\sum_{k=1}^4\psi_k'(\mathrm{Im}\xi),$$
and it suffices to prove that for each $k\in\{1,2,3,4\},$ $\psi_k'(\mathrm{Im}\xi)<0$ when $\mathrm{Im}\xi>0,$ and $\psi_k'(\mathrm{Im}\xi)>0$ when $\mathrm{Im}\xi<0.$ 

To this end, for each $k,$ we observe that  $\mathrm{Re}(\xi-\tau_{i_k})+\mathrm{Re}(\eta_{j_k}-\xi)=\eta_{j_k}-\tau_{i_k}<\pi,$ and we consider the following cases:
\begin{enumerate}[(a)]
    \item $\mathrm{Re}(\xi-\tau_{i_k})\in(0,\pi)$ and $\mathrm{Re}(\eta_{j_k}-\xi)\in (0,\pi).$ In this case, by Lemma \ref{d} (1) and Lemma \ref{dd} with $\alpha=\mathrm{Re}(\xi-\tau_{i_k})$ and $\beta=\mathrm{Re}(\eta_{j_k}-\xi),$ we have $\psi_k'(0)=0$ and $\psi_k''(\mathrm{Im}\xi)<0$ for $\mathrm{Im}\xi\in [-l_0,l_0],$ with the understanding that $l_0=+\infty$ if $|\alpha-\beta|\leqslant \frac{\pi}{2}.$ Therefore, $\psi_k'(\mathrm{Im}\xi)>0$ for $\mathrm{Im}\xi\in [-l_0,0)$ and $\psi_k'(\mathrm{Im}\xi)<0$ for $\mathrm{Im}\xi\in (0, l_0].$  
Next, by Lemma \ref{d} (2) and Lemma \ref{dd}, 
$\psi_k'(\mathrm{Im}\xi)>\lim_{\mathrm{Im}\xi\to-\infty}\psi_k'(\mathrm{Im}\xi)>0$ for $\mathrm{Im}\xi<-l_0;$  and 
$\psi_k'(\mathrm{Im}\xi)<\lim_{\mathrm{Im}\xi\to+\infty}\psi_k'(\mathrm{Im}\xi)<0$ for $\mathrm{Im}\xi>l_0.$ 
As a consequence, $\psi_k'(\mathrm{Im}\xi)>0$ when $\mathrm{Im}\xi<0,$ and $\psi_k'(\mathrm{Im}\xi)<0$ when $\mathrm{Im}\xi>0.$ 

 \item $\mathrm{Re}(\xi-\tau_{i_k})\in(0,\pi)$ and $\mathrm{Re}(\eta_{j_k}-\xi)\in (-\pi,0].$ In this case, by Lemma \ref{limder3} (1) with 
$\alpha=\mathrm{Re}(\xi-\tau_{i_k}),$ we have $-\pi<\frac{\partial\mathrm{Im}L(\xi-\tau_{i_k})}{\partial \mathrm{Im}\xi}<\pi;$ and
 by Lemma \ref{limder3} (2) with $\alpha=\mathrm{Re}(\eta_{j_k}-\xi),$ we have 
    $\frac{\partial\mathrm{Im}L(\eta_{j_k}-\xi)}{\partial \mathrm{Im}\xi}\leqslant -\pi$ when
    $\mathrm{Im}\xi>0,$ and  $\frac{\partial\mathrm{Im}L(\eta_{j_k}-\xi)}{\partial \mathrm{Im}\xi}\geqslant \pi$ when
    $\mathrm{Im}\xi<0.$ As a consequence, we have $\psi_k'(\mathrm{Im}\xi)<0$ when $\mathrm{Im}\xi>0,$ and $\psi_k'(\mathrm{Im}\xi)>0$ when $\mathrm{Im}\xi<0.$

    \item   $\mathrm{Re}(\xi-\tau_{i_k})\in(-\pi,0]$ and $\mathrm{Re}(\eta_{j_k}-\xi)\in (0,\pi).$ In this case, by Lemma \ref{limder3} (2) with $\alpha=\mathrm{Re}(\xi-\tau_{i_k}),$ we have 
    $\frac{\partial\mathrm{Im}L(\xi-\tau_{i_k})}{\partial \mathrm{Im}\xi}\leqslant -\pi$ when
    $\mathrm{Im}\xi>0,$ and   $\frac{\partial\mathrm{Im}L(\xi-\tau_{i_k})}{\partial \mathrm{Im}\xi}\geqslant \pi$ when
    $\mathrm{Im}\xi<0;$ and by Lemma \ref{limder3} (1) with 
$\alpha=\mathrm{Re}(\eta_{j_k}-\xi),$ we have $-\pi<\frac{\partial\mathrm{Im}L(\eta_{j_k}-\xi)}{\partial \mathrm{Im}\xi}<\pi.$ As a consequence, we have $\psi_k'(\mathrm{Im}\xi)<0$ when $\mathrm{Im}\xi>0,$ and $\psi_k'(\mathrm{Im}\xi)>0$ when $\mathrm{Im}\xi<0.$ 
\end{enumerate}

Now by Proposition \ref{period}, it suffices to prove the result for $\xi$ in the region $\{ \xi\in D\ |\ c\leqslant \mathrm{Re}\xi\leqslant c+\pi\}$ for any $c\in\mathbb R.$ Let $\{i_1,i_2,i_3,i_4\}=\{j_1,j_2,j_3,j_4\}=\{1,2,3,4\}$ such that 
$\tau_{i_1}\leqslant \tau_{i_2}\leqslant \tau_{i_3}\leqslant \tau_{i_4}\leqslant \eta_{j_1}\leqslant \eta_{j_2}\leqslant \eta_{j_3}\leqslant \eta_{j_4.}$
Since $\eta_{j_k}-\tau_{i_k}<\pi$ for each $k,$ we have $\eta_{j_4}-\pi<\tau_{i_1},$ and we choose  $c\in (\eta_{j_4}-\pi, \tau_{i_1}).$ Let $\tau_{i_0}=c$ and $\eta_{j_5}=c+\pi.$ Then we consider the following cases:

\begin{enumerate}[(1)]
\item  $\mathrm{Re}\xi \in (\tau_{i_4},\eta_{j_1})=(\max\{\tau_i\},\min\{\eta_j\}).$ In this case, $\psi_k$ is in case (a) for each $k\in\{1,2,3,4\},$ and the result follows. 

\item $\mathrm{Re}\xi \in [\eta_{j_s},\eta_{j_{s+1}})$
for some $s\in\{1,2,3,4\}$ or $\mathrm{Re}\xi = \eta_{j_5}.$  In the former case, $\psi_k$ is in case (b) for $k\leqslant s,$ and is in case (a) for $k>s;$ and in the latter case, $\psi_k$ is in case (b) for all $k\in\{1,2,3,4\},$ and the result follows. The result follows in both cases.

\item $\mathrm{Re}\xi = \tau_{i_0}$ or $\mathrm{Re}\xi \in (\tau_{i_{s-1}},\tau_{i_s}]$
for some $s\in\{1,2,3,4\}.$ In the former case, $\psi_k$ is in case (c) for all $k\in\{1,2,3,4\};$ and  the latter case, $\psi_k$ is in case (c) for $k\geqslant s,$ and is in case (a) for $k<s.$ The result follows in both cases.
\end{enumerate}

Finally, by the Cauchy-Riemann equation, this immediately implies the estimate for $\frac{\partial\mathrm{Re}U_{\boldsymbol\alpha}}{\partial \mathrm{Re}\xi}.$
\end{proof}

\begin{proposition} \label{limder2} 
On $\Gamma_c=\big\{\xi\in D\ \big|\ \mathrm{Re}\xi =c  \big\}$ for each $c\in \mathbb R,$ we have 
$$\lim_{\mathrm{Im}\xi\to +\infty}\frac{\partial \mathrm{Im}U_{\boldsymbol \alpha}(\xi)}{\partial \mathrm{Im}\xi} = -4\pi\quad\text{and}\quad \lim_{\mathrm{Im}\xi\to -\infty}\frac{\partial \mathrm{Im}U_{\boldsymbol \alpha}(\xi)}{\partial \mathrm{Im}\xi} = 4\pi.$$
\end{proposition}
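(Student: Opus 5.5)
We compute $\frac{\partial \mathrm{Im}U_{\boldsymbol\alpha}(\xi)}{\partial\mathrm{Im}\xi}$ term by term along $\Gamma_c$. Since the constant term $-\frac12\sum_{i,j}L(\eta_j-\tau_i)$ does not depend on $\xi$, we have
$$\frac{\partial \mathrm{Im}U_{\boldsymbol\alpha}(\xi)}{\partial\mathrm{Im}\xi}=\sum_{i=1}^4\frac{\partial}{\partial l}\mathrm{Im}L(c-\tau_i+\mathbf il)+\sum_{j=1}^4\frac{\partial}{\partial l}\mathrm{Im}L(\eta_j-c-\mathbf il),$$
where $l=\mathrm{Im}\xi$. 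It therefore suffices to show that each of the eight summands tends to $-\frac{\pi}{2}$ as $l\to+\infty$ and to $\frac{\pi}{2}$ as $l\to-\infty$. The sums of these limits are then $\mp4\pi$.

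By Proposition \ref{period}, $U'_{\boldsymbol\alpha}$ is $\pi$-periodic on each half-plane. The quantity $\frac{\partial\mathrm{Im}U_{\boldsymbol\alpha}}{\partial\mathrm{Im}\xi}=\mathrm{Re}\,U'_{\boldsymbol\alpha}$ is therefore also $\pi$-periodic, and we may shift $c$ by multiples of $\pi$. The limit in $l$ is likewise unaffected by such a shift. Alternatively, we can handle each term directly with the shift formulas (\ref{per1}) and (\ref{per2}): replacing $\alpha$ by $\alpha+k\pi$ changes $\frac{\partial}{\partial l}\mathrm{Im}L$ by a constant $\mp2\pi k$ whose sign depends on the sign of $l$. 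This adds $\sum_i(\mp 2\pi k)$ contributions to the $\tau$-terms and $\sum_j(\pm2\pi k)$ contributions to the $\eta$-terms, and these cancel. The cleanest approach is to use the periodicity to reduce to $c\in(\max\{\tau_i\},\min\{\eta_j\})$. Then all arguments $\alpha=c-\tau_i$ and $\beta=\eta_j-c$ lie in $(0,\pi)$, and the point $\xi$ lies in $D$.

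For $\alpha\in(0,\pi)$ we use the explicit formulas (\ref{limp+}) and (\ref{limp-}) together with the limits (\ref{lim}). As $l\to+\infty$, $\frac{\partial}{\partial l}\mathrm{Im}L(\alpha+\mathbf il)=2\alpha-\pi-2\arg(1-e^{-2l+2\mathbf i\alpha})\to 2\alpha-\pi$. Likewise $\frac{\partial}{\partial l}\mathrm{Im}L(\beta-\mathbf il)\to-2\beta+\pi+2(2\beta-\pi)=2\beta-\pi$. As $l\to-\infty$ the limits are $2\alpha-\pi-2(2\alpha-\pi)=\pi-2\alpha$ and $\pi-2\beta$, respectively.

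We now sum over $i$ and $j$. As $l\to+\infty$ the total is
$$\sum_i(2(c-\tau_i)-\pi)+\sum_j(2(\eta_j-c)-\pi)=2\Big(\sum_j\eta_j-\sum_i\tau_i\Big)-8\pi=4\pi-8\pi=-4\pi,$$
using $\sum_j\eta_j=\sum_i\tau_i+2\pi$. The same identity gives $+4\pi$ as $l\to-\infty$. The individual terms are not $\mp\pi/2$ after all; only the sum has the required value, and this is exactly what the identity $\sum\eta_j-\sum\tau_i=2\pi$ provides.

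The main point needing care is the reduction to $c$ in the middle interval, which must be valid for every $c\in\mathbb R$ along the whole line $\Gamma_c$. For $|l|$ large, the line $\Gamma_c$ lies in the upper or lower half-plane within $D$, so Proposition \ref{period} applies to $U'_{\boldsymbol\alpha}$ there. The shift $\xi\mapsto\xi+k\pi$ preserves $\mathrm{Im}\xi$, so the limits for general $c$ coincide with those for a representative $c'\in c+\pi\mathbb Z$. Such a representative can be taken in $(\max\{\tau_i\},\min\{\eta_j\})$ only if that interval contains a point of $c+\pi\mathbb Z$, which does not hold in general.

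To handle this, we rely on the continuity in $c$ of the limiting values. Alternatively, we use the formulas directly for $\alpha\in(-\pi,0]$ via (\ref{per1}) and (\ref{per2}), which show that each summand's limit is an affine function of $\mathrm{Re}$ of its argument. By the $\pi$-shift cancellation above, the total is independent of $c$ and can be computed on the middle interval.
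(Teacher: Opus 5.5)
Your final route is correct and is the paper's proof. You take the asymptotic slopes $\lim_{l\to+\infty}\frac{\partial}{\partial l}\mathrm{Im}L(\alpha\pm\mathbf il)=2\alpha-\pi$ and $\lim_{l\to-\infty}\frac{\partial}{\partial l}\mathrm{Im}L(\alpha\pm\mathbf il)=\pi-2\alpha$ from (\ref{limp+}), (\ref{limp-}) and (\ref{lim}), extend them to nonpositive $\alpha$ via (\ref{per1}) and (\ref{per2}), and sum using $\sum_j\eta_j=\sum_i\tau_i+2\pi$.

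Two things you wrote along the way are wrong and should be removed rather than left as alternatives. The opening reduction, that each of the eight summands tends to $\mp\frac{\pi}{2}$, is false, as you notice yourself. More seriously, the ``continuity in $c$'' fix does not work. Since every $\eta_j-\tau_i<\pi$, the interval $(\max\{\tau_i\},\min\{\eta_j\})$ has length less than $\pi$, so its $\pi$-translates leave gaps in $\mathbb R$. Knowing that the limiting total is $-4\pi$ on those translates says nothing about its value on the gaps, however much continuity you have. In addition, continuity in $c$ of a limit in $l$ would itself need proof.

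What actually closes the argument is that $c$ cancels directly in your sum once the affine formula holds for all the arguments involved. The four $\tau$-terms contribute $+8c$ and the four $\eta$-terms contribute $-8c$. So the computation in your fourth paragraph is valid for every $c$, and no reduction to the middle interval is needed. The $\pi$-shift cancellation you describe only gives $\pi$-periodicity; it is the slope cancellation that gives independence of $c$.

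The paper organizes this by using Proposition \ref{period} to restrict to a strip $c_0\leqslant\mathrm{Re}\xi\leqslant c_0+\pi$ with $c_0\in(\max\{\eta_j\}-\pi,\min\{\tau_i\})$. On that strip every $\mathrm{Re}(\xi-\tau_i)$ and $\mathrm{Re}(\eta_j-\xi)$ lies in $[-\pi,\pi]$, so only the formulas on $[-\pi,0]$ and $[0,\pi]$ are needed.

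You should also treat the values where $\alpha$ is a multiple of $\pi$, which occur when $\mathrm{Re}\xi-\tau_i$ or $\eta_j-\mathrm{Re}\xi$ is such a multiple. There the argument in (\ref{limp-}) is ambiguous (for example, $1-e^{2l}<0$ for $l>0$). It must be taken as the continuous extension from $\alpha\in(0,\pi)$, which is what the paper does before invoking (\ref{lim}).
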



\begin{proof}
We first extend the arguments $\arg\big(1-e^{\mp2l+2\mathbf i\alpha}\big)$ in (\ref{limp+}) and (\ref{limp-}) continuously to $\alpha\in[0,\pi].$ Then by (\ref{lim}), 
we have 
\begin{equation}\label{eq1}
\lim_{l\to +\infty} \frac{\partial}{\partial l} \mathrm{Im}L(\alpha\pm\mathbf il)= 2\alpha-\pi,
\end{equation}
and
\begin{equation}\label{eq2}
\lim_{l\to -\infty} \frac{\partial}{\partial l} \mathrm{Im}L(\alpha\pm\mathbf il)=\pi-2\alpha
\end{equation}
for $\alpha\in [0,\pi].$ Next, for $\alpha\in[-\pi,0],$ let $\beta=\alpha+\pi\in[0,\pi].$ Then by (\ref{per1}) and (\ref{per2}), we have
\begin{equation}\label{eq3}
\lim_{l\to +\infty} \frac{\partial}{\partial l} \mathrm{Im}L(\alpha\pm\mathbf il)= \lim_{l\to +\infty} \frac{\partial}{\partial l} \mathrm{Im}L(\beta\pm\mathbf il)-2\pi= 2\beta-\pi-2\pi=2\alpha-\pi,
\end{equation}
and
\begin{equation}\label{eq4}
\lim_{l\to -\infty} \frac{\partial}{\partial l} \mathrm{Im}L(\alpha\pm\mathbf il)= \lim_{l\to -\infty} \frac{\partial}{\partial l} \mathrm{Im}L(\beta\pm\mathbf il)+2\pi= \pi-2\beta+2\pi=\pi-2\alpha.
\end{equation}
Now by Proposition \ref{period}, it suffices to prove the result for $\xi$ in the region $\{ \xi\in D\ |\ c\leqslant \mathrm{Re}\xi\leqslant c+\pi\}$ for some $c\in\mathbb R.$ We choose $c$ from the interval $(\max\{\eta_j\}-\pi, \min\{\tau_i\}).$ In this case, each $\mathrm{Re}(\xi-\tau_i)$ and $\mathrm{Re}(\eta_j-\xi)$ is either in $[-\pi,0]$ or in $[0,\pi].$ Then by (\ref{eq1}), (\ref{eq2}), (\ref{eq3}), (\ref{eq4}), we have
$$\lim_{\mathrm{Im}\xi\to +\infty}\frac{\partial \mathrm{Im}U_{\boldsymbol \alpha}(\xi)}{\partial \mathrm{Im}\xi} = \sum_{i=1}^4 \Big(2\mathrm{Re}(\xi- \tau_i)-\pi\Big)+\sum_{j=1}^4\Big(2\mathrm{Re}(\eta_j-\xi)-\pi\Big) =-4\pi,$$
and 
$$\lim_{\mathrm{Im}\xi\to -\infty}\frac{\partial \mathrm{Im}U_{\boldsymbol \alpha}(\xi)}{\partial \mathrm{Im}\xi} = \sum_{i=1}^4 \Big(\pi-2\mathrm{Re}(\xi- \tau_i)\Big)+\sum_{j=1}^4\Big(\pi-2\mathrm{Re}(\eta_j-\xi)\Big)  =4\pi,$$
where in each equation, the last equality comes from that $\sum_{j=1}^4\eta_j=\sum_{i=1}^4\tau_i+2\pi.$
\end{proof}
\bigskip


The next two propositions respectively analyze the functions $\kappa_{\boldsymbol\alpha}$ and $\nu_{\boldsymbol\alpha,b}$ on the suitable region, which will be needed in the proof of Theorem \ref{vol}. 

For $\theta \in (0, \frac{\pi}{2})$ and $\delta> 0$ sufficiently small, as depicted in Figure \ref{Dtheta} (b), let 
\begin{equation*}
D_{\theta,\delta}
=\Big\{\xi\in \mathbb C \ \Big|\ \xi-\max\{\tau_i\}\in H_{\theta,\delta}\ \ \text{and}\ \ \min\{\eta_j\}-\xi\in H_{\theta,\delta} \Big\},
\end{equation*}
which is also the region of $\xi\in \mathbb C$ with $\xi-\tau_i\in H_{\theta,\delta}$  for all $i\in\{1,2,3,4\}$ and $\eta_j-\xi\in H_{\theta,\delta}$  for all $j\in\{1,2,3,4\},$ and is also the region of $\xi\in D_\theta$ with $d(\xi,\partial D_\theta)>\delta.$   

\begin{proposition}\label{bound1} 
There exists a constant $K=K_{\delta}>0$ such that 
$$\Bigg|\frac{\partial \mathrm{Im}\kappa_{\boldsymbol\alpha}(\xi)}{\partial\mathrm{Im}\xi}\Bigg|<K$$
for all $\xi\in D_{\theta,\delta}.$

\end{proposition}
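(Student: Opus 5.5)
We differentiate $\kappa_{\boldsymbol\alpha}$ term by term. Since $\kappa_{\boldsymbol\alpha}$ is holomorphic on $D_{\theta,\delta}$, the Cauchy--Riemann equations give $\frac{\partial \mathrm{Im}\kappa_{\boldsymbol\alpha}}{\partial\mathrm{Im}\xi}=\mathrm{Re}\,\kappa'_{\boldsymbol\alpha}(\xi)$. Hence it suffices to bound $|\kappa_{\boldsymbol\alpha}'(\xi)|$ uniformly on $D_{\theta,\delta}$. From (\ref{kappa}),
$$\kappa_{\boldsymbol\alpha}'(\xi)=-28\pi-8\pi\sum_{i=1}^4\frac{e^{2\mathbf i(\xi-\tau_i)}}{1-e^{2\mathbf i(\xi-\tau_i)}}-6\pi\sum_{j=1}^4\frac{e^{2\mathbf i(\eta_j-\xi)}}{1-e^{2\mathbf i(\eta_j-\xi)}}.$$
Writing $\frac{w}{1-w}=-1+\frac{1}{1-w}$, the problem reduces to bounding $|1-e^{2\mathbf i x}|^{-1}$ from above. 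Here $x$ ranges over $\xi-\tau_i$ and $\eta_j-\xi$, both of which lie in $H_{\theta,\delta}$ by the definition of $D_{\theta,\delta}$.

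The key step is a uniform estimate: there is $c=c_\delta>0$ with $|1-e^{2\mathbf i x}|\geqslant c$ for all $x\in H_{\theta,\delta}$ and all small $b$. The zeros of $1-e^{2\mathbf ix}$ are the points $x=n\pi$ with $n\in\mathbb Z$. I would check that every such point lies on $(-\infty,0]\cup[\pi,\infty)$, which is contained in the closure of the complement of $H_\theta$. Note that $H_\theta$ omits the wedges at $-\frac{\pi b^2}{2}$ and at $\pi+\frac{\pi b^2}{2}$, and these wedges contain the real rays $(-\infty,-\mathrm{Re}\frac{\pi b^2}{2}]$ and $[\pi+\mathrm{Re}\frac{\pi b^2}{2},\infty)$, up to an $O(|b|^2)$ shift. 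Therefore, for $|b|$ small, every $n\pi$ lies within distance $O(|b|^2)<\delta/2$ of $\mathbb C\setminus H_\theta$. It follows that each $x\in H_{\theta,\delta}$ satisfies $d(x,\pi\mathbb Z)\geqslant\delta/2$.

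Next, split into two regimes according to $\mathrm{Im}x$.
\begin{itemize}
\item If $|\mathrm{Im}x|\geqslant 1$, then $|e^{2\mathbf ix}|=e^{-2\mathrm{Im}x}$ is either $\leqslant e^{-2}$ or $\geqslant e^{2}$. In both cases $|1-e^{2\mathbf ix}|\geqslant 1-e^{-2}$.
\item If $|\mathrm{Im}x|\leqslant 1$, we may use the $\pi$-periodicity of $e^{2\mathbf ix}$ in $\mathrm{Re}\,x$. The function $|1-e^{2\mathbf ix}|$ is continuous and positive on the compact set $\{|\mathrm{Im}x|\leqslant1,\ 0\leqslant \mathrm{Re}x\leqslant\pi,\ d(x,\pi\mathbb Z)\geqslant\delta/2\}$, so it has a positive minimum there, depending only on $\delta$.
\end{itemize}
This gives the constant $c$.

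Combining the two regimes, $|\kappa'_{\boldsymbol\alpha}(\xi)|\leqslant 28\pi+32\pi(1+c^{-1})+24\pi(1+c^{-1})=:K$ on $D_{\theta,\delta}$, and the proposition follows. The main obstacle is justifying cleanly that the distance from $H_{\theta,\delta}$ to $\pi\mathbb Z$ is bounded below independently of $b$, since $H_\theta$ itself depends on $b$. The point $0$ is the vertex region of the wedge at $-\pi b^2/2$, and the point $\pi$ is near the wedge at $\pi+\pi b^2/2$. I would handle this by noting that $d(n\pi,\mathbb C\setminus H_\theta)\leqslant \frac{\pi}{2}|b|^2$, so taking $|b|^2<\delta/\pi$ suffices.
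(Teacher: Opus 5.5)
Your proposal is correct and takes essentially the same route as the paper: you bound $|\kappa'_{\boldsymbol\alpha}|$ term by term, using that on $D_{\theta,\delta}$ the arguments $\xi-\tau_i$ and $\eta_j-\xi$ stay a fixed distance from $\pi\mathbb Z$, so the denominators $|1-e^{\pm 2\mathbf i x}|$ are bounded below by a constant depending only on $\delta$. Your version is more careful than the paper's one-line claim of distance $\delta$: you account for the $O(|b|^2)$ offset of the wedge vertices at $-\frac{\pi b^2}{2}$ and $\pi+\frac{\pi b^2}{2}$, obtaining distance $\delta/2$ when $|b|^2<\delta/\pi$, and you justify the uniform lower bound by periodicity and compactness.
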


\begin{proof}
By a direct computation, we have
\begin{equation*}
\begin{split}
\big|\kappa_{\boldsymbol\alpha}'(\xi)\big|=&\,\Bigg|-28\pi-4\pi\mathbf i\sum_{i=1}^4\frac{d \log \big(1-e^{2\mathbf i(\xi-\tau_i)}\big)}{d\xi}+3\pi\mathbf i\sum_{j=1}^4\frac{d\log \big(1-e^{2\mathbf i(\eta_j-\xi)}\big)}{d\xi}\Bigg|\\
\leqslant & \,28\pi+4\pi\sum_{i=1}^4\bigg|\frac{2}{1-e^{-2\mathbf i(\xi-\tau_i)}}\bigg|+3\pi\sum_{j=1}^4\bigg|\frac{2}{1-e^{-2\mathbf i(\eta_j-\xi)}}\bigg|.
\end{split}
\end{equation*}
As for $\xi\in D_{\theta,\delta},$ each $\xi-\tau_i$ and $\eta_j-\xi$ is at least $\delta$-away from the set $\{ k\pi\ |\ k\in\mathbb Z\},$ the denominators in the terms above are bounded from below by some $\epsilon>0$ depending on $\delta.$ Letting $K=28\big(\pi+\frac{2}{\epsilon}\big),$ we have 
$$\bigg|\frac{\partial \mathrm{Im}\kappa_{\boldsymbol\alpha}(\xi)}{\partial\mathrm{Im}\xi}\bigg|\leqslant |\kappa_{\boldsymbol\alpha}'(\xi)|<K.$$
\end{proof}

\begin{proposition}\label{bound2} 
For $b\in\mathbb C$ with $|b|$ sufficiently small and $\arg b=\theta,$ there exists a constant $N=N_{\theta,\delta}>0$ independent of $b$ such that 
$$ \big|\nu_{\boldsymbol\alpha,b}(\xi)\big|<N$$
for all $\xi\in D_{\theta,\delta}.$
\end{proposition}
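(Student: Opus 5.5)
The plan is to expand $U_{\boldsymbol\alpha,b}$ term by term, using Proposition \ref{EST} for every double sine factor, and to check that the order-$b^2$ corrections produced by the shifts $b/2$ and $3b/2$, $2b$ in the change of variables (\ref{alpha-a}), (\ref{tau-t}) are exactly $\kappa_{\boldsymbol\alpha}(\xi)b^2$. The remainder is then $O(b^4)$, uniformly on $D_{\theta,\delta}$. Each argument is written in the form $\frac{x}{\pi b}+\frac b2$. From (\ref{alpha-a}) and (\ref{tau-t}) we get $u-t_i=\frac{\xi-\tau_i}{\pi b}-\frac{3b}{2}=\frac{\xi-\tau_i-2\pi b^2}{\pi b}+\frac b2$. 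Similarly $q_j-u=\frac{\eta_j-\xi+2\pi b^2}{\pi b}-\frac{b}{2}\cdot 0+\dots$; precisely, $q_j-u=\frac{\eta_j-\xi+\frac{3}{2}\pi b^2}{\pi b}+\frac b2$. Finally $q_j-t_i=\frac{\eta_j-\tau_i}{\pi b}+\frac b2$, with the exact shifts read off from the definitions. Thus every term is $\log S_b\big(\frac{x+c\pi b^2}{\pi b}+\frac b2\big)$ with $x\in\{\xi-\tau_i,\eta_j-\xi,\eta_j-\tau_i\}$ and $c$ a fixed small integer or half-integer.

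First I check that these shifted points stay in a region where Proposition \ref{EST} applies. For $\xi\in D_{\theta,\delta}$ we have $\xi-\tau_i,\eta_j-\xi\in H_{\theta,\delta}$. The shift by $c\pi b^2$ has size $O(|b|^2)$, and $H_\theta$ itself moves by only $O(|b|^2)$ as $|b|\to0$ with fixed argument. Hence for $|b|$ small the shifted points lie in $H_{\theta,\delta/2}$. The constants $\eta_j-\tau_i\in(0,\pi)$ are real and fixed, so they are also at distance $\ge\delta'$ from $\partial H_\theta$. Proposition \ref{EST} with $\delta/2$ then gives
\[2\pi\mathbf i b^2\log S_b\Big(\tfrac{x+c\pi b^2}{\pi b}+\tfrac b2\Big)=L(x+c\pi b^2)+O(|b|^4)\]
uniformly. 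A subtlety is that $\log S_b$ in $U_{\boldsymbol\alpha,b}$ must be the same branch as in Proposition \ref{EST}, namely the integral definition continued analytically. This holds because both are defined by analytic continuation along $D_\theta$ from the real segment, where they agree.

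Next I Taylor expand $L(x+c\pi b^2)=L(x)+c\pi b^2L'(x)+O(|b|^4\sup|L''|)$. Here $L'(x)=2x-\pi+2\mathbf i\log(1-e^{2\mathbf ix})$, which follows from $\frac{d}{dx}\mathrm{Li}_2(e^{2\mathbf ix})=-2\mathbf i\log(1-e^{2\mathbf ix})$. On $H_{\theta,\delta/2}$, $L''(x)=2+\frac{4e^{2\mathbf ix}}{1-e^{2\mathbf ix}}$ is bounded, because $e^{2\mathbf ix}$ stays away from $1$; this is the same estimate as in Proposition \ref{bound1}. However, $L''$ grows like $O(1+|{\rm stuff}|)$ only through the linear term, so it is uniformly bounded; I should double-check the growth as $\mathrm{Im}x\to\pm\infty$, where $e^{2\mathbf ix}\to0$ or $\infty$ and the fraction tends to $0$ or $-4$, so boundedness holds. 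Collecting the linear terms $\sum c\pi L'(\cdot)$ over the eight $\xi$-dependent terms and the sixteen constant terms (with factor $-\frac12$) should reproduce (\ref{kappa}). Indeed the coefficients $-4\pi\mathbf i$ and $3\pi\mathbf i$ and the linear part $14\pi\sum\alpha_k-28\pi\xi+8\pi^2$ match $2\mathbf i c\pi$ summed with $c=-2$ and $c=\frac32$. The $\eta_j-\tau_i$ terms have $c=0$, up to the convention of (\ref{tau-t}), which I would verify.

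The main obstacle is this bookkeeping identity: the order-$b^2$ terms must match $\kappa_{\boldsymbol\alpha}$ exactly, including the branches of $\log(1-e^{2\mathbf i(\cdot)})$. These branches must agree with those implicit in $L'$ on $H_\theta$, that is, the continuation used to extend $L$ via (\ref{period3}) and (\ref{period4}). If an additive $2\pi\mathbf i k$ discrepancy appeared in some region, it would be a constant times $b^2$, not $O(b^4)$. So I would fix the branches by comparing at a real point of $(\max\tau_i,\min\eta_j)$ and use connectedness of $D_{\theta,\delta}$. Once the identity holds, $N$ is the sum of $24$ times the constant of Proposition \ref{EST}, divided by $2\pi$, plus the Taylor remainders, and $N$ is independent of $b$.
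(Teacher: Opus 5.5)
Your proposal is correct and follows the paper's own proof. Apply Proposition \ref{EST} (with $\delta/2$) to each double sine factor at the arguments $\xi-\tau_i-2\pi b^2$, $\eta_j-\xi+\frac{3\pi b^2}{2}$ and $\eta_j-\tau_i$; then Taylor-expand $L$ to first order, using that $L''$ is bounded on $H_{\theta,\delta/2}$, and the first-order terms are exactly $\kappa_{\boldsymbol\alpha}(\xi)b^2$. Your point that the logarithms in $\kappa_{\boldsymbol\alpha}$ must be the analytic continuation consistent with $L'$ is left implicit in the paper, and the error from Proposition \ref{EST} totals $(\tfrac12\cdot 16+4+4)B|b|^4=16B|b|^4$ with no division by $2\pi$, though this does not affect the existence of $N$.
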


The proof of Proposition \ref{bound2} follows verbatim that of \cite[Proposition 3.11]{LMSWY} with only minor changes. For the readers' convenience, we include a proof here. 

\begin{proof}
    Recall from (\ref{nuU}) that \(\nu_{\boldsymbol\alpha,b}(\xi)=\frac{U_{\boldsymbol \alpha,b}(\xi)-U_{\boldsymbol \alpha}(\xi)-\kappa_{\boldsymbol \alpha}(\xi)b^2}{b^4}.
\)
Since $\eta_j-\tau_i$'s are in $H_{\theta,\delta}$ for all $i,j\in\{1,2,3,4\}$ and $\delta$ sufficiently small,  Proposition \ref{EST} with the choice of $B$ therein tells us
$$\Big| 2\pi \mathbf i b^2 \log S_b(q_j-t_i) - L(\eta_j-\tau_i)\Big|<B|b|^4.$$
 Next, for $\xi\in D_{\theta,\delta},$ all $\xi-\tau_i$'s  and $\eta_j - \xi$'s  are in the region $H_{\theta,\delta}.$ Then all  $\xi-\tau_i-2\pi b^2$'s and  $\eta_j-\xi+\frac{3\pi b^2}{2}$'s are in $H_{\theta,\frac{\delta}{2}}$ when $|b|$ is small enough,  and Proposition \ref{EST} with the choice of $B$ therein gives 
 $$\Big|2\pi \mathbf i b^2  \log  S_b(u-t_i) -L\big(\xi-\tau_i-2\pi b^2\big)\Big|<B|b|^4$$
and
$$\bigg|2\pi \mathbf i b^2  \log S_b(q_j-u) - L\Big(\eta_j-\xi+\frac{3\pi b^2}{2}\Big)\bigg|<B|b|^4.$$
As a consequence,  we have
\begin{equation}\label{1}
\Big| U_{\boldsymbol \alpha,b}(\xi)- V_{\boldsymbol \alpha,b}(\xi)\Big|<16B|b|^4,
\end{equation}
where
\begin{equation*}
V_{\boldsymbol \alpha,b}(\xi)=-\frac{1}{2}\sum_{i=1}^4\sum_{j=1}^4L(\eta_j-\tau_i)+\sum_{i=1}^4L\big(\xi-\tau_i-2\pi b^2\big)+\sum_{j=1}^4L\Big(\eta_j-\xi+\frac{3\pi b^2}{2}\Big).
\end{equation*}

We are now left to  estimate the difference between $V_{\boldsymbol \alpha,b}$ and $U_{\boldsymbol \alpha}+\kappa_{\boldsymbol \alpha}(\xi)b^2.$ On the one hand, we have
\begin{equation}\label{V-U-k}
\begin{split}
V_{\boldsymbol \alpha,b}-U_{\boldsymbol \alpha}-\kappa_{\boldsymbol \alpha}(\xi)b^2=& \sum_{i=1}^4\Big(L\big(\xi-\tau_i-2\pi b^2\big)-L\big(\xi-\tau_i\big)+ 2\pi b^2L'\big(\xi-\tau_i\big)\Big)\\
&+\sum_{j=1}^4\bigg(L\Big(\eta_j-\xi+\frac{3\pi b^2}{2}\Big)-L\big(\eta_j-\xi\big)-\frac{3\pi b^2}{2}L'\big(\eta_j-\xi\big)\bigg).
\end{split}
\end{equation}
On the other hand, we have
 $$L''(x)=2-\frac{4}{1-e^{-2\mathbf ix}},$$
 whose norm is bounded from above on the region $H_{\theta,\delta}.$ As a consequence, 
 $\big|L''\big(\xi-\tau_i\big)\big|$ and $\big|L''\big(\eta_j-\xi\big)\big|$ for all $i,j \in \{1,2,3,4\}$ are bounded from above on the region $D_{\theta,\delta}$ for any sufficiently small $\delta >0.$   For $b\in C$ with $|b|$ sufficiently small so that $\xi-2\pi b^2$ and $\xi+\frac{3\pi b^2}{2}$ are in $D_{\theta,\frac{\delta}{2}},$  we let $C$ be an upper bound of $|L''|$ on $D_{\theta,\frac{\delta}{2}}.$ Then  we have 
 \begin{equation*}
 \begin{split}
 &\Big|L\big(\xi-\tau_i-2\pi b^2\big)-L\big(\xi-\tau_i\big)+ 2\pi b^2L'\big(\xi-\tau_i\big)\Big|\\
 =&\bigg|\int_{\xi-\tau_i}^{\xi-\tau_i-2\pi b^2}L''(t)\big(\xi-\tau_i-2\pi b^2-t\big)dt\bigg|<4\pi^2|b|^4C
 \end{split}
 \end{equation*}
 for each $i\in\{1,2,3,4\},$ and 
  \begin{equation*}
 \begin{split}
&\bigg|L\Big(\eta_j-\xi+\frac{3\pi b^2}{2}\Big)-L\big(\eta_j-\xi\big)-\frac{3\pi b^2}{2} L'\big(\eta_j-\xi\big)\bigg|\\
=&\bigg|\int_{\eta_j-\xi}^{\eta_j-\xi+\frac{3\pi b^2}{2}}L''(t)\bigg(\eta_j-\xi+\frac{3\pi b^2}{2}-t\bigg)dt\bigg|<\frac{9}{4}\pi^2|b|^4C
 \end{split}
 \end{equation*}
 for each $j\in\{1,2,3,4\}.$
 Together with (\ref{V-U-k}), we have
 \begin{equation}\label{3}
 \Big|V_{\boldsymbol\alpha,b}(\xi)-U_{\boldsymbol \alpha}(\xi)-\kappa_{\boldsymbol \alpha}(\xi)b^2\Big|<25\pi^2|b|^4C.
 \end{equation}

Putting (\ref{1}) and (\ref{3}) together and letting $N =16B+25\pi^2C,$ we have
\begin{equation}\label{2}
\Big| U_{\boldsymbol \alpha,b}(\xi)- U_{\boldsymbol \alpha}(\xi)-\kappa_{\boldsymbol \alpha}(\xi)b^2\Big|<N |b|^4
\end{equation}
for all $\xi\in D_{\theta,\delta},$ and the result follows from (\ref{nuU}).
\end{proof}

\subsection{The Geometric Hypothesis}\label{GH}

Let $(\theta_1,\dots,\theta_6)$ be the dihedral angles of a hyperideal hyperbolic tetrahedron $\Delta.$ For each $k\in\{1,\dots,6\},$ let $\alpha_k=\frac{\pi}{2}\pm\frac{\theta_k}{2}$ with the  choice of the sign $+$ or $-$ arbitrarily, and let $\boldsymbol\alpha=(\alpha_1,\dots,\alpha_6).$ Let $U_{\boldsymbol\alpha}(\xi)$ be as defined in (\ref{U}). Then by Proposition \ref{critical1} and Proposition \ref{ri} (2), 
$$\mathrm{Im}U_{\boldsymbol\alpha}(\xi)\geqslant \mathrm{Im}U_{\boldsymbol\alpha}(\xi^*)=-2\mathrm{Vol}(\Delta)$$
for all $\xi$ in the interval  $(\max\{\tau_i\},\min\{\eta_j\}),$
 where $\xi^*$ is the  critical point of $U_{\boldsymbol\alpha}$ in  $(\max\{\tau_i\},\min\{\eta_j\});$ 
 and the equality holds if and only if $\xi=\xi^*.$ The Geometric Hypothesis needed in Theorem \ref{vol} (2) considers the behavior of $\mathrm{Im}U_{\boldsymbol\alpha}(\xi)$ on the whole $\mathbb R.$ To state the hypothesis, recall  that $\mathrm{Im}L(x)=-2\Lambda(x)$ for all $x\in (0,\pi),$ where $\Lambda:\mathbb R\to\mathbb R$ 
is the Lobachevsky function defined by
$$\Lambda(x)=-\int_0^x\log|2\sin t|dt.$$
Then by (\ref{period3}), (\ref{period4}), we can  continuously (but not $C^1$-smoothly) extend $\mathrm{Im}L(x)$ from $H=\mathbb C\setminus \big((-\infty,0)\cup(\pi,\infty)\big)$ to $\mathbb C$ by letting 
$$\mathrm{Im}L(x)=-2\Lambda(x)$$
for each $x\in\mathbb R.$
As a consequence, we can continuously extend $\mathrm{Im}U_{\boldsymbol\alpha}$  from $D=\mathbb C \setminus (-\infty, \max\{\tau_i\})\cup(\min\{\eta_j\},\infty)$ to $\mathbb C$ by letting 
$$\mathrm{Im}U_{\boldsymbol\alpha}(\xi)=\sum_{i=1}^4\sum_{j=1}^4\Lambda(\eta_j-\tau_i)-2\sum_{i=1}^4\Lambda(\xi-\tau_i)-2\sum_{j=1}^4\Lambda(\eta_j-\xi)$$
for each $\xi\in\mathbb R.$

\begin{condition} \label{per-min} Let $(\theta_1,\dots,\theta_6)$ be the dihedral angles of a hyperideal hyperbolic tetrahedron. For each $k\in\{1,\dots,6\},$ there is a choice of the sign $+$ or $-$ in $\alpha_k=\frac{\pi}{2}\pm\frac{\theta_k}{2}$ such that, with this choice of $\boldsymbol\alpha=(\alpha_1,\dots,\alpha_6),$
$$\mathrm{Im}U_{\boldsymbol\alpha}(\xi)\geqslant \mathrm{Im}U_{\boldsymbol\alpha}(\xi^*)$$
for all $\xi\in \mathbb R;$ and the equality holds if and only if $\xi=\xi^*+k\pi$ for some $k\in\mathbb Z.$
\end{condition}

\begin{remark}\label{geohyp1}
Numerical computations show that this Geometric Hypothesis is not always satisfied. For example, for a sufficiently small $\delta>0,$ $(\theta_1,\dots,\theta_6)=\big(\delta,\delta,\delta,\frac{\pi}{2}-\delta,\frac{\pi}{2}-\delta,\frac{\pi}{2}-\delta\big)$ provides a counter example. 
\end{remark}

\begin{remark}\label{geohyp2}
On the other hand, numerical computations suggest that  a very large portion of the dihedral angles $(\theta_1,\dots,\theta_6)$ should satisfy the Geometric Hypothesis. We examined the dihedral angles taking values from the set $\{0.01,0.26, 0.51, 0.76, 1.01, 1.26, 1.51, 1.76, 2.01, 2.26, 2.51, 2.76, 3.01\},$ evenly distributed in the space $[0,\pi]^6.$ There are in total  $104321$ values that form the set of dihedral angles of a hyperideal 
 hyperbolic tetrahedron, and $103213$ of them satisfy the Geometric Hypothesis, with a probability of $\sim98.94 \%.$ The same computations also suggest that as long as none of the $\theta_k$'s is as extremely small as $0.01,$ the Geometric Hypothesis is satisfied. 
\end{remark}




\subsection{Proof of Theorem \ref{vol}}

\begin{proof}[Proof of Theorem \ref{vol}] Let $\theta=\arg b.$ The case that  $\theta=0,$ i.e., $b$ is real, is proved in \cite[Theorem 1.5 and Theorem 3.18]{LMSWY2}. 
By Proposition \ref{conjugate}, it suffices to consider the case that $\theta \in (0, \frac{\pi}{2}),$ and the case that $\theta \in (-\frac{\pi}{2},0)$ follows directly. 

In the rest of the proof, we let 
\begin{equation}\label{W}
    W_{\boldsymbol\alpha}(\xi)\doteq e^{-2\mathbf i\theta}U_{\boldsymbol\alpha}(\xi)
\end{equation}
for each $\xi\in D_\theta.$ Then 
\begin{equation}\label{ImW}
\mathrm{Im}W_{\boldsymbol\alpha}(\xi) = \cos(2\theta)\mathrm{Im}U_{\boldsymbol\alpha}(\xi)-\sin(2\theta)\mathrm{Re}U_{\boldsymbol\alpha}(\xi).
  \end{equation}

\medskip

For (1) that $\theta \in (0,\frac{\pi}{4}],$ we  will consider the two cases  $\theta \in (0,\frac{\pi}{4})$ and  $\theta=\frac{\pi}{4}$ separately. 
\bigskip

For the case that $\theta\in (0,\frac{\pi}{4}),$ we let 
$$\mathbf v= \big(-\sin(2\theta),\mathbf \cos(2\theta)\big)$$
as a unit vector in $\mathbb R^2,$ and  by abuse of notations identify it with the complex number
$$\mathbf v= -\sin(2\theta)+\mathbf i\cos(2\theta).$$
 Then by (\ref{ImW}) and a direct computation,  the directional derivative 
\begin{equation}\label{pImW}
D_{\mathbf v} \mathrm{Im}W_{\boldsymbol\alpha}(\xi)=\frac{\partial \mathrm{Im}U_{\boldsymbol\alpha}(\xi)}{\partial\mathrm{Im}\xi}
\end{equation}
for each $\xi\in D_\theta.$

Let $\delta>0$ and $d>0$ both be sufficiently small so that the region 
 $$B_{d}=\Big\{\xi\in \mathbb C \ \Big|\ |\mathrm{Re}\xi - \xi^*| \leqslant d\tan(2\theta)+\delta\ \ 
 \text{and}\ \  |\mathrm{Im}\xi|\leqslant d \Big\}$$
 lies entirely in $D_{\theta,\delta},$ where $\xi^*$ is the critical point of $U_{\boldsymbol\alpha}$ given by Proposition \ref{critical1}. 
By (\ref{pImW} and Propositions \ref{period} and  \ref{limder2},  there is an $L>d$ such that 
\begin{equation}\label{Dv1}
D_{\mathbf v}\mathrm{Im}W_{\boldsymbol\alpha} (\xi^*+  t\mathbf v)<-2\pi\quad\text{and}\quad D_{\mathbf v}\mathrm{Im}W_{\boldsymbol\alpha} (\xi^* - t\mathbf v)>2\pi
\end{equation}
for $t>\frac{L}{\cos(2\theta)}.$ Then as depicted in Figure \ref{G1}, we consider the contour 
$$\Gamma^*=\Big\{\xi^*+t\mathbf v\ \Big|\ t\in \mathbb R\Big\};$$
and by (\ref{W}) and  (\ref{fk3}), we have
\begin{equation*}
\bigg\{\begin{matrix} a_1 & a_2 & a_3 \\ a_4 & a_5 & a_6 \end{matrix} \bigg\}_b=\frac{1}{\pi b}\int_{\Gamma^*}\exp\bigg(\frac{W_{\boldsymbol \alpha}(\xi)+\kappa_{\boldsymbol\alpha}(\xi)|b|^2+e^{2\mathbf i\theta}\nu_{\boldsymbol\alpha,b}(\xi)|b|^4}{2\pi \mathbf i|b|^2} \bigg)d\xi.
\end{equation*}

\begin{figure}[htbp]
\centering
\includegraphics[scale=0.4]{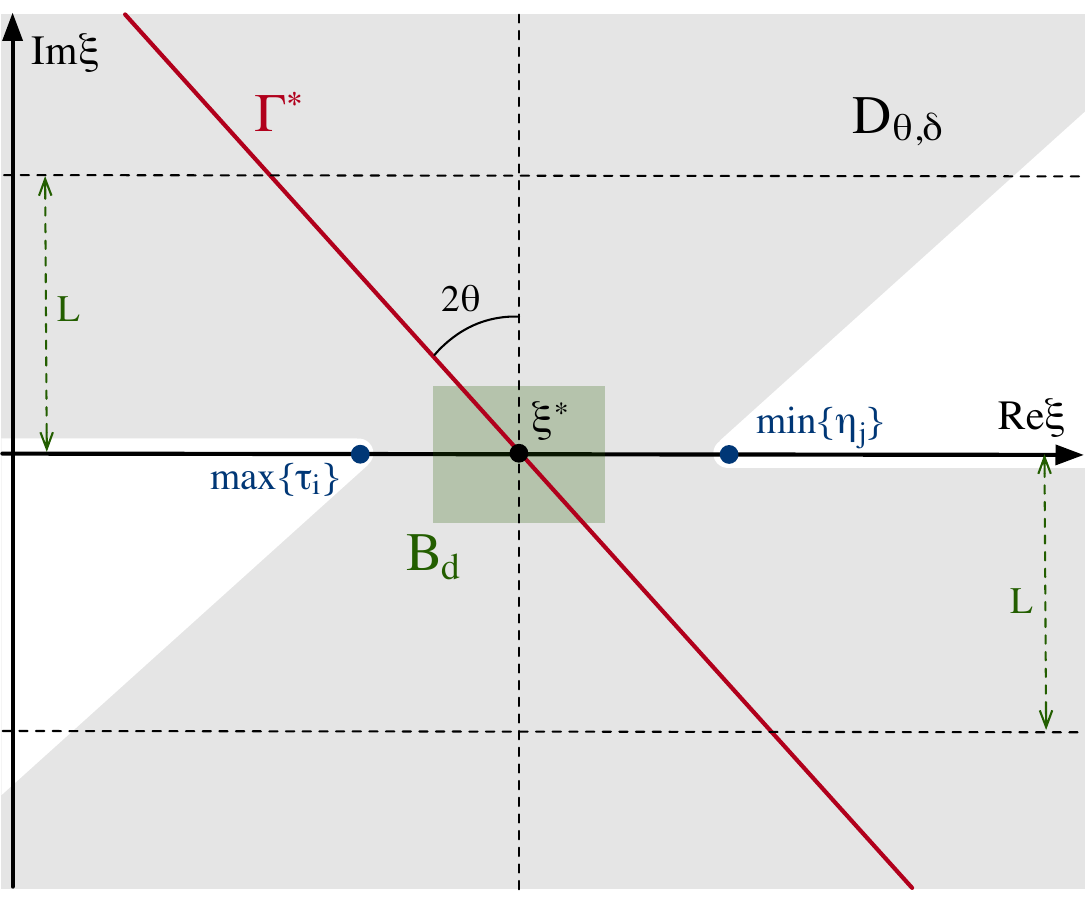}
\caption{The contour $\Gamma^*$ in the case that $\theta\in(0,\frac{\pi}{4}),$ where $\Gamma^*_d$ is the piece of $\Gamma^*$ lying in the light green region $B_d,$ and  $\Gamma^*_L$ is the piece of $\Gamma^*$ lying in between the two horizontal dotted lines.}
\label{G1}
\end{figure}

Let
$$\Gamma^*_d=\Gamma^*\cap B_d=\Big\{ \xi \in \Gamma^*\ \Big|\ |\mathrm{Im}\xi|<d\Big\}$$
and let 
$$\Gamma^*_L=\Big\{ \xi \in \Gamma^*\ \Big|\ |\mathrm{Im}\xi|\leqslant L\Big\}.$$
We will show that, as $b\to 0$ with $\arg b=\theta\in(0,\frac{\pi}{4}),$
\begin{enumerate}[(I)]
\item 
\begin{equation*}
    \begin{split}
      \frac{1}{\pi b}\int_{\Gamma^*_d}\exp\bigg(\frac{W_{\boldsymbol \alpha}(\xi)+\kappa_{\boldsymbol\alpha}(\xi)|b|^2+e^{2\mathbf i\theta}\nu_{\boldsymbol\alpha,b}(\xi)|b|^4}{2\pi \mathbf i|b|^2}  \bigg)d\xi =   \frac{e^{\frac{-\mathrm{Vol}(\Delta)}{\pi b^2}}}{2\sqrt[4]{-\det\mathrm{Gram}(\Delta)}}\Big(1+O\big(|b|^2\big)\Big),  
    \end{split}
\end{equation*}

\item $$\bigg|\frac{1}{\pi b}\int_{\Gamma^*_L\setminus \Gamma^*_d}\exp\bigg(\frac{W_{\boldsymbol \alpha}(\xi)+\kappa_{\boldsymbol\alpha}(\xi)|b|^2+e^{2\mathbf i\theta}\nu_{\boldsymbol\alpha,b}(\xi)|b|^4}{2\pi \mathbf i|b|^2}  \bigg)d\xi\bigg|< O\Big(e^{\frac{-\cos(2\theta)\mathrm{Vol}(\Delta)-\epsilon}{\pi |b|^2}}\Big),$$
and 

\item $$\bigg|\frac{1}{\pi b}\int_{\Gamma^*\setminus \Gamma^*_L}\exp\bigg(\frac{W_{\boldsymbol \alpha}(\xi)+\kappa_{\boldsymbol\alpha}(\xi)|b|^2+e^{2\mathbf i\theta}\nu_{\boldsymbol\alpha,b}(\xi)|b|^4}{2\pi \mathbf i|b|^2}  \bigg)d\xi\bigg|<O\Big(e^{\frac{-\cos(2\theta)\mathrm{Vol}(\Delta)-\epsilon}{\pi |b|^2}}\Big)$$
\end{enumerate}
for some $\epsilon>0,$ from which the result follows.
\medskip

For (I), we claim that  all the conditions of Proposition \ref{saddle} are satisfied by letting $\hbar=|b|^2,$ $D=B_d,$ $f=\frac{W_{\boldsymbol \alpha}}{2\pi \mathbf i},$ $g=\exp\big(\frac{\kappa_{\boldsymbol \alpha}}{2\pi \mathbf i }\big),$ $f_\hbar=\frac{W_{\boldsymbol \alpha}+e^{2\mathbf i\theta}\nu_{\boldsymbol \alpha,b}|b|^4}{2\pi \mathbf i},$ $\upsilon_h=\frac{e^{2\mathbf i\theta}\nu_{\boldsymbol \alpha,b}}{2\pi \mathbf i},$ $S=\Gamma^*_d$ and $c=\xi^*.$ Indeed, by Proposition \ref{critical1}, $\xi^*$ is a critical point of $f=\frac{W_{\boldsymbol \alpha}}{2\pi \mathbf i}=\frac{e^{-2\mathbf i \theta}U_{\boldsymbol \alpha}}{2\pi \mathbf i}$ in $B_{d},$  hence condition (i) is satisfied.
By (\ref{pImW}) and Proposition \ref{limder1}, $\xi^*$ is the unique maximum point of $\mathrm{Re}f=\frac{\mathrm{Im}W_{\boldsymbol\alpha}}{2\pi}$ on $\Gamma^*_d,$ hence condition (ii) is satisfied; and by Proposition \ref{Hess}, $\xi^*$ is a non-degenerate critical point of $f,$ and condition (iii) is satisfied. For condition (iv), since $\xi^*\neq \tau_i$ and $\xi^*\neq\eta_j$ for any $i$ and $j$ in $\{1,2,3,4\},$ $\kappa_{\boldsymbol \alpha}(\xi^*)$ is a finite value. As a consequence, $g(\xi^*)=\exp\big(\frac{\kappa_{\boldsymbol \alpha}(\xi^*)}{2\pi \mathbf i }\big)\neq 0,$ and condition (iv) is satisfied.
For condition (v), by Proposition \ref{bound2},  $|\upsilon_{\hbar}(\xi)|=\big|\frac{e^{2\mathbf i\theta}\nu_{\boldsymbol \alpha,b}(\xi)}{2\pi \mathbf i}\big|<\frac{N}{2\pi}$ on $B_{d}.$ For condition (vi), since $\Gamma^*$ is a straight line, it is smooth near $\xi^*.$  Finally, by Proposition \ref{saddle}, Proposition \ref{critical1}  and Proposition \ref{Hess}, we have as $b\to 0,$
\begin{equation*}
\begin{split}
\frac{1}{\pi b}\int_{\Gamma^*_d}\exp\bigg(\frac{U_{\boldsymbol \alpha,b}(\xi)}{2\pi \mathbf ib^2}\bigg) d\xi=& \frac{(2\pi |b|^2)^\frac{1}{2}}{\pi b} \frac{\exp\big(\frac{\kappa_{\boldsymbol \alpha}(\xi^*)}{2\pi \mathbf i}\big)}{\sqrt{-\frac{e^{-2\mathbf i\theta}U_{\boldsymbol \alpha}''(\xi^*)}{2\pi \mathbf i}}}e^{\frac{e^{-2\mathbf i\theta}U_{\boldsymbol\alpha}(\xi^*)}{2\pi \mathbf i |b|^2}}\Big(1+O\big(|b|^2\big)\Big)\\
=&\frac{1}{2}\frac{e^{\frac{-\mathrm{Vol}(\Delta)}{\pi b^2}}}{\sqrt[4]{-\det\mathrm{Gram}(\Delta)}}\Big(1+O\big(|b|^2\big)\Big).
\end{split}
\end{equation*}
This completes the proof of (I). 
\medskip

For  (II), by (\ref{ImW}), (\ref{pImW}) and Propositions \ref{critical1} and \ref{limder1}, there is an $\epsilon_1>0$ such that  
 \begin{equation}\label{ImU}
\mathrm{Im} W_{\boldsymbol\alpha}(\xi^*\pm t \mathbf v )< -2\cos(2\theta)\mathrm{Vol}(\Delta)-4\epsilon_1
\end{equation}
 for $t>d.$ Then by Proposition \ref{bound2},  there is a  $b_1>0$ such that 
\begin{equation}\label{last2}
\mathrm{Im}\big(e^{2\mathbf i\theta}\nu_{\boldsymbol\alpha,b}(\xi)\big)|b|^4\leqslant |\nu_{\boldsymbol\alpha,b}(\xi)||b|^4<N|b|^4<\epsilon_1 
\end{equation}
for all $|b|<b_1$ and for all $\xi\in\Gamma^*;$  and together with  (\ref{ImU}), we have
\begin{equation}\label{last3}
\mathrm{Im}W_{\boldsymbol \alpha}(\xi)+\mathrm{Im}  \big(e^{2\mathbf i\theta}\nu_{\boldsymbol \alpha,b}(\xi)\big)|b|^4<  -2\cos(2\theta)\mathrm{Vol}(\Delta) - 3\epsilon_1
\end{equation}
for all $|b|<b_1$ and  $\xi\in\Gamma^*\setminus \Gamma^*_d.$ By the compactness of ${\Gamma^*_L}\setminus \Gamma^*_d,$ there exists an $M_1>0$ such that 
\begin{equation}\label{Imk}
\mathrm{Im}\kappa_{\boldsymbol\alpha}(\xi)<M_1
\end{equation}
for all $\xi\in\Gamma^*_L\setminus \Gamma^*_d.$ As a consequence of (\ref{last2}) and (\ref{Imk}), we have
\begin{equation*}
\begin{split}
& \bigg|\frac{1}{\pi b}\int_{\Gamma_L^*\setminus \Gamma^*_d}\exp\bigg(\frac{W_{\boldsymbol \alpha}(\xi)+\kappa_{\boldsymbol\alpha}(\xi)|b|^2+e^{2\mathbf i\theta}\nu_{\boldsymbol\alpha,b}(\xi)|b|^4}{2\pi \mathbf i|b|^2}   \bigg)d\xi\bigg|\\
<&\frac{1}{\pi |b|}\int_{\Gamma^*_L\setminus \Gamma^*_d}\exp\bigg(\frac{\mathrm{Im}W_{\boldsymbol \alpha}(\xi)+\mathrm{Im}\kappa_{\boldsymbol \alpha}(\xi)|b|
^2+\mathrm{Im}\big(e^{2\mathbf i\theta}\nu_{\boldsymbol \alpha,b}(\xi)\big)|b|^4}{2\pi |b|^2} \bigg)|d\xi|\\
<  & \frac{1}{\pi |b|} \frac{2(L-d)}{\cos(2\theta)} e^{\frac{M_1}{2\pi}} \exp\bigg(\frac{-\cos(2\theta)\mathrm{Vol}(\Delta)-\epsilon_1}{\pi |b|^2}   \bigg )\\
< & O\Big(e^{\frac{-\cos(2\theta)\mathrm{Vol}(\Delta)-\epsilon}{\pi |b|^2}}\Big)
\end{split}
\end{equation*}
for any $\epsilon<\epsilon_1.$ This completes the proof of (II). 
\medskip

For (III), let $b_1$ be as in the proof of (II) above and let $\epsilon_1$ be as in (\ref{ImU}). Then there  is a $b_0\in (0, b_1)$ such that for all $|b|<b_0,$
\begin{equation}\label{ImK}
\mathrm{Im}\kappa_{\boldsymbol\alpha}\bigg(\xi^*\pm \frac{L}{\cos(2\theta)} \mathbf v\bigg) |b|^2<\epsilon_1,
\end{equation}
$K|b|^2<\epsilon_1$ and $N|b|^4<\epsilon_1,$ where $K$ and $N$ are respectively the constants in Propositions \ref{bound1} and \ref{bound2}. We claim that, for $\xi\in\Gamma^*\setminus \Gamma^*_L$ and $|b|<b_0,$ 
\begin{equation}\label{cl}
\begin{split}
    \mathrm{Im}W_{\boldsymbol \alpha}(\xi)&+\mathrm{Im}\kappa_{\boldsymbol \alpha}(\xi)|b|^2+\mathrm{Im}\big(e^{2\mathbf i\theta}\nu_{\boldsymbol \alpha,b}(\xi)\big)|b|^4\\<&-2\cos(2\theta)\mathrm{Vol}(\Delta)-2\epsilon_1-(2\pi-\epsilon_1)\bigg(|\xi-\xi^*|-\frac{L}{\cos(2\theta)}\bigg),
\end{split}
\end{equation}
as a consequence of which we have
\begin{equation}\label{CI}
\begin{split}
& \bigg|\frac{1}{\pi b}\int_{\Gamma^*\setminus \Gamma^*_L}\exp\bigg(\frac{W_{\boldsymbol \alpha}(\xi)+\kappa_{\boldsymbol\alpha}(\xi)|b|^2+e^{2\mathbf i\theta}\nu_{\boldsymbol\alpha,b}(\xi)|b|^4}{2\pi \mathbf i|b|^2}   \bigg)d\xi\bigg|\\
<
& \frac{1}{\pi |b|}\int_{\Gamma^*\setminus \Gamma^*_L}\exp\bigg(\frac{\mathrm{Im}W_{\boldsymbol \alpha}(\xi)+\mathrm{Im}\kappa_{\boldsymbol \alpha}(\xi)|b|^2+\mathrm{Im}\big(e^{2\mathbf i\theta}\nu_{\boldsymbol \alpha,b}(\xi)\big)|b|^4}{2\pi |b|^2} \bigg)|d\xi| \\
 < & \frac{1}{\pi |b|} \exp\bigg(\frac{-\cos(2\theta)\mathrm{Vol}(\Delta)-\epsilon_1}{\pi |b|^2}\bigg)\int_{\Gamma^*\setminus \Gamma^*_L}\exp\bigg(\frac{-(2\pi-\epsilon_1)\big(|\xi-\xi^*|-\frac{L}{\cos(2\theta)}\big)}{2\pi}\bigg) |d\xi|\\
< & O\Big(e^{\frac{-\cos(2\theta)\mathrm{Vol}(\Delta)-\epsilon}{\pi |b|^2}}\Big)
\end{split}
\end{equation}
for any $\epsilon<\epsilon_1.$ For the proof of the claim, by (\ref{Dv1}), Proposition \ref{bound1} and the choice of $b_0,$ for $t>\frac{L}{\cos(2\theta)}$ and $|b|<b_0,$ we have 
$$\frac{d}{dt}  \Big(\mathrm{Im}W_{\boldsymbol\alpha}(\xi^*\pm t\mathbf v)+\mathrm{Im}\kappa_{\boldsymbol\alpha}(\xi^*\pm t\mathbf v)|b|^2\Big)<-2\pi+K|b|^2<-2\pi+\epsilon_1.$$
Together with the Mean Value Theorem, (\ref{ImU}) and (\ref{ImK}), we have 
\begin{equation}\label{Bou}
\begin{split}
&\mathrm{Im}W_{\boldsymbol\alpha}(\xi)+ \mathrm{Im}\kappa_{\boldsymbol\alpha}(\xi)|b|^2\\
< & \mathrm{Im}W_{\boldsymbol\alpha}\bigg(\xi^*\pm \frac{L}{\cos(2\theta)}\mathbf v\bigg) + \mathrm{Im}
\kappa_{\boldsymbol\alpha}\bigg(\xi^*\pm \frac{L}{\cos(2\theta)}\mathbf v\bigg)|b|^2 - (2\pi-\epsilon_1) \bigg|  \xi - \bigg(\xi^*\pm \frac{L}{\cos(2\theta)}\mathbf v \bigg) \bigg|\\
< & -2\cos(2\theta)\mathrm{Vol}(\Delta)-3\epsilon_1  -(2\pi-\epsilon_1)\bigg(|\xi-\xi^*|-\frac{L}{\cos(2\theta)} \bigg)
\end{split}
\end{equation}
for all $\xi \in \Gamma^*\setminus \Gamma^*_L.$  Finally, putting (\ref{Bou}) and (\ref{last2}) together, we have  (\ref{cl}) and the second inequality in (\ref{CI}); and since 
$$|\xi-\xi^*|-\frac{L}{\cos(2\theta)}\to+\infty$$
as $\xi \in \Gamma^*\setminus \Gamma^*_L$ approaches $\infty,$ we have the last inequality in (\ref{CI}). This completes the proof of (III).
\smallskip

Putting (I), (II)  and (III) together, and as $\mathrm{Re}\Big(\frac{-\mathrm{Vol}(\Delta)}{\pi b^2}\Big)=\frac{-\cos(2\theta)\mathrm{Vol}(\Delta)}{\pi |b|^2},$ we have as $b\to 0$ with a fixed $\arg b \in (0,\frac{\pi}{4}),$ 
$$\bigg\{\begin{matrix} a_1 & a_2 & a_3 \\ a_4 & a_5 & a_6 \end{matrix} \bigg\}_b=\frac{1}{2}\frac{e^{\frac{-\mathrm{Vol}(\Delta)}{\pi b^2}}}{\sqrt[4]{-\det\mathrm{Gram}(\Delta)}}\Big(1+O\big(b^2\big)\Big). $$ 
\bigskip

For the case that $\theta=\frac{\pi}{4},$ by (\ref{W}) and (\ref{ImW}), we have
\begin{equation}\label{ImW-ReU1}
   W_{\boldsymbol\alpha}(\xi) = - \mathbf iU_{\boldsymbol\alpha}(\xi),
\end{equation}
and 
\begin{equation}\label{ImW-ReU}
    \mathrm{Im}W_{\boldsymbol\alpha}(\xi) = -\mathrm{Re}U_{\boldsymbol\alpha}(\xi).
\end{equation}
Let $d>0$ be  sufficiently small so that the region 
 $$B_{d}=\Big\{\xi\in \mathbb C \ \Big|\ |\mathrm{Re}\xi - \xi^*| \leqslant d\ \ 
 \text{and}\ \ |\mathrm{Im}\xi|\leqslant d \Big\}$$
 lies entirely in $D_{\theta},$ where $\xi^*$ is the critical point of $U_{\boldsymbol\alpha}$ given by Proposition \ref{critical1}.  Let $\alpha>0$ be sufficiently small so that 
 $$\max\{\tau_i\}+\alpha<\xi^*-d\quad\text{and}\quad \xi^*+d< \min\{\eta_j\}-\alpha,$$
 and let 
 $$\Gamma_1=\Big[\max\{\tau_i\}+\alpha, \min\{\eta_j\}-\alpha\Big].$$
 By Proposition \ref{ri} (2), there is a $l>0$ such that for each $t\in[0,l],$ 
 \begin{equation}\label{RI}
     -\frac{\partial \mathrm{Re}U_{\boldsymbol\alpha}}{\partial \mathrm{Im}\xi}\bigg|_{\xi=\max\{\tau_i\}+\alpha+\mathbf i t}<0\quad\text{and}\quad -\frac{\partial \mathrm{Re}U_{\boldsymbol\alpha}}{\partial \mathrm{Im}\xi}\bigg|_{\xi=\min\{\eta_j\}-\alpha-\mathbf i t}>0.
 \end{equation}
Let 
 $$\Gamma_2=\Big\{ \max\{\tau_i\}+\alpha+\mathbf i t\ \Big|\ t\in [0,l]\Big\}\cup \Big\{ \min\{\eta_j\}-\alpha-\mathbf i t\ \Big|\ t\in [0,l]\Big\}$$
 and 
 $$\Gamma_3=\Big\{t+\mathbf il \ \Big|\ t\in \big(-\infty,\max\{\tau_i\}+\alpha \big]\Big\}\cup \Big\{ t-\mathbf il \ \Big|\ t\in\big[\min\{\eta_j\}-\alpha,\infty\big) \Big\}.$$
As depicted in Figure \ref{G2} (a), let $$\Gamma=\Gamma_1\cup \Gamma_2\cup \Gamma_3,$$
and let $\delta>0$ be sufficiently small so that $\Gamma$ lies entirely in the region $D_{\theta,\delta}.$
\begin{figure}[htbp]
\centering
\includegraphics[scale=0.35]{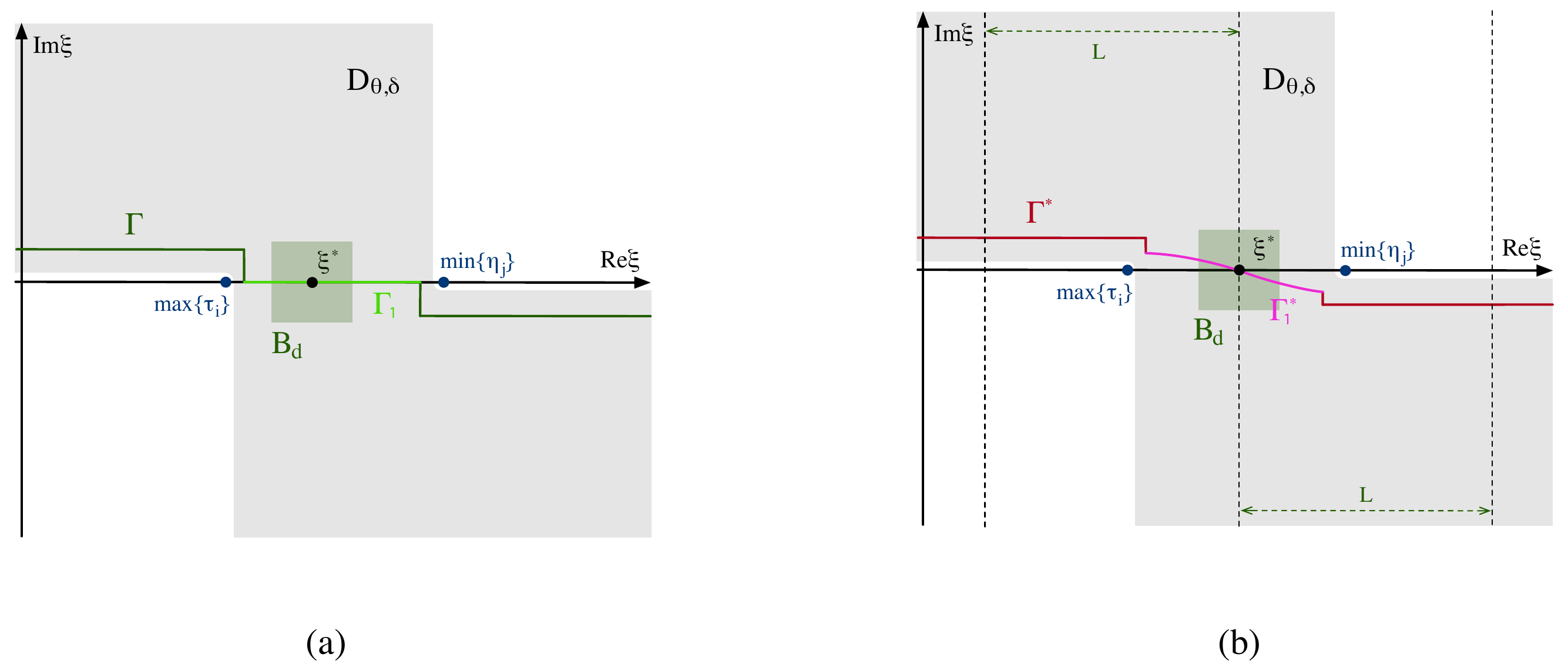}
\caption{The contour $\Gamma$ in (a) and the  contour $\Gamma^*$ in (b) in the case that $\theta=\frac{\pi}{4}.$ In (b), $\Gamma^*_d$ is the piece of $\Gamma^*$ lying in the light green region $B_d,$ and  $\Gamma^*_L$ is the piece of $\Gamma^*$ lying in between the two vertical dotted lines.}
\label{G2}
\end{figure}
Then  by (\ref{ImW-ReU}), (\ref{RI}), Proposition \ref{ri} (1) and Proposition \ref{limder1},  
we have 
$$\mathrm{Im}W_{\boldsymbol\alpha}(\xi)\leqslant \mathrm{Im}W_{\boldsymbol\alpha}(\xi^*)$$
for all $\xi\in \Gamma,$ and the equality holds if and only if $\xi\in \Gamma_1.$

Now, as depicted in Figure \ref{G2} (b), we let 
$\Gamma_1^*$ be the arc obtained from $\Gamma_1$ by following the flow of the vector field $\mathbf u=\big(0,\frac{\partial\mathrm{Re}U_{\boldsymbol\alpha}}{\partial\mathrm{Im}\xi}\big)$ for a sufficiently small time, and let $\Gamma^*$ be the contour obtained from $\Gamma$ by replacing $\Gamma_1$ and a piece of $\Gamma_2$ by $\Gamma_1^*.$ Then by Proposition \ref{ri} (2), we have 
\begin{equation}\label{absmax}
    \mathrm{Im}W_{\boldsymbol\alpha}(\xi)< \mathrm{Im}W_{\boldsymbol\alpha}(\xi^*)
\end{equation}
for all $\xi\in \Gamma^*\setminus\{\xi^*\}.$ We also see that $\Gamma^*$ lies entirely in $D_{\theta,\delta};$ and by (\ref{ImW-ReU1}) and  (\ref{fk3}),
\begin{equation*}
\bigg\{\begin{matrix} a_1 & a_2 & a_3 \\ a_4 & a_5 & a_6 \end{matrix} \bigg\}_b=\frac{1}{\pi b}\int_{\Gamma^*}\exp\bigg(\frac{W_{\boldsymbol \alpha}(\xi)+\kappa_{\boldsymbol\alpha}(\xi)|b|^2+\mathbf i\nu_{\boldsymbol\alpha,b}(\xi)|b|^4}{2\pi \mathbf i|b|^2} \bigg)d\xi.
\end{equation*}

We claim that  there is an $L_1>\xi^*-\max\{\tau_i\}-\alpha$ such that for each $t>L_1,$ 
 \begin{equation}\label{conv+}
     \mathrm{Im}W_{\boldsymbol\alpha}(\xi^*-t+\mathbf i l)\leqslant \mathrm{Im}W_{\boldsymbol\alpha}(\xi^*-L_1+\mathbf i l)-4\pi(t-L_1);
 \end{equation}
 and there is an $L_2>\min\{\eta_j\}-\alpha-\xi^*$ such that for each $t>L_2,$ 
  \begin{equation}\label{conv-}
  \mathrm{Im}W_{\boldsymbol\alpha}(\xi^*+t-\mathbf i l)\leqslant \mathrm{Im}W_{\boldsymbol\alpha}(\xi^*+L_2-\mathbf i l)-4\pi(t-L_2).
  \end{equation}
 Indeed, by (\ref{ImW-ReU}) and Proposition \ref{period}, the function 
 $f(t)\doteq \mathrm{Im}W_{\boldsymbol\alpha}(\xi^*-t+\mathbf il)+4\pi t$ is a $\pi$-periodic continuous function on $\mathbb R,$ and $L_1$ can be chosen as any of the  maximums of $f$ that is greater than $\xi^*-\max\{\tau_i\}-\alpha;$ and similarly, the function $g(t)\doteq \mathrm{Im}W_{\boldsymbol\alpha}(\xi^*+t-\mathbf il)+4\pi t$ is a $\pi$-periodic continuous function on $\mathbb R,$ and $L_2$ can be chosen as any of the  maximums of $g$ that is greater than $\min\{\eta_j\}-\alpha-\xi^*.$
 
Let
$$\Gamma^*_d=\Gamma^*\cap B_d=\Big\{ \xi \in \Gamma^*\ \Big|\ |\mathrm{Re}\xi-\xi^*|<d\ \ \text{and}\ \  |\mathrm{Im}\xi|<d\Big\},$$
and let 
$$\Gamma^*_L=\Big\{ \xi \in \Gamma^*\ \Big|\ \big|\mathrm{Re}\xi-\xi^*\big|\leqslant L\Big\},$$
where $L=\max\{L_1,L_2\}.$
See Figure \ref{G2} (b).
Then we will show that, as $b\to 0$ with $\arg b=\theta=\frac{\pi}{4},$
\begin{enumerate}[(I)]
\item 
\begin{equation*}
    \begin{split}
      \frac{1}{\pi b}\int_{\Gamma^*_d}\exp\bigg(\frac{W_{\boldsymbol \alpha}(\xi)+\kappa_{\boldsymbol\alpha}(\xi)|b|^2+\mathbf i\nu_{\boldsymbol\alpha,b}(\xi)|b|^4}{2\pi \mathbf i|b|^2}  \bigg)d\xi =  \frac{1}{2} \frac{e^{\frac{-\mathrm{Vol}(\Delta)}{\pi b^2}}}{\sqrt[4]{-\det\mathrm{Gram}(\Delta)}}\Big(1+O\big(|b|^2\big)\Big),  
    \end{split}
\end{equation*}

\item $$\bigg|\frac{1}{\pi b}\int_{\Gamma^*_L\setminus \Gamma^*_d}\exp\bigg(\frac{W_{\boldsymbol \alpha}(\xi)+\kappa_{\boldsymbol\alpha}(\xi)|b|^2+\mathbf i\nu_{\boldsymbol\alpha,b}(\xi)|b|^4}{2\pi \mathbf i|b|^2}  \bigg)d\xi\bigg|< O\Big(e^{\frac{-\epsilon}{\pi |b|^2}}\Big),$$
and 

\item $$\bigg|\frac{1}{\pi b}\int_{\Gamma^*\setminus \Gamma^*_L}\exp\bigg(\frac{W_{\boldsymbol \alpha}(\xi)+\kappa_{\boldsymbol\alpha}(\xi)|b|^2+\mathbf i\nu_{\boldsymbol\alpha,b}(\xi)|b|^4}{2\pi \mathbf i|b|^2}  \bigg)d\xi\bigg|<O\Big(e^{\frac{-\epsilon}{\pi |b|^2}}\Big)$$
\end{enumerate}
for some $\epsilon>0,$ from which the result follows.

The proof of (I), (II) and (III) follows very similarly to that of the previous case, with only minor changes.  More precisely:
\medskip

For (I), all the conditions of Proposition \ref{saddle} are satisfied by letting $\hbar=|b|^2,$ $D=B_d,$ $f=\frac{W_{\boldsymbol \alpha}}{2\pi \mathbf i},$ $g=\exp\big(\frac{\kappa_{\boldsymbol \alpha}}{2\pi \mathbf i }\big),$ $f_\hbar=\frac{W_{\boldsymbol \alpha}+\mathbf i\nu_{\boldsymbol \alpha,b}|b|^4}{2\pi \mathbf i},$ $\upsilon_h=\frac{\mathbf i\nu_{\boldsymbol \alpha,b}}{2\pi \mathbf i},$ $S=\Gamma^*_d$ and $c=\xi^*.$ By Proposition \ref{critical1} and (\ref{ImW-ReU1}), $\xi^*$ is a critical point of $f=\frac{W_{\boldsymbol \alpha}}{2\pi \mathbf i}=\frac{-\mathbf iU_{\boldsymbol \alpha}}{2\pi \mathbf i}$ in $B_{d},$  hence condition (i) is satisfied. By (\ref{absmax}), $\xi^*$ is the unique maximum point of $\mathrm{Re}f=\frac{\mathrm{Im}W_{\boldsymbol\alpha}}{2\pi}$ on $\Gamma^*_d,$ hence condition (ii) is satisfied; and by Proposition \ref{Hess}, $\xi^*$ is a non-degenerate critical point of $f,$ and condition (iii) is satisfied. For condition (iv), since $\xi^*\neq \tau_i$ and $\xi^*\neq\eta_j$ for any $i$ and $j$ in $\{1,2,3,4\},$ $\kappa_{\boldsymbol \alpha}(\xi^*)$ is a finite value. As a consequence, $g(\xi^*)=\exp\big(\frac{\kappa_{\boldsymbol \alpha}(\xi^*)}{2\pi \mathbf i }\big)\neq 0,$ and condition (iv) is satisfied. For condition (v), by Proposition \ref{bound2},  $|\upsilon_{\hbar}(\xi)|=\big|\frac{\mathbf i\nu_{\boldsymbol \alpha,b}(\xi)}{2\pi \mathbf i}\big|<\frac{N}{2\pi}$ on $B_{d}.$ 
For condition (vi), since $\Gamma_1^*$ is obtained from the straight line segment $\Gamma_1$ by following the flow of a smooth vector field, it is smooth near $\xi^*.$  Finally, by Proposition \ref{saddle}, Proposition \ref{critical1}  and Proposition \ref{Hess}, we have as $b\to 0,$
\begin{equation*}
\begin{split}
\frac{1}{\pi b}\int_{\Gamma^*_d}\exp\bigg(\frac{U_{\boldsymbol \alpha,b}(\xi)}{2\pi \mathbf ib^2}\bigg) d\xi=& \frac{(2\pi |b|^2)^\frac{1}{2}}{\pi b} \frac{\exp\big(\frac{\kappa_{\boldsymbol \alpha}(\xi^*)}{2\pi \mathbf i}\big)}{\sqrt{-\frac{\mathbf iU_{\boldsymbol \alpha}''(\xi^*)}{2\pi \mathbf i}}}e^{\frac{-\mathbf iU_{\boldsymbol\alpha}(\xi^*)}{2\pi \mathbf i |b|^2}}\Big(1+O\big(|b|^2\big)\Big)\\
=&\frac{1}{2}\frac{e^{\frac{-\mathrm{Vol}(\Delta)}{\pi b^2}}}{\sqrt[4]{-\det\mathrm{Gram}(\Delta)}}\Big(1+O\big(|b|^2\big)\Big).
\end{split}
\end{equation*}
This completes the proof of (I). 
\medskip

For (II), by (\ref{absmax}), (\ref{conv+}) and (\ref{conv-}), there is a $\epsilon_2>0$ such that  
 \begin{equation}\label{ImU2}
\mathrm{Im} W_{\boldsymbol\alpha}(\xi)<-4\epsilon_2
\end{equation}
 for all $\xi\in \Gamma^*\setminus \Gamma^*_d.$  Then by Proposition \ref{bound2},  there is a  $b_2>0$ such that 
\begin{equation}\label{last3}
\mathrm{Im}\big(\mathbf i\nu_{\boldsymbol\alpha,b}(\xi)\big)|b|^4\leqslant |\nu_{\boldsymbol\alpha,b}(\xi)||b|^4<N|b|^4<\epsilon_2 
\end{equation}
for all $|b|<b_2$ and for all $\xi\in\Gamma^*;$  and together with  (\ref{ImU2}), we have
\begin{equation}\label{last4}
\mathrm{Im}W_{\boldsymbol \alpha}(\xi)+\mathrm{Im}  \big(\mathbf i\nu_{\boldsymbol \alpha,b}(\xi)\big)|b|^4<- 3\epsilon_2
\end{equation}
for all $|b|<b_2$ and  $\xi\in\Gamma^*\setminus \Gamma^*_d.$ By the compactness of ${\Gamma^*_L}\setminus \Gamma^*_d,$ there exists an $M_2>0$ such that 
\begin{equation}\label{Imk2}
\mathrm{Im}\kappa_{\boldsymbol\alpha}(\xi)<M_2
\end{equation}
for all $\xi\in\Gamma^*_L\setminus \Gamma^*_d.$ As a consequence of (\ref{last4}) and (\ref{Imk2}), we have
\begin{equation*}
\begin{split}
& \bigg|\frac{1}{\pi b}\int_{\Gamma_L^*\setminus \Gamma^*_d}\exp\bigg(\frac{W_{\boldsymbol \alpha}(\xi)+\kappa_{\boldsymbol\alpha}(\xi)|b|^2+\mathbf i\nu_{\boldsymbol\alpha,b}(\xi)|b|^4}{2\pi \mathbf i|b|^2}   \bigg)d\xi\bigg|\\
<&\frac{1}{\pi |b|}\int_{\Gamma^*_L\setminus \Gamma^*_d}\exp\bigg(\frac{\mathrm{Im}W_{\boldsymbol \alpha}(\xi)+\mathrm{Im}\kappa_{\boldsymbol \alpha}(\xi)|b|
^2+\mathrm{Im}\big(\mathbf i\nu_{\boldsymbol \alpha,b}(\xi)\big)|b|^4}{2\pi |b|^2} \bigg)|d\xi|\\
<  & \frac{|\Gamma^*_L\setminus\Gamma^*_d| e^{\frac{M_2}{2\pi}} }{\pi |b|} \exp\bigg(\frac{-\epsilon_2}{\pi |b|^2}   \bigg )\\
< & O\Big(e^{\frac{-\epsilon}{\pi |b|^2}}\Big)
\end{split}
\end{equation*}
for any $\epsilon<\epsilon_2,$ where $|\Gamma^*_L\setminus\Gamma^*_d|$ is the length of $\Gamma^*_L\setminus\Gamma^*_d.$ This completes the proof of (II). 
\medskip

For (III), let $\epsilon_2$ and  $b_2$ be as in the proof of (II) above. Then there  is a $b_0\in (0, b_2)$ such that for all $|b|<b_0,$
\begin{equation}\label{ImK2}
\mathrm{Im}\kappa_{\boldsymbol\alpha}(\xi^*- L + \mathbf il) |b|^2<\epsilon_2\quad\text{and}\quad \mathrm{Im}\kappa_{\boldsymbol\alpha}(\xi^* + L -\mathbf il ) |b|^2<\epsilon_2,
\end{equation}
$K|b|^2<\epsilon_2$ and $N|b|^4<\epsilon_2,$ where $K$ and $N$ are respectively the constants in Propositions \ref{bound1} and \ref{bound2}. We claim that, for $\xi\in\Gamma^*\setminus \Gamma^*_L$ and $|b|<b_0,$ 
\begin{equation}\label{cl2}
\mathrm{Im}W_{\boldsymbol \alpha}(\xi)+\mathrm{Im}\kappa_{\boldsymbol \alpha}(\xi)|b|^2+\mathrm{Im}\big(\mathbf i\nu_{\boldsymbol \alpha,b}(\xi)\big)|b|^4<-2\epsilon_2-(4\pi-\epsilon_2)\big(|\mathrm{Re}\xi-\xi^*|-L\big),
\end{equation}
as a consequence of which we have
\begin{equation}\label{CI2}
\begin{split}
& \bigg|\frac{1}{\pi b}\int_{\Gamma^*\setminus \Gamma^*_L}\exp\bigg(\frac{W_{\boldsymbol \alpha}(\xi)+\kappa_{\boldsymbol\alpha}(\xi)|b|^2+\mathbf i\nu_{\boldsymbol\alpha,b}(\xi)|b|^4}{2\pi \mathbf i|b|^2}   \bigg)d\xi\bigg|\\
<& \frac{1}{\pi |b|}\int_{\Gamma^*\setminus \Gamma^*_L}\exp\bigg(\frac{\mathrm{Im}W_{\boldsymbol \alpha}(\xi)+\mathrm{Im}\kappa_{\boldsymbol \alpha}(\xi)|b|^2+\mathrm{Im}\big(\mathbf i\nu_{\boldsymbol \alpha,b}(\xi)\big)|b|^4}{2\pi |b|^2} \bigg)|d\xi| \\
 < & \frac{1}{\pi |b|} \exp\bigg(\frac{-\epsilon_2}{\pi |b|^2}\bigg)\int_{\Gamma^*\setminus \Gamma^*_L}\exp\bigg(\frac{-(4\pi-\epsilon_2)\big(|\mathrm{Re}\xi-\xi^*|-L\big)}{2\pi}\bigg) |d\xi|\\
< & O\Big(e^{\frac{-\epsilon}{\pi |b|^2}}\Big)
\end{split}
\end{equation}
for any $\epsilon<\epsilon_2.$ For the proof of the claim, by (\ref{conv+}) and (\ref{conv-}), we have for each $\xi\in\Gamma^*\setminus \Gamma^*_L,$ either 
$$\mathrm{Im}W_{\boldsymbol\alpha}(\xi)\leqslant \mathrm{Im}W_{\boldsymbol\alpha}(\xi^*-L_1+\mathbf i l)-4\pi\big(|\mathrm{Re}\xi-\xi^*|-L_1 \big)$$
or 
$$\mathrm{Im}W_{\boldsymbol\alpha}(\xi)\leqslant \mathrm{Im}W_{\boldsymbol\alpha}(\xi^*+L_2-\mathbf i l)-4\pi\big(|\mathrm{Re}\xi-\xi^*|-L_2 \big).$$
In either case, by  (\ref{ImU2}) and that $L=\max\{L_1,L_2\},$
we have 
\begin{equation}\label{446}
    \mathrm{Im}W_{\boldsymbol\alpha}(\xi)< -4\epsilon_2-4\pi\big(|\mathrm{Re}\xi-\xi^*|-L \big).
\end{equation}
By Proposition \ref{bound1} and the choice of $b_0,$ for $t>L$ and $|b|<b_0,$ we have 
$$\frac{d}{dt} \Big( \mathrm{Im}\kappa_{\boldsymbol\alpha}\big(\xi^*\pm ( t - \mathbf il)\big)|b|^2\Big)<K|b|^2<\epsilon_2.$$
Together with the Mean Value Theorem and (\ref{ImK2}), we have 
\begin{equation}\label{Bou2}
\begin{split}
\mathrm{Im}\kappa_{\boldsymbol\alpha}(\xi)|b|^2 < &  \mathrm{Im}
\kappa_{\boldsymbol\alpha}\big(\xi^*\pm (L - \mathbf il)\big)|b|^2 +\epsilon_2 \big |  \mathrm{Re}\xi - (\xi^*\pm L ) \big|\\
<& \epsilon_2 +\epsilon_2\big(|\mathrm{Re}\xi-\xi^*|-L \big)
\end{split}
\end{equation}
for all $\xi \in \Gamma^*\setminus \Gamma^*_L.$  Finally, putting (\ref{446}), (\ref{Bou2}) and (\ref{last3}) together, we have  (\ref{cl2}) and the second  inequality in (\ref{CI2}); and since 
$$|\mathrm{Re}\xi-\xi^*|-L\to+\infty$$
as $\xi \in \Gamma^*\setminus \Gamma^*_L$ approaches $+\infty,$ we have the last inequality in (\ref{CI2}). This completes the proof of (III).
\smallskip

Putting (I), (II)  and (III) together, and as $\mathrm{Re}\Big(\frac{-\mathrm{Vol}(\Delta)}{\pi b^2}\Big)=0,$ we have as $b\to 0$ with $\arg b =\frac{\pi}{4},$ 
$$\bigg\{\begin{matrix} a_1 & a_2 & a_3 \\ a_4 & a_5 & a_6 \end{matrix} \bigg\}_b=\frac{1}{2}\frac{e^{\frac{-\mathrm{Vol}(\Delta)}{\pi b^2}}}{\sqrt[4]{-\det\mathrm{Gram}(\Delta)}}\Big(1+O\big(b^2\big)\Big). $$ 
\bigskip

For (2) that $\theta\in (\frac{\pi}{4},\frac{\pi}{2})$ and $(\theta_1,\dots,\theta_6)$ satisfies the Geometric Hypothesis \ref{per-min}, we let $d>0$ be  sufficiently small so that the region 
 $$B_{d}=\Big\{\xi\in \mathbb C \ \Big|\ |\mathrm{Re}\xi -\xi^*| \leqslant d\ \ 
 \text{and}\ \  |\mathrm{Im}\xi|\leqslant d \Big\}$$
 lies entirely in $D_{\theta},$ where $\xi^*$ is the critical point of $U_{\boldsymbol\alpha}$ given by Proposition \ref{critical1}.  Let $\alpha>0$ be sufficiently small so that 
 $$\max\{\tau_i\}+\alpha<\xi^*-d\quad\text{and}\quad \xi^*+d< \min\{\eta_j\}-\alpha,$$
 and let 
 $$\Gamma_1=\Big[\max\{\tau_i\}+\alpha, \min\{\eta_j\}-\alpha\Big].$$
By Proposition \ref{ri} (2), $\mathrm{Im}U_{\boldsymbol\alpha}(\xi^*)<\mathrm{Im}U_{\boldsymbol\alpha}(\max\{\tau_i\}+\alpha)$ and $\mathrm{Im}U_{\boldsymbol\alpha}(\xi^*)<\mathrm{Im}U_{\boldsymbol\alpha}(\min\{\eta_j\}-\alpha).$  By Proposition \ref{period} and the Geometric Hypothesis \ref{per-min}, for $c>0$ sufficiently small 
 there is an $\beta>0$  such that for each $\xi\in \mathbb R\setminus \cup_{k\in \mathbb Z} [\xi^*-\beta+k\pi,\xi^*+\beta+k\pi],$ 
 \begin{equation}\label{2c}
 \mathrm{Im}U_{\boldsymbol\alpha}(\xi) > \mathrm{Im}U_{\boldsymbol\alpha}(\xi^*) + c.
 \end{equation}
We can choose $c$ to be so small that 
 \begin{equation}\label{cc}
     c < \min\bigg\{\frac{4\pi^2\sin(2\theta)}{
     \sin(2\theta)-\cos(2\theta)}
     , \mathrm{Im}U_{\boldsymbol\alpha}(\max\{\tau_i\}+\alpha)-\mathrm{Im}U_{\boldsymbol\alpha}(\xi^*),\mathrm{Im}U_{\boldsymbol\alpha}(\min\{\eta_j\}-\alpha)-\mathrm{Im}U_{\boldsymbol\alpha}(\xi^*)\bigg\}
 \end{equation}
 and $\beta<d.$
Then by Proposition \ref{period} and Proposition \ref{ri}, there is an $l>0$ such that:
\begin{enumerate}[(i)]
    \item For all $t\in \big(-\infty,\max\{\tau_i\}+\alpha \big],$
\begin{equation}\label{c1}
\mathrm{Im}U_{\boldsymbol\alpha}(t+\mathbf il) > \mathrm{Im}U_{\boldsymbol\alpha}(t) - c;
\end{equation}
and for all $t\in \big[\min\{\eta_j\}-\alpha,\infty\big),$
\begin{equation}\label{c2}
    \mathrm{Im}U_{\boldsymbol\alpha}(t-\mathbf il) > \mathrm{Im}U_{\boldsymbol\alpha}(t) - c.
    \end{equation}

    \item For all $t\in[0,l],$
    \begin{equation}\label{c3}
    \mathrm{Im}U_{\boldsymbol\alpha}(\max\{\tau_i\}+\alpha+\mathbf it) > \mathrm{Im}U_{\boldsymbol\alpha}(\max\{\tau_i\}+\alpha) - c,
\end{equation}
and 
\begin{equation}\label{c4}
\mathrm{Im}U_{\boldsymbol\alpha}(\min\{\eta_j\}-\alpha-\mathbf i t) > \mathrm{Im}U_{\boldsymbol\alpha}(\min\{\eta_j\}-\alpha) - c.
\end{equation}

    \item  For all $t\in[0,l],$
    \begin{equation}\label{RI3}
     -\frac{\partial \mathrm{Re}U_{\boldsymbol\alpha}}{\partial \mathrm{Im}\xi}\bigg|_{\xi=\max\{\tau_i\}+\alpha+\mathbf i t}<0\quad\text{and}\quad -\frac{\partial \mathrm{Re}U_{\boldsymbol\alpha}}{\partial \mathrm{Im}\xi}\bigg|_{\xi=\min\{\eta_j\}-\alpha-\mathbf i t}>0.
 \end{equation}

   \item For all $t\in [\xi^*-\beta,\xi^*+\beta],$
   \begin{equation}\label{c5}
\mathrm{Re}U_{\boldsymbol\alpha}(t\pm \mathbf i l) > \mathrm{Re}U_{\boldsymbol\alpha}(t) -c= -c.
\end{equation}

\end{enumerate}
Let 
 $$\Gamma_2=\Big\{ \max\{\tau_i\}+\alpha+\mathbf i t\ \Big|\ t\in [0,l]\Big\}\cup \Big\{ \min\{\eta_j\}-\alpha-\mathbf i t\ \Big|\ t\in [0,l]\Big\}$$
 and 
 $$\Gamma_3=\Big\{t+\mathbf il \ \Big|\ t\in \big(-\infty,\max\{\tau_i\}+\alpha \big]\Big\}\cup \Big\{ t-\mathbf il \ \Big|\ t\in\big[\min\{\eta_j\}-\alpha,\infty\big) \Big\}.$$
As depicted in Figure \ref{G3}, let $$\Gamma^*=\Gamma_1\cup \Gamma_2\cup \Gamma_3,$$
and let $\delta>0$ be sufficiently small so that $\Gamma^*$ lies entirely in the region $D_{\theta,\delta}.$
\begin{figure}[htbp]
\centering
\includegraphics[scale=0.4]{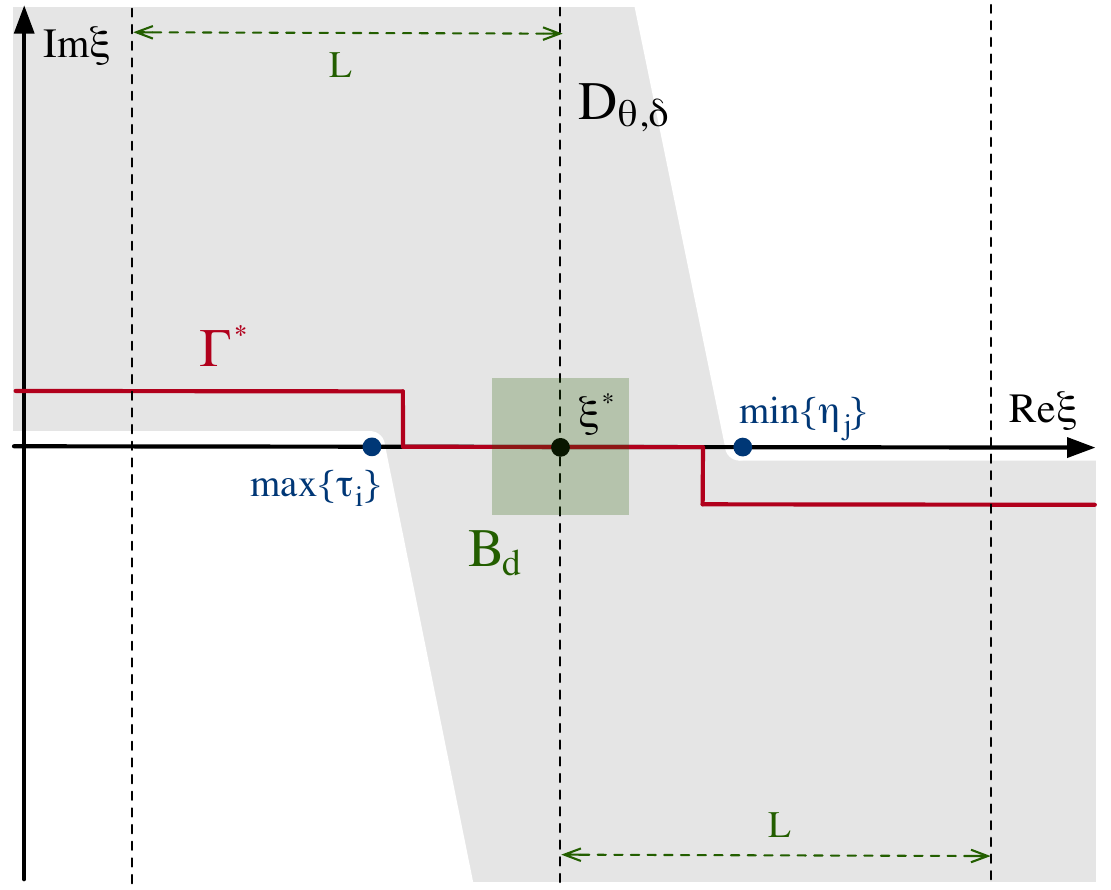}
\caption{The contour $\Gamma^*$ in the case that $\theta\in(\frac{\pi}{4},\frac{\pi}{2}),$ where $\Gamma^*_d$ is the piece of $\Gamma^*$ lying in the light green region $B_d,$ and  $\Gamma^*_L$ is the piece of $\Gamma^*$ lying in between the two vertical dotted lines.}
\label{G3}
\end{figure}

We claim that
\begin{equation}\label{absmax3}
    \mathrm{Im}W_{\boldsymbol\alpha}(\xi)< \mathrm{Im}W_{\boldsymbol\alpha}(\xi^*)
\end{equation}
for all $\xi\in \Gamma^*\setminus\{\xi^*\}.$ 

Indeed, for $\xi\in\Gamma_1\setminus \xi^*,$ by Proposition \ref{ri}, $\mathrm{Im}U_{\boldsymbol\alpha}(\xi)>\mathrm{Im}U_{\boldsymbol\alpha}(\xi^*)$ and $\mathrm{Re}U_{\boldsymbol\alpha}(\xi)=\mathrm{Re}U_{\boldsymbol\alpha}(\xi^*).$ Then by (\ref{ImW}) and that $\cos(2\theta)<0,$ (\ref{absmax3}) follows.

For $\xi\in \Gamma_2,$ by (\ref{cc}), (\ref{c3}) and (\ref{c4}), we have 
$$\mathrm{Im}U_{\boldsymbol\alpha}(\max\{\tau_i\}+\alpha+\mathbf it) > \mathrm{Im}U_{\boldsymbol\alpha}(\max\{\tau_i\}+\alpha)-c> \mathrm{Im}U_{\boldsymbol\alpha}(\xi^*)$$ 
and 
$$\mathrm{Im}U_{\boldsymbol\alpha}(\min\{\eta_j\}-\alpha-\mathbf it)>\mathrm{Im}U_{\boldsymbol\alpha}(\min\{\eta_j\}-\alpha)-c > \mathrm{Im}U_{\boldsymbol\alpha}(\xi^*)$$
for all $t\in [0,l].$ By (\ref{RI3}) and Proposition \ref{ri} (1), we have 
$$\mathrm{Re}U_{\boldsymbol\alpha}(\max\{\tau_i\}+\alpha+\mathbf it) > \mathrm{Re}U_{\boldsymbol\alpha}(\max\{\tau_i\}+\alpha)= \mathrm{Re}U_{\boldsymbol\alpha}(\xi^*)$$
and 
$$\mathrm{Re}U_{\boldsymbol\alpha}(\min\{\eta_j\}-\alpha-\mathbf it)>\mathrm{Re}U_{\boldsymbol\alpha}(\min\{\eta_j\}-\alpha)= \mathrm{Re}U_{\boldsymbol\alpha}(\xi^*)$$
for all $t\in [0,l].$ Then by (\ref{ImW}) and that $\cos(2\theta)<0$ and $\sin(2\theta)>0,$ (\ref{absmax3}) follows.  

For $\xi=t\pm \mathbf il\in \Gamma_3,$ we consider the following three cases.
\begin{enumerate}[(a)]
\item  If $t\notin [\xi^*-\beta+k\pi, \xi^*+\beta+k\pi]$ for any $k\in\mathbb Z,$ then by (\ref{2c}), (\ref{c1}) and (\ref{c2}), we have 
$$\mathrm{Im}U_{\boldsymbol\alpha}(t + \mathbf il) > \mathrm{Im}U_{\boldsymbol\alpha}(t) - c>\mathrm{Im}U_{\boldsymbol\alpha}(\xi^*)$$
for all $t\in \big(-\infty,\max\{\tau_i\}+\alpha \big],$
and
$$\mathrm{Im}U_{\boldsymbol\alpha}(t- \mathbf il) > \mathrm{Im}U_{\boldsymbol\alpha}(t) - c>\mathrm{Im}U_{\boldsymbol\alpha}(\xi^*)$$
for all $t\in\big[\min\{\eta_j\}-\alpha,\infty\big);$
and by Proposition \ref{limder1} and the previous case,
$$\mathrm{Re}U_{\boldsymbol\alpha}(t + \mathbf il) > \mathrm{Re}U_{\boldsymbol\alpha}(\max\{\tau_i\}+\alpha+\mathbf il) >\mathrm{Re}U_{\boldsymbol\alpha}(\xi^*)$$
for all $t\in \big(-\infty,\max\{\tau_i\}+\alpha \big],$
and 
$$\mathrm{Re}U_{\boldsymbol\alpha}(t - \mathbf il) > \mathrm{Re}U_{\boldsymbol\alpha}(\min\{\eta_j\}-\alpha-\mathbf il) >\mathrm{Re}U_{\boldsymbol\alpha}(\xi^*)$$
for all $t\in\big[\min\{\eta_j\}-\alpha,\infty\big).$
Then (\ref{absmax3}) follows by (\ref{ImW}) and that $\cos(2\theta)<0$ and $\sin(2\theta)>0$  in this case. 

\item If $t\in [\xi^*-\beta+k\pi, \xi^*+\beta+k\pi]$ for some $k<0,$ then by (\ref{c1}) and the Geometric Hypothesis \ref{per-min}, we have 
$$\mathrm{Im}U_{\boldsymbol\alpha}(t + \mathbf il) > \mathrm{Im}U_{\boldsymbol\alpha}(t) - c\geqslant\mathrm{Im}U_{\boldsymbol\alpha}(\xi^*)-c;$$
and by Proposition \ref{period} and (\ref{c5}), we have $$\mathrm{Re}U_{\boldsymbol\alpha}(t + \mathbf il) =  \mathrm{Re}U_{\boldsymbol\alpha}(t -k\pi+ \mathbf il) - 4k\pi^2> -c+4\pi^2.$$
Together with (\ref{ImW}) and (\ref{cc}), we have 
$$\mathrm{Im}W_{\boldsymbol\alpha}(t+\mathbf il)<\mathrm{Im}W_{\boldsymbol\alpha}(\xi^*)+\big(\sin(2\theta)-\cos(2\theta)\big)c-4\pi^2\sin(2\theta)<\mathrm{Im}W_{\boldsymbol\alpha}(\xi^*),$$
and (\ref{absmax3}) holds in this case.

\item If $t\in [\xi^*-\beta+k\pi, \xi^*+\beta+k\pi]$ for some $k>0,$ then by (\ref{c2}) and the Geometric Hypothesis \ref{per-min}, we have 
$$\mathrm{Im}U_{\boldsymbol\alpha}(t - \mathbf il) > \mathrm{Im}U_{\boldsymbol\alpha}(t) - c\geqslant \mathrm{Im}U_{\boldsymbol\alpha}(\xi^*)-c;$$
and by Proposition \ref{period} and (\ref{c5}), we have $$\mathrm{Re}U_{\boldsymbol\alpha}(t - \mathbf il) =  \mathrm{Re}U_{\boldsymbol\alpha}(t -k\pi- \mathbf il) + 4k\pi^2> -c+4\pi^2.$$
Together with (\ref{ImW}) and (\ref{cc}), we have 
$$\mathrm{Im}W_{\boldsymbol\alpha}(t-\mathbf il)<\mathrm{Im}W_{\boldsymbol\alpha}(\xi^*)+\big(\sin(2\theta)-\cos(2\theta)\big)c-4\pi^2\sin(2\theta)<\mathrm{Im}W_{\boldsymbol\alpha}(\xi^*),$$
and (\ref{absmax3}) holds in this case. 
\end{enumerate}

Now by (\ref{W}) and  (\ref{fk3}), we have
\begin{equation*}
\bigg\{\begin{matrix} a_1 & a_2 & a_3 \\ a_4 & a_5 & a_6 \end{matrix} \bigg\}_b=\frac{1}{\pi b}\int_{\Gamma^*}\exp\bigg(\frac{W_{\boldsymbol \alpha}(\xi)+\kappa_{\boldsymbol\alpha}(\xi)|b|^2+e^{2\mathbf i\theta}\nu_{\boldsymbol\alpha,b}(\xi)|b|^4}{2\pi \mathbf i|b|^2} \bigg)d\xi.
\end{equation*}

We claim that  there is an $L_1>\xi^*-\max\{\tau_i\}-\alpha$ such that for $t>L_1,$ 
 \begin{equation}\label{conv+3}
     \mathrm{Im}W_{\boldsymbol\alpha}(\xi^*-t+\mathbf i l)\leqslant \mathrm{Im}W_{\boldsymbol\alpha}(\xi^*-L_1+\mathbf i l)-4\pi\sin(2\theta)(t-L_1);
 \end{equation}
 and there is an $L_2>\min\{\eta_j\}-\alpha-\xi^*$ such that for  $t>L_2,$ 
  \begin{equation}\label{conv-3}
  \mathrm{Im}W_{\boldsymbol\alpha}(\xi^*+t-\mathbf i l)\leqslant \mathrm{Im}W_{\boldsymbol\alpha}(\xi^*+L_2-\mathbf i l)-4\pi\sin(2\theta)(t-L_2).
  \end{equation}
 Indeed, by (\ref{ImW}) and Proposition \ref{period}, the function 
 $f(t)\doteq \mathrm{Im}W_{\boldsymbol\alpha}(\xi^*-t+\mathbf il)+4\pi \sin(2\theta)t$ is a $\pi$-periodic continuous function on $\mathbb R,$ and $L_1$ can be chosen as any of the  maximums of $f$ that is greater than $\xi^*-\max\{\tau_i\}-\alpha;$ and similarly, the function $g(t)\doteq \mathrm{Im}W_{\boldsymbol\alpha}(\xi^*+t-\mathbf il)+4\pi \sin(2\theta)t$ is a $\pi$-periodic continuous function on $\mathbb R,$ and $L_2$ can be chosen as any of the  maximums of $g$ that is greater than $\min\{\eta_j\}-\alpha-\xi^*.$
 
Let
$$\Gamma^*_d=\Gamma^*\cap B_d=\Big\{ \xi \in \Gamma^*\ \Big|\ |\mathrm{Re}\xi-\xi^*|<d\ \ \text{and}\ \  |\mathrm{Im}\xi|<d\Big\},$$
and let 
$$\Gamma^*_L=\Big\{ \xi \in \Gamma^*\ \Big|\ \big|\mathrm{Re}\xi-\xi^*\big|\leqslant L\Big\},$$
where $L=\max\{L_1,L_2\}.$
See Figure \ref{G3}. Then we will show that, as $b\to 0$ with $\arg b=\theta\in(\frac{\pi}{4},\frac{\pi}{2}),$
\begin{enumerate}[(I)]
\item 
\begin{equation*}
    \begin{split}
      \frac{1}{\pi b}\int_{\Gamma^*_d}\exp\bigg(\frac{W_{\boldsymbol \alpha}(\xi)+\kappa_{\boldsymbol\alpha}(\xi)|b|^2+e^{2\mathbf i\theta}\nu_{\boldsymbol\alpha,b}(\xi)|b|^4}{2\pi \mathbf i|b|^2}  \bigg)d\xi =   \frac{1}{2}\frac{e^{\frac{-\mathrm{Vol}(\Delta)}{\pi b^2}}}{\sqrt[4]{-\det\mathrm{Gram}(\Delta)}}\Big(1+O\big(|b|^2\big)\Big),  
    \end{split}
\end{equation*}

\item $$\bigg|\frac{1}{\pi b}\int_{\Gamma^*_L\setminus \Gamma^*_d}\exp\bigg(\frac{W_{\boldsymbol \alpha}(\xi)+\kappa_{\boldsymbol\alpha}(\xi)|b|^2+e^{2\mathbf i\theta}\nu_{\boldsymbol\alpha,b}(\xi)|b|^4}{2\pi \mathbf i|b|^2}  \bigg)d\xi\bigg|< O\Big(e^{\frac{-\cos(2\theta)\mathrm{Vol}(\Delta)-\epsilon}{\pi |b|^2}}\Big),$$
and 

\item $$\bigg|\frac{1}{\pi b}\int_{\Gamma^*\setminus \Gamma^*_L}\exp\bigg(\frac{W_{\boldsymbol \alpha}(\xi)+\kappa_{\boldsymbol\alpha}(\xi)|b|^2+e^{2\mathbf i\theta}\nu_{\boldsymbol\alpha,b}(\xi)|b|^4}{2\pi \mathbf i|b|^2}  \bigg)d\xi\bigg|<O\Big(e^{\frac{-\cos(2\theta)\mathrm{Vol}(\Delta)-\epsilon}{\pi |b|^2}}\Big)$$
\end{enumerate}
for some $\epsilon>0,$ from which the result follows.

The proof of (I), (II) and (III) follows very similarly to that of the previous two cases, with only minor changes.  More precisely: 
\medskip

For (I), we claim that  all the conditions of Proposition \ref{saddle} are satisfied by letting $\hbar=|b|^2,$ $D=B_d,$ $f=\frac{W_{\boldsymbol \alpha}}{2\pi \mathbf i},$ $g=\exp\big(\frac{\kappa_{\boldsymbol \alpha}}{2\pi \mathbf i }\big),$ $f_\hbar=\frac{W_{\boldsymbol \alpha}+e^{2\mathbf i\theta}\nu_{\boldsymbol \alpha,b}|b|^4}{2\pi \mathbf i},$ $\upsilon_h=\frac{e^{2\mathbf i\theta}\nu_{\boldsymbol \alpha,b}}{2\pi \mathbf i},$ $S=\Gamma^*_d$ and $c=\xi^*.$ Indeed, by Proposition \ref{critical1}, $\xi^*$ is a critical point of $f=\frac{W_{\boldsymbol \alpha}}{2\pi \mathbf i}=\frac{e^{-2\mathbf i \theta}U_{\boldsymbol \alpha}}{2\pi \mathbf i}$ in $B_{d},$  hence condition (i) is satisfied.
By (\ref{absmax3}), $\xi^*$ is the unique maximum point of $\mathrm{Re}f=\frac{\mathrm{Im}W_{\boldsymbol\alpha}}{2\pi}$ on $\Gamma^*_d,$ hence condition (ii) is satisfied; and by Proposition \ref{Hess}, $\xi^*$ is a non-degenerate critical point of $f,$ and condition (iii) is satisfied. For condition (iv), since $\xi^*\neq \tau_i$ and $\xi^*\neq\eta_j$ for any $i$ and $j$ in $\{1,2,3,4\},$ $\kappa_{\boldsymbol \alpha}(\xi^*)$ is a finite value. As a consequence, $g(\xi^*)=\exp\big(\frac{\kappa_{\boldsymbol \alpha}(\xi^*)}{2\pi \mathbf i }\big)\neq 0,$ and condition (iv) is satisfied.
For condition (v), by Proposition \ref{bound2},  $|\upsilon_{\hbar}(\xi)|=\big|\frac{e^{2\mathbf i\theta}\nu_{\boldsymbol \alpha,b}(\xi)}{2\pi \mathbf i}\big|<\frac{N}{2\pi}$ on $B_{d}.$ For condition (vi), since $\Gamma^*$ is a straight line, it is smooth near $\xi^*.$  Finally, by Proposition \ref{saddle}, Proposition \ref{critical1}  and Proposition \ref{Hess}, we have as $b\to 0,$
\begin{equation*}
\begin{split}
\frac{1}{\pi b}\int_{\Gamma^*_d}\exp\bigg(\frac{U_{\boldsymbol \alpha,b}(\xi)}{2\pi \mathbf ib^2}\bigg) d\xi=& \frac{(2\pi |b|^2)^\frac{1}{2}}{\pi b} \frac{\exp\big(\frac{\kappa_{\boldsymbol \alpha}(\xi^*)}{2\pi \mathbf i}\big)}{\sqrt{-\frac{e^{-2\mathbf i\theta}U_{\boldsymbol \alpha}''(\xi^*)}{2\pi \mathbf i}}}e^{\frac{e^{-2\mathbf i\theta}U_{\boldsymbol\alpha}(\xi^*)}{2\pi \mathbf i |b|^2}}\Big(1+O\big(|b|^2\big)\Big)\\
=&\frac{1}{2}\frac{e^{\frac{-\mathrm{Vol}(\Delta)}{\pi b^2}}}{\sqrt[4]{-\det\mathrm{Gram}(\Delta)}}\Big(1+O\big(|b|^2\big)\Big).
\end{split}
\end{equation*}
This completes the proof of (I). 
\medskip

For (II), by (\ref{absmax3}), (\ref{conv+3}) and (\ref{conv-3}), there is a $\epsilon_3>0$ sufficiently small so that
\begin{equation}\label{E3}
    \epsilon_3<4\pi\sin(2\theta)
\end{equation}
and
 \begin{equation}\label{ImU3}
\mathrm{Im} W_{\boldsymbol\alpha}(\xi)<-2\cos(2\theta)\mathrm{Vol}(\Delta)-4\epsilon_3
\end{equation}
 for all $\xi\in \Gamma^*\setminus \Gamma^*_d.$ Then by Proposition \ref{bound2},  there is a  $b_3>0$ such that 
\begin{equation}\label{last6}
\mathrm{Im}\big(e^{2\mathbf i\theta}\nu_{\boldsymbol\alpha,b}(\xi)\big)|b|^4\leqslant |\nu_{\boldsymbol\alpha,b}(\xi)||b|^4<N|b|^4<\epsilon_3 
\end{equation}
for all $|b|<b_3$ and for all $\xi\in\Gamma^*;$  and together with  (\ref{ImU3}), we have
\begin{equation}\label{last7}
\mathrm{Im}W_{\boldsymbol \alpha}(\xi)+\mathrm{Im}  \big(e^{2\mathbf i\theta}\nu_{\boldsymbol \alpha,b}(\xi)\big)|b|^4<-2\cos(2\theta)\mathrm{Vol}(\Delta)- 3\epsilon_3
\end{equation}
for all $|b|<b_3$ and  $\xi\in\Gamma^*\setminus \Gamma^*_d.$ By the compactness of ${\Gamma^*_L}\setminus \Gamma^*_d,$ there exists an $M_3>0$ such that 
\begin{equation}\label{Imk3}
\mathrm{Im}\kappa_{\boldsymbol\alpha}(\xi)<M_3
\end{equation}
for all $\xi\in\Gamma^*_L\setminus \Gamma^*_d.$ As a consequence of (\ref{last6}) and (\ref{Imk3}), we have
\begin{equation*}
\begin{split}
& \bigg|\frac{1}{\pi b}\int_{\Gamma_L^*\setminus \Gamma^*_d}\exp\bigg(\frac{W_{\boldsymbol \alpha}(\xi)+\kappa_{\boldsymbol\alpha}(\xi)|b|^2+e^{2\mathbf i\theta}\nu_{\boldsymbol\alpha,b}(\xi)|b|^4}{2\pi \mathbf i|b|^2}   \bigg)d\xi\bigg|\\
<&\frac{1}{\pi |b|}\int_{\Gamma^*_L\setminus \Gamma^*_d}\exp\bigg(\frac{\mathrm{Im}W_{\boldsymbol \alpha}(\xi)+\mathrm{Im}\kappa_{\boldsymbol \alpha}(\xi)|b|
^2+\mathrm{Im}\big(e^{2\mathbf i\theta}\nu_{\boldsymbol \alpha,b}(\xi)\big)|b|^4}{2\pi |b|^2} \bigg)|d\xi|\\
<  & \frac{2(L+l-d) e^{\frac{M_3}{2\pi}} }{\pi |b|} \exp\bigg(\frac{-\cos(2\theta)\mathrm{Vol}(\Delta)-\epsilon_3}{\pi |b|^2}   \bigg )\\
< & O\Big(e^{\frac{-\cos(2\theta)\mathrm{Vol}(\Delta)-\epsilon}{\pi |b|^2}}\Big)
\end{split}
\end{equation*}
for any $\epsilon<\epsilon_3.$  This completes the proof of (II). 
\medskip

For (III), let $\epsilon_3$ and  $b_3$ be as in the proof of (II) above. Then there  is a $b_0\in (0, b_3)$ such that for all $|b|<b_0,$
\begin{equation}\label{ImK2}
\mathrm{Im}\kappa_{\boldsymbol\alpha}(\xi^*- L + \mathbf il) |b|^2<\epsilon_3\quad\text{and}\quad \mathrm{Im}\kappa_{\boldsymbol\alpha}(\xi^* + L -\mathbf il ) |b|^2<\epsilon_3,
\end{equation}
$K|b|^2<\epsilon_3$ and $N|b|^4<\epsilon_3,$ where $K$ and $N$ are respectively the constants in Propositions \ref{bound1} and \ref{bound2}. We claim that, for $\xi\in\Gamma^*\setminus \Gamma^*_L$ and $|b|<b_0,$ 
\begin{equation}\label{cl3}
\begin{split}
\mathrm{Im}W_{\boldsymbol \alpha}(\xi)&+\mathrm{Im}\kappa_{\boldsymbol \alpha}(\xi)|b|^2+\mathrm{Im}\big(e^{2\mathbf i\theta}\nu_{\boldsymbol \alpha,b}(\xi)\big)|b|^4\\<&-2\cos(2\theta)\mathrm{Vol}(\Delta)-2\epsilon_3-\big(4\pi\sin(2\theta)-\epsilon_3\big)\big(|\mathrm{Re}\xi-\xi^*|-L\big),
\end{split}
\end{equation}
as a consequence of which we have
\begin{equation}\label{CI3}
\begin{split}
& \bigg|\frac{1}{\pi b}\int_{\Gamma^*\setminus \Gamma^*_L}\exp\bigg(\frac{W_{\boldsymbol \alpha}(\xi)+\kappa_{\boldsymbol\alpha}(\xi)|b|^2+e^{2\mathbf i\theta}\nu_{\boldsymbol\alpha,b}(\xi)|b|^4}{2\pi \mathbf i|b|^2}   \bigg)d\xi\bigg|\\
<& \frac{1}{\pi |b|}\int_{\Gamma^*\setminus \Gamma^*_L}\exp\bigg(\frac{\mathrm{Im}W_{\boldsymbol \alpha}(\xi)+\mathrm{Im}\kappa_{\boldsymbol \alpha}(\xi)|b|^2+\mathrm{Im}\big(e^{2\mathbf i\theta}\nu_{\boldsymbol \alpha,b}(\xi)\big)|b|^4}{2\pi |b|^2} \bigg)|d\xi| \\
 < & \frac{1}{\pi |b|} \exp\bigg(\frac{-\cos(2\theta)\mathrm{Vol}(\Delta)-\epsilon_3}{\pi |b|^2}\bigg)\int_{\Gamma^*\setminus \Gamma^*_L}\exp\bigg(\frac{-(4\pi\sin(2\theta)-\epsilon_3)\big(|\mathrm{Re}\xi-\xi^*|-L\big)}{2\pi}\bigg) |d\xi|\\
< & O\Big(e^{\frac{-\cos(2\theta)\mathrm{Vol}(\Delta)-\epsilon}{\pi |b|^2}}\Big)
\end{split}
\end{equation}
for any $\epsilon<\epsilon_3.$ For the proof of the claim, by (\ref{conv+3}) and (\ref{conv-3}), we have for each $\xi\in\Gamma^*\setminus \Gamma^*_L,$ either 
$$\mathrm{Im}W_{\boldsymbol\alpha}(\xi)\leqslant \mathrm{Im}W_{\boldsymbol\alpha}(\xi^*-L_1+\mathbf i l)-4\pi\sin(2\theta)\big(|\mathrm{Re}\xi-\xi^*|-L_1 \big)$$
or 
$$\mathrm{Im}W_{\boldsymbol\alpha}(\xi)\leqslant \mathrm{Im}W_{\boldsymbol\alpha}(\xi^*+L_2-\mathbf i l)-4\pi\sin(2\theta)\big(|\mathrm{Re}\xi-\xi^*|-L_2 \big).$$
In either case, by  (\ref{ImU3}) and that $L=\max\{L_1,L_2\},$
we have 
\begin{equation}\label{466}
    \mathrm{Im}W_{\boldsymbol\alpha}(\xi)< -2\cos(2\theta)\mathrm{Vol}(\Delta)-4\epsilon_3-4\pi\sin(2\theta)\big(|\mathrm{Re}\xi-\xi^*|-L \big).
\end{equation}
By Proposition \ref{bound1} and the choice of $b_0,$ for $t>L$ and $|b|<b_0,$ we have 
$$\frac{d}{dt} \Big( \mathrm{Im}\kappa_{\boldsymbol\alpha}\big(\xi^*\pm ( t - \mathbf il)\big)|b|^2\Big)<K|b|^2<\epsilon_3.$$
Together with the Mean Value Theorem and (\ref{ImK2}), we have 
\begin{equation}\label{Bou3}
\begin{split}
\mathrm{Im}\kappa_{\boldsymbol\alpha}(\xi)|b|^2 < &  \mathrm{Im}
\kappa_{\boldsymbol\alpha}\big(\xi^*\pm (L - \mathbf il)\big)|b|^2 +\epsilon_3 \big |  \mathrm{Re}\xi - (\xi^*\pm L ) \big|\\
<& \epsilon_3 +\epsilon_3\big(|\mathrm{Re}\xi-\xi^*|-L \big)
\end{split}
\end{equation}
for all $\xi \in \Gamma^*\setminus \Gamma^*_L.$  Finally, putting (\ref{466}), (\ref{Bou3}) and (\ref{last6}) together, we have  (\ref{cl3}) and the second  inequality in (\ref{CI3}); and since 
$$|\mathrm{Re}\xi-\xi^*|-L\to+\infty$$
as $\xi \in \Gamma^*\setminus \Gamma^*_L$ approaches $+\infty$ and $4\pi\sin(2\theta)-\epsilon_3>0$ by (\ref{E3}), we have the last inequality in (\ref{CI3}). This completes the proof of (III).
\smallskip

Putting (I), (II)  and (III) together, and as $\mathrm{Re}\Big(\frac{-\mathrm{Vol}(\Delta)}{\pi b^2}\Big)=\frac{-\cos(2\theta)\mathrm{Vol}(\Delta)}{\pi |b|^2},$ we have as $b\to 0$ with a fixed $\arg b \in (\frac{\pi}{4},\frac{\pi}{2}),$ 
$$\bigg\{\begin{matrix} a_1 & a_2 & a_3 \\ a_4 & a_5 & a_6 \end{matrix} \bigg\}_b=\frac{1}{2}\frac{e^{\frac{-\mathrm{Vol}(\Delta)}{\pi b^2}}}{\sqrt[4]{-\det\mathrm{Gram}(\Delta)}}\Big(1+O\big(b^2\big)\Big). $$ 
\end{proof}

\end{document}